\def\hybrid{\topmargin 0pt      \oddsidemargin 0pt
        \headheight 0pt \headsep 0pt
       \voffset-1cm
        \textwidth 6.25in       
       \textheight 9.5in       
        \marginparwidth 0.0in
        \parskip 5pt plus 1pt   \jot = 1.5ex}
\def\marginnote#1{}
\newtoks\amorpm
\edef\standardtime{{\ifnum\hour<12 \global\amorpm={am}%
        \else\global\amorpm={pm}\advance\hour by-12 \fi
        \ifnum\hour=0 \hour=12 \fi
        \number\hour:\ifnum\minute<10 0\fi\number\minute\the\amorpm}}
\edef\militarytime{\number\hour:\ifnum\minute<10 0\fi\number\minute}
\def\draftlabel#1{{\@bsphack\if@filesw {\let\thepage\relax
   \xdef\@gtempa{\write\@auxout{\string
      \newlabel{#1}{{\@currentlabel}{\thepage}}}}}\@gtempa
   \if@nobreak \ifvmode\nobreak\fi\fi\fi\@esphack}
        \gdef\@eqnlabel{#1}}
\def\@eqnlabel{}
\def\@vacuum{}
\def\draftmarginnote#1{\marginpar{\raggedright\scriptsize\tt#1}}
\def\draftlabel#1{{\@bsphack\if@filesw {\let\thepage\relax
   \xdef\@gtempa{\write\@auxout{\string
      \newlabel{#1}{{\@currentlabel}{\thepage}}}}}\@gtempa
   \if@nobreak \ifvmode\nobreak\fi\fi\fi\@esphack}
        \gdef\@eqnlabel{#1}}
\def\@eqnlabel{}
\def\@vacuum{}
\def\draftmarginnote#1{\marginpar{\raggedright\scriptsize\tt#1}}
\def\draft{\oddsidemargin -.5truein
        \def\@oddfoot{\sl preliminary draft \hfil
        \rm\thepage\hfil\sl\today\quad\militarytime}
        \let\@evenfoot\@oddfoot \overfullrule 3pt
        \let\label=\draftlabel
        \let\marginnote=\draftmarginnote
   \def\@eqnnum{(\theequation)\rlap{\kern\marginparsep\tt\@eqnlabel}%
\global\let\@eqnlabel\@vacuum}  }
\def\numberbysection{\@addtoreset{equation}{section}
        \def\theequation{\thesection.\arabic{equation}}}
\def\underline#1{\relax\ifmmode\@@underline#1\else
        $\@@underline{\hbox{#1}}$\relax\fi}
\def\titlepage{\@restonecolfalse\if@twocolumn\@restonecoltrue\onecolumn
     \else \newpage \fi \thispagestyle{empty}\c@page\z@
        \def\thefootnote{\fnsymbol{footnote}} }
\def\endtitlepage{\if@restonecol\twocolumn \else  \fi
        \def\thefootnote{\arabic{footnote}}
        \setcounter{footnote}{0}}  
\newfont{\Bbbb}{msbm7 scaled 1\@ptsize00}
\newcommand{\CC}{\mathbb C}
\newcommand{\DDD}{\raise-1pt\hbox{$\mbox{\Bbbb D}$}}
\newcommand{\UUU}{\raise-1pt\hbox{$\mbox{\Bbbb U}$}}
\newcommand{\ZZ}{\mathbb Z}
\newcommand{\z}{\raise-1pt\hbox{$\mbox{\Bbbb Z}$}}
\newtheorem{predl}{Proposition}[section]
\newtheorem{lem}{Lemma}[section]
\newtheorem{theorem}{Theorem}[section]
\def\beq{\begin{equation}}
\def\eeq{\end{equation}}
\def\p{\partial}
\newcommand{\der}{\partial}
\def\a{\alpha}
\def\b{\beta}
\def\normord{ {\scriptstyle {{\bullet}\atop{\bullet}}} }
\def\lbr{\left <}
\def\rbr{\right >}
\newcommand{\bolda}{{\boldsymbol a}}
\newcommand{\boldA}{{\boldsymbol A}}
\newcommand{\boldb}{{\boldsymbol b}}
\newcommand{\boldB}{{\boldsymbol B}}
\newcommand{\boldBbar}{\bar{\boldsymbol B}}
\newcommand{\boldL}{{\boldsymbol L}}
\newcommand{\boldLbar}{\bar{\boldsymbol L}}
\newcommand{\boldone}{{\boldsymbol 1}}
\newcommand{\boldP}{{\boldsymbol P}}
\newcommand{\boldPbar}{\bar{\boldsymbol P}}
\newcommand{\boldPsi}{{\boldsymbol\Psi}}
\newcommand{\boldPsibar}{\bar{\boldsymbol\Psi}}
\newcommand{\boldQ}{{\boldsymbol Q}}
\newcommand{\boldQbar}{\bar{\boldsymbol Q}}
\newcommand{\bolds}{{\boldsymbol s}}
\newcommand{\boldt}{{\boldsymbol t}}
\newcommand{\boldtbar}{\bar{\boldsymbol t}}
\newcommand{\boldU}{{\boldsymbol U}}
\newcommand{\boldUbar}{\bar{\boldsymbol U}}
\newcommand{\boldW}{{\boldsymbol W}}
\newcommand{\boldWbar}{\bar{\boldsymbol W}}
\newcommand{\boldzero}{{\boldsymbol 0}}
\newcommand{\calA}{{\mathcal A}}
\newcommand{\calF}{{\mathcal F}}
\newcommand{\calW}{{\mathcal W}}
\newcommand{\Comp}{{\mathbb C}}
\newcommand{\diag}{\operatorname{diag}\nolimits}
\newcommand{\Integer}{{\mathbb Z}}
\DeclareMathOperator{\Mat}{Mat}
\newcommand{\secref}[1]{\S\ref{#1}}
\newcommand{\temp}{{\text{temp}}}
\newcommand\tensor{\otimes}
\newcommand\lemref[1]{Lemma~\ref{#1}}
\newcommand\propref[1]{Proposition~\ref{#1}}
\newcommand\commentout[1]{}%
\def\normordbare{ {\scriptstyle {{ \times}\atop{ \times}}} }
\theoremstyle{remark}
\newtheorem{zam}{Remark}[section]
\newcommand\remref[1]{Remark~\ref{#1}}
\begin{document}

\begin{titlepage}

\title{Multi-component Toda lattice hierarchy}

\author{T.~Takebe\thanks{Takashi TAKEBE:
Beijing Institute of Mathematical Sciences and Applications,
No. 544, Hefangkou Village, Huaibei Town, Huairou District,
Beijing, 101408, People's Republic of China;
e-mail: takebe@bimsa.cn }
\and
A.~Zabrodin\thanks{Anton ZABRODIN:
Skolkovo Institute of Science and Technology, 143026, Moscow, Russia and
National Research University Higher School of
Economics,
20 Myasnitskaya Ulitsa, Moscow 101000, Russia, 
and NRC ``Kurchatov institute'', Moscow, Russia;
e-mail: zabrodin@itep.ru}}

\date{December 2024}
\maketitle

\vspace{-7cm} \centerline{ \hfill ITEP-TH-38/24}\vspace{7cm}

\hfill
{\it Dedicated to the memory of Masatoshi Noumi}

\vspace{1.0cm}

\begin{abstract}

We give a detailed account of the $N$-component Toda lattice
hierarchy which can be regarded as a generalization of 
the well-known Toda chain model and its non-abelian version. 
This hierarchy is an extension of the one
introduced earlier by Ueno and Takasaki. 
Our version contains $N$ 
discrete variables rather than one.
We start from the Lax formalism, deduce the bilinear
relation for the wave functions from it and then, based on the 
latter, prove existence of the tau-function. We also show how the
multi-component Toda lattice hierarchy is embedded into the 
universal hierarchy which is basically the multi-component KP
hierarchy. At last, we show how the bilinear integral equation
for the tau-function can be obtained using the free fermion
technique. An example of exact solutions (a multi-component
analogue of one-soliton solutions) is given. 

\end{abstract}

\end{titlepage}

\vspace{5mm}

%

\tableofcontents

\vspace{5mm}

\section{Introduction}

The Toda lattice \cite{Toda1,Toda2} 
is one of the most important  
integrable systems known at
present. It is of great interest from various points of view,
both mathematical and physical. For analysis of this system 
important mathematical methods were applied and 
further developed, such as 
inverse scattering method, B\"acklund transformations,
algebraic methods based on
representation theory of Lie groups and algebras, to name only
few. 

The original version of the model 
(sometimes called Toda chain) is the following non-linear
differential-difference equation:
\beq\label{int0}
    \frac{d^2 \varphi_s(t)}{dt^2}
    =
    e^{\varphi_{s}(t) - \varphi_{s-1}(t)}
    -
    e^{\varphi_{s+1}(t) - \varphi_{s}(t)}, \quad s\in \ZZ .
\eeq
It has two different physical interpretations. 

One of them
is to view (\ref{int0}) as Newtonian equations of motion for a
system of particles on the line with exponential interaction 
between nearest neighbors. In this case $\varphi_s(t)$ is coordinate
of the $s$-th particle at the time $t$. 
This is exactly the point of view that Toda had in mind when he
introduced equations (\ref{int0}) in \cite{Toda1}. In
fact, inspired by the work of Fermi-Pasta-Ulam \cite{FPU} and Ford
\cite{Ford}, Toda searched for nonlinear lattices that have exact
periodic solutions, using the concept of dual lattices, which he
introduced earlier \cite{Toda3}. He found the exponential interaction
in (\ref{int0}) from the addition formula for the function
$\mbox{sn}^2(u,k)$, where
$\mbox{sn}(u,k)$ is Jacobi's elliptic sinus function. 
(See Ch.2 of \cite{Toda2} or \cite{Toda4} for details.)

Another interpretation of \eqref{int0} is field-theoretical:
$\varphi_s(t)$ 
is regarded as a field in the two-dimensional space-time, with
the discrete space variable $s\in \ZZ$ and 
the continuous time variable $t$. To set up the problem, one should
fix the boundary conditions on the lattice. Three different cases
should be considered separately: 
a) the problem on the whole infinite lattice $\ZZ$
with the condition that $\varphi_s(t)\to 0$ as $s\to \pm \infty$,
b) the field $\varphi_s(t)$ is periodic in $s$ with some period
$n$, c) ``open'' boundary conditions when $s$ takes a finite
number of values, say from $1$ to $n$; in the mechanical 
interpretation this means that the system contains
only a finite number of particles. In the latter case, the system
(\ref{int0}) is sometimes referred to as ``Toda molecule''.
The solution in each of these three cases requires application
of different methods.

Soon after the discovery of the Toda chain various
methods of solving it (for example, the Hamilton-Jacobi theory
(\cite{McLaughlin75}), B\"acklund transformations
\cite{Flashka74,Wadati-Toda75}, Hirota's bilinear method
\cite{Hirota-Satsuma76}) were suggested. Some other methods
will be pointed out below. 

One of the most important methods of solving nonlinear
integrable systems is the inverse scattering
approach, which is based on the representations of the Lax or
Zakharov-Shabat type.
(The very fact 
that such representations exist is a consequence of rich 
hidden symmetries of the systems.) For the Toda chain, the Lax
representation was discovered in the works \cite{Flashka74,Manakov75},
and the complete integrability of the Toda chain, i.e., existence 
of enough number of independent integrals of motion in involution, 
was proved on this basis. (Also, in 1974 
Henon \cite{Henon74} found the integrals of
motion in the framework of the Hamiltonian formalism.)

Further, it was realized \cite{Bogoyavlensky76} that more
general integrable systems of the Toda type can be associated
with root systems of arbitrary simple Lie algebras (for a
generalization to superalgebras see \cite{Kirillova83}).
In the works \cite{OP79,K79} (see also \cite{Perelomov77,OP81},
the book \cite{Perelomov} and references therein), 
the equations of motion for the 
non-periodic case were explicitly integrated by means of
group-theoretical methods. It was also realized that the generalized 
Toda chains are connected with orbits of coadjoint representations
of solvable Lie groups \cite{Adler78,Symes80}. 

The periodic problem for the Toda chain is much more complicated
and requires methods of the theory of algebraic curves for its
solution. The solution of the periodic Toda chain in terms of
Riemann theta-functions associated with algebraic
curves was obtained by Date and Tanaka in \cite{Date-Tanaka75}, Mumford
in \cite{Mumford77} and Krichever in \cite{Krichever78}.

There exists also a two-dimensional extension of the Toda chain, 
in which the field $\varphi_s(x,y)$ depends on 
two continuous space-time variables $x,y$, one discrete 
variable $s\in\ZZ$ and satisfies the equation
\beq\label{int0a}
    \frac{\der^2 \varphi_s(x,y)}{\der x\, \der y} 
    =
    e^{\varphi_{s}(x,y) - \varphi_{s-1}(x,y)}
    -
    e^{\varphi_{s+1}(x,y) - \varphi_{s}(x,y)}, \quad s\in \ZZ .
\eeq
Its discovery goes back 
to works by Darboux \cite{Darboux} who 
implicitly introduced it as a chain
of Laplace-Darboux transformations of partial differential equations
(see \cite{Shabat95} and \cite{Okamoto87-88} for details). 
Nowadays, this model is called 2D Toda lattice. 
Its integrability was proved
by Mikhailov in \cite{Mikhailov79}. 

Like for the Toda chain, in the two-dimensional case
different types of boundary conditions
on the lattice are possible. 
A family of exact solutions in the case of 
open boundary conditions
(the 2D Toda molecule)
was found by Leznov and Saveliev \cite{LS79,LS81} 
by group-theoretical methods.
The multi-soliton solutions on the infinite lattice were obtained 
by Hirota (see his book \cite{Hirota} and references therein).
Periodic and quasi-periodic solutions were found by Krichever
in \cite{Krichever81} using the algebro-geometrical
methods earlier developed by him in \cite{Krichever78}.
In what follows, by the Toda lattice we understand its
two-dimensional version on the infinite lattice.

Later it became clear that the Toda lattice
equation can be embedded into an infinite hierarchy of compatible
nonlinear partial differential-difference equations. The theory of the
Toda lattice hierarchy in the framework 
of the approach suggested by the Kyoto school
\cite{Sato,Sato-Sato,DJKM83,JM83} was developed by Ueno and Takasaki in 
\cite{UT84}. Starting from the Lax and Zakharov-Shabat representations
for the Toda lattice,
they proved existence of a remarkable function called
{\it tau-function} which is common and fundamental for all known
infinite hierarchies of integrable 
differential or difference equations.
In a sense, the tau-function serves as 
a universal dependent variable of the hierarchy. 
It is a function
of infinitely many independent variables which are ``times'' 
parametrizing evolution along different integrable flows. 
In terms of the tau-function, all equations of the hierarchy are 
encoded in a generating bilinear integral equation. 
In our present work, we follow this approach. 
(See Takasaki's book \cite{Takasaki} for details.)

Since the seminal papers \cite{Sato,Sato-Sato,DJKM81} published in
1981--1982, it has been known that integrable equations 
and their hierarchies admit
generalizations in which the dependent variables become 
non-commutative quantities usually represented by matrices
of some finite size $N\times N$. They are referred to as
multi-component (or matrix) hierarchies.
For example, the multi-component
generalization of the Kadomtsev-Petviashvili (KP) hierarchy was 
discussed in \cite{DJKM81}--\cite{Teo11}. 
A matrix extension of the Toda chain equation (\ref{int0})
has the following form:
\beq\label{int0d}
\p_{t}(\p_t g_s\, g^{-1}_s)=g_sg^{-1}_{s-1}-
g_{s+1}g^{-1}_s,
\eeq
where $g_s=g_s(t)$ is an $N\times N$ invertible matrix depending 
on the time variable $t$. In the literature
this equation is known as the non-abelian Toda chain.
It can be regarded as 
a discrete analog of the principal chiral field
model. As is mentioned in \cite{Krichever81}, it was suggested 
by Polyakov, who also found its integrals of motion (unpublished).
Periodic and quasi-periodic solutions to the non-abelian Toda chain
were constructed by Krichever in \cite{Krichever81} 
by methods of algebraic geometry.
For the Hamiltonian formulation, classical $r$-matrix structure
and quantization see \cite{Korepin81}.

An $N$-component
generalization of the Toda lattice hierarchy was suggested 
by Ueno and Takasaki in \cite{UT84}. 
In the framework of the theory developed in \cite{UT84}, the non-abelian 
Toda chain can be obtained from the more general 
$N$-component Toda lattice hierarchy 
by a certain reduction and a special restriction imposed to 
the independent variables. 
The two-dimensional version of (\ref{int0}), 
equation (\ref{int0a}), is generalized to 
the following equation for 
an $N\times N$ invertible matrix $g_s=g_s(x,y)$:
\beq\label{int0b}
\p_{y}(\p_x g_s\, g^{-1}_s)=g_sg^{-1}_{s-1}-
g_{s+1}g^{-1}_s.
\eeq
This equation is called the non-abelian two-dimensional Toda 
lattice (or simply non-abelian Toda lattice).

As it becomes clear from the contemporary point of view,
the treatment of the multi-component systems
in \cite{UT84} 
was not complete. In particular, the tau-function
of the multi-component Toda lattice hierarchy was not discussed there.
The aim of our paper is to
give a more comprehensive account of the theory and suggest a natural
extension of the system by including into play 
$N$ discrete variables rather than
one (as it was in \cite{UT84}). There are also 
some other differences between \cite{UT84} and our treatment.
In what follows,
we comment on them as they occur in the presentation, in footnotes.

The independent variables of the $N$-component Toda lattice
hierarchy are $2N$ 
infinite sets of ``times''
$$
{\boldsymbol t} =\{{\boldsymbol t} _1, {\boldsymbol t} _2, \ldots , {\boldsymbol t} _N\}, \qquad
{\boldsymbol t}_{\alpha}=\{t_{\alpha , 1}, t_{\alpha , 2}, 
t_{\alpha , 3}, \ldots \, \},
\qquad \alpha = 1, \ldots , N
$$
and
$$
\bar {\boldsymbol t} =\{\bar {\boldsymbol t} _1, \bar {\boldsymbol t} _2, \ldots , 
\bar {\boldsymbol t} _N\}, \qquad
\bar {\boldsymbol t} _{\alpha}=\{ \bar t_{\alpha , 1}, \bar t_{\alpha , 2}, 
\bar t_{\alpha , 3}, \ldots \, \},
\qquad \alpha = 1, \ldots , N
$$
which in the algebraic approach can be in general 
regarded as complex numbers.
Besides, in our, more general, version there are $N$ discrete 
variables $\bolds =\{s_1, \ldots , s_N\}$ which are
integers. Equation (\ref{int0a}) corresponds to the simplest 
case $N=1$, $x=t_{1,1}$, $y=\bar t_{1,1}$.

Let us describe the contents of the paper in some detail. 

We develop three different approaches 
which are shown to lead to the same result.
The starting point of one of them is the Lax representation
in terms of difference Lax operators $\boldL$,
$\boldLbar$ and auxiliary difference operators 
$\boldU_{\alpha}$, $\boldUbar_{\alpha}$,
$\boldQ_{\alpha}$, $\boldQbar_{\alpha}$
(or $\boldP_{\alpha}$, $\boldPbar_{\alpha}$)
with $N\times N$ 
matrix coefficients. The logical structure of our consideration
can be illustrated by the following chain:
\begin{multline*}
\mbox{Lax representation} \longrightarrow
\mbox{Zakharov-Shabat representation} \longrightarrow
\\
\longrightarrow
\mbox{wave functions} \longrightarrow
\mbox{tau-function}.
\end{multline*}
Namely, starting from the Lax representation, we prove its equivalence
to the Zakharov-Shabat representation, then prove existence of so-called
wave operators (which are sometimes called dressing operators) and use them
to introduce $N\times N$ matrix wave functions 
$\boldPsi (\bolds , {\boldsymbol t} , \bar {\boldsymbol t} ;z)$,
$\boldPsibar (\bolds , {\boldsymbol t} , \bar {\boldsymbol t} ;z)$
(and the adjoint wave functions
$\boldPsi^* (\bolds , {\boldsymbol t} , \bar {\boldsymbol t} ;z)$,
$\boldPsibar^* (\bolds , {\boldsymbol t} , \bar {\boldsymbol t} ;z)$)
depending on all the times and on
a complex spectral parameter $z$. The wave functions 
satisfy a system of linear equations whose compatibility
implies nonlinear equations of the Toda lattice hierarchy and thus
provide a linearization of it.
We give a detailed proof that the wave functions of the 
multi-component Toda lattice hierarchy satisfy the integral bilinear
identity
\beq\label{int1}
\oint_{C_\infty}\!\!
    \boldPsi(\bolds,{\boldsymbol t} ,\bar {\boldsymbol t};z)\,
    \boldPsi^*(\bolds',{\boldsymbol t} ',\bar {\boldsymbol t}';z)\,
    dz
    =
    \oint_{C_0}\!\!
    \bar\boldPsi(\bolds,{\boldsymbol t} ,\bar {\boldsymbol t};z)\,
    \bar\boldPsi^*(\bolds',{\boldsymbol t} ',\bar {\boldsymbol t}';z)\,
    dz,
    \eeq
valid for all $\bolds , {\boldsymbol t} , \bar {\boldsymbol t} , 
\bolds ', {\boldsymbol t} ', \bar {\boldsymbol t} '$, where $C_{\infty}$ and $C_0$
are contours encircling $\infty$ and $0$ respectively
(Proposition \ref{predl:bil-res}). Next we prove that the identity
(\ref{int1}) implies existence of a tau-function which is an
$N\times N$ matrix with matrix elements $\tau_{\alpha \beta}
(\bolds , {\boldsymbol t} , \bar {\boldsymbol t} )$, $\alpha , \beta =1, \ldots , N$
(Theorem \ref{theorem:tau-function}).
The tau-function is shown to satisfy the following integral
bilinear equation:
\beq\label{int2}
\begin{array}{l}
\displaystyle{
\sum_{\gamma =1}^N (-1)^{\delta_{\beta \gamma}}\oint_{C_{\infty}}
z^{s_{\gamma}-s'_{\gamma}+\delta_{\alpha \gamma}+\delta_{\beta \gamma}
-2}e^{\xi ({\boldsymbol t} _{\gamma}-{\boldsymbol t} '_{\gamma},z)}
\tau_{\alpha \gamma}(\bolds , {\boldsymbol t} -[z^{-1}]_{\gamma}, \bar {\boldsymbol t} )
\tau_{\gamma \beta}(\bolds ', {\boldsymbol t} '+[z^{-1}]_{\gamma}, \bar {\boldsymbol t} ')
\, dz}
\\ \\
\displaystyle{
=\sum_{\gamma =1}^N (-1)^{\delta_{\alpha \gamma}}\oint_{C_{\infty}}
z^{s_{\gamma}'-s_{\gamma}-2}e^{\xi (\bar 
{\boldsymbol t} _{\gamma}-\bar {\boldsymbol t} '_{\gamma},z)}}
\\ \\
\displaystyle{\phantom{aaaaaaaaaaaaaaaa}
\times \tau_{\alpha \gamma}(\bolds +[1]_{\gamma}, 
{\boldsymbol t} , \bar {\boldsymbol t} -[z^{-1}]_{\gamma})
\tau_{\gamma \beta}(\bolds '-[1]_{\gamma}, 
{\boldsymbol t} ', \bar {\boldsymbol t} '+[z^{-1}]_{\gamma})
\, dz},
\end{array}
\eeq
which is valid for all
$\bolds , {\boldsymbol t} , \bar {\boldsymbol t} , 
\bolds ', {\boldsymbol t} ', \bar {\boldsymbol t} '$ (Theorem \ref{theorem:tautau}). 
Here
$\displaystyle{
\xi ({\boldsymbol t} _{\gamma},z)=\sum_{k\geq 1}t_{\gamma ,k}z^k,}
$
$
{\boldsymbol t} \pm [z^{-1}]_{\gamma}$ is the set ${\boldsymbol t} $ in which
$t_{\gamma ,k}$ is shifted by $\pm \frac{1}{k}\, z^{-k}$ and
$\bolds \pm [1]_{\gamma}$ is the set $\bolds $ in which 
$s_{\gamma}$ is shifted by $\pm 1$. Equation (\ref{int2}) is the
generating equation which encodes all differential-difference 
equations of the hierarchy. They are obtained from it by expanding
both sides 
in the Taylor series in ${\boldsymbol t} -{\boldsymbol t} '$, $\bar {\boldsymbol t} -\bar {\boldsymbol t} '$.
This approach is developed in Sections 2--5. In Section 
\ref{subsection:Hirota-Miwa} we also obtain various bilinear
equations of the Hirota-Miwa type as corollaries of (\ref{int2}).

The second approach developed in Section 6 
is based on the universal hierarchy
named so in the recent paper \cite{KZ23}. It is basically the 
multi-component KP hierarchy in which the discrete variables 
are allowed to take arbitrary complex values (but equations 
with respect to them are difference).
The tau-function of this hierarchy is a matrix-valued function obeying
a bilinear integral equation. We show how the $N$-component
Toda lattice hierarchy can be embedded into the $2N$-component universal
hierarchy and define the tau-function of the former in terms of
that of the latter. As a result, we obtain a bilinear integral
equation for the tau-function of the Toda lattice hierarchy introduced
in
this way and show that it becomes equivalent to (\ref{int2})
after a simple redefinition of the tau-function by multiplying its
matrix elements by certain sign factors. 

The third approach which is followed in Section 7 is based on the free
fermion technique developed by the Kyoto school 
\cite{DJKM83,JM83,DJM93}. In this approach, the tau-function is
represented as a vacuum expectation value of certain fermionic
operators. In order to describe multi-component hierarchies,
one should work with multi-component fermionic operator fields
$\psi^{(\alpha )}(z)$, $\psi^{*(\alpha )}(z)$,
$\alpha =1, \ldots , N$ obeying the standard
anticommutation relations. The basic fact from which the 
bilinear equation for the tau-function follows is the operator
bilinear identity
\begin{equation}
    \sum_{\gamma =1}^N \mbox{res} \left [\frac{dz}{z} \, 
    \psi^{(\gamma )}(z)g\otimes \psi^{*(\gamma )}(z)g\right ]
    =
    \sum_{\gamma =1}^N \mbox{res} \left [\frac{dz}{z} \, 
    g\psi^{(\gamma )}(z)\otimes g\psi^{*(\gamma )}(z)\right ]
\label{int3}
\end{equation}
characterizing the Clifford group elements $g$ of the general form
\beq\label{int4}
g=\exp \Bigl (\sum_{\alpha, \beta =1}^N
\oint \!\! \oint A^{(\alpha \beta )}(z, w)
\psi^{*(\alpha )}(z)\psi^{(\beta )}(w)
dz dw\Bigr ),
\eeq
where $A^{(\alpha \beta )}(z, w)$ is some function of two
complex variables. 
(The operation res in (\ref{int3}) 
is defined as $\displaystyle{\mbox{res}\Bigl (
\sum_k a_k z^k \, dz\Bigr )=a_{-1}}$.)
The bilinear equation for the tau-function is
obtained as a corollary of (\ref{int3}) after acting
by both sides to a tensor product of 
certain states from the fermionic Fock space
and applying the 
(non-abelian) bosonization rules \cite{KL93}. This job is done
in Section 7.2. As a result, we obtain the integral bilinear
equation for the tau-function defined as the expectation value
\beq\label{int5}
\tau_{\alpha \beta}(\bolds , {\boldsymbol t} , \bar {\boldsymbol t} )=
(-1)^{|\bolds |(|\bolds |-1)/2}\, 
\bigl <\bolds +[1]_{\alpha}-[1]_{\beta}\bigl |e^{J({\boldsymbol t} )}g
e^{-\bar J(\bar {\boldsymbol t} )}\bigr |\bolds \bigr >, \qquad
|\bolds |=\sum_{\gamma =1}^N s_{\gamma}.
\eeq
The operators $J({\boldsymbol t} )$, $\bar J(\bar {\boldsymbol t} )$ are defined in 
(\ref{JJ}). 
The bilinear equation obtained in this way 
is the same as the one from Section 6. (For the one-component Toda
lattice hierarchy this approach was studied in \cite{KMMOZ91},
\cite{Tak91} and \cite{GM92}.) In Section \ref{subsection:example}
we give an example of exact solution which is a multi-component
analogue of the one-soliton solution to the Toda lattice.

In Appendix we show that the non-abelian two-dimensional Toda lattice
\eqref{int0b} is indeed included in our multi-component Toda lattice
hierarchy.

\section{Lax formalism for the multi-component Toda lattice hierarchy}
\label{sec:def-multi-toda}

In this section we define the $N$-component Toda
lattice hierarchy in the Lax formalism, essentially following \S3.1 of
\cite{UT84} but there are differences from their formulation in the
following  points:
\begin{itemize}
 \item We replace $\Integer\times\Integer$-matrices in \cite{UT84} by
       difference operators.
 \item We introduce the set of $N$ discrete variables
 $\{s_1,\dotsc, s_N\}\in\Integer^N$ which we denote by $\bolds$.
 In \cite{UT84} only one discrete variable
       $s$ was introduced and the shift operator corresponded to a matrix
       ${\boldsymbol\Lambda}$ acting on the space
       $\Comp^\Integer\tensor\Comp^N$. 
\end{itemize}

The continuous independent variables of the multi-component Toda
hierarchy are $2N$ infinite sets of ``times'' 
$$ {\boldsymbol t}
=\{{\boldsymbol t} _1, {\boldsymbol t} _2, \ldots , {\boldsymbol t}
_N\}, \qquad {\boldsymbol t} _{\alpha}=\{t_{\alpha , 1}, t_{\alpha , 2},
t_{\alpha , 3}, \ldots \, \}, \qquad \alpha = 1, \ldots , N 
$$ 
and 
$$
\bar {\boldsymbol t} =\{\bar {\boldsymbol t} _1, \bar {\boldsymbol t}
_2, \ldots , \bar {\boldsymbol t} _N\}, \qquad 
\bar {\boldsymbol t}_{\alpha}=
\{ \bar t_{\alpha , 1}, \bar t_{\alpha , 2}, 
\bar t_{\alpha , 3}, \ldots \, \}, 
\qquad \alpha = 1, \ldots , N $$ which are in general
complex numbers.  The sets ${\boldsymbol t} $ and $\bar {\boldsymbol t}
$ are often called ``positive'' and ``negative'' times.  Besides, there
are discrete ``zeroth times'' $s_1, \ldots , s_N$ which are integer
numbers. The set of ``zeroth times'' is denoted as $\bolds =\{s_1,
\ldots , s_N\}$.

We use the following notations:
\begin{itemize}
 \item Diagonal matrix units
$
    E_{\alpha}
    :=
    (\delta_{\mu \alpha}\delta_{\nu \alpha})_{\mu ,\nu=1,\dotsc,N}
    \in\Mat(N\times N,\Comp)
$.
%

 \item The unit matrix, $1_N:=(\delta_{\a \b})_{\a,\b=1,\dotsc,N}$,

 \item The shift operators
       $e^{\der_{s_\a}}f(\bolds):=f(\bolds+[1]_\a)$, where
       the notation $\bolds+[1]_\a$ means that the $\alpha$-th
       component is shifted by 1 and all other components remain
       unchanged. We define the total
       shift operator $e^{\der_s}$ by $e^{\der_s}=
       e^{\der_{s_1}+\dotsb+\der{s_N}}$. The operator $e^{\der_s}$ acts as
       $e^{\der_s}f(\bolds)=f(\bolds+\boldone)$, where
       $\bolds+\boldone=\{s_1+1,\dotsc,s_N+1\}$. 

 \item      A difference operator $A$ acting on an $N\times N$-matrix
       function $f(\bolds)$ as
\[
    Af(\bolds)
    =
    \sum_{j\in\Integer} a_j(\bolds)\, f(\bolds+j \boldone),
    \qquad
    a_j(\bolds) \in\Mat(N\times N,\Comp),
\]
 is expressed in the form
\[
    A
    =
    \sum_{j\in\Integer} a_j(\bolds)\, e^{j\der_s}.
\]
       For such an operator we define the non-negative shift part and
       the negative shift part as
\begin{equation}
    A_{\geq0}:=
    \sum_{j\geq0} a_j(\bolds)\, e^{j\der_s},
    \quad
    A_{<0}:=
    \sum_{j<0} a_j(\bolds)\, e^{j\der_s}.
\label{A+-}
\end{equation}

\end{itemize}

\subsection{Lax and Zakharov-Shabat equations}
\label{subsection:LZS}

In the framework of the Lax formalism, 
the multi-component Toda lattice hierarchy is a
system of partial differential equations for the following matrix-valued
difference operators\footnote{%
In \cite{UT84} it was assumed that $\bar b_0(\bolds)$ and $\bar
u_{\alpha,0}(\bolds)$ were expressed through a 
separately introduced matrix
$\tilde w_0(\bolds)$. We will show in \propref{proposition:tildew0} that
this follows from the algebraic relations imposed below on the 
difference operators. The 
operators $\boldQ_\alpha(\bolds)$
and $\bar \boldQ_\alpha(\bolds)$ did not appear in \cite{UT84}.
}:
\begin{equation}
 \begin{aligned}
    \boldL(\bolds) 
    &= \sum_{j=0}^\infty b_j(\bolds)\, e^{(1-j)\der_s}, &
    b_0(\bolds) &=1_N,
\\
    \boldLbar(\bolds) 
    &= \sum_{j=0}^\infty \bar b_j(\bolds)\, e^{(j-1)\der_s}, &
    \bar b_0(\bolds)
    &\in\Mat(N\times N,\Comp)\; \mbox{is invertible},
 \end{aligned}
\label{def:L,Lbar}
\end{equation}
\begin{equation}
 \begin{aligned}
    \boldU_\alpha(\bolds) 
    &= \sum_{j=0}^\infty u_{\alpha,j}(\bolds)\, e^{-j\der_s}, &
    u_{\alpha,0}(\bolds) &= E_\alpha,
\\
    \boldUbar_\alpha(\bolds) 
    &= \sum_{j=0}^\infty \bar u_{\alpha,j}(\bolds)\, e^{j\der_s},&
    \bar u_{\alpha,0}(\bolds) 
    &\in\Mat(N\times N,\Comp),
 \end{aligned}
\label{def:U,Ubar}
\end{equation}
\begin{equation}
 \begin{aligned}
    \boldQ_\alpha(\bolds) &= e^{-\der_{s_\alpha}} \boldP_\alpha(\bolds),
    \ 
    \boldP_\alpha(\bolds) 
    = \sum_{j=0}^\infty p_{\alpha,j}(\bolds)\, e^{-j\der_s}, &
    p_{\alpha,0}(\bolds) &= 1_N,
\\
    \boldQbar_\alpha(\bolds) 
    &= e^{-\der_{s_\alpha}} \boldPbar_\alpha(\bolds),
    \ 
    \boldPbar_\alpha(\bolds) 
    = \sum_{j=0}^\infty \bar p_{\alpha,j}(\bolds)\, e^{j\der_s},&
    \bar p_{\alpha,0}(\bolds) 
    &\in\Mat(N\times N,\Comp).
 \end{aligned}
\label{def:Q,Qbar}
\end{equation}
Here $\alpha$ is an index from $1$ to $N$ and $b_j(\bolds)$, 
$\bar b_j(\bolds)$, $u_{\alpha,j}(\bolds)$, $\bar u_{\alpha,j}(\bolds)$,
$p_{\alpha,j}(\bolds)$ and $\bar p_{\alpha,j}(\bolds)$ 
are matrix-valued functions of size $N\times N$ depending
on the variables $\bolds=\{s_1,\dotsc,s_N\}$,
${\boldsymbol t} $ and
$\boldtbar$.
(We do not write dependence on $\boldt$ and $\boldtbar$ explicitly
unless it is necessary.)

We require that these operators satisfy 
the following algebraic conditions: for any
$\alpha,\beta=1,\dotsc,N$ it holds that
\begin{gather}
    [\boldL(\bolds), \boldU_\alpha(\bolds)] =
    [\boldL(\bolds), \boldQ_\alpha(\bolds)] =
    [\boldU(\bolds), \boldQ_\alpha(\bolds)] =
    [\boldQ_\alpha(\bolds), \boldQ_\beta(\bolds)] = 0, 
\label{commutative:L,U,Q}
\\
    \boldU_\alpha(\bolds) \boldU_\beta(\bolds)
    = \delta_{\alpha\beta} \boldU_\beta(\bolds),\qquad
    \sum_{\alpha=1}^N \boldU_\alpha(\bolds) = 1_{N}, 
\label{U:alg-cond}
\\
    \prod_{\alpha=1}^N \boldQ_\alpha(\bolds)
    =
    \boldL^{-1}(\bolds),
\label{Q:alg-cond}
\\
    [\boldLbar(\bolds), \boldUbar_\alpha(\bolds)] =
    [\boldLbar(\bolds), \boldQbar_\alpha(\bolds)] =
    [\boldUbar(\bolds), \boldQbar_\alpha(\bolds)] =
    [\boldQbar_\alpha(\bolds), \boldQbar_\beta(\bolds)] = 0, 
\label{commutative:Lbar,Ubar,Qbar}
\\
    \boldUbar_\alpha(\bolds) \boldUbar_\beta(\bolds) 
    = \delta_{\alpha\beta} \boldUbar_\beta(\bolds),\qquad
    \sum_{\alpha=1}^N \boldUbar_\alpha(\bolds) = 1_{N},
\label{Ubar:alg-cond}
\\
    \prod_{\alpha=1}^N \boldQbar_\alpha(\bolds)
    =
    \boldLbar(\bolds),
\label{Qbar:alg-cond}
\end{gather}
and 
\begin{equation}
    \boldQ_\a(\bolds)
    \sum_{\b =1}^N
    \boldU_\b(\bolds) \boldL^{\delta_{\a \b}}(\bolds)
    =
    \boldQbar_\a(\bolds)
    \sum_{\b =1}^N
    \boldUbar_\b (\bolds) \boldLbar^{-\delta_{\a \b }}(\bolds)
\label{ULQ=UbarLbarQbar:basis}
\end{equation}
for each $\alpha=1,\dotsc,N$.

Under these conditions, the Lax equations for the multi-component Toda
lattice hierarchy is the following system: for any indices
$\alpha,\beta=1,\dotsc,N$, $n=1,2,\dotsc$ and for any operators $\boldA=
\boldL(\bolds)$, $\boldLbar(\bolds)$, $\boldU_\beta(\bolds)$,
$\boldUbar_\beta(\bolds)$, $\boldQ_\beta(\bolds)$,
$\boldQbar_\beta(\bolds)$ it holds
\begin{equation}
    \frac{\der \boldA}{\der t_{\alpha,n}}
    =
    [\boldB_{\alpha,n}(\bolds),\boldA],
    \qquad
    \frac{\der \boldA}{\der \bar t_{\alpha,n}}
    =
    [\boldBbar_{\alpha,n}(\bolds),\boldA],
\label{lax:L,Lbar,U,Ubar,Q,Qbar}
\end{equation}%
where
\begin{equation}
    \boldB_{\alpha,n}(\bolds)
    := (\boldL^n (\bolds) \boldU_\alpha(\bolds))_{\geq0},\qquad
    \boldBbar_{\alpha,n}(\bolds)
    := (\boldLbar^n (\bolds) \boldUbar_\alpha(\bolds))_{<0}.
\label{def:BBbar}
\end{equation}
Actually, discrete Lax equations for the shifts of
$s_{\alpha}$ by 1 are included in \eqref{commutative:L,U,Q} and
\eqref{commutative:Lbar,Ubar,Qbar}. (See (\ref{PL,PU}) and
(\ref{PbarLbar,PbarUbar}) below.) The meaning of condition
\eqref{ULQ=UbarLbarQbar:basis} is related to this discrete
evolution. (See \remref{rem:meaning-ULQ-condition}.)

\begin{lem}
\label{lem:ULQ=UbarLbarQbar:general}
 The set of conditions \eqref{ULQ=UbarLbarQbar:basis}
 ($\alpha=1,\dotsc,N$) is equivalent to
\begin{equation}
    \prod_{\alpha=1}^N
    \boldQ_\alpha^{a_\alpha}(\bolds)
    \sum_{\beta=1}^N
    \boldU_\beta(\bolds) \boldL^{a_\beta}(\bolds)
    =
    \prod_{\alpha=1}^N
    \boldQbar_\alpha^{a_\alpha}(\bolds)
    \sum_{\beta=1}^N
    \boldUbar_\beta(\bolds) \boldLbar^{-a_\beta}(\bolds).
\label{ULQ=UbarLbarQbar:general}
\end{equation}
 for all $\bolda=\{a_1,\dotsc,a_N\}\in\Integer^N$.
\end{lem}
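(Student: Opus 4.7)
The implication $\eqref{ULQ=UbarLbarQbar:general}\Rightarrow\eqref{ULQ=UbarLbarQbar:basis}$ is immediate, since the basic condition is just the specialization $\bolda=\bolde_{\alpha}$ (the $\alpha$-th unit vector in $\Integer^N$). For the converse, my plan is to introduce the auxiliary operators
\[
    M_\bolda:=\sum_{\beta=1}^N \boldU_\beta(\bolds)\,\boldL^{a_\beta}(\bolds),
    \qquad
    \bar M_\bolda:=\sum_{\beta=1}^N \boldUbar_\beta(\bolds)\,\boldLbar^{-a_\beta}(\bolds),
\]
so that \eqref{ULQ=UbarLbarQbar:basis} reads $\boldQ_\alpha M_{\bolde_\alpha}=\boldQbar_\alpha\bar M_{\bolde_\alpha}$ while \eqref{ULQ=UbarLbarQbar:general} reads $\prod_\alpha \boldQ_\alpha^{a_\alpha}M_\bolda=\prod_\alpha \boldQbar_\alpha^{a_\alpha}\bar M_\bolda$, and to deduce the latter from the former by combining three elementary facts.

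First, I would establish the multiplicative identity
\[
    M_\bolda M_\boldb
    =\sum_{\gamma,\delta}\boldU_\gamma\boldL^{a_\gamma}\boldU_\delta\boldL^{b_\delta}
    =\sum_{\gamma,\delta}\boldU_\gamma\boldU_\delta\boldL^{a_\gamma+b_\delta}
    =\sum_\gamma \boldU_\gamma\boldL^{a_\gamma+b_\gamma}
    =M_{\bolda+\boldb},
\]
which uses only $[\boldL,\boldU_\delta]=0$ from \eqref{commutative:L,U,Q} together with the orthogonal-idempotent relations \eqref{U:alg-cond}. In particular $M_\boldzero=\sum_\gamma\boldU_\gamma=1_N$, so $M_{-\bolda}=M_\bolda^{-1}$ and every $M_\bolda$ is invertible; an identical computation using \eqref{commutative:Lbar,Ubar,Qbar} and \eqref{Ubar:alg-cond} gives $\bar M_\bolda\bar M_\boldb=\bar M_{\bolda+\boldb}$.

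Second, I would collect the commutativity and invertibility facts needed for assembly. By \eqref{commutative:L,U,Q} the operator $\boldQ_\alpha$ commutes with $\boldL$, with every $\boldU_\beta$ and with every $\boldQ_\beta$, hence with every $M_\boldb$; moreover $\boldQ_\alpha=e^{-\der_{s_\alpha}}\boldP_\alpha$ is invertible because $\boldP_\alpha$ has unit leading symbol $p_{\alpha,0}=1_N$. The same statements hold for $\boldQbar_\alpha$ via \eqref{commutative:Lbar,Ubar,Qbar} and \eqref{Ubar:alg-cond}. Consequently the operators $\{\boldQ_\alpha,M_{\bolde_\beta}\}_{\alpha,\beta=1}^{N}$ pairwise commute, so products of their arbitrary integer powers are unambiguous.

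Putting everything together and applying the unit-index hypothesis factor by factor,
\[
    \prod_{\alpha=1}^N\boldQ_\alpha^{a_\alpha}\,M_\bolda
    =\prod_{\alpha=1}^N\bigl(\boldQ_\alpha M_{\bolde_\alpha}\bigr)^{a_\alpha}
    =\prod_{\alpha=1}^N\bigl(\boldQbar_\alpha \bar M_{\bolde_\alpha}\bigr)^{a_\alpha}
    =\prod_{\alpha=1}^N\boldQbar_\alpha^{a_\alpha}\,\bar M_\bolda,
\]
where the outer equalities use $\prod_\alpha M_{\bolde_\alpha}^{a_\alpha}=M_{\sum_\alpha a_\alpha \bolde_\alpha}=M_\bolda$ (and similarly for $\bar M$), which follows from the multiplicative identity of the first step. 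This is exactly \eqref{ULQ=UbarLbarQbar:general}. The only piece of genuine content is the multiplicative identity $M_\bolda M_\boldb=M_{\bolda+\boldb}$; once it is in hand, everything else is routine manipulation of commuting invertible operators, so I expect this step to be the single non-trivial obstacle.
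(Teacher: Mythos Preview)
Your proof is correct and follows essentially the same approach as the paper's: both arguments rest on the multiplicative identity $M_\bolda M_\boldb=M_{\bolda+\boldb}$ (and its barred analogue), derived from the commutativity \eqref{commutative:L,U,Q} and the orthogonal-idempotent relations \eqref{U:alg-cond}, after which the general statement is assembled from the unit-vector cases by the commutativity and invertibility of the factors. The paper's version is terser---it just displays the multiplicative identity and the inverse formula and says the equivalence ``follows from these formulae''---whereas you spell out the factor-by-factor assembly explicitly, but the substance is the same.
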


\begin{proof}
%
 As the $\bolds$-variables are fixed in this lemma, we do not write them
 explicitly in the proof. 

 Because of the commutativity \eqref{commutative:L,U,Q} and the
 idempotence \eqref{U:alg-cond} we have:
\begin{multline*}
    \prod_{\alpha=1}^N
    \boldQ_\alpha^{a_\alpha}
    \bigl(
     \boldU_1 \boldL^{a_1} + \dotsb + \boldU_N \boldL^{a_N}
    \bigr)
    \times
    \prod_{\alpha=1}^N
    \boldQ_\alpha^{b_\alpha}
    \bigl(
     \boldU_1 \boldL^{b_1} + \dotsb + \boldU_N \boldL^{b_N}
    \bigr)
\\
    =
    \prod_{\alpha=1}^N
    \boldQ_\alpha^{a_\alpha+b_\alpha}
    \bigl(
     \boldU_1 \boldL^{a_1+b_1} + \dotsb + \boldU_N \boldL^{a_N+b_N}
    \bigr),
\end{multline*}
and, in particular,
\[
    \left(
    \prod_{\alpha=1}^N
    \boldQ_\alpha^{a_\alpha}
    \sum_{\beta=1}^N
    \boldU_\beta \boldL^{a_\beta}
    \right)^{-1}
    =
    \prod_{\alpha=1}^N
    \boldQ_\alpha^{-a_\alpha}
    \sum_{\beta=1}^N
    \boldU_\beta \boldL^{-a_\beta}.
\]
 Similar equations hold also for $\boldLbar$ and $\boldUbar_\alpha$'s
 because of \eqref{commutative:Lbar,Ubar,Qbar} and
 \eqref{Ubar:alg-cond}. 
 The equivalence of \eqref{ULQ=UbarLbarQbar:basis} ($\alpha=1,\dotsc,N$)
 and  \eqref{ULQ=UbarLbarQbar:general} follows from these formulae.
\end{proof}
 
\begin{zam}
\label{rem:ueno-takasaki}
 The above defined system is an extended version of the multi-component
 Toda lattice hierarchy introduced by Ueno and Takasaki in \S3.1 of
 \cite{UT84}. To recover their version we have only to restrict
 $\bolds$-variables to the form
 $\bolds=\bolds^{(0)}+s\boldone=\{s_1^{(0)}+s,\dotsc,s_N^{(0)}+s\}$ for
 a fixed $\bolds^{(0)}=\{s_1^{(0)},\dotsc,s_N^{(0)}\}$ passing to the
 single variable $s\in\Integer$. In this case, there are only shifts of
 the form $\bolds\mapsto \bolds+n \boldone$ ($n\in\Integer$). Then all
 functions $f(\bolds)$ are regarded as functions $f(s)$ of $s$ only and
 the operator $e^{\der_s}$ acts as the shift operator:
 $e^{\der_s}f(s)=f(s+1)$.

 The algebraic conditions \eqref{commutative:L,U,Q}, \eqref{U:alg-cond}
 for $\boldL$ and $\boldU_\alpha$ and conditions
 \eqref{commutative:Lbar,Ubar,Qbar}, \eqref{Ubar:alg-cond} for
 $\boldLbar$ and $\boldUbar_\alpha$ are those required by (3.1.6) in
 \cite{UT84}. The differential equations
 \eqref{lax:L,Lbar,U,Ubar,Q,Qbar} for $\boldL$, $\boldLbar$, $\boldU$
 and $\boldUbar$ are the system (3.1.8) in \cite{UT84}. The operators
 $\boldQ_\alpha$ and $\boldQbar_\alpha$ were not considered in
 \cite{UT84}.  Note that for $\bolda$ of the form $\bolda = a{\bf 1}_N$
 condition \eqref{ULQ=UbarLbarQbar:general} is trivial because of the
 other algebraic conditions and we do not have to require
 \eqref{ULQ=UbarLbarQbar:basis}.
\end{zam}

\begin{predl}
\label{predl:zs}
 (i)
 The system \eqref{lax:L,Lbar,U,Ubar,Q,Qbar} for $\boldL(\bolds)$,
 $\boldLbar(\bolds)$, $\boldU_\alpha(\bolds)$ and
 $\boldUbar_\alpha(\bolds)$ implies the following compatibility
 conditions under the algebraic constraints \eqref{commutative:L,U,Q},
 \eqref{U:alg-cond}, \eqref{commutative:Lbar,Ubar,Qbar} and
 \eqref{Ubar:alg-cond}:
\begin{align}
    &\left[
     \frac{\der}{\der t_{\alpha,m}} - \boldB_{\alpha,m}(\bolds),
     \frac{\der}{\der t_{\beta ,n}} - \boldB_{\beta, n}(\bolds)
    \right]
    = 0,
\label{zs-BB}
\\
    &\left[
     \frac{\der}{\der \bar t_{\alpha,m}} - \boldBbar_{\alpha,m}(\bolds),
     \frac{\der}{\der \bar t_{\beta ,n}} - \boldBbar_{\beta, n}(\bolds)
    \right]
    = 0,
\label{zs-BbarBbar}
\\
    &\left[
     \frac{\der}{\der t_{\alpha,m}} - \boldB_{\alpha,m}(\bolds),
     \frac{\der}{\der \bar t_{\beta ,n}} - \boldBbar_{\beta, n}(\bolds)
    \right]
    = 0.
\label{zs-BBbar}
\\
    &\left[
     \frac{\der}{\der t_{\alpha,m}}
     + (\boldL^m(\bolds) \boldU_\alpha(\bolds))_{<0},
     \frac{\der}{\der t_{\beta ,n}}
     + (\boldL^n(\bolds) \boldU_\beta(\bolds))_{<0}
    \right]
    = 0,
\label{zs-LU<0LU<0}
\\
    &\left[
     \frac{\der}{\der \bar t_{\alpha,m}} 
      + (\boldLbar^m(\bolds) \boldUbar_\alpha(\bolds))_{\geq0},
     \frac{\der}{\der \bar t_{\beta ,n}}
      + (\boldLbar^n(\bolds) \boldUbar_\beta(\bolds))_{\geq0}
    \right]
    = 0,
\label{zs-LUbar>0LUbar>0}
\\
    &\left[
     \frac{\der}{\der t_{\alpha,m}}
      + (\boldL^m(\bolds) \boldU_\alpha(\bolds))_{<0},
     \frac{\der}{\der \bar t_{\beta ,n}} - \boldBbar_{\beta,n}(\bolds)
    \right]
    = 0,
\label{zs-LU<0Bbar}
\\
    &\left[
     \frac{\der}{\der t_{\alpha,m}} - \boldB_{\alpha,m}(\bolds),
     \frac{\der}{\der \bar t_{\beta ,n}}
      + (\boldLbar^n(\bolds) \boldUbar_\beta(\bolds))_{\geq0}
    \right]
    = 0,
\label{zs-BLUbar>0}
\end{align}

 (ii)
 Conversely, the system \eqref{zs-BB}, \eqref{zs-BbarBbar} and
 \eqref{zs-BBbar} with the algebraic constraints
 \eqref{commutative:L,U,Q}, \eqref{U:alg-cond},
 \eqref{commutative:Lbar,Ubar,Qbar} and \eqref{Ubar:alg-cond} implies
 the Lax representation \eqref{lax:L,Lbar,U,Ubar,Q,Qbar} for
 $\boldL(\bolds)$, $\boldLbar(\bolds)$, $\boldU_\alpha(\bolds)$ and
 $\boldUbar_\alpha(\bolds)$.
\end{predl}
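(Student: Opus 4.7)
The plan is to treat parts (i) and (ii) separately. In both directions the essential technical ingredient is the shift-degree grading: the splitting \eqref{A+-} is multiplicative, in the sense that a product of two non-negative-shift operators remains non-negative-shift, and a product of two negative-shift operators remains negative-shift.

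For part (i), I would proceed by direct computation using three facts. First, the algebraic conditions \eqref{commutative:L,U,Q}, \eqref{U:alg-cond} (and their barred counterparts) give the pairwise commutativity $[\boldL^m\boldU_\alpha,\boldL^n\boldU_\beta]=0$. Second, the Lax equations \eqref{lax:L,Lbar,U,Ubar,Q,Qbar} combined with the Leibniz rule yield $\partial_{t_{\alpha,m}}(\boldL^n\boldU_\beta)=[\boldB_{\alpha,m},\boldL^n\boldU_\beta]$ and its analogues. Third, writing $\boldL^m\boldU_\alpha=\boldB_{\alpha,m}+{\boldsymbol M}_{\alpha,m}$ with ${\boldsymbol M}_{\alpha,m}:=(\boldL^m\boldU_\alpha)_{<0}$, I would project the Leibniz identity to the non-negative-shift part and antisymmetrize in $(\alpha,m)\leftrightarrow(\beta,n)$: the cross term $[{\boldsymbol M}_{\alpha,m},{\boldsymbol M}_{\beta,n}]$ is purely negative-shift by the grading and drops out of the projection, so what remains is exactly $[\boldB_{\alpha,m},\boldB_{\beta,n}]$, proving \eqref{zs-BB}. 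The derivation of \eqref{zs-BbarBbar} is symmetric.

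For the mixed equation \eqref{zs-BBbar}, no commutativity between $\boldL$ and $\boldLbar$ is assumed and none is needed. The key observation is that $[\boldB_{\alpha,m},\bar{\boldsymbol M}_{\beta,n}]$ is purely non-negative-shift (both factors are non-negative-shift) while $[{\boldsymbol M}_{\alpha,m},\boldBbar_{\beta,n}]$ is purely negative-shift, so both drop out of the respective projections and only $[\boldB_{\alpha,m},\boldBbar_{\beta,n}]$ survives to cancel. The remaining equations \eqref{zs-LU<0LU<0}--\eqref{zs-BLUbar>0} follow from the ones already established, by rewriting, for instance, $\partial_{t_{\alpha,m}}+(\boldL^m\boldU_\alpha)_{<0}=(\partial_{t_{\alpha,m}}-\boldB_{\alpha,m})+\boldL^m\boldU_\alpha$: the extra piece $\boldL^m\boldU_\alpha$ commutes with $\boldL^n\boldU_\beta$, and its commutator with $\partial_{t_{\alpha,m}}-\boldB_{\alpha,m}$ vanishes by the Lax equation itself.

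For part (ii), the converse is more delicate because the ZS equations involve only the $\boldB$'s and $\boldBbar$'s while the Lax representation is about $\boldL,\boldLbar,\boldU_\beta,\boldUbar_\beta$. My plan is to introduce Lax discrepancies $\boldsymbol{R}_{\alpha,m}(\boldA):=\partial_{t_{\alpha,m}}\boldA-[\boldB_{\alpha,m},\boldA]$ for each such $\boldA$ and to show them to vanish. The strategy exploits algebraic rigidity: the constraints $\sum_\alpha\boldU_\alpha=1_N$ and $[\boldL,\boldU_\alpha]=0$ express $\boldL$ and the $\boldU_\alpha$'s tightly in terms of the $\boldB_{\alpha,1}$'s modulo controlled negative-shift tails, so the ZS equations translate into recursive conditions on the coefficients of these tails at each shift order. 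An order-by-order induction in the shift degree, using idempotence and orthogonality \eqref{U:alg-cond} together with the commutativity \eqref{commutative:L,U,Q}, should then force the discrepancies to vanish. The main obstacle is precisely this recursion: one has to verify that the infinite cascade of negative-shift corrections closes consistently, so that no nonzero discrepancy can coexist with the algebraic constraints and the ZS equations simultaneously.
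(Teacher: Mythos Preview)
Your part (i) is correct and essentially the paper's argument. The paper derives \eqref{zs-BB} and \eqref{zs-LU<0LU<0} in one stroke from the same splitting computation, whereas you first get \eqref{zs-BB} and then use the rewriting $\partial_{t_{\alpha,m}}+(\boldL^m\boldU_\alpha)_{<0}=(\partial_{t_{\alpha,m}}-\boldB_{\alpha,m})+\boldL^m\boldU_\alpha$ to deduce the rest; both routes are equivalent.

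Your part (ii), however, has a genuine gap. The ``order-by-order induction in the shift degree'' is left as a sketch, and you yourself flag the recursion as the main unresolved obstacle. The paper's argument is different and uses a specific mechanism you are missing: it lets $n$ vary and exploits growth of the shift-order. From \eqref{zs-BB} one shows that
\[
\partial_{t_{\alpha,m}}(\boldL^n\boldU_\beta)-[\boldB_{\alpha,m},\boldL^n\boldU_\beta]
\]
has shift-order strictly less than $m$, \emph{for every} $n$. Summing over $\beta$ with $\sum_\beta\boldU_\beta=1_N$ gives
\[
\sum_{j=1}^n\boldL^{n-j}\bigl(\partial_{t_{\alpha,m}}\boldL-[\boldB_{\alpha,m},\boldL]\bigr)\boldL^{j-1},
\]
again of bounded order for all $n$. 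But $\boldL^{n-j}=e^{(n-j)\partial_s}+\cdots$, so if the discrepancy $\partial_{t_{\alpha,m}}\boldL-[\boldB_{\alpha,m},\boldL]$ were nonzero of order $K$, this sum would have order $K+n-1\to\infty$, a contradiction. Once the Lax equation for $\boldL$ is established, the same growth argument applied to $\boldL^n\bigl(\partial_{t_{\alpha,m}}\boldU_\beta-[\boldB_{\alpha,m},\boldU_\beta]\bigr)$ forces the $\boldU_\beta$ discrepancy to vanish. This ``let $n\to\infty$'' leverage is the key idea; a fixed-$n$ induction on negative-shift coefficients does not supply it.
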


We call the system \eqref{zs-BB}, \eqref{zs-BbarBbar} and
\eqref{zs-BBbar} the {\em Zakharov-Shabat representation} of the
multi-component Toda lattice hierarchy.

\begin{proof}

 As the $\bolds$-variables are fixed in this proposition, we do
 not write them explicitly in the proof, unless it is necessary.

 (i)
 Derivation of the Zakharov-Shabat type equations from the Lax
 representation is the same as that for the one-component system as
 shown in \cite{UT84}. Indeed, applying
 $\der_{t_{\alpha,m}}-[\boldB_{\alpha,m}, \cdot]$ to $\boldL^n
 \boldU_{\beta}$, we obtain
\[
    \frac{\der}{\der t_{\alpha,m}}(\boldL^n \boldU_\beta)
    =
    [\boldB_{\alpha,m}, \boldL^n \boldU_\beta]
    =
    [\boldB_{\alpha,m}, \boldB_{\beta,n}] +
    [\boldB_{\alpha,m}, (\boldL^n \boldU_\beta)_{<0}].
\]
 from the Lax equation \eqref{lax:L,Lbar,U,Ubar,Q,Qbar} for $\boldL$ and
 $\boldU$. Exchanging $(\alpha,m)$ and $(\beta,n)$, we also have
\[
 \begin{split}
    \frac{\der}{\der t_{\beta,n}}(\boldL^m \boldU_\alpha)
    &=
    [\boldB_{\beta,n}, \boldL^m \boldU_\alpha]
    =
    [\boldL^n \boldU_\beta - (\boldL^n \boldU_\beta)_{<0}, 
     \boldL^m \boldU_\alpha]
\\
    &=
    -[(\boldL^n \boldU_\beta)_{<0}, \boldL^m \boldU_\alpha]
    =
    -[(\boldL^n \boldU_\beta)_{<0}, \boldB_{\alpha,m}]
    -[(\boldL^n \boldU_\beta)_{<0}, (\boldL^m \boldU_\alpha)_{<0}],
 \end{split}
\]
 because of \eqref{commutative:L,U,Q} and
 \eqref{U:alg-cond}. Subtracting the latter equation from the former, we
 have
\begin{multline*}
    \frac{\der \boldB_{\beta,n}}{\der t_{\alpha,m}}
    -
    \frac{\der \boldB_{\alpha,m}}{\der t_{\beta,n}}
    - [\boldB_{\alpha,m},\boldB_{\beta,n}]
\\
    =
    \frac{\der (\boldL^m\boldU_\alpha)_{<0}}{\der t_{\beta,n}}
    -
    \frac{\der (\boldL^n\boldU_\beta)_{<0}}{\der t_{\alpha,m}}
    +
    [(\boldL^n \boldU_\beta)_{<0}, (\boldL^m \boldU_\alpha)_{<0}].
\end{multline*}
 The left-hand side is a difference operator with non-negative shifts,
 while the right-hand side is a difference operator with negative
 shifts, which implies that the both sides are zero. This implies
 \eqref{zs-BB} and \eqref{zs-LU<0LU<0}.

 The equations \eqref{zs-BbarBbar} and \eqref{zs-LUbar>0LUbar>0} are
 shown in the same way by changing ``${\boldsymbol t} $'' to ``$\bar
 {\boldsymbol t} $'', ``$\boldL$'' to ``$\boldLbar$'', ``$\boldB$'' to
 ``$\boldBbar$'' and ``$\boldU$'' to ``$\boldUbar$'' in the above
 argument.

 The equations
\[
    \frac{\der}{\der \bar t_{\beta, n}} \boldL^m\boldU_\alpha
    =
    [\boldBbar_{\beta,n}, \boldL^m\boldU_\alpha], \quad
    \frac{\der}{\der t_{\alpha, m}} \boldLbar^n\boldUbar_\beta
    =
    [\boldB_{\alpha,m}, \boldLbar^n\boldUbar_\beta], 
\]
 follow from the Lax equations
 \eqref{lax:L,Lbar,U,Ubar,Q,Qbar}. Therefore, we have
\begin{align*}
    \frac{\der \boldB_{\alpha,m}}{\der \bar t_{\beta,n}}
    - [\boldBbar_{\beta,n}, \boldB_{\alpha,m}]
    &=
    - \frac{\der}{\der \bar t_{\beta,n}}
      (\boldL^m \boldU_\alpha)_{<0}
    + [\boldBbar_{\beta,n}, (\boldL^m \boldU_\alpha)_{<0}],
\\
    \frac{\der \boldBbar_{\beta,n}}{\der t_{\alpha,m}}
    - [\boldB_{\alpha,m}, \boldBbar_{\beta,n}]
    &=
    - \frac{\der}{\der t_{\alpha,m}}
      (\boldLbar^n \boldUbar_\beta)_{\geq0}
    + [\boldB_{\alpha,m}, (\boldLbar^n \boldUbar_\beta)_{\geq0}].
\end{align*}
 Using these formulae, we can rewrite 
$
    \der_{\bar t_{\beta,n}}\boldB_{\alpha,m}
    -
    \der_{t_{\alpha,m}}\boldBbar_{\beta,n}
    + [\boldB_{\alpha,m},\boldBbar_{\beta,n}]
$
 in two ways:
\[
 \begin{split}
    & \frac{\der \boldB_{\alpha,m}}{\der \bar t_{\beta,n}}
    - \frac{\der \boldBbar_{\beta,n}}{\der t_{\alpha,m}}
    + [\boldB_{\alpha,m},\boldBbar_{\beta,n}]
\\
    ={}&
    - \frac{\der}{\der \bar t_{\beta,n}}(\boldL^m U_\alpha)_{<0}
    - \frac{\der \boldBbar_{\beta,n}}{\der t_{\alpha,m}}
    + [\boldBbar_{\beta,n}, (\boldL^m U_\alpha)_{<0}],
\\
    ={}
    & \frac{\der \boldB_{\alpha,m}}{\der \bar t_{\beta,n}}
    + \frac{\der}{\der t_{\alpha,m}}(\boldLbar^n U_\beta)_{\geq0}
    - [\boldB_{\alpha,m}, (\boldLbar^n U_\beta)_{\geq0}].
 \end{split}
\]
 Here the second line is a difference operator with negative shifts,
 while the third line is a difference operator with non-negative
 shifts. Therefore all the expressions are zero, which implies
 \eqref{zs-BBbar}, \eqref{zs-LU<0Bbar} and \eqref{zs-BLUbar>0}.

\medskip
 (ii)
 Assume that the Zakharov-Shabat equations \eqref{zs-BB},
 \eqref{zs-BbarBbar} and \eqref{zs-BBbar} hold with the algebraic
 conditions \eqref{commutative:L,U,Q}, \eqref{U:alg-cond},
 \eqref{commutative:Lbar,Ubar,Qbar} and \eqref{Ubar:alg-cond}. 
 We will show that the Lax equations \eqref{lax:L,Lbar,U,Ubar,Q,Qbar} for
 $\boldL$, $\boldLbar$, $\boldU_\alpha$ and $\boldUbar_\alpha$ follow
 from these assumptions.

 Equation \eqref{zs-BB} implies
\[
    \frac{\der \boldB_{\beta,n}}{\der t_{\alpha,m}}
    -
    [\boldB_{\alpha,m}, \boldB_{\beta,n}]
    =
    \frac{\der \boldB_{\alpha,m}}{\der t_{\beta,n}}.
\]
 Hence,
\begin{equation}
 \begin{split}
    \frac{\der}{\der t_{\alpha,m}} \boldL^n \boldU_\beta
    -
    [\boldB_{\alpha,m}, \boldL^n \boldU_\beta]
    =
    \frac{\der \boldB_{\alpha,m}}{\der t_{\beta,n}}
    +
    \frac{\der}{\der t_{\alpha,m}} (\boldL^n \boldU_\beta)_{<0}
    -
    [\boldB_{\alpha,m}, (\boldL^n \boldU_\beta)_{<0}].
 \end{split}
\label{dLnUb/dtam-[Bam,LnUb]} 
\end{equation}
 The sum of the left-hand side of \eqref{dLnUb/dtam-[Bam,LnUb]} for
 $\beta=1,\dots,N$ is equal to 
\begin{equation}
    \frac{\der}{\der t_{\alpha,m}} \boldL^n
    -
    [\boldB_{\alpha,m}, \boldL^n]
    =
    \sum_{j=1}^n
    \boldL^{n-j}
    \left(
    \frac{\der}{\der t_{\alpha,m}}\boldL - [\boldB_{\alpha,m},\boldL]
    \right)
    \boldL^{j-1}
\label{dLn/dtam-[Bam,Lnn]}
\end{equation}
 because of the condition \eqref{U:alg-cond}. The right-hand side of
 \eqref{dLnUb/dtam-[Bam,LnUb]} shows that this expression is of order
 less than $m$ for any $n$, where the order of a difference operator
 $\sum_{j=0}^\infty a_j(\bolds)\, e^{(K-j)\der_s}$ is defined to be $K$
 if $a_0(\bolds)$ does not vanish.

 Suppose that
\[
    \frac{\der}{\der t_{\alpha,m}} \boldL - [\boldB_{\alpha,m}, \boldL]
    =
    \sum_{j=0}^\infty a_j(\bolds) e^{(K-j)\der_s}
\]
 does not vanish and is of order $K$ ($a_0(\bolds)\neq0$).
 Since $\boldL^{n-j}=e^{(n-j)\der_s}+\dotsb$ and
 $
    e^{(n-j)\der_s}\,a_0(\bolds)
    =
    a_0(\bolds+(n-j)\boldone)\, e^{(n-j)\der_s}
 $, 
 the right-hand side of \eqref{dLn/dtam-[Bam,Lnn]} is of the form
\[
    \left(
     \sum_{j=1}^n a_0(\bolds+(n-j)\boldone)
    \right) e^{(K+n-1)\der_s} +
    (\text{sum of }\tilde a_k(\bolds)\, e^{k\der_s},\ k<K+n-1),
\]
 whose order grows to infinity when $n\to\infty$, which contradicts the
 previous conclusion that the order of \eqref{dLnUb/dtam-[Bam,LnUb]}
 is bounded ($<m$). Therefore,
\[
    \frac{\der}{\der t_{\alpha,m}}\boldL - [\boldB_{\alpha,m},\boldL] 
    = 0,
\]
 which implies the Lax equation \eqref{lax:L,Lbar,U,Ubar,Q,Qbar} for
 $\boldL$ and $t_{\alpha,m}$. 

 From the equation which we have just proved, it follows that 
 the expression
 \eqref{dLn/dtam-[Bam,Lnn]} vanishes and the left-hand side of
 \eqref{dLnUb/dtam-[Bam,LnUb]} is equal to
\[
    \frac{\der}{\der t_{\alpha,m}} \boldL^n \boldU_\beta
    -
    [\boldB_{\alpha,m}, \boldL^n \boldU_\beta]
    =
    \boldL^n
    \left(\frac{\der}{\der t_{\alpha,m}} \boldU_\beta
    -
    [\boldB_{\alpha,m}, \boldU_\beta]\right).
\]
 Recall that this is of order less than $m$ for any $n$ because of
 \eqref{dLnUb/dtam-[Bam,LnUb]}. However, if
\[
    \frac{\der}{\der t_{\alpha,m}} \boldU_\beta
    -
    [\boldB_{\alpha,m}, \boldU_\beta]
    =
    \sum_{j=0}^\infty u_j(\bolds) e^{(K-j)\der_s}
\]
 does not vanish and is of order $K$ ($u_0(\bolds)\neq0$), the order of
\[
    \boldL^n
    \left(\frac{\der}{\der t_{\alpha,m}} \boldU_\beta
    -
    [\boldB_{\alpha,m}, \boldU_\beta]\right)
    =
    u_0(\bolds+n\boldone) e^{(n+K)\der_s} + \dotsb
\]
 grows to infinity as $n\to\infty$. Hence
\[
    \frac{\der}{\der t_{\alpha,m}} \boldU_\beta
    -
    [\boldB_{\alpha,m}, \boldU_\beta] = 0,
\]
 which is \eqref{lax:L,Lbar,U,Ubar,Q,Qbar} for $\boldU_{\beta}$ and
 $t_{\alpha,m}$. 

 Similarly, the equation \eqref{zs-BBbar} implies
\[
    \frac{\der}{\der \bar t_{\alpha,m}} \boldL^n \boldU_\beta
    -
    [\boldBbar_{\alpha,m}, \boldL^n \boldU_\beta]
    =
    \frac{\der \boldBbar_{\alpha,m}}{\der t_{\beta,n}}
    +
    \frac{\der (\boldL^n \boldU_\beta)_{<0}}{\der \bar t_{\alpha,m}}
    -
    [\boldBbar_{\alpha,m}, (\boldL^n \boldU_\beta)_{<0}],
\]
 which means that the left-hand side is of negative order for any
 $n$. The same argument as the argument for the Lax equations
 \eqref{lax:L,Lbar,U,Ubar,Q,Qbar} for $t_{\alpha,m}$ leads to
 the Lax equations for $\bar t_{\alpha,m}$. 

 The Lax equations for $\boldLbar$ and $\boldUbar_\beta$ can be proved
 in the way parallel to the above proof for $\boldL$ and $\boldU_\beta$.
\end{proof} 

The algebraic conditions for $\boldQ_\alpha(\bolds)$ and
$\boldQbar_\alpha(\bolds)$ in \eqref{commutative:L,U,Q} and
\eqref{commutative:Lbar,Ubar,Qbar} are equivalent to the following
conditions for $\boldP_\alpha(\bolds)$ and $\boldPbar_\alpha(\bolds)$:
\begin{gather}
    \boldP_\alpha(\bolds) \boldL(\bolds)
    =
    \boldL(\bolds+[1]_\alpha) \boldP_\alpha(\bolds),\qquad
    \boldP_\alpha(\bolds) \boldU_\beta(\bolds)
    =
    \boldU_\beta(\bolds+[1]_\alpha) \boldP_\alpha(\bolds),
\label{PL,PU}
\\
    \boldPbar_\alpha(\bolds) \boldLbar(\bolds)
    =
    \boldLbar(\bolds+[1]_\alpha) \boldPbar_\alpha(\bolds),\qquad
    \boldPbar_\alpha(\bolds) \boldUbar_\beta(\bolds)
    =
    \boldUbar_\beta(\bolds+[1]_\alpha) \boldPbar_\alpha(\bolds),
\label{PbarLbar,PbarUbar}
\\
 \begin{aligned}
    \boldP_\beta(\bolds+[1]_\alpha) \boldP_\alpha(\bolds)
    &=
    \boldP_\alpha(\bolds+[1]_\beta) \boldP_\beta(\bolds),
    \\
    \boldPbar_\beta(\bolds+[1]_\alpha) \boldPbar_\alpha(\bolds)
    &=
    \boldPbar_\alpha(\bolds+[1]_\beta) \boldPbar_\beta(\bolds).
 \end{aligned}
\label{PP}
\end{gather}

\begin{zam}
\label{rem:P=shift}
 The role of the operators $\boldP_\alpha(\bolds)$ and
 $\boldPbar_\alpha(\bolds)$ is the shift of the variable $s_\alpha$ by
 $1$. They are analogues of operators $P(s)$ in \cite{Tak02} and
 $P_{\alpha,\beta}(\bolds,{\boldsymbol t} ,\der)$ in \cite{Teo11}.
\end{zam}

\begin{zam}
\label{rem:meaning-ULQ-condition}
 We can interpret the condition \eqref{ULQ=UbarLbarQbar:basis} as
 follows. Let us rewrite it in terms of the evolution operators
 $\boldP_\alpha(\bolds)$ and $\boldPbar_\alpha(\bolds)$:
\begin{equation}
    \boldP_\a(\bolds)
    \sum_{\b =1}^N
    \boldU_\b(\bolds) \boldL^{\delta_{\a \b}}(\bolds)
    =
    \boldPbar_\a(\bolds)
    \sum_{\b =1}^N
    \boldUbar_\b (\bolds) \boldLbar^{-\delta_{\a \b }}(\bolds).
\label{ULP=UbarLbarPbar:basis}
\end{equation}
 This is formally equivalent to
\begin{equation}
    \boldP_\a^{-1}(\bolds)\boldPbar_\a(\bolds)
    =
    \sum_{\b =1}^N
    \boldU_\b(\bolds) \boldL^{\delta_{\a \b}}(\bolds)
    \left(
    \sum_{\b =1}^N
    \boldUbar_\b (\bolds) \boldLbar^{-\delta_{\a \b }}(\bolds)
    \right)^{-1}.
\label{ULUbarLr:rh-decomposition}
\end{equation}
 This means that $\boldP_\alpha$ and $\boldPbar_\alpha$ are obtained by
 the Bruhat or Riemann-Hilbert type decomposition of the right-hand
 side. (In general, both sides of \eqref{ULUbarLr:rh-decomposition} do not
 converge and we need to use the form \eqref{ULP=UbarLbarPbar:basis}.)

 This is a multiplicative analogue of the definitions \eqref{def:BBbar}
 of $\boldB_{\alpha,n}$ and $\boldBbar_{\alpha,n}$ which are obtained
 from the additive Bruhat or Riemann-Hilbert type decomposition of
 a product of $\boldL$ and $\boldU_\alpha$ or a product of $\boldLbar$
 and $\boldUbar_\alpha$.
\end{zam}

Generally, one can define operators $\boldP_\bolda(\bolds)$ and
$\boldPbar_\bolda(\bolds)$ for any $\bolda\in\Integer^N$ which add $\bolda$
to the $\bolds$-variable in the $\boldL$- and $\boldU_\alpha$-operators as
\begin{gather}
    \boldP_\bolda(\bolds) \boldL(\bolds)
    =
    \boldL(\bolds+\bolda) \boldP_\bolda(\bolds),\qquad
    \boldP_\bolda(\bolds) \boldU_\alpha(\bolds)
    =
    \boldU_\alpha(\bolds+\bolda) \boldP_\bolda(\bolds),
\label{PaL,PaU}
\\
    \boldPbar_\bolda(\bolds) \boldLbar(\bolds)
    =
    \boldLbar(\bolds+\bolda) \boldPbar_\bolda(\bolds),\qquad
    \boldPbar_\bolda(\bolds) \boldUbar_\alpha(\bolds)
    =
    \boldUbar_\alpha(\bolds+\bolda) \boldPbar_\bolda(\bolds),
\label{PbaraLbar,PbaraUbar}
\end{gather}
and have the composition rules
\begin{equation}
    \boldP_\boldb(\bolds+\bolda)\boldP_\bolda(\bolds)
    = \boldP_{\bolda+\boldb}(\bolds),\qquad
    \boldPbar_\boldb(\bolds+\bolda)\boldPbar_\bolda(\bolds)
    = \boldPbar_{\bolda+\boldb}(\bolds).
\label{PP=P}
\end{equation}
In fact, one has only to put
\begin{equation}
    \boldP_\bolda(\bolds):=
    e^{\sum_{\alpha=1}^N a_\alpha \der_{s_\alpha}}
    \prod_{\alpha=1}^N
    \boldQ_\alpha^{a_\alpha}(\bolds),
    \quad
    \boldPbar_\bolda(\bolds):=
    e^{\sum_{\alpha=1}^N a_\alpha \der_{s_\alpha}}
    \prod_{\alpha=1}^N
    \boldQbar_\alpha^{a_\alpha}(\bolds).
\label{Pa}
\end{equation}

\commentout{
Or, one can define $\boldP(\bolds,\bolds')$ inductively without using
$\boldQ$ as follows. Set
\begin{gather*}
    \boldP(\bolds+n[1]_\alpha,\bolds)
    :=
    \boldP_\alpha(\bolds+(n-1)[1]_\alpha) \dotsb
    \boldP_\alpha(\bolds+2[1]_\alpha)
    \boldP_\alpha(\bolds+[1]_\alpha)
    \boldP_\alpha(\bolds),
\\
    \boldP(\bolds-n[1]_\alpha,\bolds)
    :=
    \boldP(\bolds,\bolds-n[1]_\alpha)^{-1}
\end{gather*}
for each $\alpha=1,\dotsc,N$ and $n>0$. General
$\boldP(\bolds,\bolds')$'s are composed from such operators according to
the composition rule \eqref{PP=P}. The consistency of this definition is
guaranteed by the condition \eqref{PP}. The operators
$\boldPbar(\bolds,\bolds')$ are defined similarly.
}

\noindent
It is easy to check that the Lax equations
\eqref{lax:L,Lbar,U,Ubar,Q,Qbar} for $\boldQ_\alpha(\bolds)$  and
$\boldQbar_\alpha(\bolds)$ imply
\begin{align}
    \frac{\der\boldP_\bolda(\bolds)}{\der t_{\alpha,n}}
    &=
    \boldB_{\alpha,n}(\bolds+\bolda) \boldP_\bolda(\bolds)
    -
    \boldP_\bolda(\bolds) \boldB_{\alpha,n}(\bolds),
\label{P/t}
\\
    \frac{\der\boldP_\bolda(\bolds)}{\der \bar t_{\alpha,n}}
    &=
    \boldBbar_{\alpha,n}(\bolds+\bolda) \boldP_\bolda(\bolds)
    -
    \boldP_\bolda(\bolds) \boldBbar_{\alpha,n}(\bolds),
\label{P/tbar}
\\
    \frac{\der\boldPbar_\bolda(\bolds)}{\der t_{\alpha,n}}
    &=
    \boldB_{\alpha,n}(\bolds+\bolda) \boldPbar_\bolda(\bolds)
    -
    \boldPbar_\bolda(\bolds) \boldB_{\alpha,n}(\bolds),
\label{Pbar/t}
\\
    \frac{\der\boldPbar_\bolda(\bolds)}{\der \bar t_{\alpha,n}}
    &=
    \boldBbar_{\alpha,n}(\bolds+\bolda) \boldPbar_\bolda(\bolds)
    -
    \boldPbar_\bolda(\bolds) \boldBbar_{\alpha,n}(\bolds).
\label{Pbar/tbar}
\end{align}

The following proposition provides some information\footnote{
In \cite{UT84} it was assumed from the beginning that $\bar b_0(\bolds)$
and $\bar u_{\alpha,0}(\bolds)$ had the form \eqref{bars}.
} 
about
the leading coefficients of the operators $\bar \boldL$,
$\bar \boldU_{\alpha}$, $\bar \boldP_{\alpha}$.

\begin{predl}
\label{proposition:tildew0}
There exists an invertible matrix-valued function 
$\tilde w_0=\tilde w_0 (\bolds ,
\boldt , \bar \boldt )$ such that
\beq\label{bars}
\begin{array}{l}
\bar b_0(\bolds )=\tilde w_0 (\bolds )\tilde w_0^{-1}(\bolds -\boldone ),
\\ \\
\bar u_{\alpha ,0}(\bolds )=\tilde w_0 (\bolds )E_{\alpha}
\tilde w_0^{-1} (\bolds ),
\\ \\
\bar p_{\alpha ,0}(\bolds )=\tilde w_0 (\bolds +[1]_{\alpha})
\tilde w_0^{-1} (\bolds ).
\end{array}
\eeq
\end{predl}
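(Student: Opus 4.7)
My plan is to extract leading-order consequences of the algebraic relations on $\bar\boldU_\alpha$, $\bar\boldP_\alpha$, and $\bar\boldL$, and then use them to construct $\tilde w_0(\bolds)$ step by step. Taking the zero-shift coefficient of each equation in \eqref{Ubar:alg-cond}, \eqref{PbarLbar,PbarUbar}, and \eqref{PP} respectively yields: $\bar u_{\alpha,0}(\bolds)\bar u_{\beta,0}(\bolds) = \delta_{\alpha\beta}\bar u_{\beta,0}(\bolds)$ together with $\sum_\alpha \bar u_{\alpha,0}(\bolds) = 1_N$; the intertwining relation $\bar p_{\alpha,0}(\bolds)\bar u_{\beta,0}(\bolds) = \bar u_{\beta,0}(\bolds+[1]_\alpha)\bar p_{\alpha,0}(\bolds)$; and the commutativity $\bar p_{\beta,0}(\bolds+[1]_\alpha)\bar p_{\alpha,0}(\bolds) = \bar p_{\alpha,0}(\bolds+[1]_\beta)\bar p_{\beta,0}(\bolds)$.

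The matrices $\{\bar u_{\alpha,0}(\bolds)\}_{\alpha=1}^N$ thus form a complete family of orthogonal idempotents summing to $1_N$; assuming each is nonzero, their positive ranks sum to $N$, so each has rank one, and they can be simultaneously diagonalized. Hence there exists an invertible matrix-valued function $w(\bolds)$, unique up to right multiplication by an invertible diagonal matrix, with $\bar u_{\alpha,0}(\bolds) = w(\bolds)E_\alpha w^{-1}(\bolds)$. Inserting this into the intertwining relation, and using that $\bar p_{\alpha,0}(\bolds)$ is invertible (because $\bar\boldP_\alpha$ is an invertible difference operator, as follows from invertibility of $\bar\boldL=\prod_\alpha \bar\boldQ_\alpha$), one sees that $M_\alpha(\bolds) := w^{-1}(\bolds+[1]_\alpha)\bar p_{\alpha,0}(\bolds)w(\bolds)$ commutes with every $E_\beta$ and is therefore diagonal. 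Substituting $\bar p_{\alpha,0}(\bolds) = w(\bolds+[1]_\alpha)M_\alpha(\bolds)w^{-1}(\bolds)$ into the commutativity identity cancels the outer $w$-factors and gives $M_\beta(\bolds+[1]_\alpha)M_\alpha(\bolds) = M_\alpha(\bolds+[1]_\beta)M_\beta(\bolds)$. For each fixed diagonal index $\gamma$, this is exactly the lattice compatibility condition required for the coupled shift equations $d_\gamma(\bolds+[1]_\alpha) = (M_\alpha)_{\gamma\gamma}(\bolds)\, d_\gamma(\bolds)$ to admit a solution $d_\gamma(\bolds)$ (unique up to an overall constant). Setting $D(\bolds) := \diag(d_1(\bolds),\dotsc,d_N(\bolds))$ and defining $\tilde w_0(\bolds) := w(\bolds) D(\bolds)$, one finds simultaneously $\bar u_{\alpha,0}(\bolds) = \tilde w_0(\bolds)E_\alpha \tilde w_0^{-1}(\bolds)$ and $\bar p_{\alpha,0}(\bolds) = \tilde w_0(\bolds+[1]_\alpha)\tilde w_0^{-1}(\bolds)$.

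The remaining identity for $\bar b_0(\bolds)$ then follows from \eqref{Qbar:alg-cond}, which reads $\bar\boldL(\bolds) = \prod_{\alpha=1}^N e^{-\der_{s_\alpha}}\bar\boldP_\alpha(\bolds)$: reading off the coefficient of the lowest-order shift $e^{-\der_s}$ on both sides and applying the composition rule for difference operators repeatedly yields $\bar b_0(\bolds) = \prod_{\alpha=1}^N \bar p_{\alpha,0}(\bolds - [1]_1 - \dotsb - [1]_\alpha)$. Substituting the expression just established for $\bar p_{\alpha,0}$, the product telescopes to $\tilde w_0(\bolds)\tilde w_0^{-1}(\bolds - \boldone)$, which is the desired formula. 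The main obstacle is the integrability step in the middle paragraph: one must verify carefully that the commutativity \eqref{PP} (equivalently, the commutativity of the $\bar\boldQ_\alpha$'s) is precisely what permits the diagonal matrices $M_\alpha(\bolds)$ to be absorbed simultaneously by a single diagonal $D(\bolds)$, rather than imposing incompatible shift relations on its entries.
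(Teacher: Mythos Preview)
Your argument is correct and uses the same three leading-coefficient identities as the paper (orthogonal idempotency from \eqref{Ubar:alg-cond}, the intertwining from \eqref{PbarLbar,PbarUbar}, and the cocycle from \eqref{PP}), but applies them in the reverse order. The paper first solves the cocycle \eqref{cond1} for an invertible $\tilde w_0(\bolds)$ realizing $\bar p_{\alpha,0}(\bolds)=\tilde w_0(\bolds+[1]_\alpha)\tilde w_0^{-1}(\bolds)$, then uses the intertwining to show that $\tilde w_0^{-1}(\bolds)\bar u_{\beta,0}(\bolds)\tilde w_0(\bolds)$ is $\bolds$-independent, and finally uses idempotency to identify it with $E_\beta$ up to a constant conjugation $\bar v$, which is absorbed into $\tilde w_0$. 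You instead diagonalize the idempotents first to get a preliminary $w(\bolds)$, push the intertwining through to see that the obstruction $M_\alpha(\bolds)$ is diagonal, and then invoke the cocycle to integrate the $M_\alpha$'s into a diagonal correction $D(\bolds)$. Both routes land on the same $\tilde w_0$; the paper's ordering has the advantage that the integrability step is for the full matrices $\bar p_{\alpha,0}$ rather than for $N$ separate scalar sequences, while yours makes the residual freedom (right multiplication by an $\bolds$-independent invertible diagonal matrix, cf.\ Remark~\ref{remark:freedom}) visible earlier. Your derivation of $\bar b_0(\bolds)$ via the telescoping product from \eqref{Qbar:alg-cond} is exactly the paper's final step.
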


\begin{proof}
The existence of such $\tilde w_0$ follows from 
the algebraic conditions 
(\ref{commutative:L,U,Q})--(\ref{Qbar:alg-cond}) and their
corollaries (\ref{PbarLbar,PbarUbar}), (\ref{PP}).
From (\ref{PP}) we have, comparing the leading coefficients
in the second equation:
\beq\label{cond1}
\bar p_{\alpha ,0}(\bolds +[1]_{\beta})\bar p_{\beta ,0}(\bolds )=
\bar p_{\beta ,0}(\bolds +[1]_{\alpha})\bar p_{\alpha ,0}(\bolds ).
\eeq
It follows from this condition that 
there exists a matrix-valued function $\tilde w_{0}({\bf s})$ such that
\beq\label{barp}
\bar p_{\alpha ,0}(\bolds )=\tilde w_{0}(\bolds +[1]_{\alpha})
\tilde w_0^{-1}(\bolds )
\eeq
(equation (\ref{cond1}) is the compatibility condition for recursive 
relations $\tilde w_{0}({\bf s}+[1]_{\alpha})=\bar p_{\alpha ,0}(\bolds )
\tilde w_0(\bolds )$ which allow one to find values of the function
$\tilde w_{0}({\bf s})$ in all points of the $\bolds$-lattice
starting from some initial value
$\tilde w_{0}(0)$).
Plugging this into the second equation of (\ref{PbarLbar,PbarUbar}) and 
equating the leading coefficients, we have:
$$
\tilde w_0^{-1}(\bolds )\bar u_{\beta ,0}(\bolds )\tilde w_0(\bolds )=
\tilde w_0^{-1}(\bolds +[1]_{\alpha})\bar u_{\beta ,0}(\bolds +[1]_{\alpha})
\tilde w_0(\bolds +[1]_{\alpha})
$$
for all $\alpha$, whence
\beq\label{baru}
\bar u_{\beta ,0}(\bolds )=\tilde w_0(\bolds )\bar u_{\beta ,0}
\tilde w_0^{-1}(\bolds ),
\eeq
where $\bar u_{\beta ,0}$ is an $\bolds $-independent matrix.
From commutativity of the operators $\boldUbar_{\alpha}$ it follows that
the matrices $\bar u_{\beta ,0}$ for $\beta =1, \ldots , N$ 
commute and can be simultaneously diagonalized.
The algebraic conditions (\ref{Ubar:alg-cond}) imply that
$\bar u_{\beta ,0}\bar u_{\alpha ,0}=\delta_{\alpha \beta}\bar 
u_{\beta ,0}$,
$\displaystyle{\sum_{\alpha =1}^N \bar u_{\alpha ,0}=1_N}$, 
hence the matrix $\bar u_{\beta ,0}$ has $N-1$ zero eigenvalues 
and one eigenvalue equal to 1. 
Therefore,
if we choose the order of eigenvectors of $\bar u_{\beta,0}$'s
appropriately,
$\bar u_{\beta ,0}=\bar vE_{\beta}\bar v^{-1}$, where
$\bar v$ is a non-degenerate $\bolds$-independent 
matrix. The redefinition $\tilde w_0 \to \tilde w_0 \bar v^{-1}$
does not spoil the relation (\ref{barp}) and allows one to put
$\bar u_{\beta ,0}=E_{\beta}$ in (\ref{baru}) without loss
of generality.
At last, from (\ref{Qbar:alg-cond}) 
we conclude that
\beq\label{barb}
\bar b_0(\bolds )=\tilde w_0(\bolds )\tilde w_0^{-1}(\bolds -{\bf 1}).
\eeq
Thus all the relations (\ref{bars}) are proved. 
\end{proof}

\begin{zam}
\label{remark:freedom}
There is a freedom in the choice of the matrix $\tilde w_0(\bolds )$:
it can be multiplied from the right by
an arbitrary invertible
diagonal matrix which may depend on 
${\boldsymbol t}, \bar \boldt $ but
not on $\bolds $. This fact will be essential below in
section \ref{subsection:gauge}.
\end{zam}

\subsection{Gauge transformations}
\label{subsection:gauge}

This subsection is an extended remark on the choice of the 
first coefficients in the operators $\boldL$, $\boldU_{\alpha}$,
$\boldP_{\alpha}$ in (\ref{def:L,Lbar})--(\ref{def:Q,Qbar})
($b_0(\bolds )=p_{\alpha ,0}(\bolds )=1_N$, $u_{\alpha ,0}(\bolds )=
E_{\alpha}$) and on the resulting 
asymmetry in the definitions of the 
operators $\boldL$, $\boldU_{\alpha}$,
$\boldP_{\alpha}$ and their bar-counterparts.

Let the coefficients $b_0(\bolds )$, $u_{\alpha ,0}(\bolds )$,
$p_{\alpha ,0}({\bf s})$ of the operators $\boldL$, $\boldU_{\alpha}$,
$\boldP_{\alpha}$ in (\ref{def:L,Lbar})--(\ref{def:Q,Qbar})
be some matrix-valued functions of $\bolds $, $\boldt $,
$\boldtbar$. Suppose that $b_0(\bolds )$ and 
$p_{\alpha ,0}({\bf s})$ are invertible matrices.
Then the Lax equations (\ref{lax:L,Lbar,U,Ubar,Q,Qbar}) imply that 
they do not depend on ${\boldsymbol t}, \boldtbar $. Indeed, from
$$
\begin{array}{l}
\displaystyle{
\frac{\p \boldL}{\p t_{\alpha ,k}}=[(\boldL^k \boldU_{\alpha})_{\geq 0},
\boldL ]=-[(\boldL^k \boldU_{\alpha})_{<0},
\boldL ]},
\\ \\
\displaystyle{
\frac{\p \boldL}{\p \bar 
t_{\alpha ,k}}=[(\boldLbar^k \boldUbar_{\alpha})_{<0},
\boldL ]},
\end{array}
$$
we deduce, comparing the coefficients in front of $e^{\p_s}$, that
$$
\frac{\p b_0(\bolds)}{\p t_{\alpha ,k}}=
\frac{\p b_0(\bolds )}{\p \bar t_{\alpha ,k}}=0 \quad
\mbox{for all $\alpha , k$}.
$$
The argument for
$u_{\alpha ,0}(\bolds )$,
$p_{\alpha ,0}(\bolds )$ is similar. 
To find the dependence of these functions on $\bolds $,
we almost repeat the proof of \propref{proposition:tildew0}, using the
discrete Lax equations (\ref{PL,PU}), (\ref{PP}). 
From (\ref{PP}) we have, comparing the leading coefficients
in the first equation:
\beq\label{cond}
p_{\alpha ,0}(\bolds +[1]_{\beta})p_{\beta ,0}(\bolds )=
p_{\beta ,0}(\bolds +[1]_{\alpha})p_{\alpha ,0}(\bolds ).
\eeq
Similarly to the proof of Proposition \ref{proposition:tildew0},
it follows from this condition that 
there exists a matrix-valued function $w_{0}({\bf s})$ such that
$$
p_{\alpha ,0}(\bolds )=w_{0}(\bolds +[1]_{\alpha})w_0^{-1}(\bolds ).
$$
Plugging this into (\ref{PL,PU}) and equating the leading
coefficients, we have:
$$
w_0^{-1}(\bolds )u_{\beta ,0}(\bolds )w_0(\bolds )=
w_0^{-1}(\bolds +[1]_{\alpha})u_{\beta ,0}(\bolds +[1]_{\alpha})
w_0(\bolds +[1]_{\alpha})
$$
for all $\alpha$, whence
$$
u_{\beta ,0}(\bolds )=w_0(\bolds )u_{\beta ,0}w_0^{-1}(\bolds ),
$$
where $u_{\beta ,0}$ is a constant ($\bolds $-independent) matrix.
The algebraic conditions (\ref{U:alg-cond}) imply that
$u_{\beta ,0}u_{\alpha ,0}=\delta_{\alpha \beta}u_{\beta ,0}$,
$\displaystyle{\sum_{\alpha =1}^N u_{\alpha ,0}=1_N}$, 
whence $u_{\beta ,0}=vE_{\beta}v^{-1}$, where
$v$ is a constant non-degenerate matrix. At last, from (\ref{Q:alg-cond}) 
we conclude
that
$$
b_0(\bolds )=w_0(\bolds )w_0^{-1}(\bolds +{\bf 1}).
$$
Therefore, re-defining $w_0(\bolds )\to w_0(\bolds )v$,
we see that the ``gauge transformation''
$\boldA \to w_0^{-1}(\bolds )\boldA w_0(\bolds )$, where
$\boldA $ is any one of the operators $\boldL, \boldU_{\alpha},
\boldQ_{\alpha}$, $\boldLbar, \boldUbar_{\alpha},
\boldQbar_{\alpha}$, allows us, without loss of generality, to fix the coefficients
$b_0(\bolds ), u_{\alpha ,0}(\bolds ), p_{\alpha ,0}(\bolds )$
to be
$$b_0(\bolds )=p_{\alpha ,0}(\bolds )=1_N, \quad
u_{\alpha ,0}(\bolds )=
E_{\alpha}, 
$$ as in 
(\ref{def:L,Lbar})--(\ref{def:Q,Qbar}). Since 
$w_0(\bolds )$ does not depend on ${\boldsymbol t} $, this transformation preserves
the form of the Lax and Zakharov-Shabat equations. 

One can also consider more general gauge transformations
$\boldA \mapsto \boldA^{(G)}=G^{-1}\boldA G$, where the matrix
$G$ may depend also on ${\boldsymbol t} $ and
$\boldtbar$: $G=G(\bolds , {\boldsymbol t}, \boldtbar )$. 
In particular, the transformed Lax operators become
\beq\label{gauge0}
\begin{array}{l}
\boldL^{(G)}(\bolds )=G^{-1}(\bolds )G(\bolds +\boldone )e^{\p_s}
+\ldots \, ,
\\ \\
\boldLbar^{(G)}(\bolds )=G^{-1}(\bolds ) \tilde w_0(\bolds )
\tilde w_0^{-1} (\bolds -\boldone )G(\bolds -\boldone )e^{-\p_s}
+\ldots \, ,
\end{array}
\eeq
where $\tilde w_0(\bolds )$ is introduced by Proposition
\ref{proposition:tildew0}.
In order 
to preserve the form of the Lax and Zakharov-Shabat equations, the 
difference operators $\boldB_{\alpha , k}$, 
$\boldBbar_{\alpha , k}$ should transform as follows:
\beq\label{gauge}
\begin{array}{l}
\boldB_{\alpha , k}\mapsto \boldB_{\alpha , k}^{(G)}=
G^{-1}\boldB_{\alpha , k}G -G^{-1}\p_{t_{\alpha , k}}G
=((\boldL^{(G)})^k \boldU_{\alpha}^{(G)})_{\geq 0}
-G^{-1}\p_{t_{\alpha , k}}G,
\\ \\
\boldBbar_{\alpha , k}\mapsto \boldBbar_{\alpha , k}^{(G)}=
G^{-1}\boldBbar_{\alpha , k}G -G^{-1}\p_{\bar t_{\alpha , k}}G
=((\boldLbar^{(G)})^k \boldUbar_{\alpha}^{(G)})_{<0}
-G^{-1}\p_{\bar t_{\alpha , k}}G,
\end{array}
\eeq
so that
$$
\frac{\p}{\p t_{\alpha ,k}}-\boldB_{\alpha , k}^{(G)}=
G^{-1} \left (\frac{\p}{\p t_{\alpha ,k}}-\boldB_{\alpha , k}
\right )G,
$$
$$
\frac{\p}{\p \bar t_{\alpha ,k}}-\boldBbar_{\alpha , k}^{(G)}=
G^{-1} \left (\frac{\p}{\p \bar t_{\alpha ,k}}-\boldBbar_{\alpha , k}
\right )G.
$$
In particular, one may take $G=\tilde w_0(\bolds )$.
Let us denote the operators and their coefficients
transformed in this way by adding 
prime: $\boldL \mapsto \boldL '=\tilde w_0^{-1}\boldL \tilde w_0$, etc. 
Then the leading coefficients
of the operators $\boldL', \boldLbar'$, $\boldU_{\alpha}', 
\boldUbar_{\alpha}'$, $\boldP_{\alpha}', 
\boldPbar_{\alpha}'$ are, respectively:
$$
\begin{array}{l}
b_0'(\bolds )
=\tilde w_0^{-1}(\bolds )\tilde w_0(\bolds +{\bf 1}), \quad
\bar b_0'(\bolds )=1_N,
\\ \\
u_{\alpha ,0}'(\bolds )=
\tilde w_0^{-1}(\bolds )E_{\alpha}\tilde w_0(\bolds ), \quad
\bar u_{\alpha ,0}'=E_{\alpha}, 
\\ \\
p_{\alpha ,0}'(\bolds )=\tilde w_0^{-1}(\bolds +[1]_{\alpha})
\tilde w_0(\bolds ), \quad 
\bar p_{\alpha ,0}'(\bolds )=1_N.
\end{array} 
$$
Comparing the leading coefficients in the Lax type equation
$$
\frac{\p \boldPbar_{\alpha}'(\bolds )}{\p t_{\beta ,k}}=
\boldB'_{\beta ,k}(\bolds +[1]_{\alpha})\boldPbar_{\alpha}'(\bolds )
-\boldPbar_{\alpha}'(\bolds )
\boldB'_{\beta ,k}(\bolds ),
$$
one finds that the left-hand side is 0 while the right-hand side 
yields
$$
(\boldB'_{\beta ,k}(\bolds +[1]_{\alpha}))_0=
(\boldB'_{\beta ,k}(\bolds ))_0,
$$
i.e., $(\boldB'_{\beta ,k}(\bolds ))_0$ does not depend on $\bolds $:
$(\boldB'_{\beta ,k}(\bolds ))_0 =
{\bf c}_{\beta ,k}({\boldsymbol t} , \bar {\boldsymbol t} )$
(by $({\bf A})_0$ we denote the coefficient in front of $e^{0\p_s}$
in the difference operator ${\bf A}$). 
The transformation laws 
(\ref{gauge}) give in our case:
\beq\label{gauge1}
\begin{array}{l}
\boldB_{\beta ,k}'(\bolds )=((\boldL')^k\boldU_{\beta}')_{\geq 0}-
\tilde w_0^{-1}(\bolds )\p_{t_{\beta ,k}}\tilde w_0(\bolds ),
\\ \\
\boldBbar_{\beta ,k}'(\bolds )=
((\boldLbar')^k\boldUbar_{\beta}')_{<0}-
\tilde w_0^{-1}(\bolds )\p_{\bar t_{\beta ,k}}\tilde w_0(\bolds ).
\end{array}
\eeq
Therefore, taking the $(\,\, )_0$-part of the first equation,
we have:
\beq\label{gauge2}
((\boldL')^k\boldU_{\beta}')_{0}=
\tilde w_0^{-1}(\bolds )\p_{t_{\beta ,k}}\tilde w_0(\bolds )+
{\bf c}_{\beta ,k}({\boldsymbol t} , \bar {\boldsymbol t} )
\eeq
which means that
$$
\boldB_{\beta ,k}'(\bolds )=((\boldL')^k\boldU_{\beta}')_{>0}+
{\bf c}_{\beta ,k}({\boldsymbol t} , \bar {\boldsymbol t} ).
$$
Similar manipulations with the equation
$$
\frac{\p \boldP_{\alpha}'(\bolds )}{\p \bar t_{\beta ,k}}=
\boldBbar'_{\beta ,k}(\bolds +[1]_{\alpha})\boldP_{\alpha}'(\bolds )
-\boldP_{\alpha}'(\bolds )
\boldBbar'_{\beta ,k}(\bolds )
$$
lead to 
\beq\label{gauge2a}
((\boldLbar')^k\boldUbar_{\beta}')_{0}=
-\tilde w_0^{-1}(\bolds )\p_{t_{\beta ,k}}\tilde w_0(\bolds )+
\tilde w_0^{-1}(\bolds ) \bar {\bf c}_{\beta ,k}
({\boldsymbol t} , \bar {\boldsymbol t} )\tilde w_0(\bolds )
\eeq
which means that
$$
\boldBbar_{\beta ,k}'(\bolds )=
((\boldLbar')^k\boldUbar_{\beta}')_{\leq 0}-
\tilde w_0^{-1}(\bolds ) \bar {\bf c}_{\beta ,k}
({\boldsymbol t} , \bar {\boldsymbol t} )\tilde w_0(\bolds )
$$
with some $\bolds $-independent matrix-valued function 
$\bar {\bf c}_{\beta ,k}$.

Some additional arguments allow one to put ${\bf c}_{\beta ,k}
({\boldsymbol t} , \bar {\boldsymbol t} )=
\bar {\bf c}_{\beta ,k}({\boldsymbol t} , \bar {\boldsymbol t} )=0$ without any loss of generality. 
Indeed, taking the $(\,\, )_0$-part of 
the equation 
$$
\frac{\p \boldUbar_{\alpha}}{\p t_{\beta ,k}}=[\boldB_{\beta ,k},
\boldUbar_{\alpha}],
$$
we deduce that
$$
[\tilde w_0^{-1}(\boldB_{\beta ,k})_0 \tilde w_0-
\tilde w_0^{-1}\p_{t_{\beta ,k}}\tilde w_0, \, E_{\alpha}]=0.
$$
Together with (\ref{gauge2}) this means that
\beq\label{gauge3}
[{\bf c}_{\beta ,k}({\boldsymbol t} , \bar {\boldsymbol t} ), 
E_{\alpha}]=0 \quad \mbox{for all 
$\alpha$},
\eeq
whence ${\bf c}_{\beta ,k}({\boldsymbol t} , \bar {\boldsymbol t} )$ is
a diagonal matrix. 
Let us rewrite (\ref{gauge2}) in the form
\beq\label{gauge5}
(\boldB_{\beta ,k})_0=
(\boldL^k\boldU_{\beta})_{0}=
\p_{t_{\beta ,k}}\tilde w_0(\bolds )\cdot
\tilde w_0^{-1}(\bolds )+
\tilde w_0(\bolds ){\bf c}_{\beta ,k}
({\boldsymbol t} , \bar {\boldsymbol t} )\tilde w_0^{-1}(\bolds ).
\eeq
Note that the second term 
in the right-hand side  
can be eliminated
by multiplying $\tilde w_0(\bolds )$ by an $\bolds $-independent
diagonal matrix
$
    \exp\left(
     \int^{t_{\beta,k}} 
      {\bf c}_{\beta,k}(\boldt,\boldtbar)\,
     dt_{\beta,k}
    \right)
$
from the right. But this is just the freedom 
left in the definition of the matrix $\tilde w_0(\bolds )$ by
the formulae 
(\ref{def:L,Lbar})--(\ref{def:Q,Qbar}). Therefore, 
we can put
${\bf c}_{\beta ,k}({\boldsymbol t} , \bar {\boldsymbol t} )=0$ without loss
of generality. 
Let us rewrite equation (\ref{gauge2a}) in the form
\beq\label{gauge4}
(\boldBbar_{\beta ,k})_0=((\boldLbar)^k\boldUbar_{\beta})_{0}=
-\p_{\bar t_{\beta ,k}}\tilde w_0(\bolds )\cdot \tilde w_0^{-1}(\bolds )+
\bar {\bf c}_{\beta ,k}({\boldsymbol t} , \bar {\boldsymbol t} )
\eeq
and prove that the choice ${\bf c}_{\beta ,k}({\boldsymbol t} , 
\bar {\boldsymbol t} )=0$ implies
that $\bar {\bf c}_{\beta ,k}({\boldsymbol t} , \bar {\boldsymbol t} )=0$. This fact follows from
the Zakharov-Shabat equation
(\ref{zs-BLUbar>0}). Its $(\,\, )_0$-part is
$$
\left [\p_{t_{\alpha,m}} \!- (\boldB_{\alpha,m})_0,\,
     \p_{\bar t_{\beta ,k}}\!
      + (\boldLbar^k \boldUbar_\beta )_{0}
    \right]=0
    $$
The general solution is
$$
(\boldB_{\alpha,m})_0=\p_{t_{\alpha ,m}}W \cdot W^{-1}, \quad
(\boldLbar^k \boldUbar_\beta )_{0}=
-\p_{\bar t_{\beta ,k}}W \cdot W^{-1},
$$
where $W$ is an arbitrary matrix-valued 
function\footnote{We use the fact
that the general solution of the matrix equation
$[\p_x -B_x, \p_y -B_y]=0$ is $B_x =\p_x W\cdot W^{-1}$,
$B_y =\p_y W\cdot W^{-1}$}. According to (\ref{gauge5}) with
${\bf c}_{\beta ,k}=0$, we should put 
$W=\tilde w_0(\bolds )$, which means
(after comparison with (\ref{gauge4})) that
$\bar {\bf c}_{\beta ,k}({\boldsymbol t} , \bar {\boldsymbol t} )=0$.

Finally, we have:
\beq\label{gauge6}
\boldB'_{\beta ,k}=((\boldL')^k \boldU_{\beta}')_{>0}, \quad
\boldBbar'_{\beta ,k}=((\boldLbar')^k \boldUbar_{\beta}')_{\leq 0}.
\eeq
Therefore, we see that in this gauge the roles of the two Lax
operators are exchanged. This restores the symmetry in their definitions,
which is broken in (\ref{def:L,Lbar})--(\ref{def:Q,Qbar}).

\begin{zam}
\label{rem:gauge}
The gauge transformations of the one-component Toda lattice were
discussed in \cite{tak90}.
They can be represent 
in the form
\begin{equation*}
 \begin{aligned}
    \boldL     &\mapsto \boldL^{(G)} := G^{-1} \boldL G, &
    \boldLbar  &\mapsto \boldLbar^{(G)} := G^{-1} \boldLbar G,
\\
    \boldB_n     &\mapsto \boldB_n^{(G)}
    := g^{-1} \boldB_n G - G^{-1} \frac{\der G}{\der t_n}, &
    \boldBbar_n &\mapsto \boldBbar_n^{(G)}
    := G^{-1} \boldBbar_n G - G^{-1} \frac{\der G}{\der \bar t_n},
\\
    \text{or\ }
    G^{-1} &\left(\frac{\der}{\der t_n} - \boldB_n\right) G
    = \frac{\der}{\der t_n} - \boldB^{(G)}_n, &
    G^{-1} &\left(\frac{\der}{\der \bar t_n} - \boldBbar_n\right) G
    = \frac{\der}{\der \bar t_n} - \boldBbar^{(G)}_n.
 \end{aligned}
\end{equation*}
The choice $G=\tilde w_0^{1/2}(s)$ corresponds to the {\it symmetric
gauge}, in which the leading coefficients of the two Lax operators
$\boldL^{({\rm sym})}$, $\boldLbar^{({\rm sym})}$ become equal 
up to a shift of $s$ by 1:
$$
\begin{array}{l}
\boldL^{({\rm sym})}=b_0^{(s)}(s)e^{\der_s}
+\ldots \, =
\tilde w_0^{-1/2}(s) \tilde w_0^{1/2}(s+1)e^{\der_s}
+\ldots \, ,
\\ \\
\boldLbar^{({\rm sym})}=b_0^{(s)}(s-1)e^{-\der_s}+\ldots \, =
\tilde w_0^{-1/2}(s-1) \tilde w_0^{1/2}(s)e^{-\der_s}
+\ldots 
\end{array}
$$
(here $\tilde w_0(s)$ is a scalar function, so we can
change the order in the product). The $\boldB$-operators 
in the symmetric gauge look as 
follows:
\beq\label{gauge7}
\begin{array}{l}
\boldB_k^{({\rm sym})} =((\boldL^{({\rm sym})})^k)_{>0}+
\frac{1}{2}((\boldL^{({\rm sym})})^k)_{0},
\\ \\
\boldBbar_k^{({\rm sym})} =((\boldLbar^{({\rm sym})})^k)_{<0}+
\frac{1}{2}((\boldLbar^{({\rm sym})})^k)_{0}.
\end{array}
\eeq
In the gauge with $G=\tilde w_0(s)$ the roles of $\boldL$ and 
$\boldLbar$ are exchanged.
See \cite{tak90} for details.
\end{zam}
 
An analogue of the symmetric gauge exists 
in the multi-component case, too.
Taking $G=\tilde
 w_0^{1/2}(\bolds)$, we have:
\begin{equation}
 \begin{aligned}
    \boldL^{({\rm sym})}(\bolds ) &
    = \tilde w_0^{-1/2}(\bolds) \tilde w_0^{1/2}(\bolds+\boldone)
      e^{\der_s} + \dotsc,
\\
    \boldLbar^{({\rm sym})}(\bolds ) &
    = \tilde w_0^{1/2}(\bolds) \tilde w_0^{-1/2}(\bolds-\boldone)
      e^{-\der_s} + \dotsc.
 \end{aligned}
\label{sym-gauge}
\end{equation}
Because of the non-commutativity of matrices, the leading
coefficients are not related in the same simple way as in the 
one-component case.
The $\boldB^{({\rm sym})}$- and 
$\boldBbar^{({\rm sym})}$-operators in the symmetric gauge 
are expressed as follows:
\beq\label{gauge8}
\begin{array}{l}
\boldB^{({\rm sym})}_{\beta ,k}=((\boldL^{({\rm sym})})^k
\boldU_{\beta}^{({\rm sym})})_{>0}+
\p_{t_{\beta ,k}}\tilde w_0^{1/2}\cdot \tilde w_0^{-1/2},
\\ \\
\boldBbar^{({\rm sym})}_{\beta ,k}=((\boldLbar^{({\rm sym})})^k
\boldUbar_{\beta}^{({\rm sym})})_{<0}-\tilde w_0^{-1/2}\cdot
\p_{\bar t_{\beta ,k}}\tilde w_0^{1/2}.
\end{array}
\eeq
However, since, for example, 
$
    \p_{t_{\beta ,k}}\tilde w_0^{1/2}
    \neq
    \frac{1}{2}\, \tilde w_0^{-1/2} 
     \p_{t_{\beta ,k}}\bar w_0,
$
 the $\boldB^{({\rm sym})}$- and 
 $\boldBbar^{({\rm sym})}$-operators do not have such a simple
 description as in the one-component case.

\section{Linearization: wave operators and wave functions}
\label{sec:linear}

\subsection{Wave operators}

Here we derive the linear problems associated with the multi-component
Toda lattice hierarchy defined in Section \ref{subsection:LZS}. The first
half of the derivation is almost the same as the one given 
by Ueno and Takasaki,
but some details of the proof of the existence of the wave matrix
functions are
added, correcting inaccurate arguments in \S3.2 of \cite{UT84}.

\begin{predl}
\label{predl:wave-mat}
 (i)
 Let $(\boldL, \boldLbar, \boldU_\alpha, \boldUbar_\alpha,
 \boldQ_\alpha,\boldQbar_\alpha)_{\alpha=1,\dotsc,N}$ be a
 solution of the $N$-component Toda lattice hierarchy.
 Then there exist  
 operators $\boldW(\bolds,{\boldsymbol t} ,\bar {\boldsymbol t})$ and
 $\boldWbar(\bolds,{\boldsymbol t} ,\bar {\boldsymbol t})$ of the form
\begin{equation}
 \begin{aligned}
    \boldW(\bolds,{\boldsymbol t} ,\bar {\boldsymbol t})
    &=
    \hat\boldW(\bolds,{\boldsymbol t} ,\bar {\boldsymbol t})\,
    \diag_\alpha (e^{\xi({\boldsymbol t} _\alpha,e^{\der_s})}),
\\
    \boldWbar(\bolds,{\boldsymbol t} ,\bar {\boldsymbol t})
    &=
    \hat\boldWbar(\bolds,{\boldsymbol t} ,\bar {\boldsymbol t})\,
    \diag_\alpha (e^{\xi(\bar {\boldsymbol t}_\alpha,e^{-\der_s})}),
\\
    \hat\boldW(\bolds,{\boldsymbol t} ,\bar {\boldsymbol t})
    &= \sum_{j=0}^\infty
    w_j(\bolds,{\boldsymbol t} ,\bar {\boldsymbol t})\, e^{-j\der_s},
    \qquad w_0(\bolds,{\boldsymbol t} ,\bar {\boldsymbol t}) = 1_N,
\\
    \hat\boldWbar(\bolds,{\boldsymbol t} ,\bar {\boldsymbol t})
    &= \sum_{j=0}^\infty
    \bar w_j(\bolds,{\boldsymbol t} ,\bar {\boldsymbol t})\, e^{j\der_s},
    \qquad \bar w_0(\bolds,{\boldsymbol t} ,\bar {\boldsymbol t}) 
    \in GL(N,\Comp),
 \end{aligned}
\label{wave-mat}
\end{equation}
 where
\begin{equation}
 \begin{gathered}
    \xi({\boldsymbol t}_\alpha,e^{\der_s})
    =
    \sum_{n=1}^\infty t_{\alpha,n}e^{n\der_s},
    \qquad
    \diag_\alpha (a_\alpha)
    =
    \begin{pmatrix}
    a_1 &        & \\
        & \ddots & \\
        &        & a_N
    \end{pmatrix},
\\
    w_j(\bolds,{\boldsymbol t} ,\bar {\boldsymbol t})
    =
    (w_{j,\alpha\beta}(\bolds,{\boldsymbol t} ,\bar {\boldsymbol t})
    )_{\alpha,\beta=1,\dotsc,N}
    \in \Mat(N\times N,\Comp),
\\
    \bar w_j(\bolds,{\boldsymbol t} ,\bar {\boldsymbol t})
    =
    (\bar w_{j,\alpha\beta}(\bolds,{\boldsymbol t} ,\bar {\boldsymbol t})
    )_{\alpha,\beta=1,\dotsc,N}
    \in \Mat(N\times N,\Comp),
 \end{gathered}
\label{wave-mat:notations}
\end{equation}
 satisfying the following linear equations:
\begin{align}
    \boldL(\bolds) \boldW(\bolds) &= \boldW(\bolds) e^{\der_s}, &
    \boldU_\alpha(\bolds) \boldW(\bolds) &= \boldW(\bolds) E_\alpha,
\label{lin-eq:LW,UW}
\\
    \boldLbar(\bolds) \boldWbar(\bolds)
    &= \boldWbar(\bolds) e^{-\der_s},
    &
    \boldUbar_\alpha(\bolds) \boldWbar(\bolds)
    &= \boldWbar(\bolds) E_\alpha,
\label{lin-eq:LbarWbar,UbarWbar}
\\
    \frac{\der \boldW(\bolds)}{\der t_{\alpha,n}}
    &= \boldB_{\alpha,n}(\bolds) \boldW(\bolds), &
    \frac{\der \boldW(\bolds)}{\der \bar t_{\alpha,n}}
    &= \boldBbar_{\alpha,n}(\bolds) \boldW(\bolds),
\label{lin-eq:dW}
\\
    \frac{\der \boldWbar(\bolds)}{\der t_{\alpha,n}}
    &= \boldB_{\alpha,n}(\bolds) \boldWbar(\bolds), &
    \frac{\der \boldWbar(\bolds)}{\der \bar t_{\alpha,n}}
    &= \boldBbar_{\alpha,n}(\bolds) \boldWbar(\bolds),
\label{lin-eq:dWbar}
\end{align}
 and
\begin{equation}
    \boldQ_\alpha(\bolds) \boldW(\bolds)
    = \boldW(\bolds) e^{-\der_{s_\alpha}},
    \qquad
    \boldQbar_\alpha(\bolds) \boldWbar(\bolds)
    = \boldWbar(\bolds) e^{-\der_{s_\alpha}}.
\label{lin-eq:QW}
\end{equation}
 In terms of $\hat\boldW(\bolds)$ and $\hat\boldWbar(\bolds)$, equations
 \eqref{lin-eq:LW,UW} and \eqref{lin-eq:LbarWbar,UbarWbar} are
 equivalent to the following equations:
\begin{align}
    \boldL(\bolds) \hat\boldW(\bolds)
    &= \hat\boldW(\bolds) e^{\der_s}, &
    \boldU_\alpha(\bolds) \hat\boldW(\bolds)
    &= \hat\boldW(\bolds) E_\alpha,
\label{lin-eq:LWhat,UWhat}
\\
    \boldLbar(\bolds) \hat\boldWbar(\bolds)
    &= \hat\boldWbar(\bolds) e^{-\der_s}, &
    \boldUbar_\alpha(\bolds) \hat\boldWbar(\bolds)
    &= \hat\boldWbar(\bolds) E_\alpha ,
\label{lin-eq:LbarWbarhat,UbarWbarhat}
\end{align}
 and equations \eqref{lin-eq:dW}, \eqref{lin-eq:dWbar} are equivalent
 to
\begin{gather}
    \frac{\der \hat\boldW(\bolds)}{\der t_{\alpha,n}}
    = \boldB_{\alpha,n}(\bolds) \hat\boldW(\bolds)
     - \hat\boldW(\bolds) e^{n\der_s}E_\alpha,
    \quad
    \frac{\der \hat\boldW(\bolds)}{\der \bar t_{\alpha,n}}
    = \boldBbar_{\alpha,n}(\bolds) \hat\boldW(\bolds),
\label{lin-eq:dWhat}
\\
    \frac{\der \hat\boldWbar(\bolds)}{\der t_{\alpha,n}}
    = \boldB_{\alpha,n} \hat\boldWbar(\bolds), 
    \quad
    \frac{\der \hat\boldWbar(\bolds)}{\der \bar t_{\alpha,n}}
    = \boldBbar_{\alpha,n}(\bolds) \hat\boldWbar(\bolds)
    - \hat\boldWbar(\bolds) e^{-n\der_s} E_\alpha.
\label{lin-eq:dWbarhat}
\end{gather}
 Relations \eqref{lin-eq:QW} in terms of 
 $\hat\boldW$ and $\hat\boldWbar$
 are rewritten as 
\begin{equation}
    \boldQ_\alpha(\bolds) \hat\boldW(\bolds)
    = \hat\boldW(\bolds) e^{-\der_{s_\alpha}},
    \qquad
    \boldQbar_\alpha(\bolds) \hat\boldWbar(\bolds)
    = \hat\boldWbar(\bolds) e^{-\der_{s_\alpha}},
\label{lin-eq:QWhat=What}
\end{equation}
 or, in terms of $\boldP_\alpha$,
\begin{equation}
    \boldP_\alpha(\bolds) \hat\boldW(\bolds)
    = \hat\boldW(\bolds+[1]_\alpha),
    \qquad
    \boldPbar_\alpha(\bolds) \hat\boldWbar(\bolds)
    = \hat\boldWbar(\bolds+[1]_\alpha).
\label{lin-eq:PWhat=What}
\end{equation}

 The operators $\hat\boldW(\bolds)$ and $\hat\boldWbar(\bolds)$ are
 unique up to multiplication of matrix difference operators of the form
\begin{align*}
    \hat\boldW(\bolds,{\boldsymbol t} ,\bar {\boldsymbol t})
    &\mapsto
    \hat\boldW(\bolds,{\boldsymbol t} ,\bar {\boldsymbol t})
    \sum_{j=0}^\infty c_j e^{-j\der_s}, \qquad c_0=1_N,
\\
    \hat\boldWbar(\bolds,{\boldsymbol t} ,\bar {\boldsymbol t})
    &\mapsto
    \hat\boldWbar(\bolds,{\boldsymbol t} ,\bar {\boldsymbol t})
    \sum_{j=0}^\infty \bar c_j e^{j\der_s},\qquad
    \bar c_0 \; \text{is invertible},
\end{align*}
 where $c_j$ and $\bar c_j$ are diagonal matrices\footnote{In Theorem
 3.3 of \cite{UT84} it is not stated that $c_j$ and $\bar c_j$ should be
 diagonal.} which do not depend on ${\boldsymbol t} $, 
 $\bar {\boldsymbol t}$ and $\bolds$.
 
(ii)
 Conversely, if $\hat\boldW(\bolds,\boldt,\boldtbar)$ and
 $\hat\boldWbar(\bolds,\boldt,\boldtbar)$ of the form \eqref{wave-mat}
 are solutions of differential equations \eqref{lin-eq:dWhat} and
 \eqref{lin-eq:dWbarhat} for certain difference operators
 $\boldB_{\alpha,n}$ and $\boldBbar_{\alpha,n}$, then the operators
 $\boldL$, $\boldLbar$, $\boldU_\alpha$, $\boldUbar_\alpha$,
 $\boldQ_\alpha$, $\boldQbar_\alpha$ defined by
 \eqref{lin-eq:LWhat,UWhat}, \eqref{lin-eq:LbarWbarhat,UbarWbarhat} and
 \eqref{lin-eq:QWhat=What}, or, equivalently, by
\begin{equation}
 \begin{aligned}
    \boldL(\bolds)
    &:=\hat\boldW(\bolds) e^{\der_s} \hat\boldW^{-1}(\bolds),&
    \boldU_\alpha(\bolds)
    &:=\hat\boldW(\bolds) E_\alpha \hat\boldW^{-1}(\bolds),
\\
    \boldLbar(\bolds)
    &:=\hat\boldWbar(\bolds) e^{-\der_s} \hat\boldWbar^{-1}(\bolds),&
    \boldUbar_\alpha(\bolds)
    &:=\hat\boldWbar(\bolds) E_\alpha \hat\boldWbar^{-1}(\bolds),
\\
    \boldQ_\alpha(\bolds)
    &:=\hat\boldW(\bolds)      e^{-\der_{s_\alpha}}
       \hat\boldW^{-1}(\bolds),&
    \boldQbar_\alpha(\bolds)
    &:=\hat\boldWbar(\bolds)   e^{-\der_{s_\alpha}}
       \hat\boldW^{-1}(\bolds),
 \end{aligned}
\label{W->L,U,P}
\end{equation}
 solve \eqref{lax:L,Lbar,U,Ubar,Q,Qbar} and
 also satisfy \eqref{def:BBbar} and relations (\ref{commutative:L,U,Q},
 \ref{U:alg-cond}, \ref{Q:alg-cond}, \ref{commutative:Lbar,Ubar,Qbar},
 \ref{Ubar:alg-cond}, \ref{Qbar:alg-cond}).

\end{predl}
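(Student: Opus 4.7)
The plan is to construct $\hat\boldW$ and $\hat\boldWbar$ in two stages: first, at a fixed reference point $(\boldt^{(0)},\boldtbar^{(0)})$ in the time space by an algebraic recursion in the shift degree; then over all other times by defining them as solutions of the Sato-type evolution equations \eqref{lin-eq:dWhat}--\eqref{lin-eq:dWbarhat} and verifying that these propagate the algebraic relations \eqref{lin-eq:LWhat,UWhat}--\eqref{lin-eq:PWhat=What} forward in time. The joint consistency of the different time flows is exactly the content of the Zakharov--Shabat equations already proved in \propref{predl:zs}. The operators $\boldW$ and $\boldWbar$ of the form \eqref{wave-mat} then arise by right-multiplication by the diagonal exponential factors, and equations \eqref{lin-eq:dW}, \eqref{lin-eq:dWbar}, \eqref{lin-eq:QW} follow from \eqref{lin-eq:dWhat}, \eqref{lin-eq:dWbarhat}, \eqref{lin-eq:QWhat=What} together with the observation that $E_\alpha$ commutes with these diagonal exponentials and that $\der_{t_{\alpha,n}}\diag_\beta(e^{\xi(\boldt_\beta,e^{\der_s})}) = E_\alpha e^{n\der_s}\,\diag_\beta(e^{\xi(\boldt_\beta,e^{\der_s})})$.

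For the initial-time construction of $\hat\boldW = 1_N + \sum_{j\geq 1} w_j(\bolds)\,e^{-j\der_s}$, I would compare coefficients of $e^{(1-m)\der_s}$ in $\boldL\hat\boldW = \hat\boldW e^{\der_s}$, of $e^{-m\der_s}$ in $\boldU_\alpha\hat\boldW = \hat\boldW E_\alpha$, and of $e^{-m\der_s}$ in $\boldP_\alpha\hat\boldW(\bolds) = \hat\boldW(\bolds+[1]_\alpha)$ (the rewriting of the $\boldQ_\alpha$-relation per \eqref{lin-eq:PWhat=What}). At each order $m\geq 1$ this produces a linear difference equation in $\bolds$ for $w_m$ whose right-hand side is determined by $w_0,\ldots,w_{m-1}$; the normalizations $b_0=1_N$, $u_{\alpha,0}=E_\alpha$, $p_{\alpha,0}=1_N$ secure the base case $w_0=1_N$. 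Simultaneous consistency of the three families of recursions is the substantive content of the algebraic conditions \eqref{commutative:L,U,Q}--\eqref{Q:alg-cond}: the commutativities $[\boldL,\boldU_\alpha]=[\boldL,\boldQ_\alpha]=[\boldU_\alpha,\boldQ_\beta]=0$ force the different determinations of $w_m$ to agree, the idempotence $\sum\boldU_\alpha=1_N$ is automatic, and $\prod\boldQ_\alpha = \boldL^{-1}$ provides the multiplicative closure. The construction of $\hat\boldWbar = \bar w_0 + \sum_{j\geq 1}\bar w_j\,e^{j\der_s}$ is parallel, except that the leading coefficient is now constrained by $\boldLbar\hat\boldWbar = \hat\boldWbar e^{-\der_s}$ to satisfy $\bar w_0(\bolds) = \bar b_0(\bolds)\bar w_0(\bolds-\boldone)$, which by \propref{proposition:tildew0} is solved exactly by the matrix $\tilde w_0$, up to the diagonal freedom of \remref{remark:freedom}.

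For the time extension, declare $\hat\boldW$ to be the solution of the Cauchy problem \eqref{lin-eq:dWhat} with the initial data just constructed; compatibility of the different $t_{\alpha,n}$- and $\bar t_{\alpha,n}$-flows is precisely \eqref{zs-LU<0LU<0}--\eqref{zs-LU<0Bbar}, so $\hat\boldW$ is well defined on the whole time space. To see that the algebraic relations \eqref{lin-eq:LWhat,UWhat} and \eqref{lin-eq:PWhat=What} persist, set $Z := \boldL\hat\boldW - \hat\boldW e^{\der_s}$ and differentiate: a short computation using the Lax equations together with \eqref{lin-eq:dWhat} yields the homogeneous linear ODE $\der_{t_{\alpha,n}} Z = \boldB_{\alpha,n} Z - Z e^{n\der_s}E_\alpha$, whose only solution with $Z(\boldt^{(0)},\boldtbar^{(0)})=0$ is $Z \equiv 0$. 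The analogous computations for $\boldU_\alpha\hat\boldW - \hat\boldW E_\alpha$ and $\boldQ_\alpha\hat\boldW - \hat\boldW e^{-\der_{s_\alpha}}$ close the loop, and the parallel treatment of $\hat\boldWbar$ differs only in that the shift structure is reversed.

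Uniqueness follows because the ratio of two admissible $\hat\boldW$'s is, by construction, a non-positive-shift operator that commutes with $e^{\der_s}$ (forcing $\bolds$-independence of its coefficients), commutes with every $E_\alpha$ (forcing each coefficient to be diagonal), and has vanishing time derivatives via the Sato equations; this is exactly the stated freedom, and the analogous statement for $\hat\boldWbar$ accounts also for the non-normalized $\bar c_0$. For part (ii), the algebraic relations \eqref{commutative:L,U,Q}--\eqref{Qbar:alg-cond} transport automatically from the trivial relations among $e^{\der_s}$, $E_\alpha$, $e^{-\der_{s_\alpha}}$ through the conjugations \eqref{W->L,U,P}; differentiating those formulae and substituting \eqref{lin-eq:dWhat}--\eqref{lin-eq:dWbarhat} yields \eqref{lax:L,Lbar,U,Ubar,Q,Qbar}, while the identification $\boldB_{\alpha,n} = (\boldL^n\boldU_\alpha)_{\geq 0}$ follows from matching shift degrees in the Sato equation, since $\der_{t_{\alpha,n}}\hat\boldW\cdot \hat\boldW^{-1}$ is strictly negative-shift whereas $\boldL^n\boldU_\alpha$ is not. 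I expect the main obstacle to lie in the initial-time algebraic step: verifying that the three families of difference recursions for $w_m(\bolds)$ are simultaneously solvable in a coherent matrix-valued class, exploiting the full strength of \eqref{commutative:L,U,Q}--\eqref{Q:alg-cond} rather than any single relation in isolation. This is precisely the subtle point where, as indicated by the authors, the argument of \cite{UT84} was incomplete and must be repaired here.
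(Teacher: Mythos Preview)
Your overall architecture is correct and matches the paper closely: construct $\hat\boldW$, $\hat\boldWbar$ at an initial time, propagate by the Sato equations (whose compatibility is the Zakharov--Shabat system), verify persistence of the algebraic relations via a homogeneous linear ODE argument, establish uniqueness from the ratio, and derive part~(ii) by conjugation and degree-matching. Your treatment of time evolution, persistence, uniqueness, and the converse is essentially the paper's.

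The substantive divergence, and the genuine gap, is in the initial algebraic step. You propose to solve the $\boldL$-, $\boldU_\alpha$-, and $\boldP_\alpha$-relations simultaneously over the full lattice $\ZZ^N$, asserting that the algebraic conditions \eqref{commutative:L,U,Q}--\eqref{Q:alg-cond} guarantee compatibility. The paper proceeds differently and more cautiously. First, it restricts to a single diagonal sequence $\bolds=\bolds^{(0)}+s\boldone$ (one variable $s$), ignoring the $\boldP_\alpha$-relations entirely at this stage. Second, it solves the $\boldL$-equation alone by the obvious recursion in $s$; this leaves the freedom of right-multiplication by a constant-coefficient operator. Third --- and this is the repair of \cite{UT84} --- it uses that freedom to satisfy the $\boldU_\alpha$-relations via an explicit iterative limit $\hat\boldW_0=\lim_k\hat\boldW_{0,k}$: at each stage one computes $\tilde\boldU_\alpha^{(k)}=\hat\boldW_{0,k}^{-1}\boldU_\alpha\hat\boldW_{0,k}$, reads off its first nontrivial coefficient, and multiplies on the right by an operator $\tilde\boldW_{k,k+1}=1_N+w'e^{-(k+1)\der_s}$ with $w'$ built from that coefficient. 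The point you miss is that the $\boldU_\alpha$-relation at order $m$ is not a difference equation but the pointwise constraint $[E_\alpha,w_m]=\text{(known)}$, whose solvability requires the right-hand side to lie in the image of $\mathrm{ad}\,E_\alpha$; the paper's idempotence computation (equations \eqref{tildeu:element}, \eqref{tildeua+tildeub=0}) is what verifies this.

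Only after the time evolution and uniqueness (Steps~2--4) does the paper return to the full $\bolds$-lattice in a separate Step~5. There it defines $\tilde\boldP_\bolda=(\hat\boldW^{\rm temp}(\bolds+\bolda))^{-1}\boldP_\bolda\hat\boldW^{\rm temp}(\bolds)$, shows it is a constant diagonal-coefficient operator (via \lemref{lem:shift-tildeP}, which uses the condition $\prod\boldQ_\alpha=\boldL^{-1}$), and then modifies the temporary wave operator by $\hat\boldW(\bolds):=\hat\boldW^{\rm temp}(\bolds)\,\tilde\boldP_{\bolds}(\boldzero)$ to enforce \eqref{lin-eq:PWhat=What}. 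Your proposal to impose the $\boldP_\alpha$-relations from the outset would require controlling difference equations in all $N$ lattice directions simultaneously along with the $\boldU$-constraint, and you have not indicated how to do this; the paper's separation into diagonal construction plus subsequent $\boldP$-modification is precisely what makes the argument go through.
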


We call $\boldW$ and $\boldWbar$ {\em wave operators}. They are
also sometimes called dressing operators because the expression 
$\boldL (\bolds )=\boldW(\bolds)e^{\p_s}\boldW^{-1}(\bolds)$
equivalent to the first equation in (\ref{lin-eq:LW,UW}) is
interpreted as a ``dressing'' of the ``bare'' shift operator 
$e^{\p_s}$ by the wave operator $\boldW$.

\begin{proof}
 (i) 
 The equivalence of equations for $(\boldW,\boldWbar)$ and
 $(\hat\boldW,\hat\boldWbar)$ follows immediately from $e^{\der_s}
 E_\alpha = E_\alpha e^{\der_s}$.

 We begin the proof assuming that
 each sequence of the $\bolds$-variables is of the form
 $\bolds=\bolds^{(0)}+s\boldone=\{s_1^{(0)}+s,\dotsc,s_N^{(0)}+s\}$ for a
 fixed $\bolds^{(0)}$ and consider all functions as functions
 of the single variable $s$. We denote
 $\boldL(\bolds,{\boldsymbol t} ,\bar{\boldsymbol t} )$,
 $\hat\boldW(\bolds,{\boldsymbol t} ,\bar{\boldsymbol t} )$, ...\ by
 $\boldL(s,{\boldsymbol t} ,\bar{\boldsymbol t} )$, $\hat\boldW(s,{\boldsymbol t} ,\bar{\boldsymbol t} )$, ...\
 respectively.
 
 The proof of statement (i) goes in a few steps.
\begin{enumerate}
 \item Find $\hat\boldW_0(s)=\hat\boldW(s,0,0)$ satisfying
       \eqref{lin-eq:LWhat,UWhat} and
       $\hat\boldWbar_0(s)=\hat\boldWbar(s,0,0)$ satisfying
       \eqref{lin-eq:LbarWbarhat,UbarWbarhat} at $\boldt=\boldtbar=0$. 
 \item Solve the differential equations 
       with initial values $\hat\boldW_0(s)$ and $\hat\boldWbar_0(s)$ to
       construct $\hat\boldW(s,\boldt, \bar {\boldsymbol t})$ and
       $\hat\boldWbar(s,{\boldsymbol t} ,\bar {\boldsymbol t})$.
 \item Show that these $\hat\boldW(s,{\boldsymbol t} ,
 \bar {\boldsymbol t})$ and
       $\hat\boldWbar(s,{\boldsymbol t} ,\bar {\boldsymbol t})$ satisfy
       \eqref{lin-eq:LWhat,UWhat},
       \eqref{lin-eq:LbarWbarhat,UbarWbarhat}, \eqref{lin-eq:dWhat} and
       \eqref{lin-eq:dWbarhat}.

 \item Show the uniqueness up to multiplication of constant diagonal
       matrix difference operators.

 \item Restore the dependence of the $\bolds$-variables 
 $\bolds=\{s_1,\dotsc,s_N\}$ to prove
       the existence of $\hat\boldW(\bolds,{\boldsymbol t} ,
       \bar{\boldsymbol t} )$ and
       $\hat\boldWbar(\bolds, {\boldsymbol t} , \bar{\boldsymbol t} )$ satisfying
       \eqref{lin-eq:QWhat=What} or, equivalently,
       \eqref{lin-eq:PWhat=What}. 
\end{enumerate}

 \underline{Step 1.} Construction of $\hat\boldW_0(s)$ and
 $\hat\boldWbar_0(s)$: Recall that $\boldL(s,0,0)$ has the form
\[
    \boldL(s,0,0)
    =
    \sum_{j=0}^\infty b_{0,j}(s) e^{(1-j)\der_s}
    :=
    \sum_{j=0}^\infty 
    b_j(s,{\boldsymbol t} =0,\bar {\boldsymbol t}=0) e^{(1-j)\der_s},
\]
 Hence, at ${\boldsymbol t} =\bar {\boldsymbol t}=0$ 
\[
     \hat\boldW_0(s)=\sum_{j=0}^\infty w_{0,j}(s) e^{-j\der_s}
\]
 should satisfy
\[
    \sum_{j=0}^\infty
    \left(\sum_{k=0}^j b_{0,k}(s)\, w_{0,j-k}(s+1-k) \right)
    e^{(1-j)\der_s}
    =
    \sum_{j=0}^\infty
    w_{0,j}(s) e^{(1-j)\der_s}
\]
 by the first equation of \eqref{lin-eq:LWhat,UWhat}, which means that
\[
    \sum_{k=0}^j b_{0,k}(s)\, w_{0,j-k}(s+1-k) = w_{0,j}(s)
\]
 for each integer $j\geq 1$.
 It is equivalent to the following system of difference
 equations for $\{w_{0,j}(s)\}_{j=0,1,\dotsc}$:
\[
    w_{0,j}(s+1) - w_{0,j}(s)
    =
    - \sum_{k=1}^j b_{0,k}(s)\, w_{0,j-k}(s+1-k),
\]
 which can be solved recursively with respect to $j$, starting from
 $w_{0,0}(s)=1_N$.

 Note that such an operator $\hat\boldW_0(s)$ is unique up to
 multiplication of an operator of the form $\sum_{j=0}^\infty c_j
 e^{-j\der_s}$ from the right, where $c_j$ is a constant matrix of size
 $N\times N$ and $c_0=1_N$. In fact, if $\hat\boldW_{0,1}(s)$ and
 $\hat\boldW_{0,2}(s)$ satisfy
 $\boldL(s,0,0)\,\hat\boldW_{0,i}(s)=\hat\boldW_{0,i}(s)\,e^{\der_s}$
 ($i=1,2$), then
 $
    \hat\boldW_{0,1}^{-1}(s) \hat\boldW_{0,2}(s)\, e^{\der_s} =
    e^{\der_s}\, \hat\boldW_{0,1}^{-1}(s) \hat\boldW_{0,2}(s)
 $.
 An operator commutes with $e^{\der_s}$, if and only if it has the form
 $\sum c_j e^{-j\der_s}$ with constant $c_j$, so
\[
    \hat\boldW_{0,2}(s)
    =
    \hat\boldW_{0,1}(s)
    \left(\sum_{j=0}^\infty c_j e^{-j\der_s}\right).
\]
 It follows $c_0=1_N$ from the condition $w_{0,0}(s)=1_N$.

 We construct $\hat\boldW_0(s)$ satisfying the condition
 $\boldU_\alpha(s,0,0)\hat\boldW_0(s)=\hat\boldW_0(s) E_\alpha$ in
 \eqref{lin-eq:LWhat,UWhat}, making use of this ambiguity\footnote{Here
 we refine the proof of Theorem 3.3 in \cite{UT84}. As the equation for
 $W^{(3)}$ in that proof is degenerate, it is not obvious that it has a
 solution.}.


 Let us take any $\hat\boldW_0(s)$ satisfying
 $\boldL(s,0,0)\hat\boldW_0(s)=\hat\boldW_0(s) e^{\der_s}$ and call it
 $\hat\boldW_{0,0}(s)=\sum_{j=0}^\infty w_{0,0,j}(s) e^{-j\der_s}$. By
 the adjoint action of $\hat\boldW_{0,0}^{-1}(s)$ to
 \eqref{commutative:L,U,Q} and \eqref{U:alg-cond} we obtain the
 following conditions for
 $
    \tilde\boldU_\alpha^{(0)}(s)
    :=
    \hat\boldW_{0,0}^{-1}(s)\, \boldU_\alpha(s,\boldt=0,\boldtbar=0)\,
    \hat\boldW_{0,0}(s)
 $: 
\begin{equation}
 \begin{gathered}
    [e^{\der_s}, \tilde\boldU_\alpha^{(0)}(s)] = 0, 
\\
    \tilde\boldU_\alpha^{(0)}(s) \tilde\boldU_\beta^{(0)}(s) 
    = 
    \delta_{\alpha\beta} \tilde\boldU_\beta^{(0)}(s),
    \qquad
    \sum_{\alpha=1}^N \tilde\boldU_\alpha^{(0)}(s)
    = 1_{N}.
 \end{gathered}
\label{alg-cond:tildeU}
\end{equation}
 Since $\tilde\boldU_\alpha^{(0)}(s)$ commutes with $e^{\der_s}$, it has
 a form
$
    \tilde\boldU_\alpha^{(0)}(s)
    =
    \sum_{j=0}^\infty \tilde u_{\alpha,j}^{(0)} e^{-j\der_s}
$,
 where $\tilde u_{\alpha,j}^{(0)}$ is a constant $N\times N$-matrix.
 Moreover, since $u_{\alpha,0}=E_\alpha$ and $w_{0,0,0}(s)=1_N$, $\tilde
 u_{\alpha,0}^{(0)}=E_{\alpha}$. Namely,
 $\tilde\boldU_\alpha^{(0)}=\tilde \boldU_\alpha^{(0)}(s)$ can be
 expanded as
\begin{equation}
    \tilde\boldU_\alpha^{(0)}
    =
    E_\alpha
    +
    \sum_{j=1}^\infty \tilde u_{\alpha,j}^{(0)} e^{-j\der_s}.
\label{tildeU} 
\end{equation}
 Assume that we have $\hat\boldW_{0,k}(s)$ ($k\geq0$) which satisfies
\begin{equation}
    \boldU_\alpha(s,0,0) \hat\boldW_{0,k}(s)
    =
    \hat\boldW_{0,k}(s) \tilde\boldU_\alpha^{(k)}, \quad
    \tilde\boldU_\alpha^{(k)}
    =   
    E_\alpha
    + \sum_{j=k+1}^\infty
       \tilde u_{\alpha,j}^{(k)} e^{-j\der_s},
\label{UW0k}
\end{equation}
 where $\tilde u_{\alpha,j}^{(k)}\in\Mat(N\times N,\Comp)$. Indeed the
 above chosen $\hat\boldW_{0,0}(s)$ satisfies this condition for $k=0$.

 By the same argument as that for $\hat\boldW_{0,0}(s)$ we can show that 
$$
    \tilde\boldU_\alpha^{(k)}
    = E_\alpha +
    \sum_{j=k+1}^\infty\tilde u_{\alpha,j}^{(k)} e^{-j\der_s}
$$
 in \eqref{UW0k} satisfies algebraic equations \eqref{alg-cond:tildeU}
 with the index $(k)$ instead of $(0)$. The equation
 $\bigl({\tilde\boldU}_\alpha^{(k)}\bigr)^2 = \tilde\boldU_\alpha^{(k)}$
 is expanded as
\[
    E_\alpha
    +
     (E_\alpha\tilde u_{\alpha,k+1}^{(k)}
     +\tilde u_{\alpha,k+1}^{(k)} E_\alpha)
     e^{-(k+1)\der_s}
    + \dotsb
    =
    E_\alpha + \tilde u_{\alpha,k+1}^{(k)} e^{-(k+1)\der_s}
    + \dotsb.
\]
 Hence, 
\[
    E_\alpha\tilde u_{\alpha,k+1}^{(k)}
    +
    \tilde u_{\alpha,k+1}^{(k)} E_\alpha
    =  \tilde u_{\alpha,k+1}^{(k)},
\]
 which means $0=(\tilde u_{\alpha,k+1}^{(k)})_{ij}$ ($i\neq\alpha$,
 $j\neq\alpha$, $i\neq j$) and 
$
    2(\tilde u_{\alpha,k+1}^{(k)})_{\alpha\alpha}
    =
    (\tilde u_{\alpha,k+1}^{(k)})_{\alpha\alpha}
$,
 namely,
\begin{equation}
    (\tilde u_{\alpha,k+1}^{(k)})_{ij}= 0
    \text{ if }(i\neq\alpha\text{ and } j\neq \alpha), \text{ or }
    i=j=\alpha.
\label{tildeu:element}
\end{equation}
 The commutativity equation
$
    \tilde\boldU_\alpha^{(k)} \tilde\boldU_\beta^{(k)}
    =
    \tilde\boldU_\beta^{(k)} \tilde\boldU_\alpha^{(k)}
$
 ($\alpha\neq\beta$) is expanded as
\[
    (\tilde u_{\alpha,k+1}^{(k)} E_{\beta}
    +E_{\alpha}\tilde u_{\beta,k+1}^{(k)})
    e^{-(k+1)\der_s}
    + \dotsb
    =
    (\tilde u_{\beta,k+1}^{(k)} E_{\alpha}
    +E_{\beta}\tilde u_{\alpha,k+1}^{(k)})
    e^{-(k+1)\der_s}
    + \dotsb,
\]
 in particular,
\[
    \tilde u_{\alpha,k+1}^{(k)} E_\beta
    +
    E_\alpha\tilde u_{\beta,k+1}^{(k)}
    =
    \tilde u_{\beta,k+1}^{(k)} E_\alpha
    +
    E_\beta\tilde u_{\alpha,k+1}^{(k)}.
\]
 Together with \eqref{tildeu:element} this implies
\begin{equation}
    (\tilde u_{\alpha,k+1}^{(k)})_{\alpha\beta}
    +
    (\tilde u_{\beta,k+1}^{(k)})_{\alpha\beta}
    =
    (\tilde u_{\alpha,k+1}^{(k)})_{\beta\alpha}
    +
    (\tilde u_{\beta,k+1}^{(k)})_{\beta\alpha}
    = 0.
\label{tildeua+tildeub=0}
\end{equation}
 (This is also a consequence of the condition
 $\sum_{\alpha=1}^N\tilde\boldU_\alpha^{(k)}=1_N$ in
 \eqref{alg-cond:tildeU}.) 

 If we can find an operator $\tilde\boldW_{k,k+1}$ such that
\begin{equation}
    \tilde\boldU^{(k)}_\alpha \tilde\boldW_{k,k+1}
    =
    \tilde\boldW_{k,k+1}
    \left(
     E_\alpha
     + \sum_{j=k+2}^\infty
       \tilde{\tilde u}_{\alpha,j} e^{-j\der_s}
    \right),
\label{tildeWk+1}
\end{equation}
 the operator $\hat\boldW_{0,k+1}(s)$ satisfying \eqref{UW0k} for $k+1$ is
 obtained as $$\hat\boldW_{0,k+1}(s)=\hat\boldW_{0,k}(s)
 \tilde\boldW_{k,k+1}$$ from $\hat\boldW_{0,k}(s)$. 

 As such an operator $\tilde\boldW_{k,k+1}$, we take
\begin{equation}
    \tilde\boldW_{k,k+1}
    =
    1_N + w'\, e^{-(k+1)\der_s}, \quad
    (w')_{\alpha\beta}
    :=
    (\tilde u_{\beta,k+1}^{(k)})_{\alpha\beta}
    =
    -(\tilde u_{\alpha,k+1}^{(k)})_{\alpha\beta}.
\label{tildew}
\end{equation}
 It is easy to see that this $\tilde\boldW_{k,k+1}$ satisfies
 \eqref{tildeWk+1} due to \eqref{tildeu:element} and
 \eqref{tildeua+tildeub=0}.  

 Note that multiplication $\hat\boldW_{0,k+1}(s)=\hat\boldW_{0,k}(s)
 \tilde\boldW_{k,k+1}$ by $\tilde\boldW_{k,k+1}$ of the form
 \eqref{tildew} does not change the coefficients $\hat w_{0,k,j}(s)$
 ($j=1,\dots,k$) in the  expansion 
\[
    \hat\boldW_{0,k}(s)
    =
    1_N +
    \sum_{j=1}^\infty \hat w_{0,k,j}(s)\, e^{-j\der_s}.
\]
 Hence the sequence $\{\hat\boldW_{0,k}(s)\}_k$ constructed in this way
 has a limit
\begin{equation}
    \hat\boldW_0(s) := \lim_{k\to\infty} \hat\boldW_{0,k}(s),
\label{What0=lim}
\end{equation}
 which satisfies the second equation in \eqref{lin-eq:LWhat,UWhat} for
 ${\boldsymbol t} =\bar {\boldsymbol t}=0$,
 $\boldU_\alpha(s,0,0)\hat\boldW_0(s)=\hat\boldW_0(s) E_\alpha$. The
 first equation $\boldL(s,0,0)\hat\boldW_0(s)=\hat\boldW_0(s)\,e^{\der_s}$
 of \eqref{lin-eq:LWhat,UWhat} is kept unchanged in the above procedure,
 as we have already mentioned.

 The operator $\hat\boldWbar_0(s)$ satisfying
 \eqref{lin-eq:LbarWbarhat,UbarWbarhat} at ${\boldsymbol t} =
 \bar {\boldsymbol t}=0$ is
 constructed in the same way. The first term $\bar w_{0,0}(s)$ in
\[
    \hat\boldWbar(s)
    =
    \sum_{j=0}^\infty \bar w_{0,j}(s)\, e^{j\der_s}
\]
 is $\tilde w_0(s,{\boldsymbol t} =0,\bar {\boldsymbol t}=0)$, 
 where $\tilde
 w_0(s,{\boldsymbol t} ,\bar {\boldsymbol t})$ is the matrix introduced
 in \propref{proposition:tildew0}.

\medskip
 \underline{Step 2.} Solving the differential equations for
 $\hat\boldW(s)$ and $\hat\boldWbar(s)$: The system
\begin{equation}
    \frac{\der\hat\boldW(s)}{\der t_{\alpha,n}}
    =
    -(\boldL^n(s)\boldU_\alpha(s))_{<0} \hat\boldW(s), 
    \qquad
    \frac{\der\hat\boldW(s)}{\der \bar t_{\alpha,n}}
    =
    \boldBbar_{\alpha,n}(s)\hat\boldW(s)
\label{lin-eq:dWhat:2}
\end{equation}
 is compatible because of \eqref{zs-LU<0LU<0}, \eqref{zs-LU<0Bbar} and
 \eqref{zs-BbarBbar}. Actually this is equivalent to the system
 \eqref{lin-eq:dWhat}, if $\hat\boldW(s)$ satisfies
 \eqref{lin-eq:LWhat,UWhat}. Let $\hat\boldW(s)$ be the unique solution
 of \eqref{lin-eq:dWhat:2} with the initial value
 $\hat\boldW(s,{\boldsymbol t} =0,\bar {\boldsymbol t}=0)=\hat\boldW_0(s)$.

 Since the coefficients in the right-hand sides of the equations in
 \eqref{lin-eq:dWhat:2} are difference operators with negative shifts,
 the solution $\hat\boldW$ is of the form
\[
    \hat\boldW(s,{\boldsymbol t} ,\bar {\boldsymbol t})
    =
    \sum_{j=0}^\infty w_j(s,{\boldsymbol t} ,
    \bar {\boldsymbol t}) e^{-j\der_s},
\] 
 with $w_0(s,{\boldsymbol t} ,\bar {\boldsymbol t})=1_N$. 

Similarly, the system 
\begin{equation}
    \frac{\der\hat\boldWbar(s)}{\der t_{\alpha,n}}
    =
    \boldB_{\alpha,n}(s) \hat\boldWbar(s), 
    \qquad
    \frac{\der\hat\boldWbar(s)}{\der \bar t_{\alpha,n}}
    =
    -(\boldLbar^n(s)\boldUbar_\alpha(s))_{\geq0}\hat\boldWbar(s),
\label{lin-eq:dWbarhat:2}
\end{equation}
 is compatible due to \eqref{zs-BB}, \eqref{zs-LUbar>0LUbar>0} and
 \eqref{zs-BLUbar>0} and equivalent to \eqref{lin-eq:dWbarhat}, if
 $\hat\boldWbar$ satisfies \eqref{lin-eq:LbarWbarhat,UbarWbarhat}. We
 take its solution $\hat\boldWbar(s)$ with the initial value
 $\hat\boldWbar(s,{\boldsymbol t} =0,\bar {\boldsymbol t}=0)=\hat\boldWbar_0(s)$. 

 Since the coefficients in the right-hand sides of the equations in
 \eqref{lin-eq:dWbarhat:2} are difference operators with non-negative
 shifts, the solution $\hat\boldWbar(s)$ is of the form
\[
    \hat\boldWbar(s,{\boldsymbol t} ,\bar {\boldsymbol t})
    =
    \sum_{j=0}^\infty \bar w_j(s,{\boldsymbol t} ,
    \bar {\boldsymbol t})\, e^{j\der_s},
\] 
 with $\bar w_0(s,{\boldsymbol t} =0,\bar {\boldsymbol t}=0)=\tilde
 w_0(s,{\boldsymbol t} =0,\bar {\boldsymbol t}=0)$.

\medskip
 \underline{Step 3.} The proof that $\hat\boldW$
 and $\hat\boldWbar$ constructed above satisfy equations 
 \eqref{lin-eq:LWhat,UWhat},
 \eqref{lin-eq:LbarWbarhat,UbarWbarhat}, \eqref{lin-eq:dWhat} and
 \eqref{lin-eq:dWbarhat}: 
 As we have mentioned in Step 2, equation \eqref{lin-eq:dWhat} is a consequence of
 \eqref{lin-eq:LWhat,UWhat} and \eqref{lin-eq:dWhat:2}, while equation
 \eqref{lin-eq:dWbarhat} is a consequence of
 \eqref{lin-eq:LbarWbarhat,UbarWbarhat} and
 \eqref{lin-eq:dWbarhat:2}. So, it is enough to show
 \eqref{lin-eq:LWhat,UWhat} and \eqref{lin-eq:LbarWbarhat,UbarWbarhat}.
 Since we do not touch $s$ here, we do not
 write it explicitly in the formulae below.

 The first Lax equation in \eqref{lax:L,Lbar,U,Ubar,Q,Qbar} for
 $\boldL$ and the first equation of \eqref{lin-eq:dWhat:2} imply
\[
 \begin{split}
    \frac{\der}{\der t_{\alpha,n}}
    (\boldL\hat\boldW-\hat\boldW e^{\der_s})
    &= [\boldB_{\alpha,n},\boldL]\hat\boldW
     - \boldL (\boldL^n\boldU_\alpha)_{<0} \hat\boldW
     - (\boldL^n\boldU_\alpha)_{<0} \hat\boldW e^{\der_s}
\\
    &= \boldB_{\alpha,n} \boldL\hat\boldW
     - \boldL (\boldL^n \boldU_\alpha) \hat\boldW
     - (\boldL^n\boldU_\alpha)_{<0} \hat\boldW e^{\der_s}
\\
    &= \boldB_{\alpha,n} \boldL\hat\boldW
     - (\boldL^n \boldU_\alpha) \boldL \hat\boldW
     - (\boldL^n\boldU_\alpha)_{<0} \hat\boldW e^{\der_s}
\\
    &= - (\boldL^n\boldU_\alpha)_{<0} 
      (\boldL \hat\boldW - \hat\boldW e^{\der_s})
 \end{split}
\]
 because $\boldL$ and $\boldU_\alpha$ commute. Similarly, the second Lax
 equation in \eqref{lax:L,Lbar,U,Ubar,Q,Qbar} and the second equation of 
 \eqref{lin-eq:dWhat:2} imply 
\[
 \begin{split}
    \frac{\der}{\der \bar t_{\alpha,n}}
    (\boldL\hat\boldW-\hat\boldW e^{\der_s})
    &= [\boldBbar_{\alpha,n},\boldL]\hat\boldW
     + \boldL \boldBbar_{\alpha,n} \hat\boldW
     - \boldBbar_{\alpha,n} \hat\boldW e^{\der_s}
\\
    &= \boldBbar_{\alpha,n}
      (\boldL \hat\boldW - \hat\boldW e^{\der_s}).
 \end{split}
\]
 Therefore, $\boldL \hat\boldW - \hat\boldW e^{\der_s}$ satisfies the
 system \eqref{lin-eq:dWhat:2} instead of $\hat\boldW$ with the initial
 value 
$
    (\boldL \hat\boldW - \hat\boldW e^{\der_s})|_{{\boldsymbol t} =
    \bar {\boldsymbol t}=0}
    =
    \boldL(0,0) \hat\boldW_0 - \hat\boldW_0 e^{\der_s} = 0
$.
 Since the solution of the Cauchy problem for the compatible system is
 unique, $\boldL \hat\boldW - \hat\boldW e^{\der_s}=0$ for all
 ${\boldsymbol t}$ and $\boldtbar$. Thus we have proved the first
 equation of \eqref{lin-eq:LWhat,UWhat}.

 The proof of the second equation of \eqref{lin-eq:LWhat,UWhat} is
 similar: the first Lax equation \eqref{lax:L,Lbar,U,Ubar,Q,Qbar} for
 $\boldU_\alpha$ and the first equation of \eqref{lin-eq:dWhat:2} imply
\[
 \begin{split}
    \frac{\der}{\der t_{\alpha,n}}
    (\boldU_\alpha\hat\boldW-\hat\boldW E_\alpha)
    &= [\boldB_{\alpha,n},\boldU_\alpha]\hat\boldW
     - \boldU_\alpha (\boldL^n\boldU_\alpha)_{<0} \hat\boldW
     - (\boldL^n\boldU_\alpha)_{<0} \hat\boldW E_\alpha
\\
    &= \boldB_{\alpha,n} \boldU_\alpha \hat\boldW
     - \boldU_\alpha (\boldL^n \boldU_\alpha) \hat\boldW
     - (\boldL^n\boldU_\alpha)_{<0} \hat\boldW E_\alpha
\\
    &= \boldB_{\alpha,n} \boldU_\alpha\hat\boldW
     - (\boldL^n \boldU_\alpha) \boldU_\alpha \hat\boldW
     - (\boldL^n\boldU_\alpha)_{<0} \hat\boldW E_\alpha
\\
    &= - (\boldL^n\boldU_\alpha)_{<0} 
      (\boldU_\alpha \hat\boldW - \hat\boldW E_\alpha),
 \end{split}
\]
 and the second Lax equation \eqref{lax:L,Lbar,U,Ubar,Q,Qbar} for
 $\boldU_\alpha$ and the second equation of \eqref{lin-eq:dWhat:2} imply
\[
 \begin{split}
    \frac{\der}{\der \bar t_{\alpha,n}}
    (\boldU_\alpha\hat\boldW-\hat\boldW E_\alpha)
    &= [\boldBbar_{\alpha,n},\boldU_\alpha]\hat\boldW
     + \boldU_\alpha \boldBbar_{\alpha,n} \hat\boldW
     - \boldBbar_{\alpha,n} \hat\boldW E_\alpha
\\
    &= \boldBbar_{\alpha,n}
      (\boldU_\alpha \hat\boldW - \hat\boldW E_\alpha).
 \end{split}
\]
 Together with
$
    ( \boldU_\alpha \hat\boldW
    - \hat\boldW E_\alpha)|_{{\boldsymbol t} =\bar {\boldsymbol t}=0}
    =
    \boldU_\alpha(0,0) \hat\boldW_0 - \hat\boldW_0 E_\alpha = 0
$
 these equations mean that 
 $\boldU_\alpha \hat\boldW - \hat\boldW E_\alpha=0$.

 Equations \eqref{lin-eq:LbarWbarhat,UbarWbarhat} for
 $\hat\boldWbar$ can be proved in the same way.

\medskip
 \underline{Step 4.} Check the uniqueness of $\hat\boldW$ and
 $\hat\boldWbar$: Assume
 that there are two operators $\hat\boldW_1$ and $\hat\boldW_2$
 satisfying the conditions \eqref{lin-eq:LWhat,UWhat} and
 \eqref{lin-eq:dWhat}, or equivalently, \eqref{lin-eq:LWhat,UWhat} and
 \eqref{lin-eq:dWhat:2}. Here again we do not write $s$ explicitly. 

 The first equation of \eqref{lin-eq:LWhat,UWhat} implies
\[
    e^{\der_s}(\hat\boldW_1^{-1}\hat\boldW_2)
    =
    (\hat\boldW_1^{-1}\hat\boldW_2)e^{\der_s},
    \text{ i.e., }
    \hat\boldW_1^{-1}\hat\boldW_2
    =
    \sum_{j=0}^\infty c_j({\boldsymbol t} ,\bar {\boldsymbol t}) 
    e^{-j\der_s},
\]
 where $c_j({\boldsymbol t} ,\bar {\boldsymbol t})$ is an $N\times N$-matrix independent of
 $s$. It follows from the second equation of \eqref{lin-eq:LWhat,UWhat}
 that 
\[
    E_\alpha(\hat\boldW_1^{-1}\hat\boldW_2)
    =
    (\hat\boldW_1^{-1}\hat\boldW_2) E_\alpha
    \text{ i.e., }
    c_j({\boldsymbol t} ,\bar {\boldsymbol t})\, E_{\alpha}
    =
    E_{\alpha}\, c_j({\boldsymbol t} ,\bar {\boldsymbol t})
    \text{ for any $\alpha$ and $j$.}
\]
 Therefore each $c_j({\boldsymbol t} ,\bar {\boldsymbol t})$ is a diagonal matrix.

 By the differential equations \eqref{lin-eq:dWhat:2}, 
\[
 \begin{split}
    \frac{\der}{\der t_{\alpha,n}} (\hat\boldW_1^{-1}\hat\boldW_2)
    &=
    -\hat\boldW_1^{-1}
     \frac{\der\hat\boldW_1}{\der t_{\alpha,n}} \hat\boldW_1^{-1}
     \hat\boldW_2
    + \hat\boldW_1^{-1} \frac{\der \hat\boldW_2}{\der t_{\alpha,n}}
\\
    &=
    -\hat\boldW_1^{-1} (-(\boldL^n\boldU_\alpha)_{<0}) \hat\boldW_2
    +\hat\boldW_1^{-1} (-(\boldL^n\boldU_\alpha)_{<0}) \hat\boldW_2
    = 0,
 \end{split}
\]
 and
\[
 \begin{split}
    \frac{\der}{\der \bar t_{\alpha,n}}
    (\hat\boldW_1^{-1}\hat\boldW_2)
    &=
    -\hat\boldW_1^{-1}
     \frac{\der\hat\boldW_1}{\der\bar t_{\alpha,n}}\hat\boldW_1^{-1}
     \hat\boldW_2
    + \hat\boldW_1^{-1}\frac{\der \hat\boldW_2}{\der\bar t_{\alpha,n}} 
\\
    &=
     \hat\boldW_1^{-1} \boldBbar_{\alpha,n} \hat\boldW_2
    -\hat\boldW_1^{-1} \boldBbar_{\alpha,n} \hat\boldW_2
    = 0,
 \end{split}
\]
 which means that $c_j({\boldsymbol t} ,\bar {\boldsymbol t})$ are constant. Thus we have
 proved that the ambiguity of $\hat\boldW({\boldsymbol t} ,
 \bar {\boldsymbol t})$ is of the
 form 
\begin{equation}
    \hat\boldW({\boldsymbol t} ,\bar {\boldsymbol t})
    \mapsto
    \hat\boldW({\boldsymbol t} ,\bar {\boldsymbol t})
    \sum_{j=0}^\infty c_j e^{-j\der_s},
\label{hatW:ambiguity}
\end{equation}
 where each $c_j$ is a constant diagonal matrix. Because of the
 normalization $w_0(s,{\boldsymbol t} ,\bar {\boldsymbol t})=1_N$, $c_0$ should be an
 identity matrix, $1_N$.

 Similarly, the ambiguity of $\hat\boldWbar({\boldsymbol t} ,
 \bar {\boldsymbol t})$ is of the
 form
\[
    \hat\boldWbar({\boldsymbol t} ,\bar {\boldsymbol t})
    \mapsto
    \hat\boldWbar({\boldsymbol t} ,\bar {\boldsymbol t})
    \sum_{j=0}^\infty \bar c_j e^{j\der_s},
\]
 where each $\bar c_j$ is a constant diagonal matrix and $\bar c_0$ is
 invertible.

\bigskip
 \underline{Step 5.} Now we restore the variables $\bolds$ and show that
 we can modify 
$
    \hat\boldW(s,{\boldsymbol t} ,\bar{\boldsymbol t} )
    =
    \hat\boldW(\bolds^{(0)}+s\boldone,{\boldsymbol t} ,\bar{\boldsymbol t} )
$
and
$
    \hat\boldWbar(s,{\boldsymbol t} ,\bar{\boldsymbol t} )
    =
    \hat\boldWbar(\bolds^{(0)}+s\boldone,{\boldsymbol t} ,\bar{\boldsymbol t} )
$
 obtained above so that they satisfy \eqref{lin-eq:QWhat=What}.

 First, note that any $\bolds\in\Integer^N$ can be uniquely decomposed
 as $\bolds=(s_1,\dotsc,s_N)=\bolds^{(0)} + s \boldone$, where
 $\bolds^{(0)}=(s^{(0)}_1,\dotsc,s^{(0)}_N)$ satisfies
 $s^{(0)}_1+\dotsb+s^{(0)}_N\in\{0,\dotsc,N-1\}$ and $s\in\Integer$. In
 fact, one has only to take $s:=$ the integer part of
 $(s_1+\dotsc+s_N)/N$ and $\bolds^{(0)}:=\bolds-s\boldone$. Thus any
 $\bolds$ is included in a uniquely defined sequence
 $\{\bolds^{(0)}+s\boldone\}_{s\in\Integer}$.

 We have already obtained
 $\hat\boldW(\bolds^{(0)}+s\boldone,{\boldsymbol t},\bar{\boldsymbol
 t})$ and $\hat\boldWbar(\bolds^{(0)}+s\boldone,{\boldsymbol
 t},\bar{\boldsymbol t})$ satisfying \eqref{lin-eq:LWhat,UWhat},
 \eqref{lin-eq:LbarWbarhat,UbarWbarhat}, \eqref{lin-eq:dWhat} and
 \eqref{lin-eq:dWbarhat} for each $\bolds^{(0)}$ and $s$.
 As we have noted above, this means that we have $\hat\boldW(\bolds)$
 and $\hat\boldWbar(\bolds)$ for each $\bolds$.
 Let us denote these temporary wave operators as
 $\hat\boldW^\temp(\bolds,{\boldsymbol t},\bar{\boldsymbol t} )$ and
 $\hat\boldWbar^\temp(\bolds,{\boldsymbol t},\bar{\boldsymbol t} )$
 respectively.

 Using the operator
 $\hat\boldW^\temp(\bolds,{\boldsymbol t} ,\bar{\boldsymbol t} )$, we
 modify the $\boldP_\bolda$-operators (in particular,
 the $\boldP_\alpha$-operators) as follows:
\begin{equation}
    \tilde\boldP_\bolda(\bolds)
    =
    (\hat\boldW^\temp(\bolds+\bolda))^{-1}
     \boldP_\bolda(\bolds)
    \hat\boldW^\temp(\bolds).
\label{tildeP}
\end{equation}
 It follows from equations \eqref{PL,PU} that
\begin{equation}
    \tilde\boldP_\bolda(\bolds) e^{\der_s}
    =
    e^{\der_s} \tilde\boldP_\bolda(\bolds),\qquad
    \tilde\boldP_\bolda(\bolds) E_\alpha,
    =
    E_\alpha \tilde\boldP_\bolda(\bolds)
\label{tildeP:invariance} 
\end{equation}
 which means that the coefficients $\tilde
 p_{\bolda,j}(\bolds,{\boldsymbol t} ,\bar{\boldsymbol t} )$ in the expansion
\begin{equation}
    \tilde\boldP_\bolda(\bolds)
    =
    \sum_{j=0}^\infty
    \tilde p_{\bolda,j}(\bolds,{\boldsymbol t} ,\bar{\boldsymbol t} ) e^{-j\der_s} 
\label{tildeP:expansion}
\end{equation}
 are diagonal matrices invariant under translation of the
 $\bolds$-variable: $\bolds\mapsto\bolds+\boldone$.  

 Differentiating \eqref{tildeP} by $t_{\alpha,n}$, we have
\begin{equation}
 \begin{split}
    \frac{\der \tilde\boldP_\bolda(\bolds)}{\der t_{\alpha,n}}
    ={}&
    e^{n\der_s} E_\alpha \tilde\boldP_\bolda(\bolds)
    -
    \tilde\boldP_\bolda(\bolds) e^{n\der_s} E_\alpha = 0,
 \end{split}
\label{dtildeP/dt} 
\end{equation}
 due to the differential equations \eqref{lin-eq:dWhat} for
 $\hat\boldW^\temp(\bolds)$, \eqref{P/t} for $\boldP_\bolda(\bolds)$ and
 commutativity \eqref{tildeP:invariance}. Similarly, the equation
\begin{equation}
    \frac{\der \tilde\boldP_\bolda(\bolds)}{\der \bar t_{\alpha,n}}
    = 0
\label{dtildeP/dtbar} 
\end{equation}
 follows from \eqref{lin-eq:dWhat}, \eqref{P/tbar} and
 \eqref{tildeP:invariance}. Namely, $\tilde\boldP_\bolda(\bolds)$ does
 not depend on ${\boldsymbol t} $ and $\bar{\boldsymbol t} $. The composition rule 
\begin{equation}
    \tilde\boldP_\boldb(\bolds+\bolda)\tilde\boldP_\bolda(\bolds)
    = \tilde\boldP_{\bolda+\boldb}(\bolds)
\label{tildePtildeP=tildeP}
\end{equation}
 is a consequence of the corresponding formula \eqref{PP=P}.

\begin{lem}
\label{lem:shift-tildeP}
 $\tilde\boldP_{\bolda+s\boldone}(\bolds^{(0)}) =
 \tilde\boldP_\bolda(\bolds^{(0)})$ for any $\bolda\in\Integer^N$,
 $s\in\Integer$ and $\bolds^{(0)}$.
\end{lem}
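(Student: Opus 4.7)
The plan is to reduce the claim to showing that $\tilde\boldP_{s\boldone}(\bolds)=1_N$ for every $\bolds$ and every $s\in\Integer$. Once that is established, the composition rule \eqref{tildePtildeP=tildeP} immediately yields
\[
    \tilde\boldP_{\bolda+s\boldone}(\bolds^{(0)})
    =
    \tilde\boldP_{s\boldone}(\bolds^{(0)}+\bolda)\,
    \tilde\boldP_\bolda(\bolds^{(0)})
    =
    \tilde\boldP_\bolda(\bolds^{(0)}),
\]
which is precisely the statement of the lemma.

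To prove $\tilde\boldP_{s\boldone}(\bolds)=1_N$, I would first use \eqref{Pa}, the commutativity of the $\boldQ_\alpha$'s, and the algebraic condition \eqref{Q:alg-cond} to rewrite
\[
    \boldP_{s\boldone}(\bolds)
    =
    e^{s\der_s} \prod_{\alpha=1}^N \boldQ_\alpha^{s}(\bolds)
    =
    e^{s\der_s}\, \boldL^{-s}(\bolds),
\]
valid for all integers $s$. Applying the intertwining relation $\boldL(\bolds)\hat\boldW^\temp(\bolds)=\hat\boldW^\temp(\bolds)e^{\der_s}$ from \eqref{lin-eq:LWhat,UWhat} (and its inverse, which makes sense since $\hat\boldW^\temp$ is an invertible pseudodifference operator with leading term $1_N$), this gives
\[
    \boldP_{s\boldone}(\bolds)\,\hat\boldW^\temp(\bolds)
    =
    e^{s\der_s}\, \hat\boldW^\temp(\bolds)\, e^{-s\der_s}.
\]
The key identification is that conjugation by $e^{s\der_s}$ acts on the $\bolds$-dependence of the coefficients simply as the shift $\bolds\mapsto\bolds+s\boldone$, so the right-hand side equals $\hat\boldW^\temp(\bolds+s\boldone)$. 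Substituting this into the definition \eqref{tildeP} of $\tilde\boldP_{s\boldone}$ then gives $\tilde\boldP_{s\boldone}(\bolds)=(\hat\boldW^\temp(\bolds+s\boldone))^{-1}\hat\boldW^\temp(\bolds+s\boldone)=1_N$, as required.

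The only delicate point — and what I expect to be the main obstacle — is the last identification, namely that $e^{s\der_s}\hat\boldW^\temp(\bolds)e^{-s\der_s}$ coincides with $\hat\boldW^\temp(\bolds+s\boldone)$ as an operator built from the \emph{same} family of coefficients. This is exactly where the construction of $\hat\boldW^\temp$ sequence-by-sequence in Steps 1--4 above must be invoked: since $\bolds$ and $\bolds+s\boldone$ lie in the same sequence $\{\bolds^{(0)}+s'\boldone\}_{s'\in\Integer}$, the coefficients $w_j$ at these two points are two values of the \emph{single} matrix-valued function of $s'$ that was produced in that construction, and conjugation by $e^{s\der_s}$ shifts precisely this $s'$-argument. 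Were we instead attempting the analogous identity for a shift $\bolds\mapsto\bolds+\bolda$ with $\bolda$ not proportional to $\boldone$, we would be comparing wave operators built on \emph{different} sequences, and the argument would not close — which is the reason that only the partial statement in this lemma, and not full $\bolds$-translation invariance of $\tilde\boldP_\bolda$, is available at this stage.
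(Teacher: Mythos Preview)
Your proof is correct and follows essentially the same route as the paper: both compute $\boldP_{s\boldone}(\bolds)=e^{s\der_s}\boldL^{-s}(\bolds)$ from \eqref{Q:alg-cond}, use the intertwining relation \eqref{lin-eq:LWhat,UWhat} and the shift identity $e^{\der_s}\hat\boldW^\temp(\bolds)e^{-\der_s}=\hat\boldW^\temp(\bolds+\boldone)$ to conclude $\tilde\boldP_{s\boldone}=1_N$, and then apply the composition rule \eqref{tildePtildeP=tildeP}. The only cosmetic difference is that the paper establishes the case $s=1$ (i.e., $\tilde\boldP_\boldone=1_N$) and lets composition handle general $s$, whereas you do all $s$ at once; your discussion of the ``delicate point'' is exactly the step the paper uses implicitly when writing $(\hat\boldW^\temp(\bolds+\boldone))^{-1}e^{\der_s}=e^{\der_s}(\hat\boldW^\temp(\bolds))^{-1}$.
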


\begin{proof}
 By definition \eqref{Pa},
$
    \boldP_\boldone(\bolds)
    =
    e^{\der_s}\prod_{\alpha=1}^N\boldQ_\alpha(\bolds)
$.
 Because of the condition \eqref{Q:alg-cond}, the right-hand side is
 equal to $e^{\der_s}\boldL^{-1}(\bolds)$. Therefore, by
 \eqref{tildeP} 
\[
 \begin{split}
    \tilde\boldP_\boldone(\bolds) &=
    (\hat\boldW^\temp(\bolds+\boldone))^{-1}
    \boldP_\boldone(\bolds)
    \hat\boldW^\temp(\bolds)
\\
    &=
    (\hat\boldW^\temp(\bolds+\boldone))^{-1}
    e^{\der_s}\boldL^{-1}(\bolds)
    \hat\boldW^\temp(\bolds)
\\
    &=
    e^{\der_s} (\hat\boldW^\temp(\bolds))^{-1}
    \boldL^{-1}(\bolds)
    \hat\boldW^\temp(\bolds)
    = e^{\der_s} e^{-\der_s} = 1_N.
 \end{split}
\]
 The statement of the 
 lemma follows from this equation and the composition rule
 \eqref{tildePtildeP=tildeP}. 
\end{proof}

\medskip
 Let us modify the operators $\hat\boldW^\temp(\bolds)$.
 The new wave operators $\hat\boldW(\bolds)$ are defined by
\begin{equation}
    \hat\boldW(\bolds)
    :=
    \hat\boldW^\temp(\bolds)\,
    \tilde\boldP_{\bolds}(\boldzero)
    =
    \boldP_{\bolds}(\boldzero)\,
    \hat\boldW^\temp(\boldzero),
\label{W=WtentP}
\end{equation}
 where $\boldzero=\{0,\dotsc,0\}$. (The second equality follows from
 the definition \eqref{tildeP}.) This modification does not spoil
 equations \eqref{lin-eq:LWhat,UWhat} and \eqref{lin-eq:dWhat}, as
 $\tilde\boldP_{\bolds}(\boldzero)$ is an operator of
 the form \eqref{tildeP:expansion} whose coefficients are diagonal
 matrices and constant with respect to ${\boldsymbol t} $
 (\eqref{dtildeP/dt}), $\bar {\boldsymbol t}$ (\eqref{dtildeP/dtbar})
 and $\bolds$ (\lemref{lem:shift-tildeP}). 

 It follows from the definition \eqref{tildeP} of
 $\tilde\boldP_\bolda(\bolds)$ and \eqref{PP=P} that for
 $\boldb\in\Integer^N$
\[
 \begin{split}
    \hat\boldW(\bolds+\boldb)
    &=
    \boldP_{\bolds+\boldb}(\boldzero)) \hat\boldW^\temp(\boldzero)
\\
    &=
    \boldP_{\boldb}(\bolds) \boldP_{\bolds}(\boldzero)
    \hat\boldW^\temp(\boldzero)
    =
    \boldP_{\boldb}(\bolds)
    \hat\boldW(\bolds).
 \end{split}
\]
 In particular, for $\boldb=[1]_\alpha$, we have
\[
    \hat\boldW(\bolds+[1]_\alpha)
    =
    \boldP_\alpha(\bolds) \hat\boldW(\bolds),
\]
 which is the first equation in \eqref{lin-eq:PWhat=What}.

 Thus we have obtained the 
 desired operators $\hat\boldW(\bolds)$. The
 assertion about the uniqueness follows from that of
 $\hat\boldW^\temp(s)$, \eqref{hatW:ambiguity}, and
 \eqref{lin-eq:PWhat=What}.

 The existence of $\hat\boldWbar(\bolds)$ is proved in a similar way.

\bigskip
 The converse statement (ii) follows immediately from definitions
 \eqref{W->L,U,P} and their derivatives, once
 $\boldB_{\alpha,n}(\bolds)$ and $\boldBbar_{\alpha,n}(\bolds)$ are
 expressed in terms of $\boldL(\bolds)$, $\boldLbar(\bolds)$,
 $\boldU_\alpha(\bolds)$ and $\boldUbar_\alpha(\bolds)$ as in
 \eqref{def:BBbar}.

 Both of the operators $\boldB_{\alpha,n}(\bolds)$ and
 $\boldBbar_{\alpha,n}(\bolds)$ are expressed in two ways by
 $\hat\boldW$ and $\hat\boldWbar$ because of \eqref{lin-eq:dWhat} and
 \eqref{lin-eq:dWbarhat} as follows:
\begin{align*}
    \boldB_{\alpha,n}(\bolds)
    &=\frac{\der\hat\boldWbar(\bolds)}{\der t_{\alpha,n}}
    \hat\boldWbar^{-1}(\bolds)
    = 
    \hat\boldW(\bolds) e^{n\der_s} E_\alpha \hat\boldW^{-1}(\bolds)
    +
    \frac{\der \hat\boldW(\bolds)}{\der t_{\alpha,n}}
    \hat\boldW^{-1}(\bolds),
\\
    \boldBbar_{\alpha,n}(\bolds)
    &=\frac{\der\hat\boldW(\bolds)}{\der \bar t_{\alpha,n}}
    \hat\boldW^{-1}(\bolds)
    =
    \hat\boldWbar(\bolds) e^{-n\der_s} E_\alpha 
    \hat\boldWbar^{-1}(\bolds)
    +
    \frac{\der \hat\boldWbar(\bolds)}{\der \bar t_{\alpha,n}}
    \hat\boldWbar^{-1}(\bolds).
\end{align*}
 The first equality in the first equation means that
 $\boldB_{\alpha,n}(\bolds)$ is a difference operator with non-negative
 shifts. Hence the latter half of the same equation implies
$
    \boldB_{\alpha,n}(\bolds)
    =
    (\hat\boldW(\bolds) e^{n\der_s}
     E_\alpha\hat\boldW^{-1}(\bolds))_{\geq0}
$.
 Similarly,
$
    \boldBbar_{\alpha,n}(\bolds)
    =
    (\hat\boldWbar(\bolds) e^{-n\der_s}
     E_\alpha \hat\boldW^{-1}(\bolds))_{<0}
$
 follows from the second equation. Thus equations \eqref{def:BBbar}
 are proved.
\end{proof}

\begin{zam}
\label{rem:w0-tilde-neq-w0-bar}
 Comparing the coefficients of $e^{-\der_s}$ in the expansion
 of the first equation of \eqref{lin-eq:LbarWbarhat,UbarWbarhat} by
 using
$
    \bar b_0(\bolds)=\tilde w_0(\bolds)\, \tilde
 w_0^{-1}(\bolds-\boldone)
$
 (\propref{proposition:tildew0}), we have 
\[
    \tilde w_0(\bolds)\, \tilde w_0^{-1}(\bolds-\boldone)\,
    \bar w_0(\bolds-\bold1) = \bar w_0(\bolds),
    \ \text{i.e., }
    \tilde w_0^{-1}(\bolds-\boldone)\, \bar w_0(\bolds-\bold1)
    =
    \tilde w_0^{-1}(\bolds)\, \bar w_0(\bolds),
\]
 which means that the matrix $\tilde{\bar c}(\bolds):=\tilde
 w_0^{-1}(\bolds)\, \bar w_0(\bolds)$ is invariant under the shift
 $\bolds\mapsto \bolds+a\boldone$ ($a\in\Integer$). Similarly, it
 follows from the second equation in
 \eqref{lin-eq:LbarWbarhat,UbarWbarhat} and
$
    \bar u_{\alpha,0}(\bolds) 
    =
    \tilde w_0(\bolds) E_\alpha \tilde w_0^{-1}(\bolds),
$
 (\propref{proposition:tildew0})
 that
\[
     \tilde w_0^{-1}(\bolds)\, \bar w_0(\bolds)\, E_\alpha
     =
     E_\alpha\, \tilde w_0^{-1}(\bolds)\, \bar w_0(\bolds)
\]
 for any $\alpha=1,\dotsc,N$. Therefore $\tilde{\bar c}(\bolds)$ is a
 diagonal matrix.
 The second equation of \eqref{lin-eq:PWhat=What} and
$
    \bar p_{\alpha,0}(\bolds) 
    =
    \tilde w_0(\bolds+[1]_\alpha)\, \tilde w_0^{-1}(\bolds)
$
 (\propref{proposition:tildew0}) implies
\[
    \tilde w_0^{-1}(\bolds+[1]_\alpha)\, \bar w_0(\bolds+[1]_\alpha)
    =
    \tilde w_0^{-1}(\bolds)\, \bar w_0(\bolds).
\]
 Hence $\tilde{\bar c}(\bolds)$ is invariant under any shift
 $\bolds\mapsto\bolds+\bolda$ ($\bolda\in\Integer^N$), i.e., it
 does not depend on $\bolds$.
 Thus we have shown that $\bar w_0(\bolds)$ and $\tilde w_0(\bolds)$ are
 related as
 $\bar w_0(\bolds)
    =\tilde w_0(\bolds) \,
 \tilde{\bar c}(\boldt,\boldtbar)$, 
 where $\tilde{\bar c}(\boldt,\boldtbar)$ is a diagonal matrix
 independent of $\bolds$. However, it can depend on $\boldt$ and
 $\boldtbar$ non-trivially. So, 
 $\bar w_0(\bolds)$ and $\tilde
 w_0(\bolds)$ are closely related but, strictly speaking, are 
 not the same. 
 This is just the 
 ambiguity in the definition of $\tilde w_0(\bolds)$ (see
 Remark \ref{remark:freedom}). One may also say that the freedom
 in the definition of $\tilde w_0$ is partially fixed in the 
 $\bar w_0$. 
 Actually, once we have found $\bar w_0(\bolds)$, we can replace $\tilde
 w_0(\bolds)$ in all equations in 
 Section \ref{sec:def-multi-toda} 
 with $\bar w_0(\bolds)$. This does not change
 those equations at all. 
\end{zam}

A useful corollary of Proposition \ref{predl:wave-mat} is the following
statement.

\begin{predl}\label{proposition:alternative}
The following relations hold:
\beq\label{n10a}
\bar b_0(\bolds )=\bar w_0(\bolds )\bar w_0^{-1}(\bolds -1)=
\p_{\bar t_1}w_1 (\bolds ), \qquad
\p_{\bar t_1}\equiv \sum_{\mu =1}^N\p_{\bar t_{\mu ,1}},
\eeq
\beq\label{n11}
b_1(\bolds )=\p_{t_1}\bar w_0(\bolds )
\bar w_0^{-1}(\bolds ), \qquad
\p_{t_1}\equiv \sum_{\mu =1}^N\p_{ t_{\mu ,1}}.
\phantom{aaaaaaaaa}
\eeq
\end{predl}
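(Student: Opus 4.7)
The identity $\bar b_0(\bolds)=\bar w_0(\bolds)\bar w_0^{-1}(\bolds-\boldone)$ is essentially already available. Proposition \ref{proposition:tildew0} gives $\bar b_0(\bolds)=\tilde w_0(\bolds)\tilde w_0^{-1}(\bolds-\boldone)$, and Remark \ref{rem:w0-tilde-neq-w0-bar} shows that $\bar w_0(\bolds)=\tilde w_0(\bolds)\tilde{\bar c}(\boldt,\boldtbar)$ where $\tilde{\bar c}$ is diagonal and, crucially, independent of $\bolds$. Substituting the latter and cancelling $\tilde{\bar c}$ against $\tilde{\bar c}^{-1}$ yields the first equality.

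For the two remaining identities, the main idea is to sum the linear equations \eqref{lin-eq:dWhat}, \eqref{lin-eq:dWbarhat} at level $n=1$ over $\alpha$ and exploit $\sum_\alpha \boldU_\alpha(\bolds)=\sum_\alpha\boldUbar_\alpha(\bolds)=1_N$. For the $\bar t_1$-derivative, I first compute
\[
    \sum_{\alpha=1}^N \boldBbar_{\alpha,1}(\bolds)
    = \Bigl(\boldLbar(\bolds)\sum_{\alpha=1}^N\boldUbar_\alpha(\bolds)\Bigr)_{<0}
    = (\boldLbar(\bolds))_{<0}
    = \bar b_0(\bolds)\, e^{-\der_s},
\]
since in the expansion $\boldLbar=\bar b_0\, e^{-\der_s}+\bar b_1+\bar b_2\, e^{\der_s}+\cdots$ only the first term has negative shift. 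Then summing the first equation of \eqref{lin-eq:dWhat} over $\alpha$ gives $\p_{\bar t_1}\hat\boldW(\bolds)=\bar b_0(\bolds)\, e^{-\der_s}\hat\boldW(\bolds)$, and comparing the coefficients of $e^{-\der_s}$ on both sides (using $w_0=1_N$) yields $\p_{\bar t_1}w_1(\bolds)=\bar b_0(\bolds)$.

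For the $t_1$-derivative of $\bar w_0$, the same reasoning applied to $\boldB_{\alpha,1}=(\boldL\boldU_\alpha)_{\geq 0}$ gives
\[
    \sum_{\alpha=1}^N \boldB_{\alpha,1}(\bolds)
    = (\boldL(\bolds))_{\geq 0}
    = e^{\der_s} + b_1(\bolds),
\]
since $b_0=1_N$. Summing the first equation of \eqref{lin-eq:dWbarhat} over $\alpha$ gives $\p_{t_1}\hat\boldWbar(\bolds)=(e^{\der_s}+b_1(\bolds))\hat\boldWbar(\bolds)$. Expanding $\hat\boldWbar=\bar w_0+\bar w_1 e^{\der_s}+\cdots$ and reading off the coefficient of $e^{0\cdot\der_s}$ yields $\p_{t_1}\bar w_0(\bolds)=b_1(\bolds)\bar w_0(\bolds)$, whence the claimed formula for $b_1(\bolds)$.

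There is no real obstacle here; all three identities are coefficient-matching arguments once the correct linear equation is summed over the component index. The only subtle point is using the independence of $\tilde{\bar c}$ on $\bolds$ (established in Remark \ref{rem:w0-tilde-neq-w0-bar}) to transfer the formula for $\bar b_0$ from $\tilde w_0$ to $\bar w_0$; everything else reduces to the algebraic identities $\sum_\alpha \boldU_\alpha=\sum_\alpha\boldUbar_\alpha=1_N$ together with the explicit leading terms of $\boldL$ and $\boldLbar$.
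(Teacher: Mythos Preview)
Your proof is correct and follows essentially the same approach as the paper: sum the linear equations \eqref{lin-eq:dWhat}, \eqref{lin-eq:dWbarhat} at $n=1$ over $\alpha$, use $\sum_\alpha\boldU_\alpha=\sum_\alpha\boldUbar_\alpha=1_N$ to collapse $\sum_\alpha\boldB_{\alpha,1}$ and $\sum_\alpha\boldBbar_{\alpha,1}$ to $(\boldL)_{\geq0}$ and $(\boldLbar)_{<0}$, and then read off the leading coefficients. One small slip: when you write ``summing the first equation of \eqref{lin-eq:dWhat}'' to obtain $\p_{\bar t_1}\hat\boldW=\bar b_0\,e^{-\der_s}\hat\boldW$, you mean the \emph{second} equation of \eqref{lin-eq:dWhat} (the one involving $\bar t_{\alpha,n}$); the reference to the first equation of \eqref{lin-eq:dWbarhat} later on is correct.
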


\begin{proof}
These relations follow immediately from equations
(\ref{lin-eq:dWhat}) and (\ref{lin-eq:dWbarhat}) for the 
wave operators $\hat \boldW$, $\hat \boldWbar$. 
They are obtained 
from them by summing over $\alpha$ from $1$ to $N$ and
restriction to the highest coefficients. 
\end{proof}

\subsection{Wave functions}

The system of linear equations \eqref{lin-eq:LW,UW},
\eqref{lin-eq:LbarWbar,UbarWbar}, \eqref{lin-eq:dW},
\eqref{lin-eq:dWbar} and \eqref{lin-eq:QW} can be rewritten as a linear
system for the {\em matrix wave functions} defined by the wave matrices
as
\begin{equation}
 \begin{aligned}
    \boldPsi(\bolds,{\boldsymbol t} ,\bar {\boldsymbol t};z)
    &=
    (\Psi_{\alpha\beta}(\bolds,{\boldsymbol t} ,\bar {\boldsymbol t};z)
    )_{\alpha,\beta}
    :=
    \boldW(\bolds,{\boldsymbol t} ,\bar {\boldsymbol t}) 
    \diag_{\alpha}(z^{s_\alpha})
\\
    &=
    \sum_{j=0}^\infty w_j(\bolds,{\boldsymbol t} ,\bar {\boldsymbol t})\,
     \diag_{\alpha}\bigl(
      z^{s_\alpha-j} e^{\xi({\boldsymbol t} _\alpha,z)}
     \bigr),
\\
    \boldPsibar(\bolds,{\boldsymbol t} ,\bar {\boldsymbol t};z)
    &=
    (\bar\Psi_{\alpha\beta}(\bolds,{\boldsymbol t} ,\bar {\boldsymbol t};z)
    )_{\alpha,\beta}
    :=
    \boldWbar(\bolds,{\boldsymbol t} ,\bar {\boldsymbol t}) \diag_{\alpha}(z^{s_\alpha})
\\
    &=
    \sum_{j=0}^\infty \bar w_j(\bolds,{\boldsymbol t} ,
    \bar {\boldsymbol t})\,
    \diag_\alpha \bigl(
     z^{s_\alpha+j} e^{\xi(\bar {\boldsymbol t}_\alpha,z^{-1})}
    \bigr).
 \end{aligned}
\label{wave-func}
\end{equation}
In other words, the matrix elements of the wave functions are
\begin{equation}
 \begin{aligned}
    \Psi_{\alpha\beta}(\bolds,{\boldsymbol t} ,\bar {\boldsymbol t};z)
    &=
    w_{\alpha\beta}(\bolds,{\boldsymbol t} ,\bar {\boldsymbol t};z)\,
    z^{s_\beta} e^{\xi({\boldsymbol t} _\beta,z)},\
    w_{\alpha\beta}(\bolds,{\boldsymbol t} ,\bar {\boldsymbol t};z)
    :=
    \sum_{j=0}^\infty
    w_{j,\alpha\beta}(\bolds,{\boldsymbol t} ,\bar {\boldsymbol t})\, z^{-j},
\\
    \bar\Psi_{\alpha\beta}(\bolds,{\boldsymbol t} ,\bar {\boldsymbol t};z)
    &=
    \bar w_{\alpha\beta}(\bolds,{\boldsymbol t} ,\bar {\boldsymbol t};z)\,
    z^{s_\beta} e^{\xi(\bar {\boldsymbol t}_\beta,z^{-1})},\
    \bar w_{\alpha\beta}(\bolds,{\boldsymbol t} ,\bar {\boldsymbol t};z)
    :=
    \sum_{j=0}^\infty
    \bar w_{j,\alpha\beta}
    (\bolds,{\boldsymbol t} ,\bar {\boldsymbol t})\, z^j.
 \end{aligned}
\label{wave-func:element}
\end{equation}
The linear systems for $\boldPsi$ and $\boldPsibar$ are as follows:
\begin{align}
    \boldL(\bolds) \boldPsi(\bolds;z)
    &= z\, \boldPsi(\bolds;z),
    &
    \boldU_\alpha(\bolds) \boldPsi(\bolds;z)
    &= \boldPsi(\bolds;z) E_\alpha,
\label{lin-eq:LPsi,UPsi}
\\
    \boldLbar(\bolds) \boldPsibar(\bolds;z)
    &= z^{-1}\, \boldPsibar(\bolds;z),
    &
    \boldUbar_\alpha(\bolds) \boldPsibar(\bolds;z)
    &= \boldPsibar(\bolds) E_\alpha,
\label{lin-eq:LbarPsibar,UbarPsibar}
\\
    \frac{\der \boldPsi(\bolds;z)}{\der t_{\alpha,n}}
    &= \boldB_{\alpha,n}(\bolds) \boldPsi(\bolds;z), &
    \frac{\der \boldPsi(\bolds;z)}{\der \bar t_{\alpha,n}}
    &= \boldBbar_{\alpha,n}(\bolds) \boldPsi(\bolds;z),
\label{lin-eq:dPsi}
\\
    \frac{\der \boldPsibar(\bolds;z)}{\der t_{\alpha,n}}
    &= \boldB_{\alpha,n}(\bolds) \boldPsibar(\bolds;z), &
    \frac{\der \boldPsibar(\bolds;z)}{\der \bar t_{\alpha,n}}
    &= \boldBbar_{\alpha,n}(\bolds) \boldPsibar(\bolds;z),
\label{lin-eq:dPsibar}
\end{align}
 and
\begin{equation}
    \boldQ_\alpha(\bolds) \boldPsi(\bolds;z)
    = \boldPsi(\bolds;z) \diag_\gamma(z^{-\delta_{\alpha\gamma}})
    \qquad
    \boldQbar_\alpha(\bolds) \boldPsibar(\bolds;z)
    = \boldPsibar(\bolds;z) \diag_\gamma(z^{-\delta_{\alpha\gamma}}).
\label{lin-eq:QPsi}
\end{equation}
The {\em adjoint wave functions} are defined by the adjoint actions of
wave matrices\footnote{The formal adjoint operator $A^*$ of
$A=e^{n\der_{s_\alpha}}\circ a(\bolds)$ is defined by $A^*=a(\bolds)^T
e^{-n\der_{s_\alpha}}$, where $(\cdot)^T$ is the transposed matrix.}: 
\begin{equation}
 \begin{aligned}
    \boldPsi^*(\bolds,{\boldsymbol t} ,\bar {\boldsymbol t};z)
    &=
    (\Psi^*_{\alpha\beta}(\bolds,{\boldsymbol t} ,\bar {\boldsymbol t};z)
    )_{\alpha,\beta}
    =
    \left(
     \bigl(\boldW^{-1}(\bolds-\boldone,{\boldsymbol t} ,\bar {\boldsymbol t})\bigr)^*
     \diag_{\alpha}(z^{-s_\alpha})
    \right)^T
\\
    &=
    \sum_{j=0}^\infty
     \diag_{\alpha}\bigl(
      z^{-s_\alpha-j} e^{-\xi({\boldsymbol t} _\alpha,z)}
     \bigr)\,
     w^*_j(\bolds,{\boldsymbol t} ,\bar {\boldsymbol t}),
\\
    \boldPsibar^*(\bolds,{\boldsymbol t} ,\bar {\boldsymbol t};z)
    &=
    (\bar\Psi^*_{\alpha\beta}(\bolds,{\boldsymbol t} ,\bar {\boldsymbol t};z)
    )_{\alpha,\beta}
    =
    \left(
     \bigl(\boldWbar^{-1}(\bolds-\boldone,{\boldsymbol t} ,
     \bar {\boldsymbol t})\bigr)^*
     \diag_{\alpha}(z^{-s_\alpha})
    \right)^T
\\
    &=
    \sum_{j=0}^\infty
    \diag_\alpha \bigl(
     z^{-s_\alpha+j} e^{-\xi(\bar {\boldsymbol t}_\alpha,z^{-1})}
    \bigr)\,
    \bar w^*_j(\bolds,{\boldsymbol t} ,\bar {\boldsymbol t}).
 \end{aligned}
\label{adj-wave-func}
\end{equation}
Here we expand the inverses of $\hat\boldW$ and $\hat\boldWbar$ as
\begin{equation}
 \begin{aligned}
    \hat\boldW^{-1}(\bolds,{\boldsymbol t} ,\bar {\boldsymbol t})
    &= \sum_{j=0}^\infty
    e^{-j\der_s}\, w^*_j(\bolds+\boldone,{\boldsymbol t} ,
    \bar {\boldsymbol t}), 
    \qquad w^*_0(\bolds,{\boldsymbol t} ,\bar {\boldsymbol t}) = 1_N,
\\
    \hat\boldWbar^{-1}(\bolds,{\boldsymbol t} ,\bar {\boldsymbol t})
    &= \sum_{j=0}^\infty
    e^{j\der_s}\, \bar w^*_j(\bolds+\boldone,{\boldsymbol t} ,
    \bar {\boldsymbol t}),
    \qquad \bar w^*_0(\bolds,{\boldsymbol t} ,
    \bar {\boldsymbol t}) \in GL(N,\Comp),
 \end{aligned}
\label{inv-wave-mat}
\end{equation}
where
\begin{equation}
 \begin{gathered}
    w^*_j(\bolds,{\boldsymbol t} ,\bar {\boldsymbol t})
    =
    (w^*_{j,\alpha\beta}(\bolds,{\boldsymbol t} ,\bar {\boldsymbol t})
    )_{\alpha,\beta=1,\dotsc,N}
    \in \Mat(N\times N,\Comp),
\\
    \bar w^*_j(\bolds,{\boldsymbol t} ,\bar {\boldsymbol t})
    =
    (\bar w^*_{j,\alpha\beta}(\bolds,{\boldsymbol t} ,\bar {\boldsymbol t})
    )_{\alpha,\beta=1,\dotsc,N}
    \in \Mat(N\times N,\Comp).
 \end{gathered}
\label{wave-mat-inverse:notations}
\end{equation}
The matrix elements of the adjoint wave functions are
\begin{equation}
 \begin{aligned}
    \Psi^*_{\alpha\beta}(\bolds,{\boldsymbol t} ,\bar {\boldsymbol t};z)
    &=
    w^*_{\alpha\beta}(\bolds,{\boldsymbol t} ,\bar {\boldsymbol t};z)\,
    z^{-s_\alpha} e^{-\xi({\boldsymbol t}_\alpha,z)},\ 
    w^*_{\alpha\beta}(\bolds,{\boldsymbol t} ,\bar {\boldsymbol t};z)
    :=
    \sum_{j=0}^\infty
    w^*_{j,\alpha\beta}(\bolds,{\boldsymbol t} ,
    \bar {\boldsymbol t})\, z^{-j},
\\
    \bar\Psi^*_{\alpha\beta}(\bolds,{\boldsymbol t} ,\bar {\boldsymbol t};z)
    &=
    \bar w^*_{\alpha\beta}(\bolds,{\boldsymbol t} ,\bar {\boldsymbol t};z)\,
    z^{-s_\alpha} e^{-\xi(\bar {\boldsymbol t}_\alpha,z^{-1})},\ 
    \bar w^*_{\alpha\beta}(\bolds,{\boldsymbol t} ,\bar {\boldsymbol t};z)
    :=
    \sum_{j=0}^\infty
    \bar w^*_{j,\alpha\beta}(\bolds,{\boldsymbol t} ,
    \bar {\boldsymbol t})\, z^{j}.
 \end{aligned}
\label{adj-wave-func:element}
\end{equation}

\section{The bilinear identity}
\label{sec:bilinear}

The wave operators $\boldW$ and $\boldWbar$ are characterized by a
bilinear identity satisfied by the wave functions and the adjoint wave
functions. 

\begin{predl}
\label{predl:bil-res}
 (i)
 The wave functions $\boldPsi(\bolds,{\boldsymbol t} ,
 \bar {\boldsymbol t};z)$,
 $\bar\boldPsi(\bolds,{\boldsymbol t} ,\bar {\boldsymbol t};z)$ and the 
 adjoint wave functions
 $\bold\Psi^*(\bolds,{\boldsymbol t} ,\bar {\boldsymbol t};z)$,
 $\bar\boldPsi^*(\bolds,{\boldsymbol t} ,
 \bar {\boldsymbol t};z)$ of the multi-component Toda
 lattice hierarchy satisfy the following bilinear identity:
\begin{equation}
    \oint_{C_\infty}
    \boldPsi(\bolds,{\boldsymbol t} ,\bar {\boldsymbol t};z)\,
    \boldPsi^*(\bolds',{\boldsymbol t} ',\bar {\boldsymbol t}';z)\,
    dz
    =
    \oint_{C_0}
    \bar\boldPsi(\bolds,{\boldsymbol t} ,\bar {\boldsymbol t};z)\,
    \bar\boldPsi^*(\bolds',{\boldsymbol t} ',\bar {\boldsymbol t}';z)\,
    dz,
\label{bil-res-id:mat}
\end{equation}
 where $C_\infty$ is a circle around $\infty$ and $C_0$ is a small 
 circle around $0$. This identity holds for all $\bolds$, $\bolds'$,
 ${\boldsymbol t}$, $\boldt'$, $\boldtbar$ and $\bar {\boldsymbol t}'$.

(ii)
 Conversely, assume that matrix-valued functions of the form
\begin{equation}
 \begin{aligned}
    \boldPsi(\bolds,{\boldsymbol t} ,\bar {\boldsymbol t};z)
    &=
    \sum_{j=0}^\infty w_j(\bolds,{\boldsymbol t} ,\bar {\boldsymbol t})\,
     \diag_{\alpha}\bigl(
      z^{s_\alpha-j} e^{\xi({\boldsymbol t} _\alpha,z)}
     \bigr),
\\
    \boldPsibar(\bolds,{\boldsymbol t} ,\bar {\boldsymbol t};z)
    &=
    \sum_{j=0}^\infty \bar w_j(\bolds,{\boldsymbol t} ,
    \bar {\boldsymbol t})\,
    \diag_\alpha \bigl(
     z^{s_\alpha+j} e^{\xi(\bar {\boldsymbol t}_\alpha,z^{-1})}
    \bigr),
\\
    \boldPsi^*(\bolds,{\boldsymbol t} ,\bar {\boldsymbol t};z)
    &=
    \sum_{j=0}^\infty
     \diag_{\alpha}\bigl(
      z^{-s_\alpha-j} e^{-\xi({\boldsymbol t} _\alpha,z)}
     \bigr)\,
     w^*_j(\bolds,{\boldsymbol t} ,\bar {\boldsymbol t}),
\\
    \boldPsibar^*(\bolds,{\boldsymbol t} ,\bar {\boldsymbol t};z)
    &=
    \sum_{j=0}^\infty
    \diag_\alpha \bigl(
     z^{-s_\alpha+j} e^{-\xi(\bar {\boldsymbol t}_\alpha,z^{-1})}
    \bigr)\,
    \bar w^*_j(\bolds,{\boldsymbol t} ,\bar {\boldsymbol t})
 \end{aligned}
\label{Psi,Psibar,Psi*,Psi*bar}
\end{equation}
 satisfy the bilinear identity \eqref{bil-res-id:mat}. (In
 \eqref{Psi,Psibar,Psi*,Psi*bar} $w_j$, $\bar w_j$, $w_j^*$, $\bar
 w_j^*$ are $N\times N$ matrices and $w_0=w_0^*=1_N$, $\bar w_0$, $\bar
 w_0^* \in GL(N, \Comp)$.)  Then they are wave functions and adjoint
 wave functions of the multi-component Toda lattice hierarchy. Namely,
 the functions $w_j$, $\bar w_j$, $w_j^*$ and $\bar w_j^*$ in
 \eqref{Psi,Psibar,Psi*,Psi*bar} are the coefficients of the wave
 matrices and their inverse matrices in the expansions \eqref{wave-mat}
 and \eqref{inv-wave-mat} and the sextet $(\boldL, \boldLbar,
 \boldU_\alpha, \boldUbar_\alpha,
 \boldQ_\alpha,\boldQbar_\alpha)_{\alpha=1,\dotsc,N}$ defined by
 \eqref{W->L,U,P} is a solution of the $N$-component Toda lattice
 hierarchy.

\end{predl}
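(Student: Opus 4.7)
The strategy is to reduce to the ``diagonal'' case $\bolds=\bolds',\boldt=\boldt',\boldtbar=\boldtbar'$ via a Cauchy-uniqueness argument in the primed variables. First I would derive adjoint (right-multiplication) linear equations for $\boldPsi^*$ and $\boldPsibar^*$ by differentiating $\hat\boldW\hat\boldW^{-1}=1_N$ against (\ref{lin-eq:dWhat}), obtaining $\der_{t_{\alpha,n}}\boldPsi^* = -\boldPsi^*\boldB_{\alpha,n}$ and $\der_{\bar t_{\alpha,n}}\boldPsi^* = -\boldPsi^*\boldBbar_{\alpha,n}$, together with analogous equations for $\boldPsibar^*$ and the shift equations coming from (\ref{lin-eq:QWhat=What}). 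On the diagonal, the exponential and power-of-$z$ factors cancel inside both contour integrals; the $(\alpha,\gamma)$-entry of the $C_\infty$-side reduces to $\sum_\beta \oint_{C_\infty} w_{\alpha\beta}(\bolds,z)\, w^*_{\beta\gamma}(\bolds,z)\,dz$, which by a standard residue-extraction lemma equals the coefficient of $e^{0\cdot\der_s}$ in $\hat\boldW\hat\boldW^{-1}=1_N$, namely $\delta_{\alpha\gamma}\cdot 1_N$; the $C_0$ side yields the same from $\hat\boldWbar\hat\boldWbar^{-1}=1_N$.

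\textbf{Off-diagonal propagation.} For generic primed parameters, I would regard both sides of (\ref{bil-res-id:mat}) as formal matrix-valued functions of $(\bolds',\boldt',\boldtbar')$ and show they satisfy identical differential-difference equations. Differentiating by $t'_{\beta,n}$ inserts $-\boldB_{\beta,n}(\bolds')$ to the right of the integrand via the adjoint equation just derived; crucially, the \emph{same} operator $\boldB_{\beta,n}$ governs $\boldt$-evolution of both $\hat\boldW$ and $\hat\boldWbar$ by (\ref{lin-eq:dWhat}), (\ref{lin-eq:dWbarhat}), so the infinity- and zero-contour integrands transform identically. The same argument works for $\bar t'_{\beta,n}$-derivatives (using $\boldBbar_{\beta,n}$) and for the discrete shifts $\bolds'\mapsto\bolds'+[1]_\alpha$ (using the $\boldQ$- and $\boldQbar$-equations). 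Uniqueness of the Cauchy problem in $(\boldt',\boldtbar')$ with matching initial value at the diagonal then yields (\ref{bil-res-id:mat}) for all parameters.

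\textbf{Plan for Part (ii).} Conversely, given formal wave functions of the form (\ref{Psi,Psibar,Psi*,Psi*bar}) obeying (\ref{bil-res-id:mat}), I would define $\hat\boldW,\hat\boldWbar$ and their formal inverses via (\ref{wave-mat}) and (\ref{inv-wave-mat}), then extract each of the linear equations (\ref{lin-eq:LWhat,UWhat})--(\ref{lin-eq:PWhat=What}) from (\ref{bil-res-id:mat}) by appropriate specializations. The key tool is again a residue-extraction lemma: for any matrix difference operator $A(\bolds;e^{\der_s})$, the contour integral of $(A\boldPsi_0)\,\boldPsi_0^*$ around $\infty$ encodes the non-negative-shift part of $A$, and the analogous statement at $0$ (with $\bar\boldPsi_0$) encodes the negative-shift part. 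Substituting $\boldPsi(\bolds';z)=D\,\boldPsi(\bolds;z)$ where $D$ is a difference operator (realizing $\bolds'$-shifts or $(\boldt',\boldtbar')$-derivatives), the bilinear identity becomes an identity between difference operators; specializing then produces each defining relation for $\boldL,\boldU_\alpha,\boldQ_\alpha$ and their barred counterparts, at which point Proposition~\ref{predl:wave-mat}(ii) assembles them into the full Lax structure.

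\textbf{Main obstacle.} The most delicate step will be verifying the cross-relation (\ref{ULQ=UbarLbarQbar:basis})—the only algebraic condition genuinely tying the unbarred and barred halves of the hierarchy together. I expect it to arise from inserting a shift $\bolds'\mapsto\bolds'+[1]_\alpha$ combined with a factor $z^{\delta_{\alpha\gamma}}$ inside the integrands: on $C_\infty$ this produces $\boldL^{\delta_{\alpha\gamma}}\boldU_\gamma$ (via (\ref{lin-eq:LPsi,UPsi}), (\ref{lin-eq:QPsi})), whereas on $C_0$ it produces $\boldLbar^{-\delta_{\alpha\gamma}}\boldUbar_\gamma$, and equating the two contours gives precisely (\ref{ULQ=UbarLbarQbar:basis}). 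The remaining algebraic constraints (\ref{commutative:L,U,Q})--(\ref{Qbar:alg-cond}) should be more routine, following from idempotency and commutativity built into the definitions (\ref{W->L,U,P}).
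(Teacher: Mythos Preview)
Your Part~(i) plan has a genuine gap in the step ``propagate in $\bolds'$.'' The adjoint wave function $\boldPsi^*$ shifts in $\bolds'$ via (the adjoint of) $\boldP_\alpha$, while $\boldPsibar^*$ shifts via $\boldPbar_\alpha$; these are \emph{different} difference operators, so the two sides of (\ref{bil-res-id:mat}) do not obey the same recursion in $\bolds'$, and the Cauchy-uniqueness argument cannot be run there. Put differently: your propagation in $(\boldt',\boldtbar')$ (or, equivalently, the paper's propagation in the unprimed $(\boldt,\boldtbar)$) correctly reduces the problem to $\boldt=\boldt'$, $\boldtbar=\boldtbar'$, but since the evolution operators $\boldB_{\beta,n}(\bolds')$, $\boldBbar_{\beta,n}(\bolds')$ are difference operators coupling different values of $\bolds'$, the Cauchy problem needs the initial value for \emph{all} $\bolds'$, not just $\bolds'=\bolds$. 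What remains at $\boldt=\boldt'$, $\boldtbar=\boldtbar'$ with general $\bolds,\bolds'$ is not the trivial identity $\hat\boldW\hat\boldW^{-1}=1_N$ but is precisely the algebraic cross-condition (\ref{ULQ=UbarLbarQbar:general}): the paper expands both contour integrals and identifies them term-by-term with the coefficients of $e^{-n\der_s}e^{-\sum_\delta a_\delta\der_{s_\delta}}$ on the two sides of (\ref{ULQ=UbarLbarQbar:general}), choosing $\bolda$ so that $\bolds'=\bolds-(n-1)\boldone-\bolda$. So (\ref{ULQ=UbarLbarQbar:basis}) is not merely an obstacle in Part~(ii); it is exactly what makes Part~(i) true, and your plan never invokes it in the forward direction.

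A secondary slip: on the full diagonal $\bolds=\bolds'$, $\boldt=\boldt'$, $\boldtbar=\boldtbar'$ both sides are $0$, not $1_N$. The $C_\infty$-integral extracts the $z^{-1}$-coefficient of $w(\bolds;z)w^*(\bolds;z)$, which is $w_1+w^*_1=0$ (this is the $e^{-\der_s}$-coefficient of $\hat\boldW\hat\boldW^{-1}=1_N$, not the $e^0$-coefficient); the $C_0$-side vanishes because $\bar w\bar w^*$ has only non-negative powers. This degeneracy is another sign that the real content sits at $\bolds\neq\bolds'$. Your Part~(ii) outline is essentially the paper's argument and is fine.
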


In terms of matrix elements the bilinear identity
\eqref{bil-res-id:mat} acquires the form
\commentout{
\begin{multline}
    \sum_{\gamma=1}^N
    \oint_{C_\infty}
    \Psi_{\alpha\gamma}(\bolds,{\boldsymbol t} ,\bar {\boldsymbol t};z)\,
    \Psi^*_{\gamma\beta}(\bolds',{\boldsymbol t} ',
    \bar {\boldsymbol t}';z)\,
    dz
\\
    =
    \sum_{\gamma=1}^N
    \oint_{C_0}
    \bar\Psi_{\alpha\gamma}(\bolds,{\boldsymbol t} ,
    \bar {\boldsymbol t};z)\,
    \bar\Psi^*_{\gamma\beta}(\bolds',{\boldsymbol t} ',
    \bar {\boldsymbol t}';z)\,
    dz,
\label{bil-res-id:element}
\end{multline}
or,}
\begin{multline}
    \sum_{\gamma=1}^N
    \oint_{C_\infty}
    z^{s_\gamma-s'_\gamma}
    e^{\xi({\boldsymbol t} _\gamma-{\boldsymbol t} '_\gamma,z)}\,
    w_{\alpha\gamma}(\bolds,{\boldsymbol t} ,\bar {\boldsymbol t};z)\,
    w^*_{\gamma\beta}(\bolds',{\boldsymbol t} ',\bar {\boldsymbol t}';z)\,
    dz
\\
    =
    \sum_{\gamma=1}^N
    \oint_{C_0} 
    z^{s_\gamma-s'_\gamma}
    e^{\xi(\bar {\boldsymbol t}_\gamma-
    \bar {\boldsymbol t}'_\gamma,z^{-1})}
    \bar w_{\alpha\gamma}(\bolds,{\boldsymbol t} ,
    \bar {\boldsymbol t};z)\,
    \bar w^*_{\gamma\beta}(\bolds',{\boldsymbol t} ',
    \bar {\boldsymbol t}';z)\,
    dz.
\label{bil-res-id:w}
\end{multline}

\begin{proof}
 (i)
 Assume that $\boldPsi$ is a matrix wave function of the multi-component
 Toda lattice hierarchy and $\boldPsi^*$ is a corresponding adjoint wave
 function.

 First, let us regard the left-hand side and the right-hand side of
 \eqref{bil-res-id:mat} as functions of ${\boldsymbol t} $ and $\bar
 {\boldsymbol t}$ with parameters $\bolds$, $\bolds'$, ${\boldsymbol t}
 '$ and $\bar {\boldsymbol t}'$. Then, both of them (${\boldsymbol T} =$
 the right-hand side or the left-hand side) satisfy the same linear
 system of differential equations,
\[
    \frac{\der}{\der t_{\alpha,n}}
    {\boldsymbol T} ({\boldsymbol t} ,\bar {\boldsymbol t})
    = \boldB_{\alpha,n}\, 
      {\boldsymbol T}({\boldsymbol t} ,\bar {\boldsymbol t}),
    \qquad
    \frac{\der}{\der \bar t_{\alpha,n}}
    {\boldsymbol T} ({\boldsymbol t} ,\bar {\boldsymbol t})
    = \boldBbar_{\alpha,n}\, 
      {\boldsymbol T}({\boldsymbol t} ,\bar {\boldsymbol t}),
\]
 because of \eqref{lin-eq:dPsi} (for the left-hand side) and
 \eqref{lin-eq:dPsibar} (for the right-hand side). Since this system is
 compatible by the Zakharov-Shabat equations \eqref{zs-BB},
 \eqref{zs-BbarBbar} and \eqref{zs-BBbar}, its solution
 ${\boldsymbol T} ({\boldsymbol t} ,\bar {\boldsymbol t})$ is determined 
 by the initial value,
 ${\boldsymbol T} ({\boldsymbol t} ', \bar {\boldsymbol t}')$.
 Hence, in order to prove
 \eqref{bil-res-id:mat}, i.e.\ \eqref{bil-res-id:w}, it is sufficient to
 prove the identity with ${\boldsymbol t} ={\boldsymbol t} '$ 
 and $\bar {\boldsymbol t}=\bar {\boldsymbol t}'$,
 namely,
\begin{multline}
    \sum_{\gamma=1}^N
    \oint_{C_\infty}
    z^{s_\gamma-s'_\gamma}\,
    w_{\alpha\gamma}(\bolds,{\boldsymbol t} ,\bar {\boldsymbol t};z)\,
    w^*_{\gamma\beta}(\bolds',{\boldsymbol t} ,\bar {\boldsymbol t};z)\,
    dz
\\
    =
    \sum_{\gamma=1}^N
    \oint_{C_0}
    z^{s_\gamma-s'_\gamma}\,
    \bar w_{\alpha\gamma}(\bolds,{\boldsymbol t} ,
    \bar {\boldsymbol t};z)\,
    \bar w^*_{\gamma\beta}(\bolds',{\boldsymbol t} ,
    \bar {\boldsymbol t};z)\,
    dz.
\label{bil-res-id:t=t'}
\end{multline}
 Substituting the Taylor expansions \eqref{wave-func:element} and
 \eqref{adj-wave-func:element} of
 $w_{\alpha\gamma}(\bolds,{\boldsymbol t} ,\bar {\boldsymbol t};z)$,
 $\bar w_{\gamma\beta}(\bolds,{\boldsymbol t} ,\bar {\boldsymbol t};z)$,
 $w^*_{\alpha\gamma}(\bolds,{\boldsymbol t} ,
 \bar {\boldsymbol t};z)$ and
 $\bar w^*_{\gamma\beta}(\bolds,{\boldsymbol t} ,\bar {\boldsymbol t};z)$,
 we have
\begin{multline}
    \sum_{\gamma=1}^N
    \sum_{\substack{j,k\geq0\\ j+k=s_\gamma-s_\gamma'+1}}
    w_{j,\alpha\gamma}(\bolds,{\boldsymbol t} ,\bar {\boldsymbol t})\,
    w^*_{k,\gamma\beta}(\bolds',{\boldsymbol t} ,\bar {\boldsymbol t})
\\
    =
    \sum_{\gamma=1}^N
    \sum_{\substack{j,k\geq0\\ j+k=-s_\gamma+s_\gamma'-1}}
    \bar w_{j,\alpha\gamma}(\bolds,{\boldsymbol t} ,
    \bar {\boldsymbol t})\,
    \bar w^*_{k,\gamma\beta}(\bolds',{\boldsymbol t} ,
    \bar {\boldsymbol t}).
\label{bil-res-id:t=t':w}
\end{multline}

 On the other hand, the $(\alpha,\beta)$-element of the left hand side
 of the condition \eqref{ULQ=UbarLbarQbar:general} is
 \begin{multline*}
    \left(
    \prod_{\mu=1}^N
    \boldQ_\mu(\bolds)^{a_\mu}
    \sum_{\nu=1}^N
    \boldU_\nu(\bolds) \boldL^{a_\nu}(\bolds)
    \right)_{\alpha\beta}
    =
    \left(
    \hat\boldW(\bolds) \,
    e^{-\sum_\delta a_\delta \der_{s_\delta}}\,
    \diag_\gamma (e^{a_\gamma\der_s})\,
    \hat\boldW^{-1}(\bolds) 
    \right)_{\alpha\beta}
\\
    =
    \sum_{\gamma=1}^N \sum_{j,k=0}^\infty
    w_{j,\alpha\gamma}(\bolds,{\boldsymbol t} ,\bar {\boldsymbol t})\,
    e^{-j\der_s}\,
    e^{-\sum_\delta a_\delta\der_{s_\delta}} e^{a_\gamma\der_s}\,
    e^{-k\der_s}\, 
    w^*_{k,\gamma\beta}(\bolds+\boldone,{\boldsymbol t} ,
    \bar {\boldsymbol t})
\\
    =
    \sum_{\gamma=1}^N \sum_{n=0}^\infty
    \left(\sum_{\substack{j,k\geq0\\j+k=n}}
     w_{j,\alpha\gamma}(\bolds,{\boldsymbol t} ,\bar {\boldsymbol t})\,
     w^*_{k,\gamma\beta}
      (\bolds-(n-a_\gamma-1)\boldone-\bolda,{\boldsymbol t} ,
      \bar {\boldsymbol t})
    \right)
    e^{-(n-a_\gamma)\der_s} 
    e^{-\sum_\delta a_\delta\der_{s_\delta}}
\\
    =
    \sum_{\gamma=1}^N \sum_{n=-a_\gamma}^\infty
    \left(\sum_{\substack{j,k\geq0\\j+k=n+a_\gamma}}
     w_{j,\alpha\gamma}(\bolds,{\boldsymbol t} ,\bar {\boldsymbol t})\,
     w^*_{k,\gamma\beta}
      (\bolds-(n-1)\boldone-\bolda,{\boldsymbol t} ,\bar {\boldsymbol t})
    \right)
    e^{-n\der_s} 
    e^{-\sum_\delta a_\delta\der_{s_\delta}},
\end{multline*}
where $\bolda=\{a_1,\dotsc,a_N\}\in\Integer^N$.
As there are no non-negative $j$ and $k$ such that $j+k=n+a_\gamma$, if
 $n<-a_\gamma$, we may replace the sum over $n$ in the last line by
 $\displaystyle{\sum_{n\in\Integer}}$. In this way we obtain
\begin{multline}
    \left(
    \prod_{\mu=1}^N
    \boldQ_\mu(\bolds)^{a_\mu}
    \sum_{\nu=1}^N
    \boldU_\nu(\bolds) \boldL^{a_\nu}(\bolds)
    \right)_{\alpha\beta}
\\
    =
    \sum_{n\in\Integer} \sum_{\gamma=1}^N 
    \left(\sum_{\substack{j,k\geq0\\j+k=n+a_\gamma}}
     w_{j,\alpha\gamma}(\bolds,{\boldsymbol t} ,\bar {\boldsymbol t})\,
     w^*_{k,\gamma\beta}
      (\bolds-(n-1)\boldone-\bolda,{\boldsymbol t} ,\bar {\boldsymbol t})
    \right)
    e^{-n\der_s} 
    e^{-\sum_\delta a_\delta\der_{s_\delta}}.
\label{ULQ:expansion}
\end{multline}
 Similarly, the $(\alpha,\beta)$-element of the right-hand side
 of \eqref{ULQ=UbarLbarQbar:general} is
\begin{multline}
    \left(
    \prod_{\mu=1}^N
    \boldQbar_\mu(\bolds)^{a_\mu}
    \sum_{\nu=1}^N
    \boldUbar_\nu(\bolds) \boldLbar^{-a_\nu}(\bolds)
    \right)_{\alpha\beta}
\\
    =
    \sum_{m\in\Integer} \sum_{\gamma=1}^N 
    \left(\sum_{\substack{j,k\geq0\\j+k=m-a_\gamma}}
     \bar w_{j,\alpha\gamma}(\bolds,{\boldsymbol t} ,
     \bar {\boldsymbol t})\,
     \bar w^*_{k,\gamma\beta}
      (\bolds+(m+1)\boldone-\bolda,{\boldsymbol t} ,\bar {\boldsymbol t})
    \right)
    e^{m\der_s} 
    e^{-\sum_\delta a_\delta\der_{s_\delta}}.
\label{UbarLbarQbar:expansion}
\end{multline}
 Comparing the coefficients in \eqref{ULQ:expansion} and
 \eqref{UbarLbarQbar:expansion}, we have
\begin{multline}
    \sum_{\gamma=1}^N 
    \sum_{\substack{j,k\geq0\\j+k=n+a_\gamma}}
     w_{j,\alpha\gamma}(\bolds,{\boldsymbol t} ,\bar {\boldsymbol t})\,
     w^*_{k,\gamma\beta}
      (\bolds-(n-1)\boldone-\bolda,{\boldsymbol t} ,\bar {\boldsymbol t})
\\
    =
    \sum_{\gamma=1}^N 
    \sum_{\substack{j,k\geq0\\j+k=-n-a_\gamma}}
     \bar w_{j,\alpha\gamma}(\bolds,{\boldsymbol t} ,\bar {\boldsymbol t})\,
     \bar w^*_{k,\gamma\beta}
      (\bolds-(n-1)\boldone-\bolda,{\boldsymbol t} ,\bar {\boldsymbol t}).
\label{ULQ=UbarLbarQbar:element}
\end{multline}
 If we take $n$ and $\bolda$ such that
 $\bolds'=\bolds-(n-1)\boldone-\bolda$, then
 $s_\gamma-s'_\gamma+1=n+a_\gamma$. Hence the equation
 \eqref{ULQ=UbarLbarQbar:element} gives the bilinear identity
 \eqref{bil-res-id:t=t':w}, which is equivalent to
 \eqref{bil-res-id:t=t'}. 
 Thus we have proved the bilinear identity \eqref{bil-res-id:w},
 i.e. \eqref{bil-res-id:mat}.

\medskip
 (ii)
 Assume that the bilinear identity \eqref{bil-res-id:mat}, which is
 equivalent to \eqref{bil-res-id:w}, holds. Setting ${\boldsymbol t} '={\boldsymbol t} $,
 $\bar {\boldsymbol t}'=\bar {\boldsymbol t}$ and 
 $\bolds'=\bolds+(1-n)\boldone$
 ($n\in\Integer_{>0}$) in \eqref{bil-res-id:w}, we have
\begin{multline*}
    \sum_{\gamma=1}^N
    \oint_{C_\infty}
    z^{n-1}\,
    w_{\alpha\gamma}(\bolds,{\boldsymbol t} ,\bar {\boldsymbol t};z)\,
    w^*_{\gamma\beta}(\bolds+(1-n)\boldone,{\boldsymbol t} ,
    \bar {\boldsymbol t};z)\,
    dz
\\
    =
    \sum_{\gamma=1}^N
    \oint_{C_\infty} 
    z^{n-1}\,
    \bar w_{\alpha\gamma}(\bolds,{\boldsymbol t} ,
    \bar {\boldsymbol t};z)\,
    \bar w^*_{\gamma\beta}(\bolds+(1-n)\boldone,{\boldsymbol t} ',
    \bar {\boldsymbol t}';z)\,
    dz.
\end{multline*}
 The integrand in the right-hand side is a series of $z$ with
 non-negative powers. Hence we obtain
\begin{equation}
    \sum_{\gamma=1}^N
    \sum_{\substack{j,k\geq0\\ j+k=-n}}
    w_{j,\alpha\gamma}(\bolds,{\boldsymbol t} ,\bar {\boldsymbol t})\,
    w^*_{k,\gamma\beta}(\bolds+(1-n)\boldone,{\boldsymbol t} ,
    \bar {\boldsymbol t})
    =0.
\label{bil-res-id:t'=t:s'=s+(1-n)1}
\end{equation}
 On the other hand, the $(\alpha,\beta)$-element of the product of
 matrix difference operators
$\displaystyle{
    \hat\boldW(\bolds,{\boldsymbol t} ,\bar {\boldsymbol t})
    =
    \sum_{j=0}^\infty w_j(\bolds,{\boldsymbol t} ,
    \bar {\boldsymbol t})\, e^{-j\der_s}}
$
 and
$\displaystyle{
    \hat\boldW^*(\bolds,{\boldsymbol t} ,\bar {\boldsymbol t})
    =
    \sum_{k=0}^\infty
    e^{-k\der_s}\, w^*_k(\bolds+\boldone,{\boldsymbol t} ,
    \bar{\boldsymbol t})}
$
 is
\begin{equation}
    \sum_{n=0}^\infty
    \left(
    \sum_{\gamma=1}^N
    \sum_{\substack{j,k\geq0\\ j+k=-n}}
    w_{j,\alpha\gamma}(\bolds,{\boldsymbol t} ,\bar {\boldsymbol t})\,
    w^*_{k,\gamma\beta}(\bolds+(1-n)\boldone,{\boldsymbol t} ,
    \bar {\boldsymbol t})
    \right) e^{-n\der_s}.
\label{WW*:expansion}
\end{equation}
 Each coefficient of $e^{-n\der_s}$ for non-zero $n$ vanishes because of
 \eqref{bil-res-id:t'=t:s'=s+(1-n)1}. Therefore the matrix difference
 operator $\hat\boldW(\bolds,{\boldsymbol t} ,\bar {\boldsymbol t})\,
 \hat\boldW^*(\bolds,{\boldsymbol t} ,\bar {\boldsymbol t})$ acts just by multiplication
 by a matrix. Actually, since $w_0=w_0^*=1_N$ by assumption, the product
 should be $1_N$. Namely, the inverse of the operator
 $\hat\boldW(\bolds,{\boldsymbol t} ,
 \bar {\boldsymbol t})$ defined as \eqref{wave-mat} from
 the coefficients of $\boldPsi$ has the expression as in
 \eqref{inv-wave-mat}.


 If we set ${\boldsymbol t} '={\boldsymbol t} $, 
 $\bar {\boldsymbol t}'=\bar {\boldsymbol t}$ and
 $\bolds'=\bolds+(1+n)\boldone$ ($n\in\Integer_{>0}$) in
 \eqref{bil-res-id:w}, then we obtain
\begin{equation}
    0 =
    \sum_{\gamma=1}^N
    \sum_{\substack{j,k\geq0\\ j+k=n}}
    \bar w_{j,\alpha\gamma}(\bolds,{\boldsymbol t} ,\bar {\boldsymbol t})\,
    \bar w^*_{k,\gamma\beta}(\bolds+(1+n)\boldone,{\boldsymbol t} ,
    \bar {\boldsymbol t}).
\label{bil-res-id:t'=t:s'=s+(1+n)1}
\end{equation}
 by a similar argument as above. 

 The $(\alpha,\beta)$-element of the product of
 matrix difference operators
$$
    \hat\boldWbar(\bolds,{\boldsymbol t} ,\bar {\boldsymbol t})
    =
    \sum_{j=0}^\infty
    \bar w_j(\bolds,{\boldsymbol t} ,\bar {\boldsymbol t})\, e^{j\der_s}
$$
 and
$$
    \hat\boldW^*(\bolds,{\boldsymbol t} ,\bar {\boldsymbol t})
    =
    \sum_{k=0}^\infty
    e^{k\der_s}\, \bar w^*_k(\bolds+\boldone,{\boldsymbol t} ,
    \bar {\boldsymbol t})
$$
 is
\begin{equation*}
    \sum_{n=0}^\infty
    \left(
    \sum_{\gamma=1}^N
    \sum_{\substack{j,k\geq0\\ j+k=n}}
    \bar w_{j,\alpha\gamma}(\bolds,{\boldsymbol t} ,\bar {\boldsymbol t})\,
    \bar w^*_{k,\gamma\beta}(\bolds+(1+n)\boldone,{\boldsymbol t} ,
    \bar {\boldsymbol t})
    \right) e^{n\der_s}.
\end{equation*}
 This and \eqref{bil-res-id:t'=t:s'=s+(1+n)1} imply that the
 matrix difference operator $\hat\boldWbar(\bolds,{\boldsymbol t} ,
 \bar {\boldsymbol t})\,
 \hat\boldWbar^*(\bolds,{\boldsymbol t} ,
 \bar {\boldsymbol t})$ is a multiplication operator
 by a matrix, whose $(\alpha,\beta)$-element is equal to 
\[
    \sum_{\gamma=1}^N
    \bar w_{j,\alpha\gamma}(\bolds,{\boldsymbol t} ,
    \bar {\boldsymbol t})\,
    \bar w^*_{k,\gamma\beta}(\bolds+\boldone,{\boldsymbol t} ,
    \bar {\boldsymbol t})
\]
 by the above computation. The bilinear identity
 \eqref{bil-res-id:w} with ${\boldsymbol t} '={\boldsymbol t} $,
 $\bar {\boldsymbol t}'=\bar {\boldsymbol t}$ and 
 $\bolds=\bolds+\boldone$ yields:
\[
    \sum_{\gamma=1}^N
    w_{0,\alpha\gamma}(\bolds,{\boldsymbol t} ,\bar {\boldsymbol t})\,
    w^*_{0,\gamma\beta}(\bolds+\boldone,{\boldsymbol t} ,
    \bar {\boldsymbol t})
    =
    \sum_{\gamma=1}^N
    \bar w_{0,\alpha\gamma}(\bolds,{\boldsymbol t} ,
    \bar {\boldsymbol t})\,
    \bar w^*_{0,\gamma\beta}(\bolds+\boldone,{\boldsymbol t} ,
    \bar {\boldsymbol t}).
\]
 Its right-hand side is the $(\alpha,\beta)$-element of
 $\hat\boldWbar(\bolds,{\boldsymbol t} ,\bar {\boldsymbol t})\,
 \hat\boldWbar^*(\bolds,{\boldsymbol t} ,\bar {\boldsymbol t})$, as we have just shown and
 its left-hand side is the $(\alpha,\beta)$-element of the coefficient
 of $e^{0\, \der_s}$ in \eqref{WW*:expansion}. As \eqref{WW*:expansion}
 is nothing but $\hat\boldW(\bolds,{\boldsymbol t} ,
 \bar {\boldsymbol t})\,
 \hat\boldW^{-1}(\bolds,{\boldsymbol t} ,
 \bar {\boldsymbol t}) = 1_N$, we obtain
 $\hat\boldWbar(\bolds,{\boldsymbol t} ,\bar {\boldsymbol t})\,
 \hat\boldWbar^*(\bolds,{\boldsymbol t} ,
 \bar {\boldsymbol t}) = 1_N$. Namely, the inverse of
 the operator $\hat\boldWbar(\bolds,{\boldsymbol t} ,
 \bar {\boldsymbol t})$ defined as
 \eqref{wave-mat} from the coefficients of $\boldPsibar$ has the
 expression as in \eqref{inv-wave-mat}.

\medskip
 A similar consideration with $\bolds'=\bolds+(1-n)\boldone$
 ($n\in\Integer$) in the bilinear identity \eqref{bil-res-id:w}
 leads to the bilinear identity for matrix difference operators:
\begin{equation}
    \boldW(\bolds,{\boldsymbol t} ,\bar {\boldsymbol t})\,
    \boldW^{-1}(\bolds,{\boldsymbol t} ',\bar {\boldsymbol t}')
    =
    \boldWbar(\bolds,{\boldsymbol t} ,\bar {\boldsymbol t})\,
    \boldWbar^{-1}(\bolds,{\boldsymbol t} ',\bar {\boldsymbol t}').
\label{bil-id:mat}
\end{equation}
 Differentiating this equation by $t_{\alpha,n}$ and then setting
 ${\boldsymbol t} '={\boldsymbol t} $, 
 $\bar {\boldsymbol t}'=\bar {\boldsymbol t}$, we have
\[
    \frac{\der\hat\boldW}{\der t_{\alpha,n}}(\bolds,{\boldsymbol t} ,
    \bar {\boldsymbol t})
    \hat\boldW^{-1}(\bolds,{\boldsymbol t} ,
    \bar {\boldsymbol t})
    +
    \hat\boldW(\bolds,{\boldsymbol t} ,\bar {\boldsymbol t})\, 
    e^{\der_s}\, E_\alpha
    \hat\boldW^{-1}(\bolds,{\boldsymbol t} ,\bar {\boldsymbol t})
    =
    \frac{\der\hat\boldWbar}{\der t_{\alpha,n}}
         (\bolds,{\boldsymbol t} ,\bar {\boldsymbol t})
    \hat\boldWbar^{-1}(\bolds,{\boldsymbol t} ,\bar {\boldsymbol t}).
\]
 The left-hand side is a sum of $e^{k\der_s}$ ($k\leq n$) with matrix
 coefficients and the right-hand side is a sum of $e^{k\der_s}$ ($k\geq
 0$) with matrix coefficients. Thus we obtain the first equations in
 \eqref{lin-eq:dWhat} and \eqref{lin-eq:dWbarhat} by the standard
 argument. 

 The second equations in \eqref{lin-eq:dWhat} and
 \eqref{lin-eq:dWbarhat} are proved similarly by differentiating
 \eqref{bil-id:mat} by $\bar t_{\alpha,n}$. 

 The Lax equations \eqref{lax:L,Lbar,U,Ubar,Q,Qbar} are direct
 consequences of the definitions \eqref{W->L,U,P} of $\boldL$,
 $\boldLbar$, $\boldU_\alpha$, $\boldUbar_\alpha$, $\boldQ_\alpha$,
 and $\boldQbar_\alpha$ ($\alpha=1,\dotsc,N$) and the linear equations
 \eqref{lin-eq:dWhat} and \eqref{lin-eq:dWbarhat} for $\hat\boldW$ and
 $\hat\boldWbar$.

 The algebraic conditions \eqref{commutative:L,U,Q}, \eqref{U:alg-cond},
 \eqref{Q:alg-cond}, \eqref{commutative:Lbar,Ubar,Qbar},
 \eqref{Ubar:alg-cond} and \eqref{Qbar:alg-cond} are trivially satisfied
 because of \eqref{W->L,U,P}. 

 The remaining algebraic constraint \eqref{ULQ=UbarLbarQbar:general} has
 already been shown to be equivalent to the bilinear identity
 with $\bolds'=\bolds- (n-1)\boldone-\bolda$ ($n\in\Integer$),
 ${\boldsymbol t} '={\boldsymbol t} $ and 
 $\bar {\boldsymbol t}'=\bar {\boldsymbol t}$ in the proof of the
 statement (i) of the proposition.

 Hence the sextet $(\boldL,\boldLbar,\boldU_\alpha,\boldUbar_\alpha, 
 \boldQ_\alpha,\boldQbar_\alpha)_{\alpha=1,\dotsc,N}$ is a solution of
 the multi-component Toda lattice hierarchy.
\end{proof}

\section{The tau-function}
\label{section:tau-function}

\subsection{Existence of the tau-function}
\label{subsection:existence}

The aim of this subsection is to show that the bilinear identity 
for the wave functions for the mul\-ti\-com\-po\-nent
Toda lattice hierarchy
implies existence of the tau-function.
Let $\boldPsi (\bolds , {\boldsymbol t} , \bar {\boldsymbol t} ;z)$,
$\boldPsibar (\bolds , {\boldsymbol t} , \bar {\boldsymbol t} ;z)$ be the 
$N\times N$ matrix wave functions for the $N$-component
Toda hierarchy
and $\boldPsi^* (\bolds , {\boldsymbol t} , \bar {\boldsymbol t} ;z)$,
$\boldPsibar^* (\bolds , {\boldsymbol t} , \bar {\boldsymbol t} ;z)$ 
the adjoint wave functions.
The bilinear relation for the wave functions has the form 
(\ref{bil-res-id:mat}) (see Proposition \ref{predl:bil-res}).
Changing the integration variable $z$ in the right-hand side 
as $z\to z^{-1}$, we can write it in the form
\beq\label{e1}
\sum_{\gamma =1}^N \oint_{C_{\infty}}\!\!
\Psi_{\alpha \gamma}(\bolds , {\boldsymbol t} , \bar {\boldsymbol t} ;z)
\Psi^*_{\gamma \beta} (\bolds ', {\boldsymbol t} ', 
\bar {\boldsymbol t} ';z)dz
=
\sum_{\gamma =1}^N \oint_{C_{\infty}}\!\!
\bar \Psi_{\alpha \gamma}(\bolds , {\boldsymbol t} , 
\bar {\boldsymbol t} ;z^{-1})
\bar \Psi^*_{\gamma \beta} (\bolds ', {\boldsymbol t} ', 
\bar {\boldsymbol t} ';z^{-1})
z^{-2}dz.
\eeq
This identity is valid for all $\alpha , \beta$, 
${\boldsymbol t} $, ${\boldsymbol t} '$, $\bar {\boldsymbol t} $,
$\bar {\boldsymbol t} '$, $\bolds $, $\bolds '$. 

This subsection is devoted to the proof of the following theorem:

\begin{theorem}
\label{theorem:tau-function}
The bilinear identity (\ref{e1}) implies that there exists an
$N\times N$ 
matrix-valued function 
$
    \bigl(
    \tau_{\alpha \beta}(\bolds , {\boldsymbol t} , \bar {\boldsymbol t})
    \bigr)_{\alpha,\beta=1,\dotsc,N}
$
such that the diagonal elements 
$\tau_{\alpha \alpha}(\bolds , {\boldsymbol t} , 
\bar {\boldsymbol t} )
=:\tau (\bolds , {\boldsymbol t} , \bar {\boldsymbol t} )$ are all the same
and such that the wave functions and adjoint wave
functions are expressed through it as
\beq\label{e2}
\begin{array}{l}
\displaystyle{
\Psi_{\alpha \beta}(\bolds , {\boldsymbol t} , \bar {\boldsymbol t} ;z)=
z^{s_{\beta}+\delta_{\alpha \beta}-1}
e^{\xi ({\boldsymbol t} _{\beta}, z)}\frac{\tau_{\alpha \beta}
(\bolds , {\boldsymbol t} -[z^{-1}]_{\beta}, \bar {\boldsymbol t} )}{\tau
(\bolds , {\boldsymbol t} , \bar {\boldsymbol t} )},}
\\ \\
\displaystyle{
\Psi^*_{\alpha \beta}(\bolds , {\boldsymbol t} , \bar {\boldsymbol t} ;z)=
(-1)^{\delta_{\alpha \beta}-1}
z^{-s_{\alpha}+\delta_{\alpha \beta}-1}
e^{-\xi ({\boldsymbol t} _{\alpha}, z)}\frac{\tau_{\alpha \beta}
(\bolds , {\boldsymbol t} +[z^{-1}]_{\alpha}, \bar {\boldsymbol t} )}{\tau
(\bolds , {\boldsymbol t} , \bar {\boldsymbol t} )},}
\\ \\
\displaystyle{
\bar \Psi_{\alpha \beta}(\bolds , {\boldsymbol t} , 
\bar {\boldsymbol t} ;z^{-1})=(-1)^{\delta_{\alpha \beta}-1}
z^{-s_{\beta}}
e^{\xi (\bar {\boldsymbol t} _{\beta}, z)}\frac{\tau_{\alpha \beta}
(\bolds +[1]_{\beta}, {\boldsymbol t} , 
\bar {\boldsymbol t} -[z^{-1}]_{\beta})}{\tau
(\bolds , {\boldsymbol t} , \bar {\boldsymbol t} )},}
\\ \\
\displaystyle{
\bar \Psi^*_{\alpha \beta}(\bolds , {\boldsymbol t} , 
\bar {\boldsymbol t} ;z^{-1})= 
z^{s_{\alpha}}
e^{-\xi (\bar {\boldsymbol t} _{\alpha}, z)}\frac{\tau_{\alpha \beta}
(\bolds -[1]_{\alpha}, {\boldsymbol t} , \bar {\boldsymbol t} +
[z^{-1}]_{\alpha})}{\tau
(\bolds , {\boldsymbol t} , \bar {\boldsymbol t} )},}
\end{array}
\eeq
where
\beq\label{st1}
\left ({\boldsymbol t} \pm [z^{-1}]_{\gamma}\right )_{\alpha j}=t_{\alpha , j}\pm
\delta_{\alpha \gamma} \frac{z^{-j}}{j}.
\eeq
and
\beq\label{st1a}
\bolds \pm [1]_{\alpha} =\{s_1, \ldots , s_{\alpha -1}, s_{\alpha}\pm 1,
s_{\alpha +1}, \ldots , s_N\}.
\eeq
\end{theorem}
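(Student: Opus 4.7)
The plan is to use Miwa shifts in the bilinear identity \eqref{e1} combined with residue calculus to extract explicit ratio formulas for the wave and adjoint wave functions at a generic spectral point, then identify the common scalar denominator of these ratios as the tau-function. This mirrors the classical Kyoto-school derivation of $\tau$ from the bilinear identity, adapted to the multi-component setting.

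The key computation is as follows. Substitute $\bolds'=\bolds$, $\bar{\boldsymbol t}'=\bar{\boldsymbol t}$, and ${\boldsymbol t}'={\boldsymbol t}-[z_0^{-1}]_\gamma$ in \eqref{e1} for a fixed direction $\gamma$. The prefactor $e^{\xi({\boldsymbol t}_\mu-{\boldsymbol t}'_\mu,z)}$ becomes $(1-z/z_0)^{-\delta_{\mu\gamma}}$, so exactly the $\mu=\gamma$ term of the LHS $\mu$-sum develops a simple pole at $z=z_0$, while all other LHS terms and the entire RHS (integrated over a small contour $C_0$ around $0$, away from $z_0$) remain regular at $z_0$. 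Extracting the $z_0$-contribution from the LHS and using the leading normalization $w_{0,\alpha\gamma}=\delta_{\alpha\gamma}$, $w^*_{0,\gamma\beta}=\delta_{\gamma\beta}$, one can solve for $\Psi_{\alpha\beta}(\bolds,{\boldsymbol t},\bar{\boldsymbol t};z_0)$, obtaining an expression of the form
\[
  \Psi_{\alpha\beta}(\bolds,{\boldsymbol t},\bar{\boldsymbol t};z_0)
  =z_0^{s_\beta+\delta_{\alpha\beta}-1}e^{\xi({\boldsymbol t}_\beta,z_0)}\,
  \frac{N_{\alpha\beta}(\bolds,{\boldsymbol t}-[z_0^{-1}]_\beta,\bar{\boldsymbol t})}{D(\bolds,{\boldsymbol t},\bar{\boldsymbol t})},
\]
where $D$ is a scalar function of the un-shifted arguments and $N_{\alpha\beta}$ is built from coefficients of $\Psi^*$ at the shifted argument. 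The dual choice ${\boldsymbol t}'={\boldsymbol t}+[z_0^{-1}]_\alpha$ produces the corresponding formula for $\Psi^*_{\alpha\beta}$; the sign $(-1)^{\delta_{\alpha\beta}-1}$ in the target formula is exactly the sign arising from the residue of $(1-z/z_0)^{-1}$ at $z=z_0$. Setting $\tau:=D$ and $\tau_{\alpha\beta}:=N_{\alpha\beta}$ produces the first two formulas of \eqref{e2}. The remaining two (for $\bar\Psi_{\alpha\beta}$ and $\bar\Psi^*_{\alpha\beta}$) are obtained analogously by combining Miwa shifts $\bar{\boldsymbol t}'=\bar{\boldsymbol t}\mp[z_0^{-1}]_\gamma$ with discrete shifts $\bolds'=\bolds\pm[1]_\alpha$; the latter generates the weighting $z^{s_\gamma-s'_\gamma}$ in the integrand that precisely accounts for the $\bolds\pm[1]_\alpha$ arguments of $\tau_{\alpha\beta}$ in the target, and residues are now extracted from the RHS of \eqref{e1} via the $C_0$ contour.

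The main obstacle is verifying three consistency conditions: (i) the extracted scalar $D=:\tau$ is independent of the Miwa direction $\gamma$, of the spectral parameter $z_0$, and of the indices $\alpha,\beta$; (ii) the diagonal entries $\tau_{\alpha\alpha}$ all coincide (the crucial ``common diagonal'' assertion in the statement of the theorem); and (iii) the same $\tau$ simultaneously serves as denominator in the un-barred and barred formulas of \eqref{e2}. Each such compatibility reduces, by inserting one further Miwa or discrete shift into \eqref{e1} and cross-comparing the two resulting ratio formulas, to an identity directly guaranteed by the bilinear relation itself. The most delicate step is (iii): linking the ${\boldsymbol t}$-side construction of $\tau$ to the $\bar{\boldsymbol t}$-side requires \eqref{e1} with simultaneous Miwa shifts in \emph{both} sets of times together with discrete shifts of $\bolds$, and the fact that a single scalar $\tau$ suffices across all these formulas is precisely the integrability content that \eqref{e1} encodes.
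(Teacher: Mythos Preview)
Your approach is the same as the paper's: specialize \eqref{e1} by Miwa and discrete shifts, extract residues, and assemble the resulting relations into \eqref{e2}. You also correctly identify the three consistency conditions (i)--(iii) as the crux.

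The gap is in your resolution of them. The bilinear identity, after residues, yields \emph{functional equations} for the $w$-functions---not the tau-ratios directly---and turning those into the existence of a single scalar $\tau$ is the substance of the proof. Concretely: a single Miwa shift does not produce your claimed ratio $N_{\alpha\beta}/D$ with scalar $D$; one needs a double shift ${\boldsymbol t}-{\boldsymbol t}'=[a^{-1}]_\alpha+[b^{-1}]_\alpha$ to obtain the symmetry $\tilde w_{\alpha\alpha}({\boldsymbol t},a)/\tilde w_{\alpha\alpha}({\boldsymbol t}-[b^{-1}]_\alpha,a)=(\text{same with }a\leftrightarrow b)$, and then the DJKM closedness argument to manufacture a potential $\tau_{\alpha\alpha}$ with $w_{\alpha\alpha}({\boldsymbol t},a)=\tau_{\alpha\alpha}({\boldsymbol t}-[a^{-1}]_\alpha)/\tau_{\alpha\alpha}({\boldsymbol t})$. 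For (ii), the mixed shift $[a^{-1}]_\alpha+[b^{-1}]_\beta$ gives a cross-relation between $w_{\alpha\alpha}$ and $w_{\beta\beta}$, but showing that all $\tau_{\alpha\alpha}$ coincide requires a separate lemma (the paper uses Dickey's residue argument with the operator $\partial_\alpha(z)=\partial_z-\sum_{k\geq 0} z^{-k-1}\partial_{t_{\alpha,k}}$). For (iii), the barred side produces its own $\bar\tau$, and identifying $\bar\tau=\tau$ requires two further lemmas: that $G({\boldsymbol t}-[a^{-1}])\equiv G({\boldsymbol t})$ for all $a$ forces $G$ constant in ${\boldsymbol t}$, and that a multiplicative cocycle condition on $f=\bar\tau/\tau$ forces $f$ to factor as $F({\boldsymbol t})\bar F(\bolds,\bar{\boldsymbol t})$, which is then absorbed into the normalization freedom of $\tau$ and $\bar\tau$. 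None of these steps is ``directly guaranteed by the bilinear relation''; each is a non-trivial lemma about functional equations, and your proof must supply them.
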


\noindent
The matrix-valued function 
$
    \bigl(
    \tau_{\alpha \beta}(\bolds , {\boldsymbol t} , \bar {\boldsymbol t})
    \bigr)_{\alpha,\beta=1,\dotsc,N}
$
is called the {\it tau-function} of the $N$-component Toda lattice
hierarchy.

Now we proceed to the proof of Theorem \ref{theorem:tau-function}.
Let us represent the wave functions in the form
\beq\label{e3}
\begin{array}{l}
\displaystyle{
\Psi_{\alpha \beta}(\bolds , {\boldsymbol t} , \bar {\boldsymbol t} ;z)=
z^{s_{\beta}}
e^{\xi ({\boldsymbol t} _{\beta}, z)}
w_{\alpha \beta}(\bolds , {\boldsymbol t},
\bar {\boldsymbol t} ;z),}
\\ \\
\displaystyle{
\Psi^*_{\alpha \beta}(\bolds , {\boldsymbol t} , 
\bar {\boldsymbol t} ;z)=
z^{-s_{\alpha}}
e^{-\xi ({\boldsymbol t} _{\alpha}, z)}
w^*_{\alpha \beta}(\bolds , {\boldsymbol t} ,
\bar {\boldsymbol t} ;z),}
\\ \\
\displaystyle{
\bar \Psi_{\alpha \beta}(\bolds , {\boldsymbol t} , 
\bar {\boldsymbol t} ;z^{-1})=
z^{-s_{\beta}}
e^{\xi (\bar {\boldsymbol t} _{\beta}, z)}
\bar w_{\alpha \beta}(\bolds , {\boldsymbol t} ,
\bar {\boldsymbol t} ;z^{-1}),}
\\ \\
\displaystyle{
\bar \Psi^*_{\alpha \beta}(\bolds , {\boldsymbol t} , 
\bar {\boldsymbol t} ;z^{-1})= 
z^{s_{\alpha}}
e^{-\xi (\bar {\boldsymbol t} _{\alpha}, z)}
\bar w^*_{\alpha \beta}(\bolds , {\boldsymbol t} ,
\bar {\boldsymbol t} ;z^{-1}),}
\end{array}
\eeq
where all the $w$-functions are assumed to be regular 
as $z\to \infty$.
Substituting this into (\ref{e1}), we write the bilinear
identity for the wave functions in the form
\beq\label{e4}
\begin{array}{l}
\displaystyle{
\sum_{\gamma =1}^N 
\oint_{C_{\infty}}
\!\! z^{s_{\gamma}-s_{\gamma}'}
e^{\xi ({\boldsymbol t} _{\gamma}-{\boldsymbol t} _{\gamma}', z)}
w_{\alpha \gamma}(\bolds , {\boldsymbol t} , \bar {\boldsymbol t} ;z)
w^*_{\gamma \beta}(\bolds ', {\boldsymbol t} ', \bar {\boldsymbol t} ';z)dz}
\\ \\
\displaystyle{
=\, \sum_{\gamma =1}^N
\oint_{C_{\infty}}
\!\, z^{s_{\gamma}'-s_{\gamma}-2}
e^{\xi (\bar {\boldsymbol t} _{\gamma}-\bar {\boldsymbol t} _{\gamma}', z)}}
\bar w_{\alpha \gamma}(\bolds , {\boldsymbol t} , 
\bar {\boldsymbol t} ;z^{-1})
\bar w^*_{\gamma \beta}(\bolds ', {\boldsymbol t} ', 
\bar {\boldsymbol t} ';z^{-1})dz.
\end{array}
\eeq
The $w$-functions are normalized as
$$
w(\bolds , {\boldsymbol t} , 
\bar {\boldsymbol t} ;\infty )=
w^*(\bolds , {\boldsymbol t} , \bar {\boldsymbol t} ;\infty )=1_N,
$$
i.e., their non-diagonal elements vanish and 
$
w_{\alpha \alpha}(\bolds , {\boldsymbol t} , \bar {\boldsymbol t} ;\infty )=
w^*_{\alpha \alpha}(\bolds , {\boldsymbol t} , \bar {\boldsymbol t} ;\infty )=1,
\quad \alpha =1, \ldots , N
$
by \eqref{wave-mat} and \eqref{inv-wave-mat}.
Taking this into account,  
we introduce the functions $\tilde w_{\alpha \beta}$,
$\tilde w^*_{\alpha \beta}$ by extracting the vanishing 
$z$-dependent factor
explicitly\footnote{
The functions
$\tilde w_{\alpha \beta}=
\tilde w_{\alpha \beta}(\bolds , {\boldsymbol t} , 
\bar {\boldsymbol t} ;z)$ introduced here should not be mixed
with
$\tilde w_0$ from \propref{proposition:tildew0}.
}:
$$
w_{\alpha \beta}(\bolds , {\boldsymbol t} , \bar {\boldsymbol t} ;z)=
z^{\delta_{\alpha \beta}-1}
\tilde w_{\alpha \beta}(\bolds , {\boldsymbol t} , \bar {\boldsymbol t} ;z),
\quad
w^{*}_{\alpha \beta}(\bolds , {\boldsymbol t} , \bar {\boldsymbol t} ;z)=
z^{\delta_{\alpha \beta}-1}
\tilde w^{*}_{\alpha \beta}(\bolds , {\boldsymbol t} , 
\bar {\boldsymbol t} ;z).
$$

Let us first set $\bolds '=\bolds $, 
$\bar {\boldsymbol t} '=\bar {\boldsymbol t} $ in (\ref{e4}). 
Then the right-hand side
vanishes and we get
\beq\label{e5}
\sum_{\gamma =1}^N 
\oint_{C_{\infty}}
\!\! z^{\delta_{\alpha \gamma}+\delta_{\beta \gamma}-2}
e^{\xi ({\boldsymbol t} _{\gamma}-{\boldsymbol t} _{\gamma}', z)}
\tilde w_{\alpha \gamma}(\bolds , {\boldsymbol t} , \bar {\boldsymbol t} ;z)
\tilde w^*_{\gamma \beta}(\bolds , {\boldsymbol t} ', 
\bar {\boldsymbol t} ;z^{-1})dz=0.
\eeq
To simplify the notation, we temporally will not write 
the arguments $\bolds , \bar {\boldsymbol t} $ 
explicitly since they are fixed
in (\ref{e5}). 
Set ${\boldsymbol t} -{\boldsymbol t} '=[a^{-1}]_{\mu}$, then
$$
e^{\xi ({\boldsymbol t} _{\gamma}-{\boldsymbol t} _{\gamma}', z)}=
\Bigl (\frac{a}{a-z}\Bigr )^{\delta_{\gamma \mu}}.
$$
Putting $\mu =\alpha$ or
$\mu =\beta$, we have from (\ref{e5}) for $\alpha \neq \beta$:
\beq\label{m10}
\begin{array}{c}
\displaystyle{\oint_{C_{\infty}} z^{-1}\frac{a}{a-z}\,
\tilde w_{\alpha \alpha}({\boldsymbol t} , z)\tilde w^{*}_{\alpha \beta}({\boldsymbol t} -
[a^{-1}]_{\alpha}, z)dz}
\\ \\
\displaystyle{+\oint_{C_{\infty}} z^{-1}
\tilde w_{\alpha \beta}({\boldsymbol t} , z)\tilde w^{*}_{\beta \beta}({\boldsymbol t} -
[a^{-1}]_{\alpha}, z)dz}=0,
\end{array}
\eeq
\beq\label{m10a}
\begin{array}{c}
\displaystyle{\oint_{C_{\infty}} z^{-1}\frac{a}{a-z}\,
\tilde w_{\alpha \beta}({\boldsymbol t} , z)\tilde w^{*}_{\beta \beta}({\boldsymbol t} -
[a^{-1}]_{\beta}, z)dz}
\\ \\
\displaystyle{+\oint_{C_{\infty}} z^{-1}
\tilde w_{\alpha \alpha}({\boldsymbol t} , z)\tilde w^{*}_{\alpha \beta}({\boldsymbol t} -
[a^{-1}]_{\beta}, z)dz=0}
\end{array}
\eeq
and
\beq\label{m11}
\oint_{C_{\infty}}dz \frac{a}{a-z}\,
\tilde w_{\alpha \alpha}({\boldsymbol t} , z)\tilde w^{*}_{\alpha \alpha}({\boldsymbol t} -
[a^{-1}]_{\alpha}, z)dz=0
\eeq
for $\beta =\alpha$, where we write only non-vanishing terms of the sum
over $\gamma$. 
The residue calculus applied to (\ref{m10}), (\ref{m10a})
and (\ref{m11}) yields\footnote{When calculating 
the residues one should take into account that
the point $a$ lies {\it outside} of the contour $C_{\infty}$, so,
shrinking the contour to infinity, we have
$
\oint_{C_{\infty}} z^{-1}\frac{a}{a-z}f(z)dz =+2\pi i f(a)
$
rather than $-2\pi i f(a)$ (for functions $f(z)$ regular
in some neighborhood of infinity).}, respectively:
\beq\label{m12}
\tilde w_{\alpha \alpha}({\boldsymbol t} , a)
\tilde w^{*}_{\alpha \beta}({\boldsymbol t} -[a^{-1}]_{\alpha}, a)=
-\tilde w^{*}_{\alpha \beta }({\boldsymbol t} , \infty ),
\eeq
\beq\label{m12a}
\tilde w_{\alpha \beta}({\boldsymbol t} , a)\tilde w^{*}_{\beta 
\beta}({\boldsymbol t} -[a^{-1}]_{\beta}, a)=
-\tilde w^{*}_{\alpha \beta }({\boldsymbol t} -[a^{-1}]_{\beta}, \infty ),
\eeq
\beq\label{m13}
\tilde w_{\alpha \alpha}({\boldsymbol t} , a)
\tilde w^{*}_{\alpha \alpha}({\boldsymbol t} -[a^{-1}]_{\alpha}, a)=1.
\eeq
Tending $a\to \infty$ in (\ref{m12}), (\ref{m12a}), we get
\beq\label{m14}
\tilde w^{*}_{\alpha \beta}({\boldsymbol t} , \infty )=
-w_{\alpha \beta}({\boldsymbol t} , \infty ) \quad \mbox{for $\alpha \neq \beta$}.
\eeq
Using (\ref{m13}), we see from (\ref{m12}), (\ref{m12a}) that
\beq\label{m15}
\begin{array}{l}
\tilde w_{\alpha \beta}({\boldsymbol t} , a)=
\tilde w_{\alpha \beta}({\boldsymbol t} -[a^{-1}]_{\beta}, \infty )
\tilde w_{\beta \beta}({\boldsymbol t} , a),
\\ \\
\displaystyle{\tilde w^{*}_{\alpha \beta}({\boldsymbol t} , a)=-
\frac{\tilde w_{\alpha \beta}({\boldsymbol t} +
[a^{-1}]_{\alpha}, \infty )}{\tilde w_{\alpha \alpha}
({\boldsymbol t} +[a^{-1}]_{\alpha}, a)}}.
\end{array}
\eeq
Next, we set ${\boldsymbol t} -{\boldsymbol t} '=[a^{-1}]_{\alpha}+[b^{-1}]_{\alpha}$ in (\ref{e5}). 
At $\beta =\alpha$ the residue calculus yields
$$
\tilde w_{\alpha \alpha}({\boldsymbol t} , a)\tilde w^{*}_{\alpha \alpha}({\boldsymbol t} -
[a^{-1}]_{\alpha}-[b^{-1}]_{\alpha}, a)=
\tilde w_{\alpha \alpha}({\boldsymbol t} , b)\tilde w^{*}_{\alpha \alpha}({\boldsymbol t} -
[a^{-1}]_{\alpha}-[b^{-1}]_{\alpha}, b).
$$
Taking into account (\ref{m13}), we can write this relation in the form
\beq\label{m16}
\frac{\tilde w_{\alpha \alpha}({\boldsymbol t} , a)}{\tilde w_{\alpha \alpha}
({\boldsymbol t} -[b^{-1}]_{\alpha}, a)}=
\frac{\tilde w_{\alpha \alpha}({\boldsymbol t} , b)}{\tilde w_{\alpha \alpha}
({\boldsymbol t} -[a^{-1}]_{\alpha}, b)}.
\eeq
As is proven in \cite{DJKM83}, it follows from this relation that 
there exists a function $\tau_{\alpha \alpha}({\boldsymbol t} )$ such that
\beq\label{m17}
\tilde w_{\alpha \alpha}({\boldsymbol t} , a)=w_{\alpha \alpha}({\boldsymbol t} , a)=
\frac{\tau_{\alpha \alpha}
({\boldsymbol t} -[a^{-1}]_{\alpha})}{\tau_{\alpha \alpha}({\boldsymbol t} )}.
\eeq
In the next step, we 
set ${\boldsymbol t} -{\boldsymbol t} '=[a^{-1}]_{\alpha}+[b^{-1}]_{\beta}$ 
with $\alpha \neq \beta$ in (\ref{e5}).
The residue calculus yields
$$
\tilde w_{\alpha \alpha}({\boldsymbol t} , a)\tilde w^{*}_{\alpha \beta}({\boldsymbol t} -
[a^{-1}]_{\alpha}-[b^{-1}]_{\beta}, a)=-
\tilde w_{\alpha \beta}({\boldsymbol t} , b)\tilde w^{*}_{\beta \beta}({\boldsymbol t} -
[a^{-1}]_{\alpha}-[b^{-1}]_{\beta}, b),
$$
or, taking into account (\ref{m13}), (\ref{m15}),
\beq\label{m18}
\frac{\tilde w_{\alpha \alpha}({\boldsymbol t} , a)}{\tilde w_{\alpha \alpha }
({\boldsymbol t} -[b^{-1}]_{\beta}, a)}=
\frac{\tilde w_{\beta \beta}({\boldsymbol t} , b)}{\tilde w_{\beta \beta }
({\boldsymbol t} -[a^{-1}]_{\alpha}, b)}.
\eeq
Note that at $\beta =\alpha$ we get (\ref{m16}).

\begin{lem} 
The condition (\ref{m18}) implies that we can choose
$\tau_{\alpha \alpha}$ in (\ref{m17}) 
which does not depend on the 
index $\alpha$:
$\tau_{\alpha \alpha}({\boldsymbol t} )=:\tau ({\boldsymbol t} )$.
\end{lem}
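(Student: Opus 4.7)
The plan is to substitute \eqref{m17} into \eqref{m18}, pass to logarithms, and recognise the resulting identity as the vanishing of a mixed second-order Miwa-shift operator applied to $\log\tau_{\alpha\alpha}-\log\tau_{\beta\beta}$. The gauge freedom left in \eqref{m17}, namely multiplication of each $\tau_{\alpha\alpha}$ by an arbitrary nonvanishing factor $\phi_\alpha(\bolds,\bar{\boldsymbol t},{\boldsymbol t}\setminus{\boldsymbol t}_\alpha)$ independent of the ${\boldsymbol t}_\alpha$-block, will then absorb all remaining discrepancies between the $\tau_{\alpha\alpha}$'s.

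Concretely, denoting the Miwa shift by $(S_\gamma(c)f)({\boldsymbol t}):=f({\boldsymbol t}-[c^{-1}]_\gamma)$ and substituting \eqref{m17} into \eqref{m18}, the identity reduces, after clearing common $\tau$-factors and taking logarithms, to
\[
(S_\alpha(a)-1)(S_\beta(b)-1)\bigl(\log\tau_{\alpha\alpha}-\log\tau_{\beta\beta}\bigr)=0\qquad(\alpha\neq\beta),
\]
valid for all $a,b$, with $\bolds$ and $\bar{\boldsymbol t}$ acting as silent parameters. Expanding each factor $S_\gamma(c)-1$ as a formal series in $c^{-1}$ with coefficients in the derivations $\partial_{t_{\gamma,k}}$, and matching coefficients of $a^{-i}b^{-j}$, this is equivalent to the vanishing of every mixed partial derivative of $\log\tau_{\alpha\alpha}-\log\tau_{\beta\beta}$ that involves at least one derivation from the ${\boldsymbol t}_\alpha$-block together with at least one from the ${\boldsymbol t}_\beta$-block. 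In the joint Taylor expansion in the variables $\{t_{\gamma,k}\}$ this says precisely that every monomial with nontrivial support in both the ${\boldsymbol t}_\alpha$- and ${\boldsymbol t}_\beta$-blocks has vanishing coefficient in the difference.

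The final step is to write $\log\tau_{\alpha\alpha}=\sum_{\boldsymbol n}c^{(\alpha)}_{\boldsymbol n}\,{\boldsymbol t}^{\boldsymbol n}$ and to classify each multi-index $\boldsymbol n$ by the set of ${\boldsymbol t}_\gamma$-blocks in which it has nonzero entries. For $\boldsymbol n$ supported in at least two blocks the preceding step forces $c^{(\alpha)}_{\boldsymbol n}=c^{(\beta)}_{\boldsymbol n}$ for every pair $\alpha,\beta$ in that support, giving a well-defined common value $c_{\boldsymbol n}$; for $\boldsymbol n$ supported in a single block ${\boldsymbol t}_\gamma$, only $c^{(\gamma)}_{\boldsymbol n}$ is constrained; and $c^{(\alpha)}_{\boldsymbol 0}$ is entirely free. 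Since $\phi_\alpha$ acts freely on precisely those coefficients $c^{(\alpha)}_{\boldsymbol n}$ with $\boldsymbol n$ having no ${\boldsymbol t}_\alpha$-support, one can choose $\phi_\alpha$ so that each such coefficient is adjusted to the required common value, yielding after rescaling a single tau-function $\tau({\boldsymbol t})$. I expect the only genuine subtlety, and the main obstacle, to be the verification that these simultaneous choices of the $\phi_\alpha$'s are mutually consistent: a coefficient $c^{(\alpha)}_{\boldsymbol n}$ that $\phi_\alpha$ cannot touch has $\boldsymbol n$ supported in the ${\boldsymbol t}_\alpha$-block, and is then either still adjustable via $\phi_\beta$ (when $\boldsymbol n$ misses the ${\boldsymbol t}_\beta$-block) or has already been forced to equal $c^{(\beta)}_{\boldsymbol n}$ by the separability (when it does not), so no conflict arises.
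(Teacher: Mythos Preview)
Your argument is correct and complete. Substituting \eqref{m17} into \eqref{m18} does indeed collapse to
\[
(S_\alpha(a)-1)(S_\beta(b)-1)\bigl(\log\tau_{\alpha\alpha}-\log\tau_{\beta\beta}\bigr)=0,
\]
and the induction that converts the Schur-polynomial operators $p_k(-\tilde\partial)$ into bare partials (which you leave implicit) is exactly the argument of the paper's Lemma~\ref{lemma:g=g}/\ref{lemma:ff/ff=1}, so this step is justified. Your consistency check in the last paragraph is the heart of the matter and is right: the obstruction to choosing a common target coefficient $c_{\boldsymbol n}$ would be two indices $\alpha,\beta$ both in the support of $\boldsymbol n$ with $c^{(\alpha)}_{\boldsymbol n}\neq c^{(\beta)}_{\boldsymbol n}$, and precisely that is excluded by the separability. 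Equivalently (and this is perhaps the cleanest way to phrase your final step), one may observe that the system $\partial_{t_{\alpha,i}}\log\tau=\partial_{t_{\alpha,i}}\log\tau_{\alpha\alpha}$ is integrable because its closedness condition $\partial_{t_{\alpha,i}}\partial_{t_{\beta,j}}(\log\tau_{\alpha\alpha}-\log\tau_{\beta\beta})=0$ is exactly what you established.

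The paper's proof, following Dickey, takes a genuinely different route. Rather than starting from the already-constructed $\tau_{\alpha\alpha}$ of \eqref{m17} and exploiting their gauge freedom, it works directly with $f_\alpha=\log w_{\alpha\alpha}$ and the logarithmic form \eqref{m18a} of \eqref{m18}, applies the operator $\partial_\alpha(a)=\partial_a-\sum_{k\geq0}a^{-k-1}\partial_{t_{\alpha,k}}$ (which kills Miwa-shifted expressions), and extracts residues $r_{\alpha,i}({\boldsymbol t})$. It then shows the $1$-form $\sum_{\alpha,i}r_{\alpha,i}\,dt_{\alpha,i}$ is closed, so that a single potential $\log\tau$ exists with $r_{\alpha,i}=\partial_{t_{\alpha,i}}\log\tau$, and finally recovers \eqref{m17} with this common $\tau$. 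Your approach is more elementary and makes transparent that \eqref{m18} is \emph{precisely} the compatibility condition needed to glue the $\tau_{\alpha\alpha}$ via the available gauge freedom; the paper's approach builds $\tau$ from scratch via an exactness argument and does not explicitly invoke the freedom in \eqref{m17}. Both land on the same function.
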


\begin{proof}
In the proof we follow \cite{D97}. 
Denote $f_{\alpha}({\boldsymbol t} , z)=
\log w_{\alpha \alpha}({\boldsymbol t} , z)$, then (\ref{m18}) acquires the form
\beq\label{m18a}
f_{\alpha}({\boldsymbol t} -[b^{-1}]_{\beta}, a)-
f_{\alpha}({\boldsymbol t} , a)
=f_{\beta}({\boldsymbol t} -[a^{-1}]_{\alpha}, b)-
f_{\beta}({\boldsymbol t} , b).
\eeq
Let us introduce the differential operator
$$
\p_{\alpha}(z)=\p_z-\sum_{k\geq 0}z^{-k-1}\p_{t_{\alpha , k}}
$$
and apply $\p_{\alpha}(a)$ to the both sides of (\ref{m18a}).
We get
$$
\p_{\alpha}(a)f_{\alpha}({\boldsymbol t} -[b^{-1}]_{\beta}, a)-
\p_{\alpha}(a)f_{\alpha}({\boldsymbol t} , a)
=\sum_{k\geq 0}a^{-k-1}\p_{t_{\alpha , k}}
f_{\beta}({\boldsymbol t} ,b).  $$ 
Multiply both sides of this equality by $a^i$ and take the
residue: 
\beq\label{m18b} r_{\alpha, i}({\boldsymbol t} )=
r_{\alpha,i}({\boldsymbol t} -[b^{-1}]_{\beta})- 
\p_{t_{\alpha ,i}}f_{\beta}({\boldsymbol t} , b), 
\eeq 
where $r_{\alpha , i}({\boldsymbol t} )=
\mbox{res}_z (z^i \p_{\alpha} (z)
f_{\alpha}({\boldsymbol t} , z))$ and the residue is defined as
coefficient at $z^{-1}$. Writing the same equality with 
indices $\gamma, j$ instead of $\alpha , i$, 
applying $\p_{t_{\gamma, j}}$ to the
former and $\p_{t_{\alpha, i}}$ to the latter and subtracting one from
the other, we get: 
$$ \p_{t_{\gamma, j}}r_{\alpha , i}({\boldsymbol t}
-[b^{-1}]_{\beta})- \p_{t_{\alpha, i}}r_{\gamma , j}({\boldsymbol t}
-[b^{-1}]_{\beta})= \p_{t_{\gamma, j}}r_{\alpha , i}({\boldsymbol t} )-
\p_{t_{\alpha, i}}r_{\gamma , j}({\boldsymbol t} ).  
$$ 
This means that
$\p_{t_{\gamma, j}}r_{\alpha , i}({\boldsymbol t} )- 
\p_{t_{\alpha, i}}r_{\gamma , j}({\boldsymbol t} )=
\mbox{constant}$ by the following lemma.

\begin{lem}
\label{lemma:g=g}
A function $G$ of ${\boldsymbol t} =\{t_1, t_2, t_3, \ldots \}$ obeying
the relation $G({\boldsymbol t} -[a^{-1}])=G({\boldsymbol t} )$
identically in $a$ does not depend on ${\boldsymbol t} $.
\end{lem}

\begin{proof}
 In the proof we follow \cite{DJKM83}.
We have:
$$
G({\boldsymbol t} -[a^{-1}])=\exp 
\Bigl (-\sum_{k\geq 1}a^{-k}\tilde \p_k\Bigr )
G({\boldsymbol t} )=
\sum_{k\geq 0} a^{-k}p_k(-\tilde \p_{\boldsymbol t} )G({\boldsymbol t} ),
$$
where the polynomials $p_k({\bf x})$ (${\bf x}=\{x_1, x_2, \ldots \}
=:(x_n)_n$) 
are defined by the generating function
$$
\exp \Bigl (\sum_{k\geq 1}x_ka^{-k}\Bigr )=
\sum_{k\geq 0}p_k ({\bf x})a^{-k}
$$
and $\tilde \p_{\boldsymbol t} =(\tilde \p_n)_n=(n^{-1}\p_{t_n})_n$.
(The polynomials $p_k({\bf x})$ are the Schur polynomials associated
with one-row Young diagrams, see for example \cite{DJKM83}, 
\S\S 2.3--2.4.) Note that $p_0({\bf x})=1$. Therefore, we have the
condition
$$
\sum_{k\geq 1} a^{-k}p_k(-\tilde \p_{\boldsymbol t} )G({\boldsymbol t} )=0
$$
for all $a$, hence
\beq\label{pk}
p_k(-\tilde \p_{\boldsymbol t} )G({\boldsymbol t} )=0 \quad \mbox{for all $k\geq 1$}.
\eeq
It follows from this condition that 
$\p_{t_k}G({\boldsymbol t} )=0$ for all $k\geq 1$. This can be shown by induction.
At $k=1$ we have $\p_{t_1}G({\boldsymbol t} )=0$ because $p_1({\bf x})=x_1$.
Assume that (\ref{pk}) is true for $k=1, \ldots , k_0$. 
Note that 
$$
p_{k_0+1}({\bf x})=x_{k_0+1}+(\mbox{polynomial of $x_1, \ldots , x_{k_0}$}).
$$
Hence equation
(\ref{pk}) for $k=k_0+1$ has the form
$$
\Bigl (\frac{1}{k_0+1}\, \p_{k_0+1}+
(\mbox{polynomial of $\p_{t_1}, \ldots , \p_{t_{k_0}}$})\Bigr )
G({\boldsymbol t} )=0.
$$
By the induction assumption only the first term in the left-hand side
survives and gives $\p_{t_{k_0+1}}G({\boldsymbol t} )=0$. 
\end{proof}

From the definition of $r_{\alpha , i}$ it follows that the constant 
$\p_{t_{\gamma, j}}r_{\alpha , i}({\boldsymbol t} )- 
\p_{t_{\alpha, i}}r_{\gamma , j}({\boldsymbol t} )$ is
zero.  Therefore, 
$$
    \p_{t_{\gamma, j}}r_{\alpha , i}({\boldsymbol t} )
    =
    \p_{t_{\alpha, i}}r_{\gamma , j}({\boldsymbol t} ),
$$ 
which implies the
existence of a function $\tau ({\boldsymbol t} )$ such that $r_{\alpha ,
i}({\boldsymbol t} )=\p_{t_{\alpha , i}}\log \tau ({\boldsymbol t} )$.
From (\ref{m18b}) we then see that $$ \p_{t_{\alpha ,
i}}f_{\beta}({\boldsymbol t} , b)= \p_{t_{\alpha , i}}\Bigl (\log \tau
({\boldsymbol t} -[b^{-1}]_{\beta})- \log \tau 
({\boldsymbol t} )\Bigr ).  $$ 
Integrating, we get $$ w_{\alpha \alpha}({\boldsymbol t} ,
z)=c(z) \frac{\tau ({\boldsymbol t} -[z^{-1}]_{\alpha})}{\tau
({\boldsymbol t} )}, $$ where the constant $c(z)$ can be eliminated by
multiplying the tau-function by exponent of a linear form in the times.
\end{proof}
As it follows from the lemma, 
we can write (\ref{m17}) in the form
\beq\label{m20}
\tilde w_{\alpha \alpha}({\boldsymbol t} , a)=\frac{\tau
({\boldsymbol t} -[a^{-1}]_{\alpha})}{\tau ({\boldsymbol t} )},
\eeq
where the function $\tau ({\boldsymbol t} )$ already does not depend on 
the index $\alpha$. Plugging (\ref{m20}) into (\ref{m15}), we get
the equations
\beq\label{m21}
\begin{array}{l}
\displaystyle{
\tilde w_{\alpha \beta}({\boldsymbol t} , a)=\tilde 
w_{\alpha \beta}({\boldsymbol t} -[a^{-1}]_{\beta}, \infty )
\, \frac{\tau ({\boldsymbol t} -[a^{-1}]_{\beta})}{\tau ({\boldsymbol t} )},}
\\ \\
\displaystyle{
\tilde w^{*}_{\alpha \beta}({\boldsymbol t} , a)=(-1)^{\delta_{\alpha \beta}-1}
\tilde w_{\alpha \beta}({\boldsymbol t} +[a^{-1}]_{\alpha})
\, \frac{\tau ({\boldsymbol t} +[a^{-1}]_{\alpha})}{\tau ({\boldsymbol t} )}}.
\end{array}
\eeq
In the original notation, they read:
\beq\label{m22}
\begin{array}{l}
\displaystyle{
\tilde w_{\alpha \beta}(\bolds , {\boldsymbol t} , 
\bar {\boldsymbol t} ;a)=
\tilde 
w_{\alpha \beta}(\bolds , {\boldsymbol t} -
[a^{-1}]_{\beta}, \bar {\boldsymbol t} ;\infty )
\frac{\tau (\bolds , {\boldsymbol t} -
[a^{-1}]_{\beta}, \bar {\boldsymbol t} )}{\tau
(\bolds , {\boldsymbol t} , \bar {\boldsymbol t} )},}
\\ \\
\displaystyle{
\tilde w_{\alpha \beta}^*(\bolds , {\boldsymbol t} , 
\bar {\boldsymbol t} ;a)=
(-1)^{\delta_{\alpha \beta}-1}
\tilde w_{\alpha \beta}(\bolds , 
{\boldsymbol t} +[a^{-1}]_{\alpha}, 
\bar {\boldsymbol t} ;\infty )
\frac{\tau (\bolds , {\boldsymbol t} +[a^{-1}]_{\alpha}, 
\bar {\boldsymbol t} )}{\tau
(\bolds , {\boldsymbol t} , \bar {\boldsymbol t} )},}
\end{array}
\eeq
These formulae are of the form (\ref{e2}), i.e.,
\beq\label{m23}
\begin{array}{l}
\displaystyle{
\tilde w_{\alpha \beta}(\bolds , {\boldsymbol t} , \bar {\boldsymbol t} ;a)=
\frac{\tau_{\alpha \beta}(\bolds , {\boldsymbol t} -[a^{-1}]_{\beta}, 
\bar {\boldsymbol t} )}{\tau (\bolds , {\boldsymbol t} , \bar {\boldsymbol t} )}},
\\ \\
\displaystyle{
\tilde w^*_{\alpha \beta}(\bolds , {\boldsymbol t} , \bar {\boldsymbol t} ;a)=
(-1)^{\delta_{\alpha \beta}-1}
\frac{\tau_{\alpha \beta}(\bolds , {\boldsymbol t} +[a^{-1}]_{\alpha}, 
\bar {\boldsymbol t} )}{\tau (\bolds , {\boldsymbol t} , \bar {\boldsymbol t} )}}
\end{array}
\eeq
with
\beq\label{m24}
\tau_{\alpha \beta}(\bolds , {\boldsymbol t} , \bar {\boldsymbol t} )=
\tilde w_{\alpha \beta}(\bolds , {\boldsymbol t} , \bar {\boldsymbol t} , \infty )
\tau (\bolds , {\boldsymbol t} , \bar {\boldsymbol t} ).
\eeq
Note that the matrix-valued function
$
    \bigl(
    \tau_{\alpha \beta}(\bolds,\boldt, \boldtbar)
    \bigr)_{\alpha,\beta=1,\dotsc,N}
$
is defined
up to multiplication by an arbitrary scalar function of $\bolds$ and 
$\bar {\boldsymbol t} $. 

Let us now put $\bolds '=\bolds +[1]_{\alpha}+[1]_{\beta}$, 
${\boldsymbol t} '={\boldsymbol t} $ in
(\ref{e4}). In this case the left-hand side of (\ref{e4}) 
vanishes and we arrive at the relation
\beq\label{e6}
\sum_{\gamma =1}^N 
\oint_{C_{\infty}}
\!\! z^{\delta_{\alpha \gamma}+\delta_{\beta \gamma}-2}
e^{\xi (\bar {\boldsymbol t} _{\gamma}-\bar {\boldsymbol t} _{\gamma}', z)}
\bar w_{\alpha \gamma}(\bolds , {\boldsymbol t} , 
\bar {\boldsymbol t} ;z^{-1})
\bar w^*_{\gamma \beta}(\bolds +[1]_{\alpha}+[1]_{\beta}, 
{\boldsymbol t} , \bar {\boldsymbol t} ';z^{-1})dz=0.
\eeq
The form of 
this equation is very similar to (\ref{e5}). The only difference 
is the normalization of the $\bar w$-functions. 
To make 
equation (\ref{e6}) closer to (\ref{e5}), 
 we rewrite it in an equivalent form:
\beq\label{e6a}
\sum_{\gamma =1}^N 
\oint_{C_{\infty}}
\!\! z^{\delta_{\alpha \gamma}+\delta_{\beta \gamma}-2}
e^{\xi (\bar {\boldsymbol t} _{\gamma}-\bar {\boldsymbol t} _{\gamma}', z)}
\frac{\bar w_{\alpha \gamma}(\bolds -[1]_{\alpha}, 
{\boldsymbol t} , \bar {\boldsymbol t} ;
z^{-1})}{\bar w_{\alpha \alpha}(\bolds -[1]_{\alpha}, 
{\boldsymbol t} , \bar {\boldsymbol t} ;0 )}\,
\frac{\bar w^*_{\gamma \beta}(\bolds +[1]_{\beta}, 
{\boldsymbol t} , \bar {\boldsymbol t} ';z^{-1})}{\bar w^*_{\beta \beta}(\bolds +[1]_{\beta}, 
{\boldsymbol t} , \bar {\boldsymbol t} ';0 )}\, dz=0,
\eeq
in which it can be analyzed 
in the same way as (\ref{e5}). (The
functions
$$
\tilde {\bar w}_{\alpha \alpha}(\bolds , 
{\boldsymbol t} , \bar {\boldsymbol t} ;z^{-1} )=
\frac{\bar w_{\alpha \alpha}(\bolds -[1]_{\alpha}, 
{\boldsymbol t} , \bar {\boldsymbol t} ;
z^{-1})}{\bar w_{\alpha \alpha}(\bolds -[1]_{\alpha}, 
{\boldsymbol t} , \bar {\boldsymbol t} ;0 )}, 
$$
$$
\tilde {\bar w}^{*}_{\beta \beta}(\bolds , 
{\boldsymbol t} , \bar {\boldsymbol t} ';z^{-1})=
\frac{\bar w^*_{\beta \beta}(\bolds +[1]_{\beta}, 
{\boldsymbol t} , \bar {\boldsymbol t} ';z^{-1})}{\bar w^*_{\beta \beta}(\bolds +[1]_{\beta}, 
{\boldsymbol t} , \bar {\boldsymbol t} ';0 )}
$$
are normalized in the same way as the diagonal
elements of the $\tilde w$-functions in 
(\ref{e5}).)
The conclusion is that there exists a function
$\bar \tau (\bolds , {\boldsymbol t} , \bar {\boldsymbol t} )$ such that
\beq\label{e7}
\begin{array}{l}
\displaystyle{
\frac{\bar w_{\alpha \beta}(\bolds , {\boldsymbol t} , 
\bar {\boldsymbol t} ;a^{-1})}{\bar w_{\alpha \alpha}(\bolds , 
{\boldsymbol t} , 
\bar {\boldsymbol t} ;0 )}=
\frac{\bar \tau_{\alpha \beta}(\bolds +[1]_{\alpha}, {\boldsymbol t} ,
\bar {\boldsymbol t} -[a^{-1}]_{\beta})}{\bar \tau 
(\bolds +[1]_{\alpha}, {\boldsymbol t} ,\bar {\boldsymbol t} )}},
\\ \\
\displaystyle{
\frac{\bar w^*_{\alpha \beta}(\bolds , {\boldsymbol t} , 
\bar {\boldsymbol t} ;a^{-1})}{\bar w_{\beta \beta}(\bolds , {\boldsymbol t} , 
\bar {\boldsymbol t} ;0 )}=(-1)^{\delta_{\alpha \beta}-1}
\frac{\bar \tau_{\alpha \beta}(\bolds -[1]_{\beta}, {\boldsymbol t} ,
\bar {\boldsymbol t} +[a^{-1}]_{\alpha})}{\bar \tau 
(\bolds -[1]_{\beta}, {\boldsymbol t} ,\bar {\boldsymbol t} )}}
\end{array}
\eeq
with
\beq\label{e8}
\bar \tau_{\alpha \beta}(\bolds , {\boldsymbol t} , \bar {\boldsymbol t} )=
\frac{\bar w_{\alpha \beta}(\bolds -[1]_{\alpha}, {\boldsymbol t} , 
\bar {\boldsymbol t} ;0 )}{\bar w_{\alpha \beta}
(\bolds -[1]_{\alpha}, {\boldsymbol t} , 
\bar {\boldsymbol t} ;0 )}
\, \bar \tau 
(\bolds , {\boldsymbol t} , \bar {\boldsymbol t} ).
\eeq
These equations do not yet fix the dependence of $\bar \tau$ on 
$\bolds $. We can fix this dependence by setting
\beq\label{e8a}
\begin{array}{l}
\displaystyle{
\bar w_{\alpha \alpha}(\bolds , {\boldsymbol t} , \bar {\boldsymbol t} ; 0 )=
\frac{\bar \tau 
(\bolds +[1]_{\alpha}, {\boldsymbol t} , \bar {\boldsymbol t} )}{\bar \tau 
(\bolds , {\boldsymbol t} , \bar {\boldsymbol t} )},}
\\ \\
\displaystyle{
\bar w^*_{\alpha \alpha}(\bolds , {\boldsymbol t} , \bar {\boldsymbol t} ; 0 )=
\frac{\bar \tau 
(\bolds -[1]_{\alpha}, {\boldsymbol t} , \bar {\boldsymbol t} )}{\bar \tau 
(\bolds , {\boldsymbol t} , \bar {\boldsymbol t} )}.}
\end{array}
\eeq
The consistency of this definition follows from the relation
\beq\label{e8b}
\bar w_{\alpha \alpha}(\bolds , {\boldsymbol t} , \bar {\boldsymbol t} ; 0 )
\bar w^*_{\alpha \alpha}(\bolds +[1]_{\alpha}, {\boldsymbol t} , 
\bar {\boldsymbol t} ; 0 )=1
\eeq
which is a corollary of (\ref{e4}) (one should put 
$\beta =\alpha$, ${\boldsymbol t} '={\boldsymbol t} $, $\bar {\boldsymbol t} '=\bar {\boldsymbol t} $,
$\bolds '=\bolds +[1]_{\alpha}$ and calculate the residues 
at infinity).
Note that the matrix-valued function
$
    \bigl(
    \bar\tau_{\alpha \beta}(\bolds ,\boldt, \boldtbar)
    \bigr)_{\alpha,\beta=1,\dotsc,N}
$
is defined up to 
multiplication by an arbitrary scalar function of ${\boldsymbol t}$. 

It remains to connect the functions $\tau$ and $\bar \tau$. 
To this end, we set $\bolds '=\bolds +[1]_{\beta}$, 
${\boldsymbol t} -{\boldsymbol t} '=[a^{-1}]_{\alpha}$,
$\bar {\boldsymbol t} -\bar {\boldsymbol t} '=[b^{-1}]_{\beta}$ in (\ref{e4}).
Calculating the residues, we obtain:
\beq\label{e9}
\begin{array}{l}
\tilde w_{\alpha \alpha}(\bolds , {\boldsymbol t} , \bar {\boldsymbol t} , a)
\tilde w^*_{\alpha \beta}(\bolds +[1]_{\beta}, {\boldsymbol t} 
-[a^{-1}]_{\alpha}, \bar {\boldsymbol t} -[b^{-1}]_{\beta}, a)
\\ \\
\phantom{aaaaaaaaaaaaa}=
\bar w_{\alpha \beta}(\bolds , {\boldsymbol t} , \bar {\boldsymbol t} , b^{-1})
\bar w^*_{\beta \beta}(\bolds +[1]_{\beta}, {\boldsymbol t} 
-[a^{-1}]_{\alpha}, \bar {\boldsymbol t} -[b^{-1}]_{\beta}, b^{-1})
\end{array}
\eeq
or, in terms of the tau-functions,
\beq\label{e10}
\begin{array}{lll}
&& \displaystyle{
\frac{\tau (\bolds , 
{\boldsymbol t} -[a^{-1}]_{\alpha}, \bar {\boldsymbol t} )}{\tau (\bolds , 
{\boldsymbol t} , \bar {\boldsymbol t} )}\,
\frac{\tau_{\alpha \beta}(\bolds +[1]_{\beta}, {\boldsymbol t} , 
\bar {\boldsymbol t} -[b^{-1}]_{\beta})}{\tau (\bolds +[1]_{\beta},
{\boldsymbol t} -[a^{-1}]_{\alpha}, \bar {\boldsymbol t} -[b^{-1}]_{\beta})}}
\\ \\
&=& (-1)^{\delta_{\alpha \beta}-1}
\displaystyle{
\frac{\bar \tau_{\alpha \beta}(\bolds +[1]_{\alpha}, {\boldsymbol t} , 
\bar {\boldsymbol t} -[b^{-1}]_{\beta})}{\bar \tau (\bolds , 
{\boldsymbol t} , \bar {\boldsymbol t} )}\,
\frac{\bar \tau (\bolds , 
{\boldsymbol t} -[a^{-1}]_{\alpha}, 
\bar {\boldsymbol t} )}{\bar \tau (\bolds +[1]_{\beta},
{\boldsymbol t} -[a^{-1}]_{\alpha}, \bar {\boldsymbol t} -[b^{-1}]_{\beta})}}.
\end{array}
\eeq
Putting here $a=b=\infty$, we arrive at the relation
\beq\label{n1}
(-1)^{\delta_{\alpha \beta}-1}\frac{\bar \tau_{\alpha \beta}
(\bolds +[1]_{\alpha \beta}, {\boldsymbol t} , \bar {\boldsymbol t} )}{\tau_{\alpha \beta}
(\bolds , {\boldsymbol t} , \bar {\boldsymbol t} )}=
\frac{\bar \tau (\bolds , {\boldsymbol t} , \bar {\boldsymbol t} )}{\tau (\bolds , 
{\boldsymbol t} , \bar {\boldsymbol t} )}=: f(\bolds , 
{\boldsymbol t} , \bar {\boldsymbol t} ),
\eeq
where $[1]_{\alpha \beta}=[1]_{\alpha} -[1]_{\beta}$ and a function 
$f$ is the same for all $\alpha , \beta$. In terms of the function $f$
relation (\ref{e10}) acquires the form
\beq\label{n2}
\frac{f(\bolds +[1]_{\beta}, {\boldsymbol t} , \bar {\boldsymbol t} -[b^{-1}]_{\beta})
f(\bolds , {\boldsymbol t} -[a^{-1}]_{\alpha}, 
\bar {\boldsymbol t} )}{f(\bolds +[1]_{\beta}, {\boldsymbol t} -[a^{-1}]_{\alpha}, 
\bar {\boldsymbol t} -[b^{-1}]_{\beta})f(\bolds , {\boldsymbol t} , \bar {\boldsymbol t} )}=1.
\eeq
Putting here $b=\infty$, we have the relation
\beq\label{n3}
\frac{f(\bolds +[1]_{\beta}, {\boldsymbol t} , 
\bar {\boldsymbol t} )}{f(\bolds , {\boldsymbol t} , \bar {\boldsymbol t} )}=
\frac{f(\bolds +[1]_{\beta}, {\boldsymbol t} -[a^{-1}]_{\alpha}, 
\bar {\boldsymbol t} )}{f(\bolds , {\boldsymbol t} -[a^{-1}]_{\alpha}, 
\bar {\boldsymbol t} )}=: g_{\beta}(\bolds , {\boldsymbol t} , \bar {\boldsymbol t} )
\eeq
which means the following condition for the function  
$g_{\beta}(\bolds , {\boldsymbol t} , \bar {\boldsymbol t} )$:
\beq\label{g1}
g_{\beta}(\bolds , {\boldsymbol t} -[a^{-1}]_{\alpha}, \bar {\boldsymbol t} )=
g_{\beta}(\bolds , {\boldsymbol t} , \bar {\boldsymbol t} )
\quad \mbox{for all $\alpha$ and $a$}.
\eeq
The condition (\ref{g1}) makes it possible to apply \lemref{lemma:g=g}
to the function $g_{\beta}(\bolds , {\boldsymbol t}, \bar{\boldsymbol
t})$, namely, $g_{\beta}(\bolds , {\boldsymbol t}, \bar{\boldsymbol
t})$ does not depend on ${\boldsymbol t}$.

\commentout{
\begin{lem}
\label{lemma:g=g}
The function $g_{\beta}(\bolds , {\boldsymbol t} , \bar {\boldsymbol t} )$ satisfying
the condition (\ref{g1}) does not 
depend on ${\boldsymbol t} $.
\end{lem}

\begin{proof}
It is enough to show that
a function $G$ of ${\boldsymbol t} =\{t_1, t_2, t_3, \ldots \}$ obeying
the relation $G({\boldsymbol t} -[a^{-1}])=G({\boldsymbol t} )$ identically in $a$
does not depend on ${\boldsymbol t} $. 
In the proof we follow \cite{DJKM83}.
We have:
$$
G({\boldsymbol t} -[a^{-1}])=\exp \Bigl (-\sum_{k\geq 1}a^{-k}\tilde \p_k\Bigr )
G({\boldsymbol t} )=
\sum_{k\geq 0} a^{-k}p_k(-\tilde \p_{\boldsymbol t} )G({\boldsymbol t} ),
$$
where the polynomials $p_k({\bf x})$ (${\bf x}=\{x_1, x_2, \ldots \}
=:(x_n)_n$) 
are defined by the generating function
$$
\exp \Bigl (\sum_{k\geq 1}x_ka^{-k}\Bigr )=
\sum_{k\geq 0}p_k ({\bf x})a^{-k}
$$
and $\tilde \p_{\boldsymbol t} =(\tilde \p_n)_n=(n^{-1}\p_{t_n})_n$.
(The polynomials $p_k({\bf x})$ are the Schur polynomials associated
with one-row Young diagrams, see for example \cite{DJKM83}, 
\S\S 2.3--2.4.) Note that $p_0({\bf x})=1$. Therefore, we have the
condition
$$
\sum_{k\geq 1} a^{-k}p_k(-\tilde \p_{\boldsymbol t} )G({\boldsymbol t} )=0
$$
for all $a$, hence
\beq\label{pk}
p_k(-\tilde \p_{\boldsymbol t} )G({\boldsymbol t} )=0 \quad \mbox{for all $k\geq 1$}.
\eeq
It follows from this condition that 
$\p_{t_k}G({\boldsymbol t} )=0$ for all $k\geq 1$. This can be shown by induction.
At $k=1$ we have $\p_{t_1}G({\boldsymbol t} )=0$ because $p_1({\bf x})=x_1$.
Assume that (\ref{pk}) is true for $k=1, \ldots , k_0$. 
Note that 
$$
p_{k_0+1}({\bf x})=x_{k_0+1}+(\mbox{polynomial of $x_1, \ldots , x_{k_0}$}).
$$
Hence equation
(\ref{pk}) for $k=k_0+1$ has the form
$$
\Bigl (\frac{1}{k_0+1}\, \p_{k_0+1}+
(\mbox{polynomial of $\p_{t_1}, \ldots , \p_{t_{k_0}}$})\Bigr )
G({\boldsymbol t} )=0.
$$
By the induction assumption only the first term in the left-hand side
survives and gives $\p_{t_{k_0+1}}G({\boldsymbol t} )=0$. 
\end{proof}
}

Substituting equation (\ref{n3}) back to (\ref{n2}), we get:
\beq\label{n31}
\frac{f(\bolds , {\boldsymbol t} , \bar {\boldsymbol t} -[b^{-1}]_{\beta})
f(\bolds , {\boldsymbol t} -[a^{-1}]_{\alpha}, 
\bar {\boldsymbol t} )}{f(\bolds , {\boldsymbol t} -[a^{-1}]_{\alpha}, 
\bar {\boldsymbol t} -[b^{-1}]_{\beta})f(\bolds , {\boldsymbol t} , \bar {\boldsymbol t} )}=1.
\eeq

\begin{lem}
\label{lemma:ff/ff=1}
If $f(\bolds , {\boldsymbol t} , \bar {\boldsymbol t} )$ satisfies (\ref{n31}) for any
$a,b,\alpha , \beta$, then there exist functions $F(\bolds , {\boldsymbol t} )$,
$\bar F(\bolds , \bar {\boldsymbol t} )$ such that 
$f(\bolds , {\boldsymbol t} , \bar {\boldsymbol t} )=
F(\bolds , {\boldsymbol t} )\bar F(\bolds , \bar {\boldsymbol t} )$.
\end{lem}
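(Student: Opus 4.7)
The plan is to take logarithms and reduce the multiplicative identity to an additive one from which separation of variables can be extracted by repeated application of \lemref{lemma:g=g}. Setting $\varphi(\bolds,{\boldsymbol t},\bar{\boldsymbol t}):=\log f(\bolds,{\boldsymbol t},\bar{\boldsymbol t})$, the identity \eqref{n31} rearranges to
\[
    \varphi(\bolds,{\boldsymbol t},\bar{\boldsymbol t}-[b^{-1}]_\beta)
    -\varphi(\bolds,{\boldsymbol t},\bar{\boldsymbol t})
    =
    \varphi(\bolds,{\boldsymbol t}-[a^{-1}]_\alpha,\bar{\boldsymbol t}-[b^{-1}]_\beta)
    -\varphi(\bolds,{\boldsymbol t}-[a^{-1}]_\alpha,\bar{\boldsymbol t}),
\]
which expresses that the left-hand side, viewed as a function of $\boldt$ with parameters $\bolds,\bar{\boldsymbol t},\beta,b$, is invariant under the shift ${\boldsymbol t}_\alpha\mapsto{\boldsymbol t}_\alpha-[a^{-1}]_\alpha$ for every $\alpha=1,\dots,N$ and every $a$.

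Applying \lemref{lemma:g=g} separately to each of the $N$ sets of times $\boldt_\alpha$ (with the remaining arguments held fixed), one concludes that this difference is independent of ${\boldsymbol t}$ altogether. In other words, the function
\[
    \psi_\beta(\bolds,\bar{\boldsymbol t};b)
    :=
    \varphi(\bolds,{\boldsymbol t},\bar{\boldsymbol t}-[b^{-1}]_\beta)
    -\varphi(\bolds,{\boldsymbol t},\bar{\boldsymbol t})
\]
does not depend on $\boldt$. The next step is to expand $\psi_\beta$ as a formal power series in $b^{-1}$ using
\[
    \varphi(\bolds,{\boldsymbol t},\bar{\boldsymbol t}-[b^{-1}]_\beta)
    =
    \sum_{k\geq0} p_k(-\tilde\der_{\bar{\boldsymbol t}_\beta})\varphi
    \cdot b^{-k},
    \qquad
    \tilde\der_n=\tfrac1n\der_{\bar t_{\beta,n}},
\]
with $p_k$ the elementary Schur polynomials, exactly as in the proof of \lemref{lemma:g=g}. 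Each coefficient $p_k(-\tilde\der_{\bar{\boldsymbol t}_\beta})\varphi$ ($k\geq1$) must then be independent of $\boldt$.

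Because $p_k(x_1,\dots,x_k)=\frac{1}{k}x_k+(\text{polynomial in }x_1,\dots,x_{k-1})$, an induction on $k$ shows that $\der_{\bar t_{\beta,k}}\varphi$ is itself independent of $\boldt$ for every $\beta$ and every $k\geq1$: for $k=1$ this is immediate since $p_1(x)=x_1$, and assuming it for indices $<k$, the previous derivatives together with their products appearing in $p_k$ are already $\boldt$-independent, so the remaining term $\frac{1}{k}\der_{\bar t_{\beta,k}}\varphi$ must be as well. Hence all mixed partials $\der_{t_{\alpha,n}}\der_{\bar t_{\beta,k}}\varphi$ vanish. Integrating, one obtains $\varphi(\bolds,{\boldsymbol t},\bar{\boldsymbol t})=A(\bolds,{\boldsymbol t})+B(\bolds,\bar{\boldsymbol t})$, and exponentiation yields the desired decomposition
\[
    f(\bolds,{\boldsymbol t},\bar{\boldsymbol t})
    =F(\bolds,{\boldsymbol t})\,\bar F(\bolds,\bar{\boldsymbol t}),
    \qquad F=e^A,\ \bar F=e^B.
\]
The only mildly nontrivial point is justifying the inductive extraction of each $\der_{\bar t_{\beta,k}}\varphi$ from the Schur-polynomial expansion; everything else is a direct reduction to \lemref{lemma:g=g} carried out separately in each family of times.
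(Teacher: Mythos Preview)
Your proof is correct and follows essentially the same route as the paper: take logarithms, use the Schur-polynomial expansion of the Miwa shifts to reduce to vanishing of all mixed partials $\partial_{t_{\alpha,n}}\partial_{\bar t_{\beta,k}}\log f$, and then integrate. The only organizational difference is that the paper expands both shifts simultaneously and runs a nested induction on $(k,l)$, whereas you invoke \lemref{lemma:g=g} as a black box to handle the ${\boldsymbol t}$-side first and then do a single induction on the $\bar{\boldsymbol t}$-side; this is a mild streamlining rather than a new idea. (One small slip: $p_k(x_1,\dots,x_k)=x_k+(\text{lower})$, not $\tfrac{1}{k}x_k+(\text{lower})$; the factor $\tfrac{1}{k}$ appears only after substituting $x_j=-\tilde\partial_j=-\tfrac{1}{j}\partial_{\bar t_{\beta,j}}$, which is how you use it anyway.)
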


\begin{proof}
Since the following argument does not depend on the number of
components, we take $N=1$ in this proof for simplicity of the notation.
We also omit the argument $\bolds $ of the functions since it is the same
for all of them. 

Taking logarithm of (\ref{n31}),
we have:
\begin{multline}
    \log f({\boldsymbol t} -[a^{-1}], \bar {\boldsymbol t}               ) +
    \log f({\boldsymbol t}                 
    ,\bar {\boldsymbol t}-[b^{-1}])
\\
    -
    \log f({\boldsymbol t},
    \bar {\boldsymbol t}               ) -
    \log f({\boldsymbol t} -
    [a^{-1}],\bar {\boldsymbol t}-[b^{-1}]) =0.
\label{log(ff/ff)=0}
\end{multline}
Let us expand the different terms in this equation in a series.
For example, the first term in this equation is expressed as
\[
    \log f({\boldsymbol t} -[a^{-1}],\bar {\boldsymbol t})
    =
    e^{-\sum_{i=1}^\infty a^{-i}\tilde\partial_i}
     \log f({\boldsymbol t} ,\bar {\boldsymbol t})
    =
    \sum_{k=0}^\infty a^{-k} p_k(-\tilde\partial_{\boldsymbol t} )
    \log f({\boldsymbol t} ,\bar {\boldsymbol t}),
\]
where the polynomials $p_k({\bf x})$ (the Schur polynomials)
and the symbol $\tilde\der_{\boldsymbol t} $ are
defined in the proof of Lemma \ref{lemma:g=g}.
Hence, \eqref{log(ff/ff)=0} is expanded as
\[
    \sum_{k=1}^\infty \sum_{l=1}^\infty a^{-k} b^{-l}
    p_k(-\tilde\der_{\boldsymbol t} )\, p_l(-\tilde
    \der_{\bar {\boldsymbol t}})
    \log f({\boldsymbol t} , \bar {\boldsymbol t}) = 0, 
\]
which means that
\begin{equation}
    p_k(-\tilde\der_{\boldsymbol t} )\, p_l(-\tilde
    \der_{\bar {\boldsymbol t}})
    \log f({\boldsymbol t} , \bar {\boldsymbol t}) = 0
\label{p(d)p(d)logf=0}
\end{equation}
for all $k\geq1$, $l\geq1$. 

From this condition we will show that
\begin{equation}
    \frac{\der}{\der t_k}\frac{\der}{\der\bar t_l}\log f = 0
\label{ddlogf=0}
\end{equation}
for any $k\geq1$, $l\geq1$ by induction as follows.
The equation \eqref{p(d)p(d)logf=0} for $k=l=1$ is 
$\der_{t_1}\der_{\bar t_1}\log f = 0$, as $p_1({\bf x})=x_1$.
Assume that \eqref{ddlogf=0} is true for $k=1$, $l=1,\dotsc,l_0$. Note
that 
\[
    p_{l_0+1}({\bf x})
    =
    x_{l_0+1} + (\text{polynomial of }x_1,\dotsc,x_{l_0}).
\]
Hence, equation \eqref{p(d)p(d)logf=0} for $k=1$, $l=l_0+1$ has the
form 
\[
    \frac{\der}{\der t_1}
    \left(
     \frac{1}{l_0+1}\frac{\der}{\der \bar t_{l_0+1}}
     +
     \left(\text{polynomial of }
           \frac{\der}{\der \bar t_1}, \dotsc, 
           \frac{\der}{\der \bar t_{l_0}}
     \right)
    \right)
    \log f
    = 0.
\]
By the induction assumption only the first term in the left hand side
survives and gives $\der_{t_1}\der_{\bar t_{l_0+1}}\log f = 0$.
Thus we have \eqref{ddlogf=0} for $k=1$ and all $l$.
Fixing $l$ and applying the induction for $k$ in the same way, we can prove
\eqref{ddlogf=0} for any $k$.

Assume that $\log f({\boldsymbol t} , 
\bar {\boldsymbol t})$ is expanded into a power
series:
\[
    \log f({\boldsymbol t} , \bar {\boldsymbol t})
    =
    \sum_{{\bf n}, {\bf \bar n}\in\{0,1,2,\dotsc\}^\infty}
    c_{{\bf n}{\bf \bar n}} {\boldsymbol t} ^{\bf n} 
    \bar {\boldsymbol t}^{\bf \bar n}.
\]
Then equation \eqref{ddlogf=0} for $k$ and $l$ means
\[
    \sum_{{\bf n}, {\bf \bar n}\in\{0,1,2,\dotsc\}^\infty}
    n_k {\bar n}_l
    c_{{\bf n}{\bf \bar n}} {\boldsymbol t} ^{\bf n- e_k}
                            \bar {\boldsymbol t}^{\bf \bar n - e_l}
    =
    0,
\]
where ${\bf e_k}=(\delta_{nk})_n$. Hence $c_{\bf n \bar n}=0$ unless
either $n_k$ or $\bar n_l$ is zero. 
Therefore, $c_{\bf n \bar n}=0$ unless ${\bf n=0}$ or ${\bf \bar n=0}$:
\[
 \begin{split}
    \log f({\boldsymbol t} , \bar {\boldsymbol t})
    &=
    c_{\bf 0 0} +
    \sum_{{\bf n}\in\{0,1,2,\dotsc\}^\infty}
    c_{{\bf n  0}} {\boldsymbol t} ^{\bf n}
    +
    \sum_{{\bf \bar n}\in\{0,1,2,\dotsc\}^\infty}
    c_{{\bf 0 \bar n}} \bar {\boldsymbol t}^{\bf \bar n}
\\
    &=
    \log f({\boldsymbol t} , {\bf 0}) + 
    \log f({\bf 0}, \bar {\boldsymbol t}) 
    - \log f({\bf 0}, {\bf 0}),
 \end{split}
\]
 so the factorization of $f$ follows by
 setting, for example,
 $F(\boldt)=f(\boldt,\boldzero)/f(\boldzero,\boldzero)$ and
 $\bar F(\boldtbar)=f(\boldzero,\boldtbar)$. 
\end{proof}

\noindent
Recall that we have 
$g_{\beta}(\bolds , {\boldsymbol t} , \bar {\boldsymbol t} )=
g_{\beta}(\bolds , \bar {\boldsymbol t} )$. 
According to \lemref{lemma:ff/ff=1}, we can write
$$
\frac{F(\bolds +[1]_{\beta}, \boldt )}{F(\bolds , \boldt )}=
\frac{\bar F(\bolds , \bar \boldt )g_{\beta}(\bolds , \bar {\boldsymbol t})}{\bar F(\bolds +[1]_{\beta}, 
\bar \boldt )}
$$
where $F$ is the function 
introduced by \lemref{lemma:ff/ff=1}.
Since the right-hand side does not depend on $\boldt$, 
the same must be true for the left-hand side.
Let us denote the function in the left-hand side by 
$h_\beta(\bolds)$. Then we obtain a
recursion relation 
$
    F(\bolds+[1]_\beta,\boldt) = h_\beta(\bolds) F(\bolds,\boldt)
$.
Therefore, $F(\bolds,\boldt)$ can be expressed as a product of several
factors of the form $h_\beta(\bolds')$ (they depend on $\bolds$) and
$F(\boldzero,\boldt)$ for any $\bolds$.
Thus
we conclude that
the function $F$ factorizes into a product of a function of $\bolds$
and a function of $\boldt$.
Since the function of $\bolds$ can be included into $\bar F(\bolds ,
\bar \boldt )$, we are free to consider $F$ as a function 
only of $\boldt$. Therefore,
$f(\bolds , {\boldsymbol t} , \bar {\boldsymbol t} )$ 
can be represented as a product of a function 
of ${\boldsymbol t} $ and a function of $\bolds , \bar {\boldsymbol t} $:
$f(\bolds , {\boldsymbol t} , \bar {\boldsymbol t} )=F({\boldsymbol t} )\bar F(\bolds , 
\bar {\boldsymbol t} )$. 
Thus the functions $\tau$ and $\bar \tau$ are connected
by the relation
$$
\bar \tau (\bolds , {\boldsymbol t} , \bar {\boldsymbol t} )=
F({\boldsymbol t} )\bar F(\bolds , \bar {\boldsymbol t} )
\tau (\bolds , {\boldsymbol t} , \bar {\boldsymbol t} ).
$$
However, the possible 
factors $F({\boldsymbol t} )\bar F(\bolds , \bar {\boldsymbol t} )$
just reflect the freedom in the choice of the functions 
$\tau$ and $\bar \tau$ mentioned above. That is why we can put
$F({\boldsymbol t} )=\bar F(\bolds , \bar {\boldsymbol t} )=1$ without loss of
generality. 
Then from (\ref{n1}) we see that 
the relation between the tau-functions looks as follows:
\beq\label{n4}
\bar \tau_{\alpha \beta}(\bolds , {\boldsymbol t} , \bar {\boldsymbol t} )=
(-1)^{\delta_{\alpha \beta}-1}
\tau_{\alpha \beta}(\bolds +[1]_{\beta \alpha}, {\boldsymbol t} , \bar {\boldsymbol t} ).
\eeq
Therefore, we have:
\beq\label{n5}
\begin{array}{l}
\displaystyle{
\bar w_{\alpha \beta}(\bolds , {\boldsymbol t} , 
\bar {\boldsymbol t} ;z^{-1})=(-1)^{\delta_{\alpha \beta}-1}\,
\frac{\tau_{\alpha \beta}(\bolds +[1]_{\beta}, {\boldsymbol t} , \bar {\boldsymbol t} -
[z^{-1}]_{\beta})}{\tau (\bolds , {\boldsymbol t} , \bar {\boldsymbol t} )},}
\\ \\
\displaystyle{
\bar w^*_{\alpha \beta}(\bolds , {\boldsymbol t} , 
\bar {\boldsymbol t} ;z^{-1})=
\frac{\tau_{\alpha \beta}(\bolds -[1]_{\alpha}, {\boldsymbol t} , \bar {\boldsymbol t} +
[z^{-1}]_{\alpha})}{\tau (\bolds , {\boldsymbol t} , 
\bar {\boldsymbol t} )},}
\end{array}
\eeq
which agrees with (\ref{e2}), (\ref{e3}). 
This concludes the proof of existence
of the tau-function for the multi-component Toda hierarchy.

\begin{theorem}
\label{theorem:tautau}
The tau-function of the multi-component Toda lattice hierarchy
satisfies the matrix bilinear equation
\beq\label{n6}
\begin{array}{l}
\displaystyle{
\sum_{\gamma =1}^N (-1)^{\delta_{\beta \gamma}}\oint_{C_{\infty}}
z^{s_{\gamma}-s'_{\gamma}+\delta_{\alpha \gamma}+\delta_{\beta \gamma}
-2}e^{\xi ({\boldsymbol t} _{\gamma}-{\boldsymbol t} '_{\gamma},z)}
\tau_{\alpha \gamma}(\bolds , {\boldsymbol t} -[z^{-1}]_{\gamma}, \bar {\boldsymbol t} )
\tau_{\gamma \beta}(\bolds ', {\boldsymbol t} '+[z^{-1}]_{\gamma}, \bar {\boldsymbol t} ')
\, dz}
\\ \\
\displaystyle{
=\sum_{\gamma =1}^N (-1)^{\delta_{\alpha \gamma}}\oint_{C_{\infty}}
z^{s_{\gamma}'-s_{\gamma}-2}e^{\xi (\bar 
{\boldsymbol t} _{\gamma}-\bar {\boldsymbol t} '_{\gamma},z)}}
\\ \\
\displaystyle{\phantom{aaaaaaaaaaaaaaaa}
\times \tau_{\alpha \gamma}(\bolds +[1]_{\gamma}, 
{\boldsymbol t} , \bar {\boldsymbol t} -[z^{-1}]_{\gamma})
\tau_{\gamma \beta}(\bolds '-[1]_{\gamma}, 
{\boldsymbol t} ', \bar {\boldsymbol t} '+[z^{-1}]_{\gamma})
\, dz}
\end{array}
\eeq
valid for all $\bolds , {\boldsymbol t} , \bar {\boldsymbol t} , 
\bolds ', {\boldsymbol t} ', \bar {\boldsymbol t} '$. 
\end{theorem}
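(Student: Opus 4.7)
The proof is essentially a direct substitution, so my plan is as follows. I would start from the bilinear identity for wave functions in the form (\ref{e1}), which was obtained from (\ref{bil-res-id:mat}) by the change of variable $z\mapsto z^{-1}$ on the right-hand side. Written component-wise, it reads
\[
\sum_{\gamma=1}^N \oint_{C_\infty}\!\! \Psi_{\alpha\gamma}(\bolds,{\boldsymbol t},\bar{\boldsymbol t};z)\, \Psi^*_{\gamma\beta}(\bolds',{\boldsymbol t}',\bar{\boldsymbol t}';z)\,dz
=\sum_{\gamma=1}^N \oint_{C_\infty}\!\! \bar\Psi_{\alpha\gamma}(\bolds,{\boldsymbol t},\bar{\boldsymbol t};z^{-1})\, \bar\Psi^*_{\gamma\beta}(\bolds',{\boldsymbol t}',\bar{\boldsymbol t}';z^{-1})\, z^{-2}\,dz.
\]
Both sides of this identity have been proved to hold by Proposition~\ref{predl:bil-res}, and Theorem~\ref{theorem:tau-function} expresses all four wave functions in terms of the tau-function matrix as in (\ref{e2}). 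The plan is simply to plug (\ref{e2}) into this identity and read off (\ref{n6}).

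Concretely, substituting the tau-function expressions for $\Psi_{\alpha\gamma}$ and $\Psi^*_{\gamma\beta}$ into the left-hand side, I collect the exponential $e^{\xi({\boldsymbol t}_\gamma-{\boldsymbol t}'_\gamma,z)}$, the power $z^{s_\gamma-s'_\gamma+\delta_{\alpha\gamma}+\delta_{\gamma\beta}-2}$, the sign $(-1)^{\delta_{\gamma\beta}-1}$, and the product $\tau_{\alpha\gamma}(\bolds,{\boldsymbol t}-[z^{-1}]_\gamma,\bar{\boldsymbol t})\,\tau_{\gamma\beta}(\bolds',{\boldsymbol t}'+[z^{-1}]_\gamma,\bar{\boldsymbol t}')$, divided by the common denominator $\tau(\bolds,{\boldsymbol t},\bar{\boldsymbol t})\,\tau(\bolds',{\boldsymbol t}',\bar{\boldsymbol t}')$. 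The only sign bookkeeping to note is $(-1)^{\delta_{\gamma\beta}-1}=-(-1)^{\delta_{\beta\gamma}}$. Doing the same substitution on the right-hand side for $\bar\Psi_{\alpha\gamma}(\cdot;z^{-1})$ and $\bar\Psi^*_{\gamma\beta}(\cdot;z^{-1})$ from (\ref{e2}) and combining with the extra factor $z^{-2}$, I obtain the exponential $e^{\xi(\bar{\boldsymbol t}_\gamma-\bar{\boldsymbol t}'_\gamma,z)}$, the power $z^{s'_\gamma-s_\gamma-2}$, the sign $(-1)^{\delta_{\alpha\gamma}-1}=-(-1)^{\delta_{\alpha\gamma}}$, and the product $\tau_{\alpha\gamma}(\bolds+[1]_\gamma,{\boldsymbol t},\bar{\boldsymbol t}-[z^{-1}]_\gamma)\,\tau_{\gamma\beta}(\bolds'-[1]_\gamma,{\boldsymbol t}',\bar{\boldsymbol t}'+[z^{-1}]_\gamma)$, over the same denominator.

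Multiplying through by $-\tau(\bolds,{\boldsymbol t},\bar{\boldsymbol t})\,\tau(\bolds',{\boldsymbol t}',\bar{\boldsymbol t}')$ (which is a nonzero function and therefore cancels cleanly), the two overall minus signs absorb into $(-1)^{\delta_{\beta\gamma}}$ on the left and $(-1)^{\delta_{\alpha\gamma}}$ on the right, yielding precisely equation (\ref{n6}). Since the bilinear identity holds for all $\bolds,\bolds',{\boldsymbol t},{\boldsymbol t}',\bar{\boldsymbol t},\bar{\boldsymbol t}'$, so does the resulting identity (\ref{n6}).

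There is really no serious obstacle here: the content of the theorem is already contained in Proposition~\ref{predl:bil-res} together with the tau-function representation of Theorem~\ref{theorem:tau-function}. The only point that requires care is the sign tracking from the adjoint wave functions and the handling of the $z\mapsto z^{-1}$ change of variable that converts the $C_0$ contour into a $C_\infty$ contour together with the compensating factor $z^{-2}$; once those are done correctly the identity (\ref{n6}) drops out term by term in $\gamma$, and no further analytic input is needed.
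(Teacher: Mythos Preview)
Your proposal is correct and follows exactly the same approach as the paper: the paper's proof consists of the single sentence ``This equation is obtained by substitution of the relations (\ref{e2}) into the bilinear identity (\ref{e1}) for the wave functions,'' and you have simply spelled out the sign and power bookkeeping that this substitution entails.
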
 

\begin{proof}
This equation is obtained by substitution of the relations (\ref{e2})
into the bilinear identity (\ref{e1}) for the wave functions. 
\end{proof}

\noindent
The simplest solution of equation (\ref{n6}) is
$\displaystyle{\tau_{\alpha \beta}(\bolds , \boldt , \bar \boldt )=
\delta_{\alpha \beta}\exp \Bigl (\sum_{k\geq 1}\sum_{\gamma =1}^N 
k t_{\gamma ,k}\bar t_{\gamma ,k}\Bigr )}$.

\subsection{Tau-function as a universal dependent variable}
\label{subsection:dependent}

The tau-function plays the role of the universal dependent variable
of the hierarchy meaning that
the coefficients of the Lax operators and of the operators
$\boldU_{\alpha}$, $\boldUbar_{\alpha}$, 
$\boldP_{\alpha}$, $\boldPbar_{\alpha}$ can be expressed through
the tau-function. Indeed, the coefficients of the wave functions
(and of the wave operators)
can be expressed through the tau-function by expanding the formulae 
(\ref{e2}) in inverse powers of $z$, then the coefficients of all
the operators that participate in the Lax formalism are obtained 
after ``dressing'' with the help of the wave operators.

In particular, we have from (\ref{e2}):
\beq\label{n7}
(\bar w_0(\bolds , \boldt , \boldtbar ))_{\alpha \beta}=
(-1)^{\delta_{\alpha \beta}-1}
\frac{\tau_{\alpha \beta}
(\bolds +[1]_{\beta}, \boldt , \boldtbar )}{\tau (\bolds ,
\boldt , \boldtbar )}.
\eeq
It can be also deduced from (\ref{e2}) that the inverse matrix is
expressed through the tau-function as follows:
\beq\label{n7a}
(\bar w_0^{-1}(\bolds , \boldt , \boldtbar ))_{\alpha \beta}=
\frac{\tau_{\alpha \beta}
(\bolds +\boldone -[1]_{\alpha}, \boldt , 
\boldtbar )}{\tau (\bolds +\boldone , \boldt , \boldtbar )}.
\eeq
To see this, let us choose $\bolds '=\bolds +\boldone$, $\boldt '=\boldt$,
$\boldtbar '=\boldtbar$ in (\ref{n6}), then it becomes the identity
$$
\sum_{\gamma =1}^N (-1)^{\delta_{\alpha \gamma}-1}
\tau_{\alpha \gamma}(\bolds +[1]_{\gamma}, \boldt , \boldtbar )
\tau_{\gamma \beta}(\bolds +\boldone -[1]_{\gamma}, 
\boldt , \boldtbar)=
\delta_{\alpha \beta}\tau (\bolds , \boldt , \boldtbar )
\tau (\bolds +\boldone , \boldt , \boldtbar )
$$
which means that for solutions of the Toda lattice hierarchy 
the right-hand side of (\ref{n7a}) is indeed 
the inverse of (\ref{n7}).
Therefore, the leading coefficient $\bar b_0(\bolds )$ 
of $\boldLbar (\bolds )$ is expressed
through the tau-function by the
formula
\beq\label{n10}
(\bar b_0(\bolds, \boldt , \boldtbar  ))_{\alpha \beta}=
\sum_{\gamma =1}^N (-1)^{\delta_{\alpha \gamma}-1}\, 
\frac{\tau_{\alpha \gamma}(\bolds +[1]_{\gamma}, \boldt , \boldtbar )
\tau_{\gamma \beta}(\bolds -[1]_{\gamma}, \boldt , \boldtbar )}{\tau^2
(\bolds , \boldt , \boldtbar )}.
\eeq

From the ``dressing relation'' 
$\boldL (\bolds )=\hat \boldW (\bolds )e^{\p_s}
\hat \boldW^{-1}(\bolds )$ we have
$$
\boldL (\bolds )=1_N e^{\p_s}+w_1(\bolds )+w^*_1(\bolds +\boldone )+
\ldots ,
$$
whence the coefficient $b_1(\bolds )$ is given by
$$
b_1(\bolds )=w_1(\bolds )+w^*_1(\bolds +\boldone )=
w_1(\bolds )-w_1(\bolds +\boldone )
$$
since from $\hat \boldW (\bolds )\hat 
\boldW^{-1}(\bolds )=1_N$ it follows that
$w^*_1(\bolds )=-w_1(\bolds )$. The coefficient $w_1(\bolds )$
can be found by expanding the first equation in 
(\ref{e2}) in inverse powers of $z$
up to $z^{-1}$:
\beq\label{n8}
(w_1(\bolds , \boldt , \boldtbar ))_{\alpha \beta}=
\frac{\tau_{\alpha \beta}
(\bolds , \boldt , \boldtbar )}{\tau
(\bolds , \boldt , \boldtbar )}-
\delta_{\alpha \beta}\Bigl (\p_{t_{\alpha , 1}}\log \tau
(\bolds , \boldt , \boldtbar ) \, +1\Bigr ).
\eeq
For $b_1(\bolds )$ we then obtain:
\beq\label{n9}
(b_1(\bolds , \boldt , \boldtbar))_{\alpha \beta}=
\delta_{\alpha \beta}\p_{t_{\alpha , 1}}\log \frac{\tau
(\bolds +\boldone , \boldt , \boldtbar)}{\tau
(\bolds, \boldt , \boldtbar)}+
\frac{\tau_{\alpha \beta}
(\bolds , \boldt , \boldtbar )}{\tau
(\bolds , \boldt , \boldtbar )}-
\frac{\tau_{\alpha \beta}
(\bolds +\boldone , \boldt , \boldtbar )}{\tau
(\bolds +\boldone , \boldt , \boldtbar )}.
\eeq
In the diagonal elements the last two terms cancel 
and we are left with the expression similar to the one 
known in the one-component case.

Equations (\ref{n10a}),
(\ref{n11}) from Proposition \ref{proposition:alternative} 
provide alternative expressions
for the $\bar b_0(\bolds )$ and $b_1 (\bolds )$ which are often
more convenient than the ones obtained above.
It is instructive to give here another proof of
Proposition \ref{proposition:alternative} which is based
on the bilinear equation
(\ref{n6}).

To show (\ref{n10a}), we differentiate both sides
of (\ref{n6}) 
with respect to $\bar t_1$ (i.e., apply the differential operator
$\displaystyle{\sum_{\mu =1}^N\p_{ \bar t_{\mu ,1}}}$) and put
$\bolds '=\bolds$,
$\boldt '=\boldt$, $\bar \boldt '=\bar \boldt$ after that.
In this way we get the relations
$$
\sum_{\gamma}(-1)^{\delta_{\alpha \gamma}-1}
\tau_{\alpha \gamma}(\bolds +[1]_{\gamma})
\tau_{\gamma \beta }(\bolds -[1]_{\gamma})=
\p_{\bar t_1}\tau_{\alpha \beta}(\bolds )
\tau (\bolds )-\p_{\bar t_1}\tau (\bolds )
\tau_{\alpha \beta}(\bolds )
$$
for $\alpha \neq \beta$ and
$$
\sum_{\gamma}(-1)^{\delta_{\alpha \gamma}-1}
\tau_{\alpha \gamma}(\bolds +[1]_{\gamma})
\tau_{\gamma \alpha }(\bolds -[1]_{\gamma})=
\p_{\bar t_1}\tau (\bolds )\, \p_{t_{\alpha ,1}}
\tau (\bolds )-\p_{\bar t_1}\p_{t_{\alpha ,1}}\tau (\bolds )\, 
\tau (\bolds )
$$
for $\beta =\alpha$ (we do not indicate the
dependence on $\boldt , \bar \boldt$ explicitly). 
Together with (\ref{n7}), (\ref{n7a}) and
(\ref{n8}) they 
are equivalent to (\ref{n10a}).

To show (\ref{n11}), we express its right-hand side through
the tau-function:
\beq\label{n12}
\begin{array}{c}
\displaystyle{
\sum_{\gamma}(\p_{t_1}\bar w_0(\bolds ))_{\alpha \gamma}
(\bar w_0^{-1}(\bolds ))_{\gamma \beta}=
\sum_{\gamma}(-1)^{\delta_{\alpha \gamma}-1}
\p_{t_1}\left (
\frac{\tau_{\alpha \gamma}(\bolds +[1]_{\gamma})}{\tau (\bolds )}
\right )
\frac{\tau_{\gamma \beta}(\bolds +\boldone -
[1]_{\gamma})}{\tau (\bolds +\boldone )}}
\\ \\
\displaystyle{
=\sum_{\gamma}(-1)^{\delta_{\alpha \gamma}-1}
\frac{\p_{t_1}\tau_{\alpha \gamma}(\bolds +[1]_{\gamma})\,
\tau_{\gamma \beta}(\bolds +\boldone -
[1]_{\gamma})}{\tau (\bolds )\tau (\bolds +\boldone )}-
\delta_{\alpha \beta}\, \frac{\p_{t_1}\tau (\bolds )}{\tau (\bolds )}.}
\end{array}
\eeq
To transform the sum,
we differentiate both sides
of (\ref{n6}) 
with respect to $t_1$ (i.e., apply the differential operator
$\displaystyle{\sum_{\mu =1}^N\p_{t_{\mu ,1}}}$) and put
$\bolds '=\bolds+\boldone$,
$\boldt '=\boldt$, $\bar \boldt '=\bar \boldt$ after that.
We get the relations
$$
\sum_{\gamma}(-1)^{\delta_{\alpha \gamma}-1}
\p_{t_1}\tau_{\alpha \gamma}(\bolds +[1]_{\gamma})\,
\tau_{\gamma \beta }(\bolds +\boldone -[1]_{\gamma})=
\tau_{\alpha \beta}(\bolds )\tau (\bolds +\boldone )
-\tau (\bolds )
\tau_{\alpha \beta}(\bolds +\boldone )
$$
for $\alpha \neq \beta$ and
$$
\begin{array}{l}
\displaystyle{
\sum_{\gamma}(-1)^{\delta_{\alpha \gamma}-1}
\tau_{\alpha \gamma}(\bolds +[1]_{\gamma})
\tau_{\gamma \alpha }(\bolds +\boldone -[1]_{\gamma})}
\\ \\
\phantom{aaaaaaaaaaaaa}=\, \displaystyle{
\p_{\bar t_{\alpha ,1}}\tau (\bolds +\boldone )\, 
\tau (\bolds )-
\p_{\bar t_{\alpha ,1}}\tau (\bolds )\, 
\tau (\bolds +\boldone )+
\p_{t_1}\tau (\bolds )\, \tau (\bolds +\boldone )}
\end{array}
$$
for $\beta =\alpha$. 
Together with (\ref{n9}) they mean that the right-hand side of
(\ref{n12}) is indeed equal to $b_1(\bolds )$.

Other coefficients of the operators which 
participate in the Lax formalism can be found in a similar way.
However, for higher coefficients
the explicit formulae become rather bulky.

\subsection{Bilinear equations of the Hirota-Miwa type}
\label{subsection:Hirota-Miwa}

In this subsection we derive some bilinear equations for the tau-function
which follow from the generating bilinear equation (\ref{n6}).
Choosing $\bolds -\bolds '$, ${\boldsymbol t} -{\boldsymbol t} '$, 
$\bar {\boldsymbol t} -\bar {\boldsymbol t} '$ in some special ways, one can obtain
corollaries of (\ref{n6}) which are known
as equations of the Hirota-Miwa type. 

First of all we note that
if $\bolds '\leq \bolds $, which means $s'_\gamma\leq s_\gamma$ for all
$\gamma$, and $\bar {\boldsymbol t} '=\bar {\boldsymbol t} $, then
the right-hand side of (\ref{n6}) vanishes while the left-hand side 
becomes the integral bilinear equation for the tau-function 
of the multi-component modified KP hierarchy in the independent variables
${\boldsymbol t} $. In this way the latter is realized as a subhierarchy 
of the multi-component Toda lattice. Some bilinear equations
which are corollaries of the integral one are listed in \cite{Teo11,Z19}.

We proceed with the choice $\beta =\alpha$, $\bolds '=\bolds $,
${\boldsymbol t} -{\boldsymbol t} '=[a^{-1}]_{\alpha}$, 
$\bar {\boldsymbol t} -\bar {\boldsymbol t} '=[b^{-1}]_{\alpha}$. In this case
(\ref{n6}) takes the form
$$
\oint_{C_{\infty}}dz \frac{a}{a-z}\tau (\bolds , {\boldsymbol t} -
[z^{-1}]_{\alpha}, \bar {\boldsymbol t} )
\tau (\bolds , {\boldsymbol t} -[a^{-1}]_{\alpha}+
[z^{-1}]_{\alpha}, \bar {\boldsymbol t} -[b^{-1}]_{\alpha})
$$
$$
=
\oint_{C_{\infty}}dz z^{-2}\frac{b}{b-z}
\tau (\bolds +[1]_{\alpha}, {\boldsymbol t} , \bar {\boldsymbol t} -[z^{-1}]_{\alpha})
\tau (\bolds -[1]_{\alpha}, {\boldsymbol t} -[a^{-1}]_{\alpha},
\bar {\boldsymbol t} -[b^{-1}]_{\alpha}+[z^{-1}]_{\alpha}),
$$
where we write only non-vanishing terms in the sum over $\gamma$.
The integrals can be calculated by means of taking residues 
at the poles outside the contour $C_{\infty}$, including the non-zero
residue at infinity in the left-hand side. As a result,
after the shifts $\boldt \to \boldt +[a^{-1}]_{\alpha}$,
$\bar \boldt \to \bar \boldt +[b^{-1}]_{\alpha}$
we obtain the equation
\beq\label{hm1}
\begin{array}{c}
\tau (\bolds , {\boldsymbol t} , \bar {\boldsymbol t} +[b^{-1}]_{\alpha})
\tau (\bolds , {\boldsymbol t} +[a^{-1}]_{\alpha}, \bar {\boldsymbol t} )
-\tau (\bolds , {\boldsymbol t} , \bar {\boldsymbol t} )
\tau (\bolds , {\boldsymbol t} +[a^{-1}]_{\alpha}, \bar {\boldsymbol t} +[b^{-1}]_{\alpha})
\\ \\
=(ab)^{-1}
\tau (\bolds +[1]_{\alpha}, 
{\boldsymbol t} +[a^{-1}]_{\alpha}, \bar {\boldsymbol t} )
\tau (\bolds -[1]_{\alpha}, 
{\boldsymbol t} , \bar {\boldsymbol t} +[b^{-1}]_{\alpha}),
\end{array}
\eeq
which is the bilinear equation of the one-component Toda
lattice for each $\alpha$-th component. The other equations given
below mix different components.
If $\beta \neq \alpha$, we obtain in the same way:
\beq\label{hm2}
\begin{array}{c}
\tau_{\alpha \beta}
(\bolds , {\boldsymbol t} +[a^{-1}]_{\alpha}, \bar {\boldsymbol t} +[b^{-1}]_{\alpha})
\tau (\bolds , {\boldsymbol t} , \bar {\boldsymbol t} )-
\tau_{\alpha \beta}
(\bolds , {\boldsymbol t} +[a^{-1}]_{\alpha}, \bar {\boldsymbol t} )
\tau (\bolds , {\boldsymbol t} , \bar {\boldsymbol t} +[b^{-1}]_{\alpha})
\\ \\
=
b^{-1}\tau_{\alpha \beta}
(\bolds -[1]_{\alpha}, {\boldsymbol t} , \bar {\boldsymbol t} +[b^{-1}]_{\alpha})
\tau (\bolds +[1]_{\alpha}, {\boldsymbol t} +[a^{-1}]_{\alpha}, \bar {\boldsymbol t} )
\quad (\beta \neq \alpha ).
\end{array}
\eeq

Our next choice is
$\beta \neq \alpha$, $\bolds '=\bolds $,
${\boldsymbol t} -{\boldsymbol t} '=[a^{-1}]_{\alpha}$, 
$\bar {\boldsymbol t} -\bar {\boldsymbol t} '=[b^{-1}]_{\beta}$. In this case
the residue calculus yields:
\beq\label{hm3}
\begin{array}{c}
\tau_{\alpha \beta}
(\bolds , {\boldsymbol t} +[a^{-1}]_{\alpha}, \bar {\boldsymbol t} )
\tau (\bolds , {\boldsymbol t} , \bar {\boldsymbol t} +[b^{-1}]_{\beta})-
\tau_{\alpha \beta}
(\bolds , {\boldsymbol t} +[a^{-1}]_{\alpha}, \bar {\boldsymbol t} +[b^{-1}]_{\beta})
\tau (\bolds , {\boldsymbol t} , \bar {\boldsymbol t} )
\\ \\
=
b^{-1}\tau_{\alpha \beta}
(\bolds +[1]_{\beta}, {\boldsymbol t} +[a^{-1}]_{\alpha}, \bar {\boldsymbol t} )
\tau (\bolds -[1]_{\beta}, {\boldsymbol t} , \bar {\boldsymbol t} +[b^{-1}]_{\beta})
\quad (\beta \neq \alpha ).
\end{array}
\eeq

Next, we choose
$\beta \neq \alpha$, $\bolds '=\bolds +[1]_{\alpha}+[1]_{\beta}$,
${\boldsymbol t} -{\boldsymbol t} '=[a^{-1}]_{\alpha}$, 
$\bar {\boldsymbol t} -\bar {\boldsymbol t} '=[b^{-1}]_{\beta}$ and obtain the equation
\beq\label{hm4}
\begin{array}{l}
\tau_{\alpha \beta}
(\bolds +[1]_{\beta}, {\boldsymbol t} +[a^{-1}]_{\alpha}, \bar {\boldsymbol t} )
\tau (\bolds +[1]_{\alpha}, {\boldsymbol t} , \bar {\boldsymbol t} +[b^{-1}]_{\beta})
\\ \\
\phantom{aaaaaaaa}
-
\tau_{\alpha \beta}
(\bolds +[1]_{\beta}, {\boldsymbol t} , \bar {\boldsymbol t} )
\tau (\bolds +[1]_{\alpha}, {\boldsymbol t} +[a^{-1}]_{\alpha}, 
\bar {\boldsymbol t} +[b^{-1}]_{\beta})
\\ \\
\phantom{aaaaaaaaaaaaaa}
=
a^{-1}\tau_{\alpha \beta}
(\bolds +[1]_{\alpha}+[1]_{\beta}, {\boldsymbol t} +[a^{-1}]_{\alpha}, 
\bar {\boldsymbol t} )
\tau (\bolds , {\boldsymbol t} , \bar {\boldsymbol t} +[b^{-1}]_{\beta})
\quad (\beta \neq \alpha ).
\end{array}
\eeq
In a similar way, the choice 
$\beta \neq \alpha$, $\bolds '=\bolds +[1]_{\alpha}$,
${\boldsymbol t} -{\boldsymbol t} '=[a^{-1}]_{\alpha}$, 
$\bar {\boldsymbol t} -\bar {\boldsymbol t} '=[b^{-1}]_{\beta}$ leads to the equation
\beq\label{hm5}
\begin{array}{l}
a^{-1}
\tau_{\alpha \beta}(\bolds +[1]_{\alpha}, {\boldsymbol t} 
+[a^{-1}]_{\alpha}, \bar {\boldsymbol t} )
\tau (\bolds , {\boldsymbol t} , \bar {\boldsymbol t} +[b^{-1}]_{\beta})
\\ \\
\phantom{aaaaaaaaaaa}
-
\tau_{\alpha \beta}(\bolds , {\boldsymbol t} 
+[a^{-1}]_{\alpha}, \bar {\boldsymbol t} +[b^{-1}]_{\beta})
\tau (\bolds +[1]_{\alpha}, {\boldsymbol t} , \bar {\boldsymbol t} )
\\ \\
=b^{-1}
\tau_{\alpha \beta}(\bolds +[1]_{\beta}, {\boldsymbol t} 
+[a^{-1}]_{\alpha}, \bar {\boldsymbol t} )
\tau (\bolds +[1]_{\alpha}-[1]_{\beta}, 
{\boldsymbol t} , \bar {\boldsymbol t} +[b^{-1}]_{\beta})
\\ \\
\phantom{aaaaaaaaaaa}
-
\tau_{\alpha \beta} (\bolds , {\boldsymbol t} , \bar {\boldsymbol t} )
\tau (\bolds +[1]_{\alpha}, {\boldsymbol t} 
+[a^{-1}]_{\alpha}, \bar {\boldsymbol t} +[b^{-1}]_{\beta})
\quad (\beta \neq \alpha ).
\end{array}
\eeq

At last, we put $\beta \neq \alpha$, 
$\bolds '=\bolds +[1]_{\mu}$,
${\boldsymbol t}' ={\boldsymbol t}$, 
$\bar {\boldsymbol t}' =\bar {\boldsymbol t}$, 
$\mu \neq \alpha , \beta$.
The residue calculus yields:
\beq\label{hm6}
\begin{array}{c}
\tau_{\alpha \beta}(\bolds +[1]_{\mu}, {\boldsymbol t} , \bar {\boldsymbol t} )
\tau (\bolds , {\boldsymbol t} , \bar {\boldsymbol t} )-
\tau_{\alpha \beta}(\bolds , {\boldsymbol t} , \bar {\boldsymbol t} )
\tau (\bolds +[1]_{\mu}, {\boldsymbol t} , \bar {\boldsymbol t} )
\\ \\
=\tau_{\alpha \mu}(\bolds +[1]_{\mu}, {\boldsymbol t} , \bar {\boldsymbol t} )
\tau_{\mu \beta}(\bolds , {\boldsymbol t} , \bar {\boldsymbol t} ).
\end{array}
\eeq

\section{The multi-component Toda lattice from the universal hierarchy}
\label{sec:toda-from-universal-hierarchy}

\subsection{The universal hierarchy}

Let us start from the $M$-component KP hierarchy
\cite{DJKM81,KL93,TT07,Teo11}
in which the 
additional integrable flows are allowed to take arbitrary complex
values.
In \cite{KZ23} it is called the 
universal hierarchy.
The independent variables are $M$ infinite sets of (in general complex) 
``times''
$$
{\boldsymbol t} =\{{\boldsymbol t} _1, {\boldsymbol t} _2, \ldots , {\boldsymbol t} _M\}, \qquad
{\boldsymbol t} _{\alpha}=\{t_{\alpha , 1}, t_{\alpha , 2}, 
t_{\alpha , 3}, \ldots \, \},
\qquad \alpha = 1, \ldots , M
$$
and $M$ additional variables $r_1, \ldots , r_M$
such that 
\beq\label{s1}
\sum_{\alpha =1}^M r_{\alpha}=0.
\eeq
We denote by
${\boldsymbol r}$ the set $\{r_1, \ldots , r_M\}$
and use the already introduced notation 
\beq\label{s2}
{\boldsymbol r}+[1]_{\alpha} =\{r_1, \ldots , 
r_{\alpha}\! +\! 1, \ldots , r_M\}, 
\quad {\boldsymbol r} +[1]_{\alpha \beta}=
{\boldsymbol r} +[1]_{\alpha}-[1]_{\beta}.
\eeq
In general we treat $r_1, \ldots , r_M$ as complex variables,
as in \cite{KZ23}.
If they are restricted to be integers, the hierarchy coincides with the 
one considered in
\cite{DJKM81}--\cite{Teo11}.

In the bilinear formalism, 
the dependent variable is the tau-function 
$\tau ({\boldsymbol r}, {\boldsymbol t})$.
The universal hierarchy is the infinite set of bilinear equations
for the tau-function which are encoded in the basic bilinear
relation
\beq\label{s3}
\begin{array}{l}
\displaystyle{
\sum_{\gamma =1}^M \epsilon_{\alpha \gamma}({\boldsymbol r})
\epsilon^{-1}_{\beta \gamma}({\boldsymbol r}')
\oint_{C_{\infty}}\! dz \, 
z^{r_{\gamma}-r_{\gamma}'+\delta_{\alpha \gamma}+\delta_{\beta \gamma}-2}
e^{\xi ({\boldsymbol t} _{\gamma}-{\boldsymbol t} _{\gamma}', \, z)}}
\\ \\
\displaystyle{\phantom{aaaaaaaaaaaaaaaaa}
\times \tau \left ({\boldsymbol r}+[1]_{\alpha \gamma}, 
{\boldsymbol t} -[z^{-1}]_{\gamma}\right )
\tau \left ({\boldsymbol r}'+[1]_{\gamma \beta}, 
{\boldsymbol t} '+[z^{-1}]_{\gamma}\right )=0}
\end{array}
\eeq 
valid for any $\alpha$, $\beta$, ${\boldsymbol t}$, ${\boldsymbol t}'$
and ${\boldsymbol r}$, ${\boldsymbol r}'$
such that ${\boldsymbol r}-{\boldsymbol r}' \in \ZZ ^M$
(and subject to the constraint (\ref{s1})). 
In (\ref{s3}) 
\beq\label{s3a}
\epsilon_{\alpha \gamma}({\boldsymbol r})=\left \{
\begin{array}{cl} 
\displaystyle{\exp \, \Bigl (-i\pi \!\! \sum_{\alpha <\mu \leq \gamma}
\!\! r_{\mu}
\Bigr )}, 
&\quad \alpha < \gamma
\\ 
\hspace{-1.5cm}1, &\quad \alpha =\gamma
\\ 
\displaystyle{-\, \vphantom{\sum^{\alpha \leq }}
\exp \, \Bigl (i\pi \!\! \sum_{\gamma <\mu \leq \alpha}\!\! r_{\mu}
\Bigr )}, 
&\quad \alpha > \gamma 
\end{array}
\right.
\eeq
(see \cite{KZ23}).
The contour $C_{\infty}$ is a big circle around infinity.
It is easy to see that the equation (\ref{s3}) depends only on the
differences $r_{\alpha}-r'_{\alpha}$ which are integers. 
Different bilinear relations for the tau-function which follow from
(\ref{s3}) for special choices of 
${\boldsymbol r}-{\boldsymbol r}'$ and 
${\boldsymbol t} -{\boldsymbol t} '$
are given in \cite{Teo11}.

\subsection{Specification to the multi-component Toda lattice hierarchy}

Let us show that $M=2N$-component universal hierarchy contains 
the $N$-component Toda hierarchy \cite{UT84}. 

Let the set of $2N$ indices in the universal hierarchy be
$\{1, 2, \ldots , N, \bar 1, \bar 2, \ldots , \bar N\}$. We identify
the time variables ${\boldsymbol t} _{\alpha}$ of the Toda lattice 
with ${\boldsymbol t} _{\alpha}$ of the 
universal hierarchy and
$\bar {\boldsymbol t} _{\alpha}$ with ${\boldsymbol t} _{\bar \alpha}$ 
for $\alpha =1, \ldots , N$. Besides, we set 
$r_{\alpha}=-r_{\bar \alpha}=:s_{\alpha}$. 
In this section we consider the $s_{\alpha}'s$ as complex numbers
such that $s_{\alpha}-s_{\alpha}'$ are integers. 
By ${\boldsymbol t} $ we denote
the set of times ${\boldsymbol t} =
\{{\boldsymbol t} _1, {\boldsymbol t} _2, \ldots , {\boldsymbol t} _N,
\bar {\boldsymbol t} _1, 
\bar {\boldsymbol t} _2, \ldots , \bar {\boldsymbol t} _N\}$ and by
$\bolds $ we denote
the $2N$-component set $\bolds =\{s_1, s_2, \ldots , s_N, -s_1, -s_2,
\ldots , -s_N\}$. Consider the tau-function 
$\tau (\bolds +{\boldsymbol r}, {\boldsymbol t} )$ 
of the $2N$-component universal
hierarchy, where the variables ${\boldsymbol r}$ 
are subject to the constraint
(\ref{s1}). The tau-function of the $N$-component Toda hierarchy
is the $N\times N$ matrix with matrix elements
\beq\label{mt1}
\tau_{\alpha \gamma} (\bolds , {\boldsymbol t} , 
\bar {\boldsymbol t} )=
\tau (\bolds +[1]_{\alpha \gamma}, {\boldsymbol t} ), 
\quad \alpha , \gamma =
1, \ldots , N.
\eeq
It is easy to see that
\beq\label{mt2}
\begin{array}{l}
\tau (\bolds +[1]_{\alpha \bar \gamma}, {\boldsymbol t} )=
\tau_{\alpha \gamma} (\bolds +[1]_{\gamma}, {\boldsymbol t} , \bar {\boldsymbol t} ),
\\ \\
\tau (\bolds +[1]_{\bar \alpha \gamma}, {\boldsymbol t} )=
\tau_{\alpha \gamma} (\bolds -[1]_{\alpha}, {\boldsymbol t} , \bar {\boldsymbol t} ),
\\ \\
\tau (\bolds +[1]_{\bar \alpha \bar \gamma}, {\boldsymbol t} )=
\tau_{\alpha \gamma} (\bolds +[1]_{\gamma \alpha}, {\boldsymbol t} , \bar {\boldsymbol t} ).
\end{array}
\eeq

Taking into account (\ref{mt2}), we can represent
the bilinear equation (\ref{s3}) for $\alpha \in \{1, \ldots , N\}$,
$\beta \in \{\bar 1, \ldots , \bar N\}$ in the form
\beq\label{mt3}
\begin{array}{l}
\displaystyle{
\sum_{\gamma =1}^N\epsilon_{\alpha \gamma}(\bolds )
\epsilon^{-1}_{\bar \beta \gamma}(\bolds ')\oint_{C_{\infty}}
\! dz \, z^{s_{\gamma}-s_{\gamma}' +\delta_{\alpha \gamma}-2}
e^{\xi ({\boldsymbol t} _{\gamma}-{\boldsymbol t} _{\gamma}', z)}}
\\ \\
\phantom{aaaaaaaaaaaaaaaa}
\displaystyle{
\times \tau_{\alpha \gamma} (\bolds , 
{\boldsymbol t} -[z^{-1}]_{\gamma}, \bar {\boldsymbol t} )
\tau_{\gamma \beta}(\bolds '+[1]_{\beta}, 
{\boldsymbol t} '+[z^{-1}]_{\gamma}, \bar {\boldsymbol t} ')}
\\ \\
\displaystyle{
+\, \sum_{\gamma =1}^N\epsilon_{\alpha \bar \gamma}(\bolds )
\epsilon^{-1}_{\bar \beta \bar \gamma}(\bolds ')\oint_{C_{\infty}}
\! dz \, z^{s_{\gamma}'-s_{\gamma} +\delta_{\beta \gamma}-2}
e^{\xi (\bar {\boldsymbol t} _{\gamma}-\bar {\boldsymbol t} _{\gamma}', z)}}
\\ \\
\phantom{aaaaaaaaaaaa}
\displaystyle{
\times \tau_{\alpha \gamma} (\bolds +[1]_{\gamma}, 
{\boldsymbol t} , \bar {\boldsymbol t} -[z^{-1}]_{\gamma})
\tau_{\gamma \beta}(\bolds '+[1]_{\beta \gamma}, 
{\boldsymbol t} ', \bar {\boldsymbol t} '+[z^{-1}]_{\gamma})}=0.
\end{array}
\eeq
This equation holds for all ${\boldsymbol t} $, ${\boldsymbol t} '$, $\bar {\boldsymbol t} $,
$\bar {\boldsymbol t} '$, $\bolds $, $\bolds '$ such that 
$\bolds -\bolds '\in \ZZ^N$. 

Regarding the set
$\{1, 2, \ldots , N, \bar 1, \bar 2, \ldots , \bar N\}$ as the ordered 
set, it is not difficult to express the $\epsilon$-factors as 
functions of $\bolds $:
\beq\label{mt4}
\epsilon_{\alpha \gamma}(\bolds )=\left \{
\begin{array}{c} 
\displaystyle{\exp \Bigl (-i\pi \sum_{\alpha <\mu \leq \gamma}
s_{\mu}\Bigr ), \quad \alpha \leq \gamma ,}
\\ \\
\displaystyle{-\exp \Bigl (i\pi \sum_{\gamma <\mu \leq \alpha}
s_{\mu}\Bigr ), \quad \alpha > \gamma ,}
\end{array} \right.
\eeq
\beq\label{mt4a}
\epsilon_{\bar \alpha \bar \gamma}(\bolds )=\left \{
\begin{array}{c} 
\displaystyle{\exp \Bigl (i\pi \sum_{\alpha <\mu \leq \gamma}
s_{\mu}\Bigr ), \quad \alpha \leq \gamma ,}
\\ \\
\displaystyle{-\exp \Bigl (-i\pi \sum_{\gamma <\mu \leq \alpha}
s_{\mu}\Bigr ), \quad \alpha > \gamma ,}
\end{array} \right.
\eeq
\beq\label{mt4b}
\epsilon_{\alpha \bar \gamma}(\bolds )=\left \{
\begin{array}{c} 
\displaystyle{\exp \Bigl (-i\pi \sum_{\gamma <\mu \leq N}
s_{\mu}+i\pi \sum_{1\leq \mu \leq \alpha}s_{\mu}
\Bigr ), \quad \alpha \leq \gamma ,}
\\ \\
\displaystyle{\exp \Bigl (-i\pi \sum_{\alpha <\mu \leq N}
s_{\mu}+i\pi \sum_{1\leq \mu \leq \gamma}s_{\mu}
\Bigr ), \quad \alpha > \gamma ,}
\end{array} \right.
\eeq
\beq\label{mt4c}
\epsilon_{\bar \alpha \gamma}(\bolds )=\left \{
\begin{array}{c} 
\displaystyle{-\exp \Bigl (i\pi \sum_{\gamma <\mu \leq N}
s_{\mu}-i\pi \sum_{1\leq \mu \leq \alpha}s_{\mu}
\Bigr ), \quad \alpha \leq \gamma ,}
\\ \\
\displaystyle{-\exp \Bigl (i\pi \sum_{\alpha <\mu \leq N}
s_{\mu}-i\pi \sum_{1\leq \mu \leq \gamma}s_{\mu}
\Bigr ), \quad \alpha > \gamma .}
\end{array} \right.
\eeq

\begin{predl}
Equation (\ref{mt3})
is equivalent to the following equation:
\beq\label{mt3a}
\begin{array}{l}
\displaystyle{
\sum_{\gamma =1}^N\epsilon_{\alpha \gamma}
\epsilon_{0N}(\bolds )\epsilon_{\beta \gamma}(\bolds )
\epsilon^{-1}_{\beta \gamma}(\bolds ')
\oint_{C_{\infty}}
\!  z^{s_{\gamma}-s_{\gamma}' +\delta_{\alpha \gamma}-2}
e^{\xi ({\boldsymbol t} _{\gamma}-{\boldsymbol t} _{\gamma}', z)}}
\\ \\
\phantom{aaaaaaaaaaaaaaaa}
\displaystyle{
\times \tau_{\alpha \gamma} (\bolds , 
{\boldsymbol t} -[z^{-1}]_{\gamma}, \bar {\boldsymbol t} )
\tau_{\gamma \beta}(\bolds '+[1]_{\beta}, 
{\boldsymbol t} '+[z^{-1}]_{\gamma}, \bar {\boldsymbol t} ')\, dz}
\\ \\
\displaystyle{
=\, \sum_{\gamma =1}^N\epsilon_{\beta \gamma}
\epsilon_{0N}(\bolds ')
\epsilon_{\beta \gamma}(\bolds )
\epsilon^{-1}_{\beta \gamma}(\bolds ')
\oint_{C_{\infty}}
\! z^{s_{\gamma}'-s_{\gamma} +\delta_{\beta \gamma}-2}
e^{\xi (\bar {\boldsymbol t} _{\gamma}-\bar {\boldsymbol t} _{\gamma}', z)}}
\\ \\
\phantom{aaaaaaaaaaaa}
\displaystyle{
\times \tau_{\alpha \gamma} (\bolds +[1]_{\gamma}, 
{\boldsymbol t} , \bar {\boldsymbol t} -[z^{-1}]_{\gamma})
\tau_{\gamma \beta}(\bolds '+[1]_{\beta \gamma}, 
{\boldsymbol t} ', \bar {\boldsymbol t} '+[z^{-1}]_{\gamma})\, dz},
\end{array}
\eeq
where $\epsilon_{\alpha \gamma}$ are sign factors such that
$\epsilon_{\alpha \gamma}=1$ for $\alpha \leq \gamma$ and
$\epsilon_{\alpha \gamma}=-1$ for $\alpha >\gamma$ and
$$
\epsilon_{0N}(\bolds )=\exp \Bigl (-i\pi 
\sum_{1\leq \mu \leq N}s_{\mu}\Bigr ).
$$
\end{predl}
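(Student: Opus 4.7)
The plan is to show directly, by reading off the explicit formulas \eqref{mt4}--\eqref{mt4c}, that the coefficient in front of each $\gamma$-term on the left-hand side of \eqref{mt3} differs from the corresponding coefficient in \eqref{mt3a} only by an overall factor depending on $\alpha,\beta,\bolds,\bolds'$ (but not on $\gamma$), and that the same overall factor appears on both sides. Dividing the whole equation by that factor then converts \eqref{mt3} into \eqref{mt3a}.

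First, I would compute the ratio
\[
R^{(1)}_\gamma(\bolds,\bolds')
:=
\frac{\epsilon_{\alpha\gamma}(\bolds)\,\epsilon^{-1}_{\bar\beta\gamma}(\bolds')}
     {\epsilon_{\alpha\gamma}\,\epsilon_{0N}(\bolds)\,\epsilon_{\beta\gamma}(\bolds)\,\epsilon^{-1}_{\beta\gamma}(\bolds')}
\]
occurring in the first sum. Using \eqref{mt4} for $\epsilon_{\alpha\gamma}(\bolds)$ and $\epsilon_{\beta\gamma}(\bolds)$, and \eqref{mt4c} for $\epsilon_{\bar\beta\gamma}(\bolds')$, one sees that the $\gamma$-dependent telescoping sums $\sum_{\alpha<\mu\le\gamma} s_\mu$, $\sum_{\beta<\mu\le\gamma}s_\mu$, $\sum_{\gamma<\mu\le N}s_\mu$, etc., combine so that the $s_\mu$'s with $\mu$ between $\alpha$ (or $\beta$) and $\gamma$ cancel out. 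What remains is a sign times an exponential of a linear combination of $s_\mu'$s in fixed ranges determined only by $\alpha,\beta$, together with $\epsilon_{0N}(\bolds)^{-1}=e^{i\pi\sum_\mu s_\mu}$. I expect this to give an expression in $\alpha,\beta,\bolds,\bolds'$ alone. The sign $\epsilon_{\alpha\gamma}$ (which is $+1$ for $\alpha\le\gamma$ and $-1$ for $\alpha>\gamma$) is introduced precisely to absorb the sign that appears in \eqref{mt4} when $\alpha>\gamma$ versus $\alpha\le\gamma$, so the resulting ratio is continuous in $\gamma$ across the value $\gamma=\alpha$.

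Next I would do the analogous calculation for the second sum, i.e.\ the ratio
\[
R^{(2)}_\gamma(\bolds,\bolds')
:=
\frac{\epsilon_{\alpha\bar\gamma}(\bolds)\,\epsilon^{-1}_{\bar\beta\bar\gamma}(\bolds')}
     {\epsilon_{\beta\gamma}\,\epsilon_{0N}(\bolds')\,\epsilon_{\beta\gamma}(\bolds)\,\epsilon^{-1}_{\beta\gamma}(\bolds')},
\]
now using \eqref{mt4b} for $\epsilon_{\alpha\bar\gamma}(\bolds)$ and \eqref{mt4a} for $\epsilon_{\bar\beta\bar\gamma}(\bolds')$. The sign $\epsilon_{\beta\gamma}$ plays the same role here: it takes care of the sign flip across $\gamma=\beta$ built into \eqref{mt4a}. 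After cancellation of the $\gamma$-dependent telescoping parts, this ratio should again simplify to an expression in $\alpha,\beta,\bolds,\bolds'$ only, and the factor $\epsilon_{0N}(\bolds')^{-1}$ in the denominator is what matches the $\sum_\mu s'_\mu$ contribution coming from $\epsilon_{\alpha\bar\gamma}(\bolds)$ and $\epsilon_{\bar\beta\bar\gamma}(\bolds')$ via \eqref{mt4b}--\eqref{mt4a}.

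Finally I would verify that $R^{(1)}_\gamma=R^{(2)}_\gamma$; this is the step that makes \eqref{mt3} equivalent to \eqref{mt3a}, because then multiplying both sides of \eqref{mt3a} by the common value gives back \eqref{mt3}. The main obstacle is purely bookkeeping: one has to be careful with the case distinctions $\alpha\lessgtr\gamma$, $\beta\lessgtr\gamma$ in \eqref{mt4}--\eqref{mt4c}, and to keep track of which sum $\sum_{a<\mu\le b}s_\mu$ becomes which upon differences like $\bolds-\bolds'$. The key algebraic identity that makes the two ratios agree is that $\epsilon_{0N}(\bolds)/\epsilon_{0N}(\bolds')=\exp\!\bigl(-i\pi\sum_\mu(s_\mu-s'_\mu)\bigr)$, a quantity of the form $(-1)^{|\bolds|-|\bolds'|}$, which accounts precisely for the swap between $\epsilon_{0N}(\bolds)$ and $\epsilon_{0N}(\bolds')$ in going from the first sum to the second; all other $s$-dependence cancels by the telescoping just described.
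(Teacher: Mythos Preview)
Your plan is correct and is essentially the paper's own proof. The paper also plugs the explicit formulas \eqref{mt4}--\eqref{mt4c} into \eqref{mt3}, works one representative ordering ($\alpha\le\beta\le\gamma$) by hand, extracts a $\gamma$-independent common factor $A_{\alpha\beta}(\bolds)$ from both the $I$- and $I'$-terms, and then asserts that the remaining cases go the same way; your ratios $R^{(1)}_\gamma,R^{(2)}_\gamma$ and their claimed $\gamma$-independence and equality are just a repackaging of that same extraction.
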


\noindent
Clearly, the products of $\epsilon$-factors here depend 
only on $\bolds -\bolds '$
and are just signs $\pm 1$ for $\bolds -\bolds '\in \ZZ^N$. 

\begin{proof}
Equation (\ref{mt3a}) is 
obtained by plugging (\ref{mt4})--(\ref{mt4c}) into (\ref{mt3}). 
Let us present some details of the calculation which transforms
the $\epsilon$-factors in (\ref{mt3}) to those in (\ref{mt3a}).
For brevity, we write the $\gamma$-th terms of the sums 
in (\ref{mt3}) as
$$
\epsilon_{\alpha \gamma}(\bolds )
\epsilon^{-1}_{\bar \beta \gamma}(\bolds ')I_{\alpha \beta \gamma}+
\epsilon_{\alpha \bar \gamma}(\bolds )
\epsilon^{-1}_{\bar \beta \bar \gamma}(\bolds ')
I'_{\alpha \beta \gamma}.
$$
Suppose first that $\alpha \leq \beta \leq \gamma$. Plugging here
(\ref{mt4})--(\ref{mt4c}), we represent this as
$$
\begin{array}{l}
\displaystyle{
-\epsilon_{\alpha \beta}\exp \Bigl (-i\pi\sum_{\alpha <\mu \leq \gamma}
s_{\mu} -i\pi\sum_{\gamma <\mu \leq N}
s_{\mu}' +i\pi\sum_{1 \leq \mu \leq \beta}
s_{\mu}'\Bigr )I_{\alpha \beta \gamma}}
\\ \\
\displaystyle{
+\epsilon_{\beta \gamma}\exp \Bigl (-i\pi\sum_{\gamma <\mu \leq N}
s_{\mu} +i\pi\sum_{1 \leq \mu \leq \alpha}
s_{\mu} -i\pi\sum_{\beta < \mu \leq \gamma}
s_{\mu}'\Bigr )I_{\alpha \beta \gamma}'},
\end{array}
$$
or
\beq\label{or}
\begin{array}{c}
\displaystyle{
-\epsilon_{\alpha \beta}\exp \Bigl (-i\pi\sum_{\alpha <\mu \leq N}
s_{\mu} +i\pi\sum_{1 \leq \mu \leq \beta}s_{\mu}+
i\pi\sum_{\gamma <\mu \leq N}(s_{\mu}-s'_{\mu})-
i\pi\sum_{1 \leq \mu \leq \beta}(s_{\mu}-s'_{\mu})\Bigr )
I_{\alpha \beta \gamma}}
\\ \\
\displaystyle{
+\epsilon_{\beta \gamma}\exp \Bigl (-i\pi\sum_{\beta <\mu \leq N}
s_{\mu} +i\pi\sum_{1 \leq \mu \leq \alpha}
s_{\mu} 
+i\pi\sum_{\beta < \mu \leq \gamma}(s_{\mu}-
s_{\mu}')\Bigr )I_{\alpha \beta \gamma}'}.
\end{array}
\eeq
We note that
$$
\exp \Bigl (-i\pi\sum_{\alpha <\mu \leq N}
s_{\mu} +i\pi\sum_{1 \leq \mu \leq \beta}s_{\mu}\Bigr )=
\exp \Bigl (-i\pi\sum_{\beta <\mu \leq N}
s_{\mu} +i\pi\sum_{1 \leq \mu \leq \alpha}
s_{\mu} \Bigr )=: A_{\alpha \beta}(\bolds ).
$$
Recall also that
$s_{\mu}-s'_{\mu}$ are integers, 
so (\ref{or}) can be written as
$$
\begin{array}{l}
\displaystyle{
A_{\alpha \beta}(\bolds )\Bigl [
-\epsilon_{\alpha \beta}\exp \Bigl (
i\pi\sum_{\gamma <\mu \leq N}(s_{\mu}-s'_{\mu})+
i\pi\sum_{1 \leq \mu \leq \beta}(s_{\mu}-s'_{\mu})\Bigr )
I_{\alpha \beta \gamma}}
\\ \\
\displaystyle{ \phantom{aaaaaaaaaaaaaaaaaaaaaaaaaaaa}
+\epsilon_{\beta \gamma}\exp \Bigl (
i\pi\sum_{\beta < \mu \leq \gamma}(s_{\mu}-
s_{\mu}')\Bigr )\Bigr )I_{\alpha \beta \gamma}'\Bigr ]},
\end{array}
$$
which is
$$
A_{\alpha \beta}(\bolds )\Bigl [-\epsilon_{\alpha \beta}
\epsilon_{0N}(\bolds )\epsilon_{0N}^{-1}(\bolds ')\epsilon_{\beta \gamma}
(\bolds )\epsilon_{\beta \gamma}^{-1}(\bolds ')I_{\alpha \beta \gamma}
+\epsilon_{\beta \gamma}\epsilon_{\beta \gamma}
(\bolds )\epsilon_{\beta \gamma}^{-1}(\bolds ')I'_{\alpha \beta \gamma}
\Bigr ].
$$
Similar calculations show that in the other cases, $\alpha \leq \gamma
\leq \beta$ and $\gamma \leq \alpha \leq \beta$, one obtains the same
result. Therefore, $A_{\alpha \beta}(\bolds )$ is an inessential 
common multiplier and we arrive at (\ref{mt3a}). 
\end{proof}

There is an equivalent form of equation (\ref{mt3a}) which
is more symmetric with respect to $\alpha , \beta$:
\beq\label{mt3b1}
\begin{array}{l}
\displaystyle{
\sum_{\gamma =1}^N 
\epsilon_{\beta \gamma}\epsilon_{0N}(\bolds )
\epsilon_{\alpha \gamma}(\bolds )
\epsilon^{-1}_{\beta \gamma}(\bolds ')
\oint_{C_{\infty}}
\! z^{s_{\gamma}-s_{\gamma}' +\delta_{\alpha \gamma}-2}
e^{\xi ({\boldsymbol t} _{\gamma}-{\boldsymbol t} _{\gamma}', z)}}
\\ \\
\phantom{aaaaaaaaaaaaaaaa}
\displaystyle{
\times \tau_{\alpha \gamma} (\bolds , 
{\boldsymbol t} -[z^{-1}]_{\gamma}, \bar {\boldsymbol t} )
\tau_{\gamma \beta}(\bolds '+[1]_{\beta}, 
{\boldsymbol t} '+[z^{-1}]_{\gamma}, \bar {\boldsymbol t} ') \, dz }
\\ \\
\displaystyle{
=\, \sum_{\gamma =1}^N
\epsilon_{\alpha \gamma}
\epsilon_{\alpha \gamma}(\bolds )
\epsilon^{-1}_{\beta \gamma}(\bolds ')
\epsilon_{0N}(\bolds ')
\oint_{C_{\infty}}
\! z^{s_{\gamma}'-s_{\gamma} +\delta_{\beta \gamma}-2}
e^{\xi (\bar {\boldsymbol t} _{\gamma}-\bar {\boldsymbol t} _{\gamma}', z)}}
\\ \\
\phantom{aaaaaaaaaaaa}
\displaystyle{
\times \tau_{\alpha \gamma} (\bolds +[1]_{\gamma}, 
{\boldsymbol t} , \bar {\boldsymbol t} -[z^{-1}]_{\gamma})
\tau_{\gamma \beta}(\bolds '+[1]_{\beta \gamma}, 
{\boldsymbol t} ', \bar {\boldsymbol t} '+[z^{-1}]_{\gamma}) \, dz}.
\end{array}
\eeq

The bilinear equation (\ref{s3}) contains also three other types 
of equations which correspond to the choices
$\alpha , \beta \in \{1, \ldots , N\}$,
$\alpha ,\beta \in \{\bar 1, \ldots , \bar N\}$ and
$\alpha \in \{\bar 1, \ldots , \bar N\}$,
$\beta \in \{1, \ldots , N\}$. A thorough analysis shows that they
are equivalent to (\ref{mt3b1}) and can be obtained from it by 
shifting some of $\bolds $- and $\bolds '$-variables by $\pm 1$.
For example, the equation which corresponds to the choice
$\alpha , \beta \in \{1, \ldots , N\}$ has the form
\beq\label{main}
\begin{array}{l}
\displaystyle{
\sum_{\gamma =1}^N 
\epsilon_{\alpha \gamma}(\bolds )
\epsilon^{-1}_{\beta \gamma}(\bolds ')
\oint_{C_{\infty}}
\!  z^{s_{\gamma}-s_{\gamma}' +\delta_{\alpha \gamma}+
\delta_{\beta \gamma}-2}
e^{\xi ({\boldsymbol t} _{\gamma}-{\boldsymbol t} _{\gamma}', z)}}
\\ \\
\phantom{aaaaaaaaaaaaaaaa}
\displaystyle{
\times \tau_{\alpha \gamma} (\bolds , 
{\boldsymbol t} -[z^{-1}]_{\gamma}, \bar {\boldsymbol t} )
\tau_{\gamma \beta}(\bolds ', 
{\boldsymbol t} '+[z^{-1}]_{\gamma}, \bar {\boldsymbol t} ') \, dz }
\\ \\
\displaystyle{
=\, - \sum_{\gamma =1}^N
\epsilon_{\alpha \gamma}
\epsilon_{\beta \gamma}
\epsilon_{\alpha \gamma}(\bolds )
\epsilon_{0N}(\bolds )
\epsilon^{-1}_{\beta \gamma}(\bolds ')
\epsilon_{0N}^{-1}(\bolds ')
\oint_{C_{\infty}}
\!  z^{s_{\gamma}'-s_{\gamma} -2}
e^{\xi (\bar {\boldsymbol t} _{\gamma}-\bar {\boldsymbol t} _{\gamma}', z)}}
\\ \\
\phantom{aaaaaaaaaaaa}
\displaystyle{
\times \tau_{\alpha \gamma} (\bolds +[1]_{\gamma}, 
{\boldsymbol t} , \bar {\boldsymbol t} -[z^{-1}]_{\gamma})
\tau_{\gamma \beta}(\bolds '-[1]_{\gamma}, 
{\boldsymbol t} ', \bar {\boldsymbol t} '+[z^{-1}]_{\gamma}) \, dz}.
\end{array}
\eeq
It can be obtained from (\ref{mt3b1}) by the substitution
$\bolds '\to \bolds '-[1]_{\beta}$. 
If we set $\bar {\boldsymbol t} '=\bar {\boldsymbol t} $,
the right-hand side vanishes for $s_{\mu}'\leq s_{\mu}$ and we get 
the bilinear equation for the $N$-component modified KP hierarchy
given in \cite{Z19}.

\begin{zam}
For $N=1$ equation (\ref{mt3b1}) becomes the standard bilinear
equation
\beq\label{toda}
\begin{array}{l}
\displaystyle{
\oint_{C_{\infty}}
\!  z^{s-s'-1}
e^{\xi ({\boldsymbol t} -{\boldsymbol t} ', z)}
\tau^{\rm Toda}_{s} ( 
{\boldsymbol t} -[z^{-1}], \bar {\boldsymbol t} )
\tau^{\rm Toda}_{s'+1}( 
{\boldsymbol t} '+[z^{-1}], \bar {\boldsymbol t} ') \, dz }
\\ \\
\displaystyle{
=\oint_{C_{\infty}}
\! z^{s'-s-1}
e^{\xi (\bar {\boldsymbol t} -\bar {\boldsymbol t} ', z)}
\tau^{\rm Toda}_{s+1} ( 
{\boldsymbol t} , \bar {\boldsymbol t} -[z^{-1}])
\tau^{\rm Toda}_{s'}( 
{\boldsymbol t} ', \bar {\boldsymbol t} '+[z^{-1}]) \, dz }
\end{array}
\eeq
for the tau-function $\tau^{\rm Toda}_{s} ( 
{\boldsymbol t} , \bar {\boldsymbol t} )=e^{\frac{\pi i}{2}\, s(s-1)}\tau_{11}(s, {\boldsymbol t} , 
\bar {\boldsymbol t} )$ of the Toda lattice hierarchy \cite{UT84}. 
\end{zam}

It remains to clarify how the tau-function from this section
is related to the tau-function introduced in section 
\ref{section:tau-function}, because the bilinear equations
for them look different. This matter is clarified by the
following proposition.

\begin{predl}
The tau-functions $\tau_{\alpha \beta}^{(\rm sec.5)}(\bolds , 
{\boldsymbol t} , \bar {\boldsymbol t} )$ and $\tau_{\alpha \beta}^{(\rm sec.6)}(\bolds , 
{\boldsymbol t} , \bar {\boldsymbol t} )$ differ by a simple sign factor:
\beq\label{56}
\tau_{\alpha \beta}^{(\rm sec.5)}(\bolds , {\boldsymbol t} , \bar {\boldsymbol t} )=
(-1)^{|\bolds |(|\bolds |-1)/2}\, \epsilon_{\alpha \beta}(\bolds )\,
\tau_{\alpha \beta}^{(\rm sec.6)}(\bolds , {\boldsymbol t} , \bar {\boldsymbol t} ),
\eeq
where $\displaystyle{|\bolds |=\sum_{\mu =1}^N s_{\mu}.}$
After this substitution the bilinear equations (\ref{n6}) 
and (\ref{main}) become identical. 
\end{predl}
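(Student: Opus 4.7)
The plan is a direct verification: substitute the claimed relation (\ref{56}) into (\ref{n6}) and check term by term that it becomes (\ref{main}). Since both sides of each identity split into the same sum over $\gamma$, the strategy reduces to matching sign factors. I would organize the work around two independent sources of signs, the scalar factor $(-1)^{|\bolds|(|\bolds|-1)/2}$ and the matrix factor $\epsilon_{\alpha\beta}(\bolds)$, and treat them separately.

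For the scalar factor: on the left-hand side of (\ref{n6}), both tau-functions carry arguments $\bolds$ and $\bolds'$ respectively, producing a common $\gamma$-independent prefactor $(-1)^{|\bolds|(|\bolds|-1)/2+|\bolds'|(|\bolds'|-1)/2}$ which can be cancelled from both sides. On the right-hand side of (\ref{n6}), however, the arguments are shifted to $\bolds +[1]_\gamma$ and $\bolds'-[1]_\gamma$, which shifts the exponent by $|\bolds|-|\bolds'|+1$ and leaves an extra sign $(-1)^{|\bolds|-|\bolds'|+1}$. Using $\epsilon_{0N}(\bolds)=(-1)^{|\bolds|}$ (valid on the lattice since $\bolds-\bolds'\in\ZZ^N$), this matches exactly $-\epsilon_{0N}(\bolds)\epsilon_{0N}^{-1}(\bolds')$, accounting for both the $\epsilon_{0N}$-factors and the overall minus sign in front of the right-hand side of (\ref{main}).

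For the matrix factor $\epsilon_{\alpha\beta}(\bolds)$: after substitution, the left-hand side of (\ref{n6}) acquires a factor $(-1)^{\delta_{\beta\gamma}}\epsilon_{\alpha\gamma}(\bolds)\epsilon_{\gamma\beta}(\bolds')$, to be compared with $\epsilon_{\alpha\gamma}(\bolds)\epsilon_{\beta\gamma}^{-1}(\bolds')$ in (\ref{main}). The key identity is the elementary computation from (\ref{mt4})--(\ref{mt4c}),
\[
\epsilon_{\gamma\beta}(\bolds')\,\epsilon_{\beta\gamma}(\bolds')=
\begin{cases} 1, & \gamma=\beta,\\ -1, & \gamma\neq\beta, \end{cases}
\]
i.e., $\epsilon_{\gamma\beta}(\bolds')=-(-1)^{\delta_{\beta\gamma}}\epsilon_{\beta\gamma}^{-1}(\bolds')$, so that the $(-1)^{\delta_{\beta\gamma}}$ in (\ref{n6}) is absorbed and only an overall $\gamma$-independent sign $-1$ remains, which pulls out together with the scalar factor above. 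On the right-hand side of (\ref{n6}) the analogous work requires the shift rule $\epsilon_{\alpha\gamma}(\bolds+[1]_\gamma)=(-1)^{\chi(\alpha<\gamma)}\epsilon_{\alpha\gamma}(\bolds)$ and similarly for $\epsilon_{\gamma\beta}(\bolds'-[1]_\gamma)$, read off directly from (\ref{mt4})--(\ref{mt4c}). Combining these with the identity above and the $(-1)^{\delta_{\alpha\gamma}}$ in (\ref{n6}) gives exactly $\epsilon_{\alpha\gamma}\epsilon_{\beta\gamma}\epsilon_{\alpha\gamma}(\bolds)\epsilon_{\beta\gamma}^{-1}(\bolds')$ as in the right-hand side of (\ref{main}).

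The main obstacle is the sign bookkeeping: there are four different sources of signs --- the $(-1)^{\delta_{..}}$ in (\ref{n6}), the shift-signs in $\epsilon_{\alpha\gamma}(\bolds\pm[1]_\gamma)$, the pair-flip identity for $\epsilon_{\gamma\beta}\epsilon_{\beta\gamma}$, and the discrete $\epsilon$-symbols $\epsilon_{\alpha\gamma}=\pm 1$ in (\ref{main}) --- which must conspire to produce the correct overall sign $-1$ on the right-hand side of (\ref{main}) while leaving the left-hand side unchanged. I expect no conceptual difficulty, only the need for careful case analysis distinguishing $\alpha\lessgtr\gamma$ and $\beta\lessgtr\gamma$; once all cases are verified uniformly, the equivalence of (\ref{n6}) and (\ref{main}) under the redefinition (\ref{56}) follows, and it is also clear that the $\bolds$-dependent prefactor $(-1)^{|\bolds|(|\bolds|-1)/2}\epsilon_{\alpha\beta}(\bolds)$ is the \emph{minimal} modification (up to an irrelevant overall constant) making both formulations coincide.
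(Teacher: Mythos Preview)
Your proposal is correct and takes exactly the same approach as the paper, which states only that ``the proof consists in a direct verification'' without further detail. Your organization of the sign bookkeeping into the scalar factor $(-1)^{|\bolds|(|\bolds|-1)/2}$ (producing the $\epsilon_{0N}$-factors and the overall minus on the right-hand side of (\ref{main})) and the matrix factor $\epsilon_{\alpha\beta}(\bolds)$ (handled via the pair-flip identity $\epsilon_{\gamma\beta}\epsilon_{\beta\gamma}=-(-1)^{\delta_{\beta\gamma}}$ and the shift rules for $\epsilon_{\alpha\gamma}(\bolds\pm[1]_\gamma)$) is a clean way to carry out that verification.
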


\noindent
The proof consists in a direct verification.

\section{The multi-component Toda lattice hierarchy from
free fermions}
\label{section:fermions}

In this section we show how the multi-component Toda lattice hierarchy
can be obtained in the framework of the free fermion technique developed
by the Kyoto school. In this approach the variables $s_{\alpha}$ are
assumed to be integers again, as before
\secref{sec:toda-from-universal-hierarchy}.

\subsection{The multi-component fermions}

The Clifford algebra $\calA$ is generated by the creation-annihilation
$N$-component free fermionic operators:
\[
    \calA :=
    \left<\left.
    \psi_{j}^{(\alpha )}, \psi_{j}^{*(\alpha )} \, \right|\, 
    \alpha =1,\dotsc, N,\ j\in \ZZ
    \right>.
\]
These operators satisfy the standard 
anti-commutation relations:
\[
    [\psi_{j}^{(\alpha )}, \psi_{k}^{*(\beta )}]_+
    =
    \delta_{\alpha \beta}\delta_{jk},
    \qquad
    [\psi_{j}^{(\alpha )}, \psi_{k}^{(\beta )}]_+=
    [\psi_{j}^{*(\alpha )}, \psi_{k}^{*(\beta )}]_+=0.
\]
The generating functions of $\psi_j^{(\alpha)}$ and
$\psi_j^{*(\alpha)}$,
\[
\psi^{(\alpha )}(z)=\sum_{j\in \z}\psi^{(\alpha )}_j z^j,
\qquad
\psi^{*(\alpha )}(z)=\sum_{j\in \z}\psi^{*(\alpha )}_j z^{-j},
\]
are called the free fermionic fields.

\begin{zam}
The algebra of $N$-component fermionic operators is in fact
isomorphic to the algebra of one-component operators $\psi_j$,
$\psi^*_j$. The isomorphism is established by the map
$\psi_j^{(\alpha )}\mapsto \psi_{Nj +\alpha}$,
$\psi_j^{*(\alpha )}\mapsto \psi_{Nj +\alpha}^*$.
 See, for example, \cite{JM83}, \S4. 
However,
in practice it is more convenient to deal with the operators
$\psi_j^{(\alpha )}$, $\psi_j^{*(\alpha )}$ rather than
$\psi_j$, $\psi^*_j$.
\end{zam}

The Fock space $\calF$ and the dual Fock space $\calF^*$ are generated
as $\calA$-modules by the vacuum state $\left | {\bf 0}\rbr$
and the dual vacuum state $\lbr {\bf 0} \right |$ 
that satisfy the conditions
\begin{equation}
\begin{aligned}
    \psi_{j}^{(\alpha )}\left | {\bf 0}\rbr &=0 \quad (j<0),
    &
    \psi_{j}^{*(\alpha )}\left | {\bf 0}\rbr &= 0 \quad (j\geq 0),
\\
    \lbr {\bf 0}\right | \psi_{j}^{(\alpha )} &=0 \quad (j\geq 0),
    &
    \lbr {\bf 0}\right | \psi_{j}^{*(\alpha )} &=0 \quad (j< 0),
\end{aligned} 
\label{condition:<0|psi=psi|0>=0}
\end{equation}
so 
$\psi_{j}^{(\alpha )}$ with $j<0$ and $\psi_{j}^{*(\alpha )}$ with
$j\geq 0$ are annihilation operators while 
$\psi_{j}^{(\alpha )}$ with $j\geq 0$ and $\psi_{j}^{*(\alpha )}$ with
$j<0$ are creation operators. In other words, 
\begin{align}
    \calF = \calA \left|{\bf 0}\rbr &\cong 
    \calA/\calA \calW_{\text{ann}}, &
    \left| {\bf 0} \right> &= 1 \bmod \calA \calW_{\text{ann}},
\\
    \calF^* = \lbr{\bf 0}\right|\calA &\cong 
    \calW_{\text{cre}}\calA \backslash \calA, &
    \lbr{\bf 0}\right| &= 1 \bmod \calW_{\text{cre}}\calA,
\end{align}
where
\begin{align*}
     \calW_{\text{ann}} :=
     \bigoplus_{j<0,\alpha} \CC \psi_j^{(\alpha)} \oplus
     \bigoplus_{j\geq0,\alpha} \CC \psi_j^{*(\alpha)},\qquad
     \calW_{\text{cre}} :=
     \bigoplus_{j\geq0,\alpha} \CC \psi_j^{(\alpha)} \oplus
     \bigoplus_{j<0,\alpha} \CC \psi_j^{*(\alpha)} .
\end{align*}

A pairing $\calF^*\otimes_\calA \calF \longrightarrow \CC$,
$\lbr u|\otimes|v\rbr \mapsto \lbr u|v \rbr$ which satisfies
\[
    (\lbr u| a)| v \rbr = \lbr u| (|a v \rbr), \qquad
    (\langle u| \in \calF^*,\ a\in\calA,\ |v\rangle \in \calF)
\]
is determined by the normalization condition $\lbr {\bf 0}|{\bf 0}\rbr =
1$. To simplify the notation, we denote
$\lbr {\bf 0}|a|{\bf 0}\rbr = \lbr a \rbr$
for $a\in\calA$. In particular,
 \begin{gather*}
    \lbr 1 \rbr = 1, \qquad 
    \lbr \psi_j^{(\alpha)} \rbr = 
    \lbr \psi_j^{*(\alpha)}\rbr = 0,
\\
    \lbr \psi_j^{(\alpha)}\psi_k^{(\beta)}\rbr =
    \lbr \psi_j^{*(\alpha)}\psi_k^{*(\beta)}\rbr = 0, \quad
    \lbr \psi_j^{(\alpha)}\psi_k^{*(\beta)}\rbr =
    \delta_{\alpha\beta}\, \delta_{jk}\, \theta(j<0),
 \end{gather*}
where $\theta(P)$ is the boolean characteristic function: $\theta(P)=1$
when $P$ is true and $\theta(P)=0$ when $P$ is false.

The normal ordered product of two fermion operators $\phi_1$, $\phi_2$
($=\psi_j^{(\alpha)}$ or $\psi_k^{*(\beta)}$) is defined by 
\begin{equation}
   \normord \phi_1 \phi_2 \normord
   =
   \phi_1 \phi_2 - \lbr \phi_1 \phi_2 \rbr.
\end{equation}
In other words, the normal ordering means moving 
annihilation operators to the right and creation
operators to the left, changing the sign any time when two fermion operators
are permuted.

For an $N$-tuple of integers $\bolds =\{s_1, s_2, \ldots , s_N\} \in
\ZZ^N$, a vector
 $\left | \bolds \rbr$ in $\calF$ 
and a vector
 $\lbr \bolds  \right |$ in $\calF^*$
are defined by
\[
    \left | \bolds \rbr 
    =
    \Psi_{s_N}^{*(N)}\dotsb \Psi_{s_2}^{*(2)}\Psi_{s_1}^{*(1)}
    \left | {\bf 0}\rbr , \qquad
    \lbr \bolds  \right |
    =
    \lbr {\bf 0} \right |
    \Psi_{s_1}^{(1)}\Psi_{s_2}^{(2)}\dotsb \Psi_{s_N}^{(N)},
\]
where
\[
    \Psi_{s}^{*(\alpha )}
    =
    \begin{cases}
     \psi^{(\alpha )}_{s-1}\ldots \psi^{(\alpha )}_{0} & (s >0),
    \\
     1 & (s=0),
    \\
     \psi^{*(\alpha )}_{s}\ldots \psi^{*(\alpha )}_{-1} & (s <0),
    \end{cases}
\qquad
    \Psi_{s}^{(\alpha )}
    =
    \begin{cases}
     \psi^{*(\alpha )}_{0}\ldots \psi^{*(\alpha )}_{s-1} &(s >0),
    \\
     1 & (s=0),
    \\
     \psi^{(\alpha )}_{-1}\ldots \psi^{(\alpha )}_{s} & (s <0).
    \end{cases}
\]
The commutation relations of 
these operators with $\psi^{(\beta)}_j$ and $\psi^{*(\beta)}_j$
($\alpha\neq\beta$) are
\begin{equation}
\begin{aligned}
    \Psi_s^{*(\alpha)} \psi^{(\beta)}_j
    &= (-1)^{s}
    \psi^{(\beta)}_j \Psi_s^{*(\alpha)},
    &
    \Psi_s^{(\alpha)} \psi^{(\beta)}_j
    &= (-1)^{s}
    \psi^{(\beta)}_j \Psi_s^{*(\alpha)},
\\
    \Psi_s^{*(\alpha)} \psi^{*(\beta)}_j
    &= (-1)^{s}
    \psi^{*(\beta)}_j \Psi_s^{*(\alpha)},
    &
    \Psi_s^{(\alpha)} \psi^{*(\beta)}_j
    &= (-1)^{s}
    \psi^{*(\beta)}_j \Psi_s^{*(\alpha)}.
\end{aligned}
\label{Psipsi=-psiPsi} 
\end{equation} 
Therefore, if $\alpha\neq\beta$,
\begin{equation}
    \Psi_{s_\alpha}^{*(\alpha)} \Psi_{s_\beta}^{*(\beta)}
    =
    (-1)^{s_\alpha s_\beta}
    \Psi_{s_\beta}^{*(\beta)} \Psi_{s_\alpha}^{*(\alpha)}, \qquad
    \Psi_{s_\alpha}^{(\alpha)} \Psi_{s_\beta}^{(\beta)}
    =
    (-1)^{s_\alpha s_\beta}
    \Psi_{s_\beta}^{(\beta)} \Psi_{s_\alpha}^{(\alpha)},
\label{PsiPsi=(-1)PsiPsi}
\end{equation}
and (cf.\ \eqref{condition:<0|psi=psi|0>=0})
\beq\label{condition:<s|psi=psi|s>=0} 
\begin{array}{lll}
\lbr \bolds \right| \psi^{(\alpha)}_{s'} &=&
\displaystyle{\left \{
     \begin{array}{ll}
       0 & (s'\geq s_\alpha), \\ & \\
       \epsilon_\alpha(\bolds ) \lbr \bolds  - [1]_{\alpha}\right|
         &(s'=s_\alpha-1),
         \end{array} \right.}
         \\ && \\
\lbr \bolds \right| \psi^{*(\alpha)}_{s'} &=&
\displaystyle{\left \{
     \begin{array}{ll}
       0 & (s'< s_\alpha), \\ & \\
       \epsilon_\alpha(\bolds ) \lbr \bolds  + [1]_{\alpha}\right|
         &(s'=s_\alpha),
        \end{array}\right. }
        \\ && \\
        \psi^{(\alpha)}_{s'} \left|\bolds \rbr &=&
        \displaystyle{\left \{
        \begin{array}{ll}
       0 & (s'< s_\alpha), \\ & \\
       \epsilon_\alpha(\bolds ) \left| \bolds  + [1]_{\alpha}\rbr
         &(s'=s_\alpha),
         \end{array} \right. }
         \\ && \\
         \psi^{*(\alpha)}_{s'} \left|\bolds \rbr &=&
         \displaystyle{\left \{
     \begin{array}{ll}
       0 & (s'\geq s_\alpha), \\ &\\ 
       \epsilon_\alpha(\bolds ) \left| \bolds  - [1]_{\alpha}\rbr
         &(s'=s_\alpha-1),
         \end{array}\right. }
 \end{array}
\eeq
where the notation $\bolds  \pm [1]_{\alpha}$ 
is introduced in (\ref{st1a}) and the sign factor $\epsilon_\alpha(\bolds )$
is
\begin{equation}
    \epsilon_\alpha(\bolds ) = (-1)^{s_{\alpha+1}+\dotsb+s_N}.
\label{def:epsilon_alpha}
\end{equation}

The current operator $J^{(\alpha)}(z)$ is defined by
\[
    J^{(\alpha)}(z)
    =
    \normord \psi^{(\alpha)}(z) \psi^{*(\alpha)} (z) \normord
    =\sum_{k\in\ZZ} J_k^{(\alpha)} z^{-k},
\]
which is a generating function of the operators
\begin{equation}
    J_{k}^{(\alpha )}
    =
    \sum_{j\in \z}
    \normord \psi^{(\alpha )}_{j} \psi^{*(\alpha )}_{j+k}\normord.
\label{def:J} 
\end{equation}
(The normal ordering here is essential only for
$J_0^{(\alpha)}$.)

The commutation relations for the 
operators $J_k^{(\alpha)}$ are
as follows:
\begin{gather*}
    [J^{(\alpha)}_k,\psi_j^{(\beta)}] 
    =
    \delta_{\alpha\beta}\, \psi_{j-k}^{(\alpha)},
    \qquad
    [J^{(\alpha)}_k,\psi_j^{*(\beta)}] 
    =
    - \delta_{\alpha\beta}\, \psi_{j+k}^{*(\alpha)},
\\
    [J^{(\alpha)}_k, J^{(\beta)}_l]
    =
    k \delta_{\alpha\beta}\, \delta_{k+l,0}.
\end{gather*}
Hence,
\begin{equation}
    [J^{(\alpha)}_k, \psi^{(\beta)}(z)]
    =
    z^k \delta_{\alpha\beta}\, \psi^{(\alpha)}(z), \qquad
    [J^{(\alpha)}_k, \psi^{*(\beta)}(z)]
    =
    -z^k \delta_{\alpha\beta}\, \psi^{*(\alpha)}(z),
\label{[J,psi]} 
\end{equation}
and
\begin{equation}
    [J^{(\alpha)}_k, \Psi^{(\beta)}_s]
    =
    [J^{(\alpha)}_k, \Psi^{*(\beta)}_s]
    = 0 \qquad
    (\alpha\neq\beta).
\label{[J,Psi]} 
\end{equation}
As a consequence of \eqref{condition:<s|psi=psi|s>=0} and \eqref{def:J}
we have
\begin{equation}
    \lbr\bolds \right| J^{(\alpha)}_{-n} = 0,
    \qquad
    J^{(\alpha)}_{n} \left|\bolds \rbr = 0 \qquad
    (n\geq1).
\label{<s|J=J|s>=0}
\end{equation}
The operators $J_{0}^{(\alpha )}=:Q_{\alpha}$ are charge operators:
\begin{equation}
    \lbr \bolds \right| Q_\alpha = \lbr \bolds \right| s_{\alpha},
    \qquad
    Q_\alpha \left|\bolds \rbr = s_{\alpha} \left|\bolds \rbr.
\label{Q=charge} 
\end{equation}

In order to describe the multi-component Toda lattice
in the fermionic language, we introduce $2N$ infinite sets of
independent continuous variables
\[
 \begin{aligned}
    {\boldsymbol t} &=\{{\boldsymbol t} _1, {\boldsymbol t} _2, \ldots , {\boldsymbol t} _N\}, &
    {\boldsymbol t} _{\alpha}
    &=
    \{t_{\alpha , 1}, t_{\alpha , 2}, t_{\alpha , 3}, \dotsc \,\},
\\
    \bar {\boldsymbol t}&=\{\bar {\boldsymbol t}_1, 
    \bar {\boldsymbol t}_2, \ldots,
    \bar {\boldsymbol t}_N\, \},
    &
    \bar {\boldsymbol t}_{\alpha}
    &=
    \{\bar t_{\alpha , 1}, \bar t_{\alpha , 2}, \bar t_{\alpha , 3}, 
     \dotsc \,\}
 \end{aligned}
\]
which will be the time variables of the Toda lattice as before.
The evolution is induced by the following operators:
\beq
    J({\boldsymbol t} )
    =
    \sum_{\alpha =1}^N \sum_{k\geq 1} 
    t_{\alpha , k}J_k^{(\alpha)},
 \qquad
    \bar J(\bar {\boldsymbol t})
    =
    \sum_{\alpha =1}^N \sum_{k\geq 1} 
    \bar t_{\alpha , k}J_{-k}^{(\alpha)}.
    \label{JJ}
\eeq
The commutation relations for $J({\boldsymbol t} )$ and 
$\bar J(\bar {\boldsymbol t})$
are as follows:
\begin{align*}
    [J({\boldsymbol t} ), \psi^{(\alpha)}(z)]
    &=
    \xi({\boldsymbol t} _\alpha,z) \psi^{(\alpha)}(z), &
    [J({\boldsymbol t} ), \psi^{*(\alpha)}(z)]
    &=
    -\xi({\boldsymbol t} _\alpha,z) \psi^{*(\alpha)}(z),
\\
    [\bar J({\bf \bar t}), \psi^{(\alpha)}(z)]
    &=
    \xi(\bar {\boldsymbol t}_\alpha,z^{-1}) \psi^{(\alpha)}(z), &
    [\bar J(\bar {\boldsymbol t}), \psi^{*(\alpha)}(z)]
    &=
    -\xi(\bar {\boldsymbol t}_\alpha,z^{-1}) \psi^{*(\alpha)}(z).
\end{align*}
Therefore,
\begin{equation}
\begin{aligned}
    e^{J({\boldsymbol t} )} \psi^{(\alpha)}(z)
    &=
    e^{\xi({\boldsymbol t} _\alpha,z)} 
    \psi^{(\alpha)}(z)\, e^{J({\boldsymbol t} )}, &
    e^{J({\boldsymbol t} )} \psi^{*(\alpha)}(z)
    &=
    e^{-\xi({\boldsymbol t} _\alpha,z)} \psi^{*(\alpha)}(z)\, e^{J({\boldsymbol t} )},
\\
    e^{\bar J(\bar {\boldsymbol t})} \psi^{(\alpha)}(z)
    &=
    e^{\xi(\bar {\boldsymbol t}_\alpha,z^{-1})} 
    \psi^{(\alpha)}(z)\, e^{\bar J(\bar {\boldsymbol t})} , &
    e^{\bar J(\bar {\boldsymbol t})} \psi^{*(\alpha)}(z)
    &=
    e^{-\xi(\bar {\boldsymbol t}_\alpha,z^{-1})}
    \psi^{*(\alpha)}(z)\, e^{\bar J(\bar {\boldsymbol t})}.
\end{aligned}
\label{eJpsi=psieJ} 
\end{equation}

As is shown in \cite{DJKM83} (equation (2.6.5)) and \cite{JM83} 
(equation (1.21)), we have the following formulae
for the one-component case:
\[
    \lbr s \right| \psi(z)
    =
    z^{s-1} \lbr s-1 \right| e^{-J([z^{-1}])},
    \qquad
    \lbr s \right| \psi^{*}(z)
    =
    z^{-s} \lbr s+1 \right| e^{J([z^{-1}])}
\]
(for details of
the proof, see \cite{DJM93}, Lemma 5.3), which are often referred to
as bosonization rules.
In a similar manner, one can prove the formulae
\[
    \psi^{*}(z) \left| s \rbr
    =
    z^{-s+1} e^{-\bar J([z])} \left| s-1 \rbr,
    \qquad
    \psi(z) \left| s \rbr
    =
    z^{s} e^{\bar J([z])} \left| s+1 \rbr. 
\]
Using commutation relations \eqref{Psipsi=-psiPsi},
\eqref{PsiPsi=(-1)PsiPsi} and \eqref{[J,Psi]}, we can deduce
multi-component analogues of the bosonization rules:
\beq\label{psi(z)-on-|s>-<s|}
\begin{array}{lll}
\lbr \bolds \right| \psi^{(\alpha)}(z)&=&
\epsilon_\alpha(\bolds ) z^{s_\alpha-1}
    \lbr \bolds -[1]_\alpha \right|
    e^{-J([z^{-1}]_\alpha)},
    \\ && \\
    \lbr \bolds \right| \psi^{*(\alpha)}(z)
    &=&
    \epsilon_\alpha(\bolds ) z^{-s_\alpha}
    \lbr \bolds +[1]_\alpha \right|
    e^{J([z^{-1}]_\alpha)},
    \\ && \\
    \psi^{(\alpha)}(z) \left| \bar \bolds \rbr
    &=&
    \epsilon_\alpha(\bar \bolds ) z^{s_\alpha}
    e^{\bar J([z]_\alpha)}
    \left| \bar \bolds +[1]_\alpha \rbr ,
    \\ && \\
    \psi^{*(\alpha)}(z) \left| \bar \bolds \rbr
    &=&
    \epsilon_\alpha(\bar \bolds ) z^{-s_\alpha + 1}
    e^{-\bar J([z]_\alpha)}
    \left|  \bar \bolds -[1]_\alpha \rbr ,
    \end{array}
\eeq
where the sign factor $\epsilon_\alpha(\bolds )$ is defined in
(\ref{def:epsilon_alpha}). 

Let $g$ be a general element of the Clifford group whose typical form
is
\beq\label{g}
    g
    =
    \exp
     \left( \sum_{\alpha , \beta} \sum_{j,k}
      A_{jk}^{(\alpha \beta )}
     \psi^{*(\alpha )}_{j}\psi^{(\beta )}_{k}
     \right)
\eeq
with some infinite matrix $A_{jk}^{(\alpha \beta )}$.
The tau-function $\tau (\bolds , \bar \bolds , {\boldsymbol t} , \bar {\boldsymbol t} ;g)$ 
is defined as the expectation value
\begin{equation}
    \tau (\bolds , \bar \bolds , {\boldsymbol t} , \bar {\boldsymbol t} ;g)
    =
    \lbr \bolds  \right | 
     e^{J({\boldsymbol t} )} g e^{-\bar J(\bar {\boldsymbol t} )}
    \left |\bar \bolds \rbr.
\label{f1}
\end{equation}
Note that it is non-zero only if $|\bolds |=|\bar \bolds |$.

\subsection{The bilinear identity}
\label{subsection:bilinear}

An important property of the Clifford group elements is the following 
operator bilinear identity. (See, for example, \S2 and \S4 of
\cite{JM83}.) 

\begin{predl}
Let $g$ be a Clifford group element of the form (\ref{g}).
Then it satisfies the operator bilinear identity
\begin{equation}
    \sum_{\gamma =1}^N \sum_{j\in \z}
     \psi_{j}^{(\gamma )}g \otimes \psi_{j}^{*(\gamma )}g
    =
    \sum_{\gamma =1}^N \sum_{j\in \z}
     g\psi_{j}^{(\gamma )}\otimes g \psi_{j}^{*(\gamma )}.
\label{f2}
\end{equation}
\end{predl}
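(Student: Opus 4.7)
The plan is to reformulate the operator bilinear identity as a commutation relation with a canonical element and then reduce the verification to the quadratic generator $A$ inside $g=\exp(A)$. Introduce
\[
    T := \sum_{\gamma=1}^N \sum_{j\in\ZZ}
    \psi^{(\gamma)}_j \otimes \psi^{*(\gamma)}_j
    \in \calA\otimes\calA.
\]
Then (\ref{f2}) reads $T\,(g\otimes g) = (g\otimes g)\,T$, so the task becomes to prove $[T,g\otimes g]=0$. This framing has the advantage that the set of $g$ satisfying it is a multiplicative subgroup of $\calA^{\times}$: if $[T,g_i\otimes g_i]=0$ for $i=1,2$, then $(g_1g_2)\otimes(g_1g_2)=(g_1\otimes g_1)(g_2\otimes g_2)$ inherits the property. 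This will allow the argument to extend, with no additional work, to arbitrary finite products of generators of the form (\ref{g}).

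Now, because $A=\sum_{\alpha,\beta,j,k}A^{(\alpha\beta)}_{jk}\psi^{*(\alpha)}_j\psi^{(\beta)}_k$ is even in the Clifford algebra, the two embeddings $A\otimes 1$ and $1\otimes A$ commute inside $\calA\otimes\calA$, so
\[
    g\otimes g \;=\; \exp(A)\otimes\exp(A) \;=\; \exp\bigl(A\otimes 1 + 1\otimes A\bigr).
\]
Consequently it is enough to establish the single Lie-algebra level identity
\[
    \bigl[\,T,\ A\otimes 1 + 1\otimes A\,\bigr] \;=\; 0,
\]
from which $[T,(A\otimes 1+1\otimes A)^n]=0$ for all $n\geq 0$ and hence the desired $[T,g\otimes g]=0$ follow by expanding the exponential.

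The bracket calculation uses only the basic anticommutators of the fermions, via the odd-element rule $[a,bc]=\{a,b\}c-b\{a,c\}$. For the first summand one has $[\psi^{(\gamma)}_j,\psi^{*(\alpha)}_k\psi^{(\beta)}_l]=\delta_{\gamma\alpha}\delta_{jk}\psi^{(\beta)}_l$, giving
\[
    [T,A\otimes 1]
    \;=\; \sum_{\gamma,j}\sum_{\beta,l}
    A^{(\gamma\beta)}_{jl}\,\psi^{(\beta)}_l\otimes\psi^{*(\gamma)}_j,
\]
while $[\psi^{*(\gamma)}_j,\psi^{*(\alpha)}_k\psi^{(\beta)}_l]=-\delta_{\gamma\beta}\delta_{jl}\psi^{*(\alpha)}_k$ gives
\[
    [T,1\otimes A]
    \;=\; -\sum_{\gamma,j}\sum_{\alpha,k}
    A^{(\alpha\gamma)}_{kj}\,\psi^{(\gamma)}_j\otimes\psi^{*(\alpha)}_k.
\]
After relabeling $(\beta,l,\gamma,j)\leftrightarrow(\gamma,j,\alpha,k)$ in the first sum, the two expressions are termwise opposite, so they cancel. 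This yields the desired commutativity.

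The main obstacle is not algebraic but functional-analytic: the sum defining $T$ is infinite in $j$, so one must specify in what sense $T$ is an operator. The standard resolution is that $T$ is a well-defined linear map on $\calF\otimes\calF$ because, on any fixed vector of bounded fermion number, all but finitely many of the operators $\psi^{(\gamma)}_j\otimes\psi^{*(\gamma)}_j$ act as zero (annihilation modes kill the vacuum when $j$ is sufficiently negative on the left factor and sufficiently non-negative on the right). The same argument justifies the interchange of the commutator with the infinite sum in the calculation above, and it also legitimizes the exponential expansion step, since the action of $A\otimes 1+1\otimes A$ preserves the filtration by total fermion number. With these conventions in place, all manipulations above are rigorous and the proof is complete.
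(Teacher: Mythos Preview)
The paper does not actually prove this proposition; it states it as a known fact with a citation to \S2 and \S4 of Jimbo--Miwa \cite{JM83}. Your argument is the standard one behind that citation and is correct: rewriting the identity as $[T,g\otimes g]=0$ with $T=\sum_{\gamma,j}\psi^{(\gamma)}_j\otimes\psi^{*(\gamma)}_j$, reducing to the Lie-algebra statement $[T,A\otimes1+1\otimes A]=0$, and checking that by the anticommutation relations, is exactly how this is usually established.

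Two minor remarks. First, the comment that $A\otimes1$ and $1\otimes A$ commute ``because $A$ is even'' is unnecessary: in the ungraded tensor product (which is the relevant one here, since $(a\otimes b)(c\otimes d)=ac\otimes bd$ is what makes the rewriting $T(g\otimes g)=(g\otimes g)T$ match (\ref{f2})), elements of the form $a\otimes1$ and $1\otimes b$ always commute. Second, your last paragraph correctly identifies the only real subtlety, the well-definedness of $T$ on $\calF\otimes\calF$, and resolves it correctly; the further claim that $A\otimes1+1\otimes A$ ``preserves the filtration by total fermion number'' is not quite the right invariant (it preserves charge, not excitation number), but since the paper itself takes $g=\exp(A)$ as a given well-defined operator without addressing convergence, no more is required of you here.
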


\noindent
Both sides of this identity should be understood as acting to 
arbitrary states $\bigl |U\bigr >\otimes \bigl |V\bigr >$
from the fermionic Fock space to the right and 
$\bigl <U'\bigr |\otimes \bigl <V'\bigr |$ to the left:
\beq\label{f2a}
\sum_{\gamma =1}^N \sum_{j\in \z}\bigl <U'\bigr |
\psi_{j}^{(\gamma )}g\bigl |U\bigr > \bigl <V'\bigr |
     \psi_{j}^{*(\gamma )}g\bigl |V\bigr >=
     \sum_{\gamma =1}^N \sum_{j\in \z}\bigl <U'\bigr |
g\psi_{j}^{(\gamma )}\bigl |U\bigr > \bigl <V'\bigr |
     g\psi_{j}^{*(\gamma )}\bigl |V\bigr >.
\eeq
Using the free fermionic fields $\psi^{(\alpha)}(z)$ and
$\psi^{*(\alpha)}(z)$, we can rewrite the operator bilinear identity in
the following form: 
\begin{equation}
    \sum_{\gamma =1}^N \mbox{res} \left [\frac{dz}{z} \, 
    \psi^{(\gamma )}(z)g\otimes \psi^{*(\gamma )}(z)g\right ]
    =
    \sum_{\gamma =1}^N \mbox{res} \left [\frac{dz}{z} \, 
    g\psi^{(\gamma )}(z)\otimes g\psi^{*(\gamma )}(z)\right ]
\label{f3}
\end{equation}
(the operation res  
is defined as $\displaystyle{\mbox{res}\Bigl (
\sum_k a_k z^k \, dz\Bigr )=a_{-1}}$).

\begin{theorem}
\label{theorem:tau}
Let $g$ be a Clifford group element of the form 
(\ref{g}).
Then the function
\begin{equation}
\begin{array}{lll}
    \tau_{\alpha\beta}(\bolds ,{\boldsymbol t} , \bar {\boldsymbol t} )
    &=&(-1)^{|\bolds |(|\bolds |-1)/2}
    \tau(\bolds +[1]_\alpha - [1]_\beta, \bolds ,
         {\boldsymbol t} , \bar {\boldsymbol t} ;g)
         \\ \\
         &=& (-1)^{|\bolds |(|\bolds |-1)/2}
    \lbr \bolds +[1]_\alpha - [1]_\beta \right | 
     e^{J({\boldsymbol t} )} g e^{-\bar J(\bar {\boldsymbol t} )}
    \left |\bolds \rbr
    \end{array}
\label{def:tau-by-fermion}
\end{equation}
is the tau-function of the multi-component Toda lattice hierarchy,
i.e., it satisfies the integral bilinear equation (\ref{main}).
\end{theorem}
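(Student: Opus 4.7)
My approach is to derive the bilinear equation \eqref{main} for the proposed fermionic tau-function \eqref{def:tau-by-fermion} by sandwiching the operator bilinear identity \eqref{f3} between a carefully chosen pair of tensor-product states and then converting each side into products of tau-functions using the multi-component bosonization rules \eqref{psi(z)-on-|s>-<s|}. Concretely I pair \eqref{f3} with $\langle \bolds+[1]_\alpha|\,e^{J({\boldsymbol t})}\otimes\langle \bolds'-[1]_\beta|\,e^{J({\boldsymbol t}')}$ on the left and $e^{-\bar J(\bar{\boldsymbol t})}|\bolds\rangle\otimes e^{-\bar J(\bar{\boldsymbol t}')}|\bolds'\rangle$ on the right, so that each tensor factor, by construction, assembles into an overlap that matches the definition \eqref{def:tau-by-fermion} of $\tau_{\alpha\gamma}$ or $\tau_{\gamma\beta}$. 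The fermionic sum $\sum_{j\in\ZZ}\psi^{(\gamma)}_j\otimes\psi^{*(\gamma)}_j$ that expresses the residue operator is packaged through the standard identity $\mathrm{res}_z[(dz/z)\,\psi^{(\gamma)}(z)\otimes\psi^{*(\gamma)}(z)]$, which converts the fermionic sum into the contour integrals of \eqref{main}.

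For the left-hand side of \eqref{f3}, in each $\gamma$-summand I first use \eqref{eJpsi=psieJ} to commute $\psi^{(\gamma)}(z)$ through $e^{J({\boldsymbol t})}$, picking up the factor $e^{\xi({\boldsymbol t}_\gamma,z)}$, and then apply the bra-bosonization rule from \eqref{psi(z)-on-|s>-<s|}. Because $J([z^{-1}]_\gamma)$ and $J({\boldsymbol t})$ lie in the commutative positive-current subalgebra, the surviving factor $e^{-J([z^{-1}]_\gamma)}$ simply reshuffles the time argument of the remaining overlap as ${\boldsymbol t}\mapsto{\boldsymbol t}-[z^{-1}]_\gamma$, yielding
$\epsilon_\gamma(\bolds+[1]_\alpha)\,z^{s_\gamma+\delta_{\alpha\gamma}-1}\,e^{\xi({\boldsymbol t}_\gamma,z)}\,(-1)^{|\bolds|(|\bolds|-1)/2}\,\tau_{\alpha\gamma}(\bolds,{\boldsymbol t}-[z^{-1}]_\gamma,\bar{\boldsymbol t})$.
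The symmetric manipulation with $\psi^{*(\gamma)}(z)$ on the second tensor factor produces the companion $\tau_{\gamma\beta}(\bolds',{\boldsymbol t}'+[z^{-1}]_\gamma,\bar{\boldsymbol t}')$ with analogous prefactor. The right-hand side of \eqref{f3} is treated dually: $\psi^{(\gamma)}(z)$ and $\psi^{*(\gamma)}(z)$ are pushed past $e^{-\bar J(\bar{\boldsymbol t})}$ via \eqref{eJpsi=psieJ} and then act on $|\bolds\rangle$ and $|\bolds'\rangle$ through the ket-bosonization of \eqref{psi(z)-on-|s>-<s|}. The residual $e^{\bar J([z]_\gamma)}$ commutes with $e^{-\bar J(\bar{\boldsymbol t})}$ because both lie in the negative-current subalgebra, so its effect is absorbed into a Miwa-type shift $\bar{\boldsymbol t}\mapsto\bar{\boldsymbol t}-[z^{-1}]_\gamma$ of the bar-times. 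Collecting everything reproduces the integrand $z^{s'_\gamma-s_\gamma+\delta_{\beta\gamma}-2}\,e^{\xi(\bar{\boldsymbol t}_\gamma-\bar{\boldsymbol t}'_\gamma,z)}\tau_{\alpha\gamma}(\bolds+[1]_\gamma,{\boldsymbol t},\bar{\boldsymbol t}-[z^{-1}]_\gamma)\tau_{\gamma\beta}(\bolds'-[1]_\gamma,{\boldsymbol t}',\bar{\boldsymbol t}'+[z^{-1}]_\gamma)$ on the right-hand side of \eqref{main}.

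The hard part is the sign bookkeeping. The derivation sketched above produces the structure of \eqref{main} multiplied on each side by a product of Jordan-Wigner factors $\epsilon_\gamma(\cdot)$, one per application of \eqref{psi(z)-on-|s>-<s|}, together with the factors $(-1)^{|\cdot|(|\cdot|-1)/2}$ carried by the definition \eqref{def:tau-by-fermion} of $\tau_{\alpha\beta}$. I must verify that these combine to exactly $\epsilon_{\alpha\gamma}(\bolds)\epsilon^{-1}_{\beta\gamma}(\bolds')$ on the left and the analogous product on the right, up to a common $\gamma$-independent factor that cancels between the two sides. Using the explicit formulas \eqref{mt4}--\eqref{mt4c}, in which each $\epsilon_{\alpha\gamma}(\bolds)$ is $\pm(-1)^{\text{partial sum of }s_\mu}$, this reduces to an elementary lattice computation; a minor subtlety is that the discrete shifts $\bolds\pm[1]_\gamma$ hidden in the companion tau-functions on the right-hand side change $|\bolds|$ by $\pm 1$ and therefore alter the triangular-number exponent in \eqref{def:tau-by-fermion}, producing an extra sign that has to be tracked. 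Once the sign identities are verified the computation yields \eqref{main} as claimed; an equivalent and slightly more economical route is to observe that the derivation actually reproduces \eqref{mt3b1}, whose equivalence to \eqref{main} and \eqref{n6} was established in the final Proposition of Section~\ref{sec:toda-from-universal-hierarchy}.
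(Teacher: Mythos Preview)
Your approach is essentially the same as the paper's: sandwich the operator identity \eqref{f3} between suitable bra and ket states, push the fermion fields through $e^{\pm J}$ via \eqref{eJpsi=psieJ}, apply the bosonization rules \eqref{psi(z)-on-|s>-<s|}, and then match the resulting sign factors against those of \eqref{main}. The paper first writes the general identity \eqref{bil-general-tau:fermion} for arbitrary $(\bolds,\bar\bolds,\bolds',\bar\bolds')$ and only afterwards specializes via $\bolds\to\bolds+[1]_\alpha$, $\bolds'\to\bolds'-[1]_\beta$, $\bar\bolds=\bolds$, $\bar\bolds'=\bolds'$, whereas you fix the states from the outset; this is the same computation organized slightly differently.

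One small slip: the right-hand integrand of \eqref{main} carries $z^{s'_\gamma-s_\gamma-2}$, not $z^{s'_\gamma-s_\gamma+\delta_{\beta\gamma}-2}$ as you wrote; the ket-bosonization acts on $|\bolds\rangle$ and $|\bolds'\rangle$ (not on their $[1]_\beta$-shifts), so no $\delta_{\beta\gamma}$ appears there. Also, your remark that the derivation ``actually reproduces \eqref{mt3b1}'' is not quite accurate with your choice of states---your pairing lands directly on \eqref{bil-tau:fermion:1}, which the paper then checks agrees with \eqref{main} by the elementary identities $\epsilon_\gamma(\bolds+[1]_\alpha)=\epsilon_{\alpha\gamma}(\bolds)\epsilon_\alpha(\bolds)$ and $\epsilon_\gamma(\bolds)=\epsilon_{\alpha\gamma}\,\epsilon_{\alpha\gamma}(\bolds)\epsilon_\alpha(\bolds)$, after which the common factor $\epsilon_\alpha(\bolds)\epsilon_\beta(\bolds')$ cancels.
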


\begin{proof}
Putting the identity (\ref{f3}) between the states
$
    \lbr \bolds   \right |e^{J({\boldsymbol t} )}
    \otimes 
    \lbr \bolds ' \right |e^{J({\boldsymbol t} ')}
$
and
$
     e^{-\bar J(\bar {\boldsymbol t} )}\left| \bar \bolds   \rbr
    \otimes 
     e^{-\bar J(\bar {\boldsymbol t} ')}\left| \bar \bolds ' \rbr
$,
we obtain, using \eqref{eJpsi=psieJ}:
\beq\label{x1}
\begin{array}{l}
\displaystyle{
\sum_{\gamma =1}^N \oint_{C_{\infty}}\frac{dz}{z}
e^{\xi (t_{\gamma}-t'_{\gamma}, z)}
\bigl < \bolds \bigl | \psi^{(\gamma )}(z)e^{J({\boldsymbol t} )}g 
e^{-\bar J(\bar {\boldsymbol t} )}\bigr | \bar \bolds \bigr >
\bigl < \bolds '\bigl | \psi^{*(\gamma )}(z)e^{J({\boldsymbol t} ')}g 
e^{-\bar J(\bar {\boldsymbol t} ')}\bigr | \bar \bolds '\bigr > }
\\ \\
\displaystyle{
=\sum_{\gamma =1}^N \oint_{C_{0}}\frac{dz}{z}
e^{\xi (\bar t_{\gamma}-\bar t'_{\gamma}, z^{-1})}
\bigl < \bolds \bigl | e^{J({\boldsymbol t} )}g 
e^{-\bar J(\bar {\boldsymbol t} )}\psi^{(\gamma )}(z)\bigr | \bar \bolds \bigr >
\bigl < \bolds '\bigl | e^{J({\boldsymbol t} ')}g 
e^{-\bar J(\bar {\boldsymbol t} ')}\psi^{*(\gamma )}(z)
\bigr | \bar \bolds '\bigr > }
\end{array}
\eeq
Application of the bosonization rules 
(\ref{psi(z)-on-|s>-<s|}) yields:
\beq\label{x2}
\begin{array}{l}
\displaystyle{
\sum_{\gamma =1}^N \epsilon_{\gamma}(\bolds )
\epsilon_{\gamma}(\bolds ')
\oint_{C_{\infty}}\frac{dz}{z} z^{s_{\gamma}-s'_{\gamma}-1}
e^{\xi (t_{\gamma}-t'_{\gamma}, z)}}
\\ \\
\displaystyle{\phantom{aaaaaa}
\times \bigl < \bolds -[1]_{\gamma}
\bigl | e^{J({\boldsymbol t} -[z^{-1}]_{\gamma})}g 
e^{-\bar J(\bar {\boldsymbol t} )}\bigr | \bar \bolds \bigr >
\bigl < \bolds '+[1]_{\gamma}\bigl | 
e^{J({\boldsymbol t} '+[z^{-1}]_{\gamma})}g 
e^{-\bar J(\bar {\boldsymbol t} ')}\bigr | \bar \bolds '\bigr > }
\\ \\
\displaystyle{
=\sum_{\gamma =1}^N \epsilon_{\gamma}(\bar \bolds )
\epsilon_{\gamma}(\bar \bolds ')
\oint_{C_{0}}\frac{dz}{z} z^{\bar s_{\gamma}-\bar s'_{\gamma}+1}
e^{\xi (\bar t_{\gamma}-\bar t'_{\gamma}, z^{-1})}}
\\ \\
\displaystyle{\phantom{aaaaaa}
\times \bigl < \bolds \bigl | e^{J({\boldsymbol t} )}g 
e^{-\bar J(\bar {\boldsymbol t} -[z]_{\gamma})}
\bigr | \bar \bolds +[1]_{\gamma}\bigr >
\bigl < \bolds '\bigl | e^{J({\boldsymbol t} ')}g 
e^{-\bar J(\bar {\boldsymbol t} '+[z]_{\gamma})}
\bigr | \bar \bolds '-[1]_{\gamma}\bigr > }
\end{array}
\eeq
Using
the definition of the tau-function (\ref{f1}), we can rewrite this 
in the form
\begin{equation}
 \begin{aligned}
    \sum_{\gamma =1}^N
    \epsilon_\gamma(\bolds ) \epsilon_\gamma(\bolds ')
    &\oint_{C_{\infty}} \frac{dz}{z} \, 
    z^{s_\gamma-s'_\gamma-1}\, 
    e^{\xi({\boldsymbol t} _\gamma-{\boldsymbol t} '_\gamma,z)}
\\
    &\times
    \tau(\bolds -[1]_\gamma, \bar \bolds ,
         {\boldsymbol t} -[z^{-1}]_\gamma, \bar {\boldsymbol t} ;g)\, 
    \tau(\bolds '+[1]_\gamma,{\bf \bar s}',
         {\boldsymbol t} '+[z^{-1}]_\gamma, \bar {\boldsymbol t} ';g)
\\
    =
    \sum_{\gamma =1}^N
    \epsilon_\gamma(\bar \bolds ) \epsilon_\gamma(\bar \bolds ')
    &\oint_{C_{0}} \frac{dz}{z} \, 
    z^{\bar s_\gamma-\bar s'_\gamma+1}\,     
    e^{\xi(\bar {\boldsymbol t}_\gamma - 
    \bar {\boldsymbol t} '_\gamma,z^{-1})}
\\
    &\times
    \tau(\bolds , \bar \bolds +[1]_\gamma,
         {\boldsymbol t} , \bar {\boldsymbol t} -[z]_\gamma;g)\, 
    \tau(\bolds ',\bar \bolds '-[1]_\gamma,
         {\boldsymbol t} ', \bar {\boldsymbol t} '+[z]_\gamma;g).
 \end{aligned}
\label{bil-general-tau:fermion}
\end{equation}
Now let us shift
$\bolds \to \bolds +[1]_{\alpha}$, 
$\bolds '\to \bolds '-[1]_{\beta}$ in (\ref{bil-general-tau:fermion}) 
and put $\bar \bolds =\bolds $, $\bar \bolds' =\bolds'$
after that.
In the notation defined by \eqref{def:tau-by-fermion}
equation
\eqref{bil-general-tau:fermion} acquires the form
\begin{equation}
 \begin{array}{l}
 \displaystyle{
    \sum_{\gamma =1}^N
    \epsilon_\gamma(\bolds +[1]_{\alpha}) 
    \epsilon_\gamma(\bolds '-[1]_{\beta})\epsilon_{0N}(\bolds )
    \oint_{C_{\infty}} 
    z^{s_\gamma-s'_\gamma+\delta_{\alpha\gamma}+
    \delta_{\beta \gamma}-2}\, 
    e^{\xi({\boldsymbol t} _\gamma-{\boldsymbol t} '_\gamma,z)}}
\\ \\
\displaystyle{\phantom{aaaaaaaaaaaaaaaaa}
    \times
    \tau_{\alpha\gamma}(\bolds ,
                        {\boldsymbol t} -[z^{-1}]_\gamma,  \bar {\boldsymbol t} )\, 
    \tau_{\gamma\beta} (\bolds ',
                        {\boldsymbol t} '+[z^{-1}]_\gamma, \bar {\boldsymbol t} ')\, dz }
\\ \\
\displaystyle{
    =-
    \sum_{\gamma =1}^N
    \epsilon_\gamma(\bolds ) \epsilon_\gamma(\bolds ')
    \epsilon_{0N}(\bolds ')
    \oint_{C_{\infty}}  
    z^{s'_\gamma-s_\gamma -2}\, 
    e^{\xi(\bar {\boldsymbol t} _\gamma- \bar {\boldsymbol t} '_\gamma,z)}}
\\ \\
\displaystyle{\phantom{aaaaaaaaaaaaa}
    \times
    \tau_{\alpha\gamma}(\bolds +[1]_\gamma,
                        {\boldsymbol t} , \bar {\boldsymbol t} -[z^{-1}]_\gamma)\, 
    \tau_{\gamma\beta} (\bolds '-[1]_\gamma,
                        {\boldsymbol t} ', \bar {\boldsymbol t} '+[z^{-1}]_\gamma)\, dz,}
 \end{array}
\label{bil-tau:fermion:1}
\end{equation}
where we have changed the integration variable $z\to z^{-1}$ in the
right-hand side (the orientation of the contour has been changed
accordingly).

Let us compare this equation with (\ref{main}). The only difference is
in the sign factors in front of the integrals. However, a simple 
verification shows that the sign factors in the left-hand side are
$$
\begin{array}{l}
\epsilon_{\gamma}(\bolds +[1]_{\alpha})=\epsilon_{\alpha \gamma}(\bolds )
\epsilon_{\alpha}(\bolds ),
\qquad
\epsilon_{\gamma}(\bolds '-[1]_{\beta})=\epsilon_{\beta \gamma}(\bolds ')
\epsilon_{\beta}(\bolds '),
\end{array}
$$
while the sign factors in the right-hand side are
$$
\begin{array}{l}
\epsilon_{\gamma}(\bolds )=\epsilon_{\alpha \gamma}
\epsilon_{\alpha \gamma}(\bolds )\epsilon_{\alpha}(\bolds ),
\qquad
\epsilon_{\gamma}(\bolds ')=\epsilon_{\beta \gamma}
\epsilon_{\beta \gamma}(\bolds ')\epsilon_{\beta}(\bolds ').
\end{array}
$$
Extracting the common multipliers $\epsilon_{\alpha}({\bf
s})\epsilon_{\beta}(\bolds ')$ and taking into account that $s_{\gamma},
s_{\gamma}'$ are integers, we see that equation
(\ref{bil-tau:fermion:1}) is identical to (\ref{main}).  Therefore, the
function $\tau_{\alpha \beta} (\bolds , 
{\boldsymbol t} , \bar {\boldsymbol t} )$ is
indeed the tau-function of the multi-component Toda lattice hierarchy.
\end{proof}

Let us note that the tau-function introduced in Section
\ref{section:tau-function} and the one constructed here from fermions
differ by a sign factor:
\beq\label{differ}
\tau^{({\rm sec.5})}(\bolds , \boldt , \bar \boldt )=
\epsilon_{\alpha \beta}(\bolds )
\tau^{({\rm sec.7})}(\bolds , \boldt , \bar \boldt ).
\eeq
This relation should be taken into account in dealing with 
an example of exact solution of the non-abelian Toda lattice 
equation in the appendix.

\subsection{An example of exact solution}
\label{subsection:example}

Here we present a simple example of exact solution which is
analogous to the well-known 
one-soliton solution of the one-component Toda lattice.

Let us take the Clifford group element of the form
\beq\label{ex1}
g=\normordbare
\exp \Bigl (\sum_{\mu , \nu =1}^N A_{\mu \nu}\psi^{*(\mu )}(q)
\psi^{(\nu )}(p)\Bigr ) \normordbare ,
\eeq
where $A_{\mu \nu}$ is some $N\times N$ matrix
and the normal ordering $\normordbare (\ldots )\normordbare$
means that the $\psi$-operators are moved to the right while 
$\psi^{*}$-operators are moved to the left (with the minus sign
factor appearing each time when two neighboring operators are permuted).
Expanding the exponential
function, we have:
\beq\label{ex2}
\begin{array}{l}
\displaystyle{
g=1+\sum_{\mu , \nu } A_{\mu \nu}\psi^{*(\mu )}(q)\psi^{(\nu )}(p)}
\\ \\
\displaystyle{\phantom{aaaaaaaaaa}
+\, \frac{1}{2!}
\sum_{\mu_1, \nu_1}\sum_{\mu_2 , \nu_2}
A_{\mu_1 , \nu_1}A_{\mu_2 , \nu_2}\psi^{*(\mu_1 )}(q)
\psi^{*(\mu_2 )}(q)\psi^{(\nu_2 )}(p)\psi^{(\nu_1 )}(p)+\ldots \, .}
\end{array}
\eeq
In order to find the tau-function
$$
\tau_{\alpha \beta}(\bolds +[1]_{\beta}, \boldt , \boldtbar )=
\bigl < \bolds +[1]_{\alpha}\bigl |
e^{J(\boldt )}ge^{-\bar J(\boldtbar )}\bigr |\bolds +[1]_{\beta}\bigr >
$$
explicitly, we use the Wick's theorem and the pair correlation
functions
\beq\label{ex3}
\begin{array}{l}
\displaystyle{
\bigl < \bolds \bigl |
\psi^{*(\mu )}(q)\psi^{(\nu )}(p)\bigr |\bolds \bigr >=
\delta_{\mu \nu}\frac{p^{s_{\mu}}q^{1-s_{\mu}}}{q-p},}
\\ \\
\displaystyle{
\bigl < \bolds +[1]_{\alpha}\bigl |
\psi^{*(\mu )}(q)\psi^{(\nu )}(p)\bigr |\bolds +[1]_{\beta} \bigr >=
\epsilon_{\alpha}(\bolds )\epsilon_{\beta}(\bolds )
\delta_{\mu \beta}\delta_{\nu \alpha}p^{s_{\alpha}}q^{-s_{\beta}},
\quad \alpha \neq \beta .}
\end{array}
\eeq
In fact we will deal with the
slightly modified tau-function\footnote{This modification was
introduced in (1.3.32) of \cite{UT84} for
the one-component case and was interpreted as
$
    \tau'(s,\boldt,\boldtbar)
    =
    \langle s|
     \text{Ad}(e^{J(\boldt)} e^{\bar J(\boldtbar)}) g
    |s \rangle
$
in (2.2) of \cite{Tak91}.}
\beq\label{ex4}
\begin{array}{lll}
\tau'_{\alpha \beta}(\bolds +[1]_{\beta}, \boldt , \boldtbar )&=&
\displaystyle{
\exp \Bigl (\sum_{\gamma}\sum_{k\geq 1}
kt_{\gamma ,k}\bar t_{\gamma ,k}\Bigr )
\tau_{\alpha \beta}(\bolds +[1]_{\beta} , \boldt , \boldtbar )}
\\ && \\
&=& \bigl < \bolds +[1]_{\alpha}\bigl |
e^{\bar J(\boldtbar )}
e^{J(\boldt )}ge^{-J(\boldt )}
e^{-\bar J(\boldtbar )}\bigr |\bolds +[1]_{\beta}\bigr >.
\end{array}
\eeq

Let us first consider the case $\beta =\alpha$. Using the Wick's theorem
and the expansion (\ref{ex2}),
we find for $\tau'(\bolds , \boldt , \boldtbar )=
\bigl <\bolds \bigr |e^{\bar J(\boldtbar )}
e^{J(\boldt )}ge^{-J(\boldt )}
e^{-\bar J(\boldtbar )}\bigr |\bolds \bigr >$:
\beq\label{ex5}
\begin{array}{c}
\displaystyle{
\tau'(\bolds , \boldt , \boldtbar )=
1+ \! \sum_{k=1}^N \frac{1}{k!} 
\Bigl ( \frac{q}{q\! -\! p}\Bigr )^k \!\!
\!\! \sum_{\nu_1, \ldots , \nu_k}
\!\! \det \left (
\begin{array}{cccc}
A_{\nu_1 \nu_1}(\bolds , \boldt , \bar \boldt )
& \!\! A_{\nu_1 \nu_2}(\bolds , \boldt , \bar \boldt ) & \!\!
\ldots & \!\! A_{\nu_1 \nu_k}(\bolds , \boldt , \bar \boldt )
\\ &&& \\
A_{\nu_2 \nu_1}(\bolds , \boldt , \bar \boldt )
& \!\! A_{\nu_2 \nu_2}(\bolds , \boldt , \bar \boldt ) & \!\!
\ldots & \!\! A_{\nu_2 \nu_k}(\bolds , \boldt , \bar \boldt )
\\ &&& \\
\vdots & \vdots & \ddots & \!\! \vdots 
\\ &&& \\
A_{\nu_k \nu_1}(\bolds , \boldt , \bar \boldt )
& \!\! A_{\nu_k \nu_2}(\bolds , \boldt , \bar \boldt ) & \!\!
\ldots & \!\! A_{\nu_k \nu_k}(\bolds , \boldt , \bar \boldt )
\end{array}
\right )}
\\ \\
\displaystyle{
=1+\sum_{\nu}A_{\nu \nu}(\bolds , \boldt , \bar \boldt )+
\frac{1}{2!} \sum_{\nu_1, \nu_2}
\det \left (
\begin{array}{cc}
A_{\nu_1 \nu_1}(\bolds , \boldt , \bar \boldt )&
A_{\nu_1 \nu_2}(\bolds , \boldt , \bar \boldt )
\\ & \\
A_{\nu_2 \nu_1}(\bolds , \boldt , \bar \boldt )&
A_{\nu_2 \nu_2}(\bolds , \boldt , \bar \boldt )
\end{array}
\right ) +\, \ldots \, ,}
\end{array}
\eeq
where
\beq\label{ex6}
A_{\mu \nu}(\bolds , \boldt , \bar \boldt )=A_{\mu \nu}
q^{-s_{\mu}}p^{s_{\nu}}e^{\eta_{\mu \nu}(\boldt , \boldtbar , p,q)}
\eeq
and
$$
\eta_{\mu \nu}(\boldt , \boldtbar , p,q)=
\xi (\boldt_{\mu}, p)+\xi (\boldtbar_{\mu}, p^{-1})-
\xi (\boldt_{\nu}, q)-\xi (\boldtbar_{\nu}, q^{-1}).
$$
To see what it is, we recall the following well-known lemma:

\begin{lem}
\label{lemma:det(1+M)}
For an $N\times N$ matrix $M$, $\det (1_N+M)$ is $1$ plus the sum
of all diagonal minors of $M$
 (the finite-dimensional Fredholm determinant):
\beq\label{ex7}
\det_{N\times N}(1_N +M)=
1+ 
\sum_{k=1}^N \frac{1}{k!} \sum_{\nu_1, \ldots , \nu_k}
\det \left (
\begin{array}{cccc}
M_{\nu_1 \nu_1}
& M_{\nu_1 \nu_2} & 
\ldots & M_{\nu_1 \nu_k}
\\ 
M_{\nu_2 \nu_1}
& M_{\nu_2 \nu_2}&
\ldots & M_{\nu_2 \nu_k}
\\ 
\vdots & \vdots & \ddots & \vdots 
\\ 
M_{\nu_k \nu_1}
& M_{\nu_k \nu_2} &
\ldots & M_{\nu_k \nu_k}
\end{array}
\right ).
\eeq
\end{lem}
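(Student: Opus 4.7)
The plan is to prove this by a direct expansion of the determinant via the Leibniz formula, grouping terms according to which factors contribute a $\delta$ and which contribute an $M$. This is a classical identity (it is equivalent to the statement that the characteristic polynomial of $-M$ has coefficients given by principal minors), so the aim is just to organize the combinatorics cleanly.

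First I would write
\[
    \det(1_N+M)
    =
    \sum_{\sigma\in S_N}\mathrm{sgn}(\sigma)\prod_{i=1}^N
    \bigl(\delta_{i,\sigma(i)}+M_{i,\sigma(i)}\bigr),
\]
and expand the product over $i$ by choosing, for each $i$, either the $\delta$ or the $M$ contribution. Each term is then indexed by a subset $S\subseteq\{1,\dotsc,N\}$ collecting those $i$ for which the $M$-factor is taken. For $i\notin S$ the factor $\delta_{i,\sigma(i)}$ forces $\sigma(i)=i$, so $\sigma$ must restrict to a permutation of $S$, and its sign as an element of $S_N$ equals its sign as a permutation of $S$. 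Regrouping gives
\[
    \det(1_N+M)
    =
    \sum_{S\subseteq\{1,\dotsc,N\}}
    \sum_{\sigma\in S_S}
    \mathrm{sgn}(\sigma)\prod_{i\in S}M_{i,\sigma(i)}
    =
    \sum_{S\subseteq\{1,\dotsc,N\}}
    \det\bigl(M_{ij}\bigr)_{i,j\in S},
\]
where the $k=0$ term (empty $S$) contributes $1$.

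Next I would convert the sum over subsets into the unordered sum of size $k$ that appears in \eqref{ex7}. Splitting by $|S|=k$ and writing $S=\{\nu_1<\dotsb<\nu_k\}$ gives
\[
    \det(1_N+M)
    =
    1+\sum_{k=1}^N\sum_{1\leq\nu_1<\dotsb<\nu_k\leq N}
    \det\bigl(M_{\nu_i\nu_j}\bigr)_{i,j=1}^k.
\]
To pass to the form stated in the lemma I would replace the ordered sum by the unrestricted sum $\sum_{\nu_1,\dotsc,\nu_k=1}^N$ and compensate with the factor $\frac1{k!}$. This is justified by two observations: if all $\nu_i$ are distinct then each $k$-subset is produced exactly $k!$ times, and if two of the indices coincide then the corresponding principal minor vanishes because the matrix $(M_{\nu_i\nu_j})$ has two equal rows (and two equal columns). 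Hence these ``degenerate'' terms contribute zero and do not disturb the identity, yielding \eqref{ex7} as stated.

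There is no real obstacle here; the only point requiring a moment's care is the bookkeeping at the last step, namely that the unrestricted sum $\sum_{\nu_1,\dotsc,\nu_k}$ vanishes on tuples with repetitions and that the sign of the Leibniz permutation $\sigma\in S_N$ fixing the complement of $S$ pointwise coincides with its sign as a permutation of $S$. Once those are noted the identity follows.
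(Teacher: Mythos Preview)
Your proof is correct and cleanly organized. The paper, however, does not actually prove this lemma: it introduces it as a ``well-known lemma'' and uses it immediately without argument. So there is nothing to compare at the level of technique; you have supplied the standard Leibniz-expansion proof where the paper simply quotes the result.
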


\noindent
Therefore, comparing this with (\ref{ex5}), we conclude that
\beq\label{ex8}
\tau'(\bolds , \boldt , \boldtbar )=\det_{N\times N}
\left (1_N + \frac{q}{q-p}\, A(\bolds , \boldt , \boldtbar )\right ).
\eeq
At $N=1$ this formula gives the one-soliton tau-function of the 
one-component Toda lattice. 

In the case $\alpha \neq \beta$ the calculation is slightly more 
involved. We have:
\beq\label{ex9}
\tau '_{\alpha \beta}(\bolds +[1]_{\beta}, \boldt , \bar \boldt )=
\epsilon_{\alpha}(\bolds )\epsilon_{\beta}(\bolds )
\sum_{k=1}^{N-1}\frac{1}{k!}\, D_k^{(\alpha \beta )} (\bolds , \boldt ,
\bar \boldt ),
\eeq
where
\beq\label{ex9a}
\begin{array}{l}
\displaystyle{
D_k^{(\alpha \beta )}(\bolds , \boldt ,
\bar \boldt )=\sum_{{\mu_1, \ldots , \mu_k}\atop
{\nu_1, \ldots , \nu_k}}A_{\mu_1 \nu_1}e^{\eta_{\mu_1 \nu_1}(\boldt ,
\bar \boldt , p,q)}
\ldots A_{\mu_k \nu_k}e^{\eta_{\mu_k \nu_k}(\boldt ,
\bar \boldt , p,q)}}
\\ \\
\displaystyle{
\times \, \bigl <\bolds \bigr |\psi_{s_{\alpha}}^{*(\alpha )}
\psi^{*(\mu_1)}(q)\ldots \psi^{*(\mu_k)}(q)
\psi^{(\nu_k)}(p)\ldots \psi^{(\nu_1)}(p)
\psi_{s_{\beta}}^{(\beta )}\bigl | \bolds \bigr >}
\end{array}
\eeq
(here it is assumed that $s_{\alpha}, s_{\beta}\geq 0$). Applying the 
Wick's theorem, we can represent the multi-point correlation function 
as determinant of the two-point ones:
\beq\label{ex9b}
D_k^{(\alpha \beta )}(\bolds , \boldt ,
\bar \boldt )=\!\! \sum_{{\mu_1, \ldots , \mu_k}\atop
{\nu_1, \ldots , \nu_k}}A_{\mu_1 \nu_1}e^{\eta_{\mu_1 \nu_1}(\boldt ,
\bar \boldt , p,q)}
\ldots A_{\mu_k \nu_k}e^{\eta_{\mu_k \nu_k}(\boldt ,
\bar \boldt , p,q)}
p^{s_{\alpha}}q^{-s_{\beta}}
\det C(\bolds ,\{\mu_i\}, \{\nu_i\}).
\eeq
Here $C(\bolds, \{\mu_i\}, \{\nu_i\})$ is the $(k+1)\times (k+1)$ matrix
$$
\left (
\begin{array}{ccccc}
0 & \delta_{\alpha \nu_1} & 
\delta_{\alpha \nu_2}& \ldots &\!
\delta_{\alpha \nu_k}
\\ &&&& \\
\delta_{\beta \mu_1}& 
\bigl < \bolds \bigr |\psi^{*(\mu_1)}(q)\psi^{(\nu_1)}(p)\bigl |\bolds 
\bigr > & 
\bigl < \bolds \bigr |\psi^{*(\mu_1)}(q)\psi^{(\nu_2)}(p)\bigl |\bolds 
\bigr > & \ldots &\!\!
\bigl < \bolds \bigr |\psi^{*(\mu_1)}(q)\psi^{(\nu_k)}(p)\bigl |\bolds 
\bigr > 
\\ &&&& \\
\delta_{\beta \mu_2}& 
\bigl < \bolds \bigr |\psi^{*(\mu_2)}(q)\psi^{(\nu_1)}(p)\bigl |\bolds 
\bigr > & 
\bigl < \bolds \bigr |\psi^{*(\mu_2)}(q)\psi^{(\nu_2)}(p)\bigl |\bolds 
\bigr > & \ldots &\!\!
\bigl < \bolds \bigr |\psi^{*(\mu_2)}(q)\psi^{(\nu_k)}(p)\bigl |\bolds 
\bigr > 
\\ &&&& \\
\vdots & \vdots & \vdots & \ddots &\!\! \vdots
\\ &&&& \\
\delta_{\beta \mu_k}& 
\bigl < \bolds \bigr |\psi^{*(\mu_k)}(q)\psi^{(\nu_1)}(p)\bigl |\bolds 
\bigr > & 
\bigl < \bolds \bigr |\psi^{*(\mu_k)}(q)\psi^{(\nu_2)}(p)\bigl |\bolds 
\bigr > & \ldots &\!\!
\bigl < \bolds \bigr |\psi^{*(\mu_k)}(q)\psi^{(\nu_k)}(p)\bigl |\bolds 
\bigr > \end{array}
\right ).
$$
Expanding the determinant in the first row and the first column,
we get:
$$
D_k^{(\alpha \beta )}(\bolds, \boldt,\bar \boldt )\! =\!
\Bigl ( \frac{q}{q\! -\! p}\Bigr )^{k-1} \!
\sum_{a,b=1}^k \! (-1)^{a \!+\! b\! +\! 1}\!
A_{\mu_1 \nu_1}(\bolds, \! \boldt,\! \bar \boldt )\ldots
A_{\mu_k \nu_k}(\bolds, \! \boldt,\! \bar \boldt )
\delta_{\mu_a \beta}\delta_{\nu_b \alpha}\!
\det_{{1\leq i,j\leq k}\atop {i\neq a, j\neq b}}\!
\Bigl (\delta_{\mu_i \nu_j}\Bigr )
$$
After some transformations we obtain the following result:
\beq\label{ex10}
D_k^{(\alpha \beta )}=-k \Bigl (\frac{q}{q-p}\Bigr )^{k-1}
\!\! \sum_{\nu_1 , \ldots , \nu_{k-1}}
\det \left (
\begin{array}{ccccc}
A_{\beta \alpha}
& A_{\nu_1 \alpha}
& A_{\nu_2 \alpha} & 
\ldots & \! A_{\nu_{k-1} \alpha}
\\ 
A_{\beta \nu_1}
& A_{\nu_1 \nu_1}
& A_{\nu_2 \nu_1} & 
\ldots & \! A_{\nu_{k-1} \nu_1}
\\ 
A_{\beta \nu_2}
& A_{\nu_1 \nu_2}  
& A_{\nu_2 \nu_2} & 
\ldots & \! A_{\nu_{k-1} \nu_2}
\\ 
\vdots & \vdots & \vdots & \ddots & \! \vdots 
\\ 
A_{\beta \nu_{k-1}}
& A_{\nu_1 \nu_{k-1}} &
A_{\nu_2 \nu_{k-1}} &
\ldots & \! A_{\nu_{k-1}\nu_{k-1}}
\end{array}
\right )
\eeq
(to save the space, here we do not indicate the dependence on
$\bolds , \boldt , \bar \boldt$).
To see what it is, we need the following lemma.

\begin{lem}
\label{lemma:minor}
For an $N\times N$ matrix $M$, the $\alpha \beta$-minor of the matrix 
$1_N +M$ with $\alpha \neq \beta$,
$$
(1_N +M)_{\hat \alpha \hat \beta} =(-1)^{\alpha +\beta}
\det_{
1\leq \mu , \nu \leq N\atop \mu \neq \alpha , \nu \neq \beta} 
(1_N +M)_{\mu \nu},
$$
is expressed in terms of minors of the matrix $M$ in the following
way:
\beq\label{ex11}
\begin{array}{c}
\displaystyle{
(1_N \! +\! M)_{\hat \alpha \hat \beta}=
-\!\! \sum_{k=1}^{N-1}\frac{1}{(k\! -\! 1)!}
\!\! \sum_{\nu_1 , \ldots , \nu_{k-1}}
\!\!\! \det \left (
\begin{array}{ccccc}
M_{\beta \alpha}
& M_{\nu_1 \alpha}
& M_{\nu_2 \alpha} & 
\ldots & \! M_{\nu_{k-1} \alpha}
\\ 
M_{\beta \nu_1}
& M_{\nu_1 \nu_1}
& M_{\nu_2 \nu_1} & 
\ldots & \! M_{\nu_{k-1} \nu_1}
\\ 
M_{\beta \nu_2}
& M_{\nu_1 \nu_2}  
& M_{\nu_2 \nu_2} & 
\ldots & \! M_{\nu_{k-1} \nu_2}
\\ 
\vdots & \vdots & \vdots & \ddots & \! \vdots 
\\ 
M_{\beta \nu_{k-1}}
& M_{\nu_1 \nu_{k-1}} &
M_{\nu_2 \nu_{k-1}} &
\ldots & \! M_{\nu_{k-1}\nu_{k-1}}
\end{array}
\right )}
\\ \\
\displaystyle{
=-M_{\beta \alpha}-\sum_{\nu}
\det \left (
\begin{array}{cc}
M_{\beta \alpha}& M_{\beta \nu}
\\ 
M_{\nu \alpha}& M_{\nu \nu}
\end{array} \right )-
\frac{1}{2!}
\sum_{\nu_1 , \nu_2}
\det \left (
\begin{array}{ccc}
M_{\beta \alpha}& M_{\beta \nu_1}& M_{\beta \nu_2}
\\ 
M_{\nu_1 \alpha}& M_{\nu_1 \nu_1}& M_{\nu_1 \nu_2}
\\ 
M_{\nu_2 \alpha}& M_{\nu_2 \nu_1}& M_{\nu_2 \nu_2}
\end{array}
\right ) -\ldots \, .}
\end{array}
\eeq
\end{lem}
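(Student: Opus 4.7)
The plan is to prove \eqref{ex11} by reducing to the Fredholm-type expansion of \lemref{lemma:det(1+M)} via the standard cofactor identity
\beq
(1_N+M)_{\hat\alpha\hat\beta}
=\frac{\partial\det(1_N+M)}{\partial M_{\alpha\beta}}
=\frac{d}{ds}\det(1_N+M+s\, E_{\alpha\beta})\Big|_{s=0},
\eeq
where $E_{\alpha\beta}$ denotes the elementary matrix with a $1$ at position $(\alpha,\beta)$ and zeros elsewhere. First I would apply \lemref{lemma:det(1+M)} to the perturbed matrix $M+sE_{\alpha\beta}$, which yields
\beq
\det(1_N+M+sE_{\alpha\beta})
=1+\sum_{k=1}^N\frac{1}{k!}\sum_{\nu_1,\dots,\nu_k}
\det\bigl((M+sE_{\alpha\beta})_{\nu_i\nu_j}\bigr)_{i,j=1}^k,
\eeq
and then differentiate each principal minor in $s$ at $s=0$.

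Since $(E_{\alpha\beta})_{\nu_i\nu_j}=\delta_{\nu_i\alpha}\delta_{\nu_j\beta}$ and $\alpha\neq\beta$, the derivative of a principal minor is nonzero only when the index tuple $(\nu_1,\dots,\nu_k)$ contains both $\alpha$ and $\beta$ and consists of pairwise distinct entries (otherwise the principal minor itself already vanishes). Letting $i_0,j_0$ be the unique positions with $\nu_{i_0}=\alpha$ and $\nu_{j_0}=\beta$, the derivative of the corresponding principal minor equals $(-1)^{i_0+j_0}\det(M_{\nu_i\nu_j})_{i\neq i_0,\, j\neq j_0}$. I would then reorder the rows and columns of this $(k-1)\times (k-1)$ matrix so that the row labeled by $\nu_{j_0}=\beta$ comes first and the column labeled by $\nu_{i_0}=\alpha$ comes first, which identifies the result with $\det(M_{X,Y})$ for $X=\{\beta\}\cup T$, $Y=\{\alpha\}\cup T$, where $T$ collects the remaining $\nu$-values. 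Invoking $\det(A)=\det(A^{T})$ then matches this with the matrix displayed on the right-hand side of \eqref{ex11}.

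The main obstacle will be the sign bookkeeping. One has to check that the $(-1)^{i_0+j_0}$ from the cofactor expansion combines with the two reordering signs, which case by case (depending on whether $i_0<j_0$ or $i_0>j_0$) both turn out to equal $(-1)^{i_0+j_0+1}$, so as to produce the overall minus sign in front of the sum in \eqref{ex11}. The correct combinatorial prefactor then emerges from the fact that each unordered $k$-subset $\{\alpha,\beta\}\cup T$ is counted $k!$ times in the ordered sum over $(\nu_1,\dots,\nu_k)$: this $k!$ cancels the $\tfrac{1}{k!}$ in \lemref{lemma:det(1+M)}, leaving $(k-2)!$ copies of each ordered $(k-2)$-tuple of elements of $T$. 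After the reindexing $k\mapsto k+1$ this produces precisely the factor $\tfrac{1}{(k-1)!}$ and the summation range $k=1,\ldots,N-1$ appearing in \eqref{ex11}, completing the proof.
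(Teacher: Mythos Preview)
Your argument is correct and takes a genuinely different route from the paper. The paper proceeds structurally: it permutes rows and columns of the $(N-1)\times(N-1)$ submatrix obtained by deleting row $\alpha$ and column $\beta$ to bring it into block form
\[
K^{(\alpha\beta)}=\begin{pmatrix}M_{\beta\alpha}&m^{(\beta)}\\\hat m^{(\alpha)}&1_{N-2}+\tilde M_{N-2}\end{pmatrix},
\]
then writes $M_{\beta\alpha}=1+(M_{\beta\alpha}-1)$ so that $K^{(\alpha\beta)}=1_{N-1}+\tilde M_{N-1}$, and applies \lemref{lemma:det(1+M)} directly to this $(N-1)\times(N-1)$ matrix. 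The $-1$ in the $(1,1)$ slot then cancels exactly the principal minors of $\tilde M_{N-2}$, leaving the desired expansion. Your approach instead exploits the identity ``cofactor equals partial derivative'', applies \lemref{lemma:det(1+M)} to the full $N\times N$ matrix $M+sE_{\alpha\beta}$, and differentiates. The paper's method avoids all the sign bookkeeping you had to do (the $(-1)^{i_0+j_0}$ versus $(-1)^{i_0+j_0+1}$ case analysis), while your method is more mechanical and avoids having to spot the block rearrangement and the $+1/-1$ trick. One small slip: your phrase ``leaving $(k-2)!$ copies of each ordered $(k-2)$-tuple'' is not what actually happens --- each ordered $(k-2)$-tuple of $T$ arises from $k(k-1)$ of the $k!$ orderings of $\{\alpha,\beta\}\cup T$, giving weight $\tfrac{k(k-1)}{k!}=\tfrac{1}{(k-2)!}$ per ordered tuple, which after your shift $k\mapsto k+1$ is indeed $\tfrac{1}{(k-1)!}$. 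The conclusion you draw is correct; only the verbal description of the count is off.
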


\begin{proof}
We assume that $\alpha \neq \beta$. 
By permutation of rows and columns of the matrix $1_N+M$ in which the 
$\alpha$-th row and the $\beta$-th column are removed we can represent it
as a block matrix of the form
$$
\left (
\begin{array}{cc}
M_{\beta \alpha} & m^{(\beta )} 
\\ 
\hat m^{(\alpha )} & 1_{N-2}\! +\! \tilde M_{N-2}
\end{array}
\right )=:K^{(\alpha \beta )},
$$
where $m^{(\beta )}$ is the $(N-2)$-dimensional row vector with
components $(m^{(\beta )})_{\mu}=M_{\beta \mu}$, $\mu \neq \alpha , \beta$,
$\hat m^{(\alpha )}$ is the $(N-2)$-dimensional 
column vector with components
$(\hat m^{(\alpha )})_{\mu}=M_{\mu \alpha}$, $\mu \neq \alpha , \beta$ and
$\tilde M_{N-2}$ is the $(N-2)\times (N-2)$ square 
matrix which is obtained from
the matrix $M$ by removing $\alpha$-th and $\beta$-th columns and rows.
By counting the number of necessary permutations 
of rows and columns it is easy to see that
$$
(1_N +M)_{\hat \alpha \hat \beta}=-\det K^{(\alpha \beta )}.
$$
Writing $M_{\beta \alpha}=1+ (M_{\beta \alpha}-1)=:
1+\tilde M_{\beta \alpha}$ with $\tilde M_{\beta \alpha}
=M_{\beta \alpha}-1$, we bring
the matrix $K^{(\alpha \beta )}$ to the form $1_{N-1}+\tilde M_{N-1}$,
so its determinant can be represented as the sum of 
diagonal minors of the matrix 
$$\tilde M_{N-1}=\left (
\begin{array}{cc}
M_{\beta \alpha}\! -\! 1 & m^{(\beta )} 
\\ 
\hat m^{(\alpha )} & \tilde M_{N-2}
\end{array}
\right )
$$ 
as is stated in Lemma \ref{lemma:det(1+M)}. 
An easy calculation shows that in this way 
one obtains equation (\ref{ex11}).
\end{proof}

Therefore, the sum in the right-hand side of equation (\ref{ex9})
is nothing else than the $\alpha \beta$-minor of the matrix
$\displaystyle{1_N +\frac{q}{q-p}\, A(\bolds , 
\boldt , \bar \boldt )}$ (up to
a common multiplier). More precisely, we have:
\beq\label{ex12}
\begin{array}{lll}
\tau '_{\alpha \beta}(\bolds +[1]_{\beta}), \boldt , \bar \boldt )&=&
\displaystyle{
\epsilon_{\alpha}(\bolds )\epsilon_{\beta}(\bolds )
\frac{q\! -\! p}{q}
\Bigl (1_N +\frac{q}{q\! -\! p}\, A(\bolds , 
\boldt , \bar \boldt )\Bigr )_{\hat \alpha \hat \beta}}
\\ && \\
&=&\displaystyle{(-1)^{\alpha +\beta}
\epsilon_{\alpha}(\bolds )\epsilon_{\beta}(\bolds )
\frac{q\! -\! p}{q}
\det_{1\leq \mu , \nu \leq N\atop \mu \neq \alpha , \nu \neq \beta} 
\Bigl (\delta_{\mu \nu}+\frac{q}{q\! -\! p}\, A_{\mu \nu}(\bolds , 
\boldt , \bar \boldt )\Bigr ), 
\quad \alpha \neq \beta .}
\end{array}
\eeq
Together with equation (\ref{ex8}) this gives a multi-component analogue
of the one-soliton solution to the Toda lattice.

\section{Concluding remarks}

In this paper we have introduced an extension of the $N$-component
Toda lattice hierarchy. This hierarchy contains $N$ discrete variables
$\bolds =\{s_1, \ldots , s_N\}$
rather than one, as it goes in the version suggested by Ueno and Takasaki 
in 1984 \cite{UT84}. Simultaneously, we have refined some arguments from
\cite{UT84}. 

We have obtained the multi-component Toda lattice hierarchy in three
different ways, deducing it from different starting points.
 
One of them is the Lax formalism whose main ingredients are 
two Lax operators $\boldL$, $\boldLbar$ and auxiliary operators
$\boldU_{\alpha}, \boldUbar_{\alpha}$, 
$\boldP_{\alpha}, \boldPbar_{\alpha}$, $\alpha =1, \ldots , N$,
which are realized as difference operators with $N\times N$ 
matrix coefficients. These operators are subject to certain
algebraic relations. Their evolution in the time variables 
is given by the Lax equations (or 
discrete Lax equations for evolution 
in the discrete variables $s_1, \ldots , s_N$). We have presented
a detailed proof that the Lax representation is equivalent to the
system of Zakharov-Shabat (or zero curvature) equations. Next, we
have proved the existence of the so-called wave operators 
$\boldW$, $\boldWbar$ from which all other operators of the Lax
formalism are obtained by ``dressing''. With the help of the 
wave operators, one can introduce matrix wave functions which
obey an infinite system of linear equations. Compatibility conditions
for this system is just the Zahkarov-Shabat equations. The wave 
functions, together with their adjoint functions, are shown to
obey a fundamental integral bilinear identity. In its turn, this 
identity implies the existence of the matrix tau-function which
is the most fundamental dependent variable of the hierarchy.
The tau-function is shown to satisfy the integral bilinear
equation which is a sort of generating equation for equations 
of the hierarchy.

The alternative starting point is the multi-component KP hierarchy,
which is essentially equivalent to the so-called universal hierarchy
\cite{KZ23}. We have shown that the $N$-component Toda lattice
hierarchy can be embedded into the $2N$-component universal
hierarchy. Namely, we have shown that under certain conditions
the integral bilinear equation for the latter becomes the 
integral bilinear equation for the former (more precisely, 
to identify them, some simple
redefinition of the tau-functions consisting in multiplying them
by some sign factors is necessary). 

Last but not least, there is an approach based on the quantum
field theory of free fermions, which was developed in early 1980's
by Kyoto school. 
For multi-component hierarchies one should deal with multi-component
fermions. In this formalism, the tau-functions are defined as 
expectation values of certain operators constructed from 
free fermions (Clifford group elements); 
and the integral bilinear equation for the tau-function 
is a corollary of the bilinear identity for fermionic operators
which is a characteristic property of Clifford group elements.
In order to deduce this corollary, one needs certain relations
between fermi- and bose-operators which are often referred to
as bosonization rules. In this paper, we have implemented 
this program. As a result, we have obtained the integral bilinear
equation for the tau-function of the multi-component Toda
lattice hierarchy which turns out to be the same as the ones 
obtained in the framework of the other two approaches.

We hope that the present paper provides a complete treatment 
of the subject. Let us list possible directions 
for further work.

One of them is the 
problem of defining multi-component analogues of the Toda
lattices with constraints of types B and C
introduced and studied in \cite{KZ23a,PZ23} and \cite{KZ22}
respectively.
The former hierarchy is 
equivalent to the so-called large BKP hierarchy
(see, e.g., \cite{GWRC24} and \S 7.4 of the book \cite{book}).

Another direction for further work is studying the dispersionless
limit of the multi-component Toda lattice hierarchy (see \cite{TT95}
for the case $N=1$). 
In the dispersionless limit one should re-scale the independent
variables as
$t_{\alpha,k} \to t_{\alpha,k}/\hbar$, 
$\bar t_{\alpha,k} \to \bar t_{\alpha,k}/\hbar$,
$s_{\alpha}\to s_{\alpha}/\hbar$ and consider solutions (tau-functions)
that have an essential singularity at $\hbar =0$ and have the form
$$
\tau (\bolds /\hbar ,{\boldsymbol t} /\hbar , 
\bar {\boldsymbol t} /\hbar)=
e^{\frac{1}{\hbar^2}F(\bolds , {\boldsymbol t} , 
\bar {\boldsymbol t}, \hbar )}
$$
as $\hbar \to 0$, where $F$ is a smooth function of $\bolds $ and ${\boldsymbol t}, \bar {\boldsymbol t}$
having a regular expansion in $\hbar$ as $\hbar \to 0$. The function
$F=F(\bolds , {\boldsymbol t} , \bar 
{\boldsymbol t}, 0)$ is sometimes called the 
dispersionless tau-function
(although the $\hbar \to 0$ limit of the tau-function 
itself does not exist). The bilinear equations for the tau-function
$\tau (\bolds , {\boldsymbol t}, \bar{\boldsymbol t} )$ lead to non-linear equations 
for the $F$-function. It would be interesting to write these equations
explicitly and compare with the ones obtained in \cite{TT07,Z24}.

It seems to be an intriguing problem to study the $N$-component 
Toda lattice in the limit $N\to \infty$ and to see whether any 
new phenomena arise in this limit. 

Finally, it would be desirable to obtain multi-component 
analogues of multi-soliton solutions to the Toda lattice 
in an explicit form. The natural framework for this
is the free fermion technique. The example of one-soliton solution 
is given in this paper in Section \ref{subsection:example}.

\section*{Appendix: Non-abelian Toda lattice}

\addcontentsline{toc}{section}{Appendix: 
Non-abelian Toda lattice}
\def\theequation{A\arabic{equation}}
\def\theHequation{\theequation}
\setcounter{equation}{0}

In the appendix we consider the non-abelian Toda lattice which
provides an explicit example of the general construction of
Section 2. Namely, we derive the non-abelian Toda lattice equation
from the Zakharov-Shabat representation of the matrix Toda lattice
hierarchy which can be regarded as a subhierarchy of the 
multi-component one discussed
in Section 2. In the matrix Toda hierarchy, the independent variables
$s, t_k, \bar t_k$ are introduced by assigning the variables
$s_{\alpha}, t_{\alpha ,k}, \bar t_{\alpha ,k}$ the following
values:
$$
s_{\alpha}=s_{\alpha}^{(0)}+s, \quad 
t_{\alpha ,k}=t_{\alpha ,k}^{(0)}+t_k, \quad
\bar t_{\alpha ,k}=\bar t_{\alpha ,k}^{(0)}+\bar t_k,  
$$
where $s_{\alpha}^{(0)}$, $t_{\alpha ,k}^{(0)}$, 
$\bar t_{\alpha ,k}^{(0)}$ are some fixed parameters,
so the corresponding vector fields are
$$
\p_{t_k}=\sum_{\alpha =1}^N \p_{t_{\alpha ,k}}, \quad
\p_{\bar t_k}=\sum_{\alpha =1}^N \p_{\bar t_{\alpha ,k}}.
$$
Accordingly, the operators $\boldU_{\alpha}, \boldUbar_{\alpha},
\boldP_{\alpha}, \boldPbar_{\alpha}$ do not take part in the construction
and we are left with the two Lax operators with matrix coefficients
\beq\label{A1}
\begin{array}{l}
\displaystyle{
\boldL (s)=\sum_{j=0}^{\infty}b_j(s)e^{(1-j)\p_s}, \quad b_0(s)=1_N,}
\\ \\
\displaystyle{
\boldLbar (s)=\sum_{j=0}^{\infty}\bar 
b_j(s)e^{(j-1)\p_s}, \quad \bar b_0(s)=g(s)g^{-1}(s-1),}
\end{array}
\eeq
where $g(s)$ is an $N\times N$ invertible matrix (which was denoted
by $\tilde w_0$ in Section \ref{sec:def-multi-toda}). 
The dependent variables (in particular,
$g(s)$) are regarded as functions of $s, t_k, \bar t_k$.

The Lax operators
satisfy the Lax equations
\beq\label{A2}
\begin{array}{l}
\p_{t_k}\boldL (s)=[\boldB_k(s), \boldL (s)], \quad
\p_{t_k}\boldLbar (s)=[\boldB_k(s), \boldLbar (s)],
\\ \\
\p_{\bar t_k}\boldL (s)=[\boldBbar_k(s), \boldL (s)], \quad
\p_{\bar t_k}\boldLbar (s)=[\boldBbar_k(s), \boldLbar (s)],
\end{array}
\eeq
where
$$
\boldB_k(s)=(\boldL^k (s))_{\geq 0}, \quad
\boldBbar_k(s)=(\boldLbar^k (s))_{<0}.
$$
Compatibility conditions for the Lax equations are expressed as
Zakharov-Shabat equations.
We will derive the first nontrivial equation of the hierarchy
from the Zakharov-Shabat equation
\beq\label{A3}
[\p_{t_m}-\boldB_m (s), \, \p_{\bar t_n}-\boldBbar_n (s)]=0
\eeq
at $m=n=1$. In what follows we put $t_1=t$, $\bar t_1=\bar t$.

We have:
\beq\label{A4}
\boldB_1=1_Ne^{\p_s} +b_1(s), \quad \boldBbar_1=\bar b_0(s)
e^{-\p_s}.
\eeq
Plugging this into the Zakharov-Shabat equation, we obtain the system
of equations
\beq\label{A5}
\left \{ 
\begin{array}{l}
\p_{\bar t}b_1(s)=\bar b_0(s)-\bar b_0(s+1),
\\ \\
\p_{t}\bar b_0(s)=b_1(s)\bar b_0(s)-\bar b_0(s)b_1(s-1).
\end{array}
\right.
\eeq
Substituting $\bar b_0(s)=g(s)g^{-1}(s-1)$, we represent the second
equation in the form
\beq\label{A6}
g^{-1}(s)h(s)g(s)=g^{-1}(s-1)h(s-1)g(s-1),
\eeq
where 
\beq\label{A7}
h(s)=\p_t g(s)\, g^{-1}(s)-b_1(s).
\eeq
Therefore, $g^{-1}(s)h(s)g(s)=h_0$ does not depend on $s$. We assume
that it does not depend also on the times, so $h_0$ is a constant
matrix. From (\ref{A6}) we have:
\beq\label{A8}
h(s)=g(s)h_0g^{-1}(s).
\eeq
Plugging this into the first equation in (\ref{A5}),
we obtain:
\beq\label{A9}
\p_{\bar t}(\p_t g(s)\, g^{-1}(s))=g(s)g^{-1}(s-1)-
g(s+1)g^{-1}(s)+ \p_{\bar t}(g(s)h_0g^{-1}(s)).
\eeq
The simple redefinition $g(s)\to g(s)e^{th_0}$ kills the last term
in the right-hand side, so we can put $h(s)=0$ without loss of 
generality. In this way we obtain the equation of the non-abelian
Toda lattice:
\beq\label{A10}
\p_{\bar t}(\p_t g(s)\, g^{-1}(s))=g(s)g^{-1}(s-1)-
g(s+1)g^{-1}(s).
\eeq

Actually, using the linear equations \eqref{lin-eq:dWbarhat} for
$\hat\boldWbar$ and the Zakharov-Shabat equation \eqref{A3} for $n=m=1$,
we can easily show that $g(s)=\bar w_0(s)$ satisfies this equation. In
fact, the above redefinition $g(s)\to g(s)e^{th_0}$ corresponds to
$
    \bar w_0(\bolds)
    =
    \tilde w_0(\bolds)\,\tilde{\bar c}(\boldt,\boldtbar)
$
in \remref{rem:w0-tilde-neq-w0-bar}.

The tau-function provides bilinearization of the non-abelian 
Toda lattice equation (\ref{A10}). In terms of the tau-function
from Section \ref{section:tau-function} we have:
\beq\label{A11}
\begin{array}{l}
\displaystyle{
g_{\alpha \beta}(s) = (-1)^{\delta_{\alpha \beta}-1}
\frac{\tau_{\alpha \beta}(s\boldone +[1]_{\beta})}{\tau (s\boldone )},}
\\ \\
\displaystyle{
(g^{-1}(s))_{\alpha \beta} = 
\frac{\tau_{\alpha \beta}((s+1)\boldone -
[1]_{\alpha})}{\tau ((s+1)\boldone )}}
\end{array}
\eeq
and (\ref{A10}) follows from (\ref{n10a}) and (\ref{n11}). 
Note that the tau-functions here can be substituted by the
modified tau-functions $\tau'_{\alpha \beta}$ defined in (\ref{ex4}).

Finally, let us give an explicit 
example of exact solutions to (\ref{A10}) based on the one
considered in Section
\ref{subsection:example}.
Taking into account the difference between the tau-functions 
introduced in Sections \ref{section:tau-function} and
\ref{section:fermions} (see (\ref{differ})), we write:
\beq\label{A12}
\tau '(s\boldone )=\det_{1\leq \mu , \nu \leq N}
\Bigl (\delta_{\mu \nu}+\frac{q}{q-p}A_{\mu \nu}(s, t, \bar t)\Bigr ),
\eeq
\beq\label{A13}
\tau '_{\alpha \beta}(s\boldone +[1]_{\beta})=
\left \{
\begin{array}{l}
\displaystyle{
\det_{1\leq \mu , \nu \leq N}
\Bigl (\delta_{\mu \nu}+\frac{q^{1-\delta_{\beta \mu}}
p^{\delta_{\beta \nu}}}{q-p}A_{\mu \nu}(s, t, \bar t)\Bigr ), \quad
\alpha =\beta ,}
\\ \\
\displaystyle{
(-1)^{\alpha +\beta +1}\frac{q-p}{q}
\det_{{1\leq \mu , \nu \leq N}\atop {\mu \neq \alpha ,
\nu \neq \beta}}
\Bigl (\delta_{\mu \nu}+\frac{q}{q-p}A_{\mu \nu}(s, t, \bar t)\Bigr ),}
\end{array}\right.
\eeq
where
$$
A_{\mu \nu}(s, t, \bar t)=(p/q)^s e^{(p-q)t +(p^{-1}-q^{-1})\bar t}
A_{\mu \nu}.
$$
For example, at $N=2$ we have the solution
\beq\label{A14}
\begin{array}{l}
\displaystyle{
g(s)=\frac{1}{1+q \mbox{tr}B+q^2 \det B}
\left (
\begin{array}{cc}
1\! +\! pB_{11}\! +\! qB_{22}\! +\! pq \det B & (p-q)B_{12}
\\ & \\
(p-q)B_{21} & 1\! +\! qB_{11}\! +\! pB_{22}\! +\! pq \det B
\end{array}
\right )}
\\ \\
\displaystyle{
\phantom{aaaa}=\frac{(1_2 +pB)(1_2+q\tilde B)}{\det (1_2 + qB)}},
\end{array}
\eeq
\beq\label{A15}
\begin{array}{l}
\displaystyle{
g^{-1}(s)=\frac{1}{1+p \mbox{tr}B+p^2 \det B}
\left (
\begin{array}{cc}
1\! +\! qB_{11}\! +\! pB_{22}\! +\! pq \det B & (p-q)B_{12}
\\ & \\
(p-q)B_{21} & 1\! +\! pB_{11}\! +\! qB_{22}\! +\! pq \det B
\end{array}
\right )}
\\ \\
\phantom{aaaa}\displaystyle{
=\frac{(1_2 +qB)(1_2+p\tilde B)}{\det (1_2 + pB)}},
\end{array}
\eeq
where
$$
B=B(s,t, \bar t)=\frac{1}{q-p}\, \Bigl (\frac{p}{q}\Bigr )^s 
e^{(p-q)t+(p^{-1}-q^{-1})\bar t}\, A^T
$$
and $\tilde B$ is the matrix of minors of $B$ ($\tilde B_{11}=
B_{22}$, $\tilde B_{22}=
B_{11}$, $\tilde B_{12}=
-B_{21}$, $\tilde B_{21}=
-B_{12}$).

\section*{Acknowledgments}
\addcontentsline{toc}{section}{Acknowledgments}

We thank S. Kakei and V. Prokofev for discussions.
The work of A.Z. 
has been supported in part within the framework of the
HSE University Basic Research Program (section 5) and
within the state assignment of NRC 
``Kurchatov institute'' (section 6).

\medskip
We dedicate this paper to the memory of Masatoshi Noumi, who passed away
on 20 November 2024. His interest, comments and encouragements to our
works were invaluable for us.

\end{document}